\documentclass[11pt]{extarticle}
\usepackage[utf8]{inputenc}
\usepackage{amsmath, amssymb,amsfonts, amsthm, bbm, graphicx, url, hyperref, setspace, mathtools}
\usepackage[capitalize]{cleveref}

\hypersetup{colorlinks}
\usepackage{float}
\usepackage{graphicx}
\usepackage{bm}
\usepackage{amssymb}
\usepackage{bbm}
\usepackage{xcolor}
\usepackage{algorithm}
\usepackage{dsfont}
\usepackage{xspace}
\usepackage{comment}
\usepackage{tikz}
\usepackage{dsfont}
\usepackage{fullpage}
\newcommand*\samethanks[1][\value{footnote}]{\footnotemark[#1]}

\newcommand{\eat}[1]{}

\newcommand{\ip}{\mathsf{IP}}
\newcommand{\rip}{\mathsf{IP}_{\mathbb{R}}}

\newcommand{\iprod}[2]{{\langle #1, #2 \rangle}}

\newcommand{\mR}{\mathbb{R}}
\newcommand{\tdO}{\tilde{O}}

\newcommand{\Jee}{Distributed estimation\xspace}
\newcommand{\jee}{distributed estimation\xspace}

\newcommand{\varred}{debiasing\xspace}

\newcommand{\X}{\mathcal{X}}
\newcommand{\Y}{\mathcal{Y}}

\newcommand{\mD}{\mathcal{D}}

\newcommand{\mL}{\mathcal{L}}

\DeclareMathOperator{\Var}{Var}

\newcommand{\F}{{\mathbb{F}}}
\newcommand{\RR}{{\mathbb{R}}}

\newcommand{\lt}{\left}
\newcommand{\rt}{\right}

\newcommand{\zo}{\ensuremath{\{0,1\}}}

\newcommand{\Id}{\mathrm{I}}

\newcommand{\norm}[1]{\left\lVert#1\right\rVert}

\newcommand{\D}{\mathrm{D}}
\newcommand{\R}{\mathrm{R}}
\newcommand{\Dow}{\mathrm{D}^{\mathrm{ow}}}
\newcommand{\Row}{\mathrm{R}^{\mathrm{ow}}}
\newcommand{\Rhow}{\bar{\mathrm{R}}^{\mathrm{ow}}}
\newcommand{\Dhow}{\bar{\mathrm{D}}^{\mathrm{ow}}}
\newcommand{\Dhat}{\bar{\mathrm{D}}}
\newcommand{\Rhat}{\bar{\mathrm{R}}}

\newcommand{\Unif}{\mathrm{Unif}}

\newcommand{\eps}{\varepsilon}

\newcommand{\nbx}[0]{t_X}
\newcommand{\nby}[0]{t_Y}

\newcommand{\pmo}{\ensuremath{ \{\pm 1\} }}
\newcommand{\fr}[1]{\ensuremath{\frac{1}{#1}}}

\newcommand{\ind}[1]{\ensuremath{\mathbbm{1}(#1)}}

\newcommand{\gh}{\mathrm{gapH}}

\newcommand{\ab}{\mathrm{Abs}}

\newcommand{\f}{\frac}
\newcommand{\p}[1]{\left( #1 \right)}
\newcommand{\e}{\eps}
\newcommand{\Ot}{\tilde{O}}
\newcommand{\inv}[1]{{#1}^{-1}}
\newcommand{\al}{\alpha}

\DeclareMathOperator{\poly}{poly}

\DeclareMathOperator{\rank}{rank}

\DeclareMathOperator*{\E}{\mathbf{E}}

\newcommand{\EQ}{\mathsf{EQ}}
\newcommand{\GT}{\mathsf{GT}}
\newcommand{\ABS}{\mathsf{Abs}}
\newcommand{\EE}{\mathrm{EE}}
\newcommand{\DI}{\mathsf{DI}}

\newcommand{\tp}{\tilde{p}}
\newcommand{\tq}{\tilde{q}}

 \newcommand{\PG}[1]{\pgcomment}

\newcommand{\sfT}{\mathsf{T}}

\newcommand{\fh}{\hat f}


\newcommand{\ignore}[1]{}


\newtheorem{theorem}{Theorem}[section]
\newtheorem{lemma}[theorem]{Lemma}
\newtheorem{definition}[theorem]{Definition}

\newtheorem{corollary}[theorem]{Corollary}

\newtheorem{fact}[theorem]{Fact}
\newtheorem{claim}[theorem]{Claim}
\theoremstyle{definition}
\newtheorem{remark}[theorem]{Remark}

\Crefname{claim}{Claim}{Claims}

\newcommand{\cD}{\mathsf{D}}

\newcommand{\del}{\partial}


\begin{document}

\title{The communication complexity of \jee}
 \author{Parikshit Gopalan\\
Apple \and 
Raghu Meka\thanks{Work done while consulting at Apple.}\\
UCLA \and
Prasad Raghavendra\samethanks\\
UC Berkeley \and 
Mihir Singhal\\
UC Berkeley \and
Avi Wigderson\\
IAS
}

\date{}

\maketitle
\begin{abstract}
We study an extension of the standard two-party communication model in which Alice and Bob hold probability distributions $p$ and $q$ over domains $\mathcal{X}$ and $\mathcal{Y}$, respectively. Their goal is to estimate
\[
\mathbb{E}_{x \sim p,\, y \sim q}[f(x, y)]
\]
to within additive error $\varepsilon$ for a bounded function $f$, known to both parties. We refer to this as the \jee problem. Special cases of this problem arise in a variety of areas including sketching, databases and learning. Our goal is to understand how the required communication scales with the communication complexity of $f$ and the error parameter $\varepsilon$.

The random sampling approach---estimating the mean by averaging $f$ over $O(1/\varepsilon^2)$ random samples---requires $O(\R(f)/\varepsilon^2)$ total communication, where $R(f)$ is the randomized communication complexity of $f$. We ask if and when one can do better.

\begin{itemize}
    \item \textbf{Debiasing protocol.} We design a new {\em \varred} protocol for arbitrary bounded functions that gives  better variance reduction than random sampling, and requires communication linear in $1/\eps$ rather than quadratic.
    
    \item \textbf{Better protocols for specific functions.} For the equality ($\mathsf{EQ}$) and greater-than ($\mathsf{GT}$) functions, we obtain an upper bound of $\tilde{O}(\R(f)/\varepsilon^{2/3})$.\footnote{$\tilde{O}$ notation hides polylogarithmic factors in $1/\eps$.} When $f:[0,1]^2 \to [0,1]$ is Lipschitz, the communication can be reduced to $\tilde{O}(1/\sqrt{\varepsilon})$, with even stronger bounds under higher smoothness assumptions.

    \item \textbf{Spectral lower bounds.} We develop a novel spectral method to lower bound the $\varepsilon$-dependence for arbitrary $f$, which yields tight bounds for all the functions above. For any full-rank Boolean matrix, we show that the communication complexity is at least $\tilde{\Omega}(1/\varepsilon^{2/3})$. Hence $\EQ$ is essentially the easiest full-rank Boolean function for \jee.

    \item \textbf{Discrepancy-based lower bounds.} For any Boolean function $f$ with sufficiently high discrepancy, we prove a lower bound of $\Omega(\R(f)/\varepsilon)$, building on machinery from query-to-communication lifting.

    \item \textbf{Random sampling versus \varred.}  Our \varred protocol requires communication $\tilde{O}(\Row(f)/\varepsilon)$, where $\Row(f)$ bounds the one-way communication cost of $f$ in either direction, compared to $\tilde{O}(\R(f)/\eps^2)$ for random sampling. We construct a function where the upper bound $\min(\R(f)/\varepsilon^2,\, \Row(f)/\varepsilon)$ is tight, ruling out $O(\R(f)/\varepsilon)$ as an upper bound.
\end{itemize}

\end{abstract}

\thispagestyle{empty}
\newpage

{
  \hypersetup{linkcolor=black}
  \tableofcontents
}

\thispagestyle{empty}
\newpage
\setcounter{page}{1}

\section{Introduction}

Communication complexity measures the number of bits transmitted between two or more players for computing a joint function of their inputs.
In the standard model \cite{Yao79}, Alice holds an input $x$ and Bob holds an input $y$, and their goal is to compute the value of a function $f(x,y)$ over their inputs.
Different settings of the problem are obtained by permitting the players to use deterministic or randomized protocols, communicate in one, two, or an arbitrary number of rounds.  

In this work, we investigate a family of communication problems that we term as {\it \jee} problems, that extend the standard model, and are defined as follows.

\begin{definition}[\Jee for $f$]
Fix a function $f: \X \times \Y \to [-1,1]$ known to both players. Alice's input is a  probability distribution $p$ over the set $\X$, while Bob's input is a probability distribution $q$ over the set $\Y$.  The players' goal in \jee is to estimate $\E_{p,q}[f] = \E_{x \sim p, y \sim q}[f(x,y)]$ within an additive error of $\varepsilon$.
\end{definition}

\eat{
Alice is given access to samples of a random variable $x$ over a set $\X$, and Bob is given access to samples a random variable $y$ over a set $\Y$.  
There is a function $f: \X \times \Y \to [-1,1]$ known to both players.
In the {\it \jee} problem for $f$, the goal of the players is to estimate the expected value $\E[f(x,y)]$ where the random variables $x$ and $y$ are independent of each other.
}
A deterministic communication protocol $\Pi$ for the \jee problem for $f$ should estimate the value of $\E_{p,q}[f]$ within an error $\varepsilon$, for every pair of input distributions $p$ and $q$.  
Analogously, a randomized communication protocol is required to compute $\E[f(x,y)]$ with high probability over random coins of Alice and Bob. We will assume throughout that the players have access to shared randomness.

The primary focus of this work will be communication complexity, as opposed to sample  or computational efficiency. There is a modeling question of how Alice and Bob can access their respective random variables. The most restrictive model might be to assume that they can draw repeated samples from their respective distributions. We have assumed the stronger access model where they each know the probability distributions of their respective inputs. It turns out that all the protocols we devise only require sample access (and have small sample complexity as well); while the lower bounds hold even when Alice and Bob know their complete distributions, which makes them stronger. 
%
%

\paragraph{Distributional communication complexity, direct sum.}

At first glance, \jee may seem reminiscent of distributional communication complexity under rectangular distributions, where Alice and Bob receive inputs $x \in \X$ and $y \in \Y$ drawn independently from respective distributions $p$ and $q$. However, the two settings are fundamentally different. For simplicity, let us take $f$ to be Boolean. In distributional complexity, we study (deterministic) protocols $\Pi : \X \times \Y \to \zo$ that attempt to compute $f$ correctly on random inputs from a distribution $\mD$ on $\X \times \Y$. The goal is to construct $\mD$ such that any protocol $\Pi$ that computes $f$ correctly on most inputs from $\mD$ must use large communication.\footnote{It is known that product distributions can be strictly easier than general ones \cite{KN97}.}

In contrast, in \jee, protocols take distributions themselves as inputs, and the goal is to estimate the expectation of $f$ under the product distribution $\mD = p \times q$.  The standard communication problem of computing $f(x,y)$ given $x$ and $y$ as inputs is a special case of the \jee problem, wherein the distributions $p$ and $q$ are supported on single elements $x$ and $y$. Therefore, protocols for \jee of $f$ are also valid protocols for computing the function $f$. In the other direction, we can start from a (possibly randomized) protocol $\Pi$ that computes $f$, run it independently on $O(1/\eps^2)$ samples from $\mD$, and output the empirical mean. If computing $f$ on each individual copy takes $C$ bits of communication, we get an easy protocol with cost $O(C/\eps^2)$. 

Can one do better than this? This raises a direct-sum type question: does solving \jee\ essentially require correctly computing $f$ on multiple individual inputs drawn from $p \times q$? Answering this question is the starting point of our work.  
Before discussing our results in more detail, let us motivate why we believe the communication complexity of \jee merits deeper understanding. 
%


%
%
%

%
\subsection{Motivation}

Various special cases of \jee, corresponding to specific functions or restricted communication models have arisen and been studied in the literature, in contexts as diverse as sketching, join estimation for databases, computational learning and two player zero-sum games. 

\paragraph{Sketching.}
         Fix $\X = \Y = \{1,\ldots,n\}$ and the function to be the equality function $\EQ(x,y) = 1[x = y]$.
    If Alice holds a prob. distribution $p = (p_1,\ldots,p_n)$ over $\{1,\ldots,n\}$ and Bob holds a prob. distribution $q = (q_1,\ldots,q_n)$ then their goal is to estimate the quantity (\emph{collision probability}),
    \[ \E_{p,q}[\EQ] =  \sum_i p_i q_i\]
    to an additive error of $\varepsilon$. The literature on sketching and streaming considers the problem of computing the inner product of $p, q$, but in the simultaneous or one-way communication models, with various assumptions about the norms of $p, q$. Count-Min Sketch \cite{CormodeMuthukrishnan} gives a simultaneous protocol that uses space $\tilde{O}(1/\eps)$ to get an $\eps\|p\|_1\|q\|_1$ additive approximation. The Johnson-Lindenstrauss sketch \cite{JL84, Achlioptas03}, the AMS sketch \cite{AlonMatiasSzegedy-JCSS99} and CountSketch \cite{CharikarChenFarachColton-ICALP02} all give $\tilde{O}(1/\eps^2)$ simultaneous protocols whose error is $\eps\|p\|_2\|q\|_2$. The \jee problem for Equality asks how the communication complexity changes when we allow multiple rounds. 
    
    From the lower bounds side, the well-studied problems of set-disjointness \cite{Razborov92} and gap-Hamming \cite{ChakrabartiR12} are concerned with the communication complexity of the quantity $\sum_i p_i q_i$, where $p, q$ are bounded in $\ell_\infty$.

    \paragraph{$P$-join estimation.}
    Join-estimation is a fundamental problem in databases, see \cite{van2015communication} and the references therein. Given two relations $\mathcal{A} \in [m] \times [n]$ and $\mathcal{B} \in [n] \times [p]$, 
        the {\it composition} and {natural join} of the two relations is defined as,
        \begin{align*}
        \text{(Composition)\ \ \ }  \mathcal{A} \circ \mathcal{B} & = \{ (i,j) | \exists \ell : (i,\ell) \in \mathcal{A} \text{ and } (\ell,j) \in \mathcal{B}\} \\
        \text{(Natural join)\ \ \ } \mathcal{A} \bowtie \mathcal{B} & = \{ (i,\ell,j) | \exists \ell : (i,\ell) \in \mathcal{A} \text{ and } (\ell,j) \in \mathcal{B}\} 
        \end{align*}
        
        The communication complexity of computing these joins and their the sizes was studied in \cite{van2015communication}.  The problem of estimating $| \mathcal{A} \circ \mathcal{B}|$ to within additive error $\eps |\mathcal{A}||\mathcal{B}|$ is an instance of \jee for $\EQ$. Indeed join estimation is one of the applications of CountMinSketch \cite[Corollary 1]{CormodeMuthukrishnan}. Similarly, estimating the size of the composition join is \jee for the Set-disjointness function. See Section \ref{sec:related} for details. 

        More generally, for any predicate $\mathcal{P}$ on pairs of subsets of $[n]$,  \cite{van2015communication} define the $\mathcal{P}$-joins of two relations as
        \[ \mathcal{A} \bowtie^{\mathcal{P}} \mathcal{B} =  \{ (i,j) | \mathcal{P}(A_i,B_j) \text{ is true for } i \in [m], j \in [p] \}\]
        Estimating the size of $\mathcal{A} \bowtie^{\mathcal{P}} \mathcal{B}$ is equivalent to \jee for the function $\mathcal{P}$. Our parametrization of the error is additive, and differs from \cite{van2015communication} who consider multiplicative guarantees, which can be significantly harder (see Section \ref{sec:related} for a detailed comparison).

\paragraph{Real-valued functions and omnipredictors.} 

The task of \jee is equally natural for real-valued functions and over continuous domains. A particularly illustrative case is when $p$ and $q$ are distributions on a metric space and the goal is to estimate the expected distance between independent draws from $p$ and $q$. For instance, for distributions on $[0,1]$, the $\ell_1$–\jee problem is to estimate $\E_{x\sim p,y\sim q}\big[|x-y|\big]$. Such problems were studied in the work of \cite{GORSS24} on omnipredictors for regression.

Omnipredictors, introduced in \cite{omni}, and subsequently studied in several works including \cite{lossOI, GopalanKR23, GargJRR24, GORSS24} are predictors guaranteed to perform well for every loss in a rich family $\mL$, rather than being tailored to a single loss. While the initial works treated binary classification, \cite{GORSS24} study regression with real-valued labels in $[0,1]$. They analyze the following one-way communication problem whose complexity governs the algorithm's runtime: Alice holds a random variable $x\in[0,1]$, Bob holds a bounded loss function $\ell\in\mL$. Alice must send a message enabling Bob to estimate $\E[\ell(x)]$ to within $\eps$ for all $\ell\in\mL$.\footnote{Their algorithm requires some additional {\em linearity} constraints on the one-way protocol, see \cite{GORSS24} for details.}

This is an instance of \jee with rows indexed by $[0,1]$ and columns by $\ell\in\mL$. A key result of \cite{GORSS24} is a protocol with communication $O(1/\eps^{2/3})$ when $\mL$ is the class of bounded convex losses. This case is equivalent to the $\ell_1$-\jee example above, using the fact that the functions ${|x-t|}_{t\in[0,1]}$ span bounded convex 1-Lipschitz functions on $[0,1]$. Other choices of $\mL$ similarly give rise to \jee for different function classes.

    \paragraph{Zero-sum games.}
    Consider a two player zero-sum game where $\X$ and $\Y$ are the strategies available to two players Alice and Bob and $f(x,y)$ is payoff from Alice to Bob. \Jee for $f$ is the problem of estimating the payoff for a pair of strategies $p, q$. For instance, let $\X = \Y = \{1,\ldots,n\}$ and consider the function $\GT(x,y) = 1[x \geq y]$. The associated \jee problem is to estimate the probability $\Pr[ x \geq y]$. 

    \paragraph{Estimating edge density.}
    Say that Alice holds a subset $A$ and Bob holds a subset $B$, of vertices in a graph $G = (V,E)$.  Their goal is to estimate the density of edges between sets $A$ and $B$. Replacing each player's input with the uniform distribution over their sets $A,B$, and using the function $f(i,j) = 1[(i,j) \in E]$, the  \jee problem is to estimate the edge density $|E(A,B)|/(|A||B|)$.
\medskip

    \paragraph{Summary.}
    Instantiations of \jee\ have been analyzed in numerous settings, often under specific protocol restrictions or for particular function families, but a general theory was missing. We discuss related work further in Section \ref{sec:related}. In this work we initiate a systematic study of the communication complexity of \jee, aiming to relate it to the complexity of $f$, the number of rounds, and the allowed error.



\subsection{Our results}

\subsubsection{Preliminaries and notation}
To state our results, we begin by reviewing some notation. 
For a function $f: \X \times \Y \to [-1,1]$, the minimum number of bits exchanged by a deterministic protocol computing the function $f(x,y)$ is classically denoted $\D(f)$.  The communication complexity of randomized protocols that err with probability at most $\delta$ is denoted by $\R_{\delta}(f)$. We also use $\Dow$ and $\Row$ to denote the corresponding complexities when the communication is restricted to be one-way. For one-way protocols, we consider the case where each of Alice and Bob speaks first, and take the maximum of the two. We assume shared randomness in all our protocols.

\begin{definition}
For a function $f$, let $\EE_{f,\varepsilon}$ denote the task of distributed estimation of $f$ with an additive error $\varepsilon$.
$\Dhat_{\varepsilon}(f)$ is the communication complexity of deterministic protocols for $\EE_{f,\varepsilon}$.  Similarly, we use $\Rhat_{\varepsilon, \delta}$ for communication complexity of randomized protocols that estimate $f$ within an error $\varepsilon$ with probability at least $1-\delta$.
\end{definition}

For simplicity, we often omit $\varepsilon, \delta$ and simply write $\Dhat(f)$ and $\Rhat(f)$ when they are clear from context (when not otherwise defined, $\delta$ will be assumed to be any constant less than $1/2$, such as $1/3$).
Moreover, we will similarly use $\Dhow$ and $\Rhow$ when the communication is restricted to be one-way. 
These measures are functions of two parameters -- the size of the inputs often parametrized as $n = \log_2 (| \X | + |\Y|)$ and the accuracy $\varepsilon$.  The dependence on input size is often borrowed from the traditional communication problem of computing $f$, while the accuracy $\varepsilon$ is the new ingredient in \jee.  Therefore, although we will state and prove upper and lower bounds in terms of both these parameters, the dependence on $\varepsilon$ would be our primary interest.

We also write $A_f$ to denote the matrix of values of $f$. In other words, the $(x, y)$-th entry of $A_f$ is $f(x,y)$ for $x \in \X, y \in \Y$.

We use $\Ot$ notation to hide logarithmic factors in $1/\e$ (but not in any other parameters).

\subsubsection{Protocols for \jee}

\paragraph{Beating random sampling.}

The obvious approach to estimating $\E_{p,q}[f]$  is to use random sampling: Alice samples $t$ inputs  $x_1,x_2,\ldots,x_t$ from distribution $p$, and Bob samples $t$ inputs $y_1,\ldots,y_t$ from his distribution $q$.  Using a randomized protocol, both Alice and Bob compute $f(x_i,y_i)$ for each $i = 1,\ldots, t$, and output the estimate
\[
\hat{\mu}\;=\;\frac{1}{t}\sum_{i=1}^t f(x_i,y_i).
\]
By Chebyshev’s inequality, taking $t=\Theta(1/\varepsilon^2)$ ensures $|\hat{\mu}-\E_{p,q}[f]| \le \varepsilon$ with probability at least $0.99$. Since evaluating $f$ on $t$ independently sampled pairs costs $t \cdot \R(f)$ bits, this yields a \jee\ protocol with communication $\Theta(\R(f)/\varepsilon^2)$.

It is tempting to conjecture that this sampling bound is generically optimal. For instance, for a Boolean $f:\{0,1\}^n \times \{0,1\}^n \to \{0,1\}$ with maximal randomized complexity $\R(f)=n$ (e.g., a random function or the inner-product function), one might expect $\Rhat_\varepsilon(f)=\Omega(n/\varepsilon^2)$.

Surprisingly, this intuition is false. For every bounded function $f$, there is a two-round ``debiasing'' protocol whose communication is linear in $1/\varepsilon$! More precisely we show that if $\Row(f)$ bounds the one-way communication complexity of $f$ in either direction, then
    \[\Rhat(f) = \tilde{O}\left(\frac{\Row(f)}{\varepsilon}\right) =  \tilde{O}\left(\frac{n}{\varepsilon}\right)   \qquad \qquad (\text{see \cref{thm:mainupper}})\]

Direct sum theorems in communication complexity often imply that for computing $t$ instances of a function $f$, the naive approach of running the protocol $t$ times is optimal. The above result stands in contrast to these direct sum theorems, in that the naive approach of computing $f$ independently for $t = \Theta(1/\varepsilon^2)$ separate instances is not optimal, even for a general function $f$. 

\paragraph{Protocols for arbitrary bounded functions.} 
The problem $\EE_{f, \varepsilon}$ has two parameters, the input size $n = \log|\X| + \log|\Y|$ and the error parameter $\eps$ (we consider constant failure probability). 
Focusing on the dependence on $\varepsilon$, for any given function $f$, our work demonstrates a tradeoff between the dependences on $\varepsilon$ and the input size $n$, through the following results:
\begin{itemize}
    \item A random sampling protocol (see \cref{thm:var-red1}) that has communication $\tilde{O}(\R(f)/\eps^2)$.
    \item A \varred protocol (see \cref{thm:mainupper}) that has communication $\tilde{O}(\Row(f)/\eps)$.
    \item A spectral protocol (see \cref{thm:spectral-upper}) that has communication $\tdO(\sigma(f)/\eps^{2/3})$, where $\sigma(f)$ is the spectral norm of (the matrix of) $f$.
    \item A deterministic protocol (see \cref{thm:svd-protocol}) that has communication $\tdO(\rank(f) \cdot n)$.
\end{itemize}

To compare the numerators, we observe that $\R(f) \leq \Row(f) \leq n$ whereas $\sigma(f) \in[0,2^n]$ and $\rank(f) \in \{0, \ldots, 2^n\}$. The last two bounds as stated are incomparable. Of these protocols, (1) and (4) are folklore/immediate, while (2) and (3) are new contributions of this work.

\paragraph{Protocols for specific functions.}

We develop protocols for \jee for specific functions $f$ using a varied set of tools including hashing, sketching, variance reduction via sampling and singular value decompositions. Here is a summary of our upper bounds:

\begin{itemize}

\item A simple randomized protocol for the equality function $\EQ(x, y) = \ind{x = y}$ with communication $\tilde{O}(1/\eps^{2/3})$ (see \cref{thm:eq-upper}). We extend this upper bound to all matrices of bounded spectral norm.

\item A protocol for the greater-than function $\GT(x, y) = \ind{x \ge y}$ which has communication $O((\log n) /\varepsilon^{2/3})$ (see \cref{thm:gt-upper}).

\item Protocols with communication $O(1/\varepsilon^{2/5})$ for the function $\ABS(x,y) = |x-y|$ on $[-1,1]$ (see \cref{thm:abs-upper}). 

\item For functions $f: [0,1] \to [-1,1]$ whose derivatives of $k$-th order are bounded, we show (see \cref{thm:smooth-upper}) that
\[ \Rhat_\varepsilon(f) \leq O( \varepsilon^{-1/(k+1)})\] 

\end{itemize}
We also develop lower bound techniques to show that all of the above protocols are optimal!

\subsubsection{Lower bounds}

In this work, we establish lower bounds on the randomized complexity of \jee for both Boolean and real-valued bounded functions. Our main results are the following:
\begin{itemize}
    \item {\bf Discrepancy based lower bound.} There exist functions $f:\zo^n \times \zo^n \to [-1,1]$, where \jee requires $\Omega(n/\eps)$ bits of communication (see \cref{cor:ipcor}). Thus the \varred protocol cannot be improved in general. This is a consequence of a general lower bound showing that high \emph{discrepancy} makes a function maximally hard for \jee.

    \item {\bf Random sampling versus \varred.} 
    Compared to random sampling, \varred reduces the dependence on $1/\eps$ from quadratic to linear. However the dependence on the randomized complexity of $f$ is worse: \varred (Theorem \ref{thm:mainupper}) gives a bound of $O(\Row(f)/\varepsilon)$ bits, whereas random sampling gives $O(\R(f)/\eps^2)$. In the worst-case, $\Row(f)$ can be higher than $\R(f)$. We show that one cannot get a {\em best-of-both worlds} upper bound of $\R(f)/\eps$ (see \cref{thm:di-lb}).

    \item {\bf Bounds for generic Boolean functions.} Every $n$-bit Boolean function whose communication matrix has large rank requires $\tilde{\Omega}(1/\eps^{2/3})$ communication for \jee (see \cref{thm:boolean-lb}). In particular, among full-rank Boolean functions, none is asymptotically easier than $\EQ$ and $\GT$. 

    \item {\bf Lower bounds for specific functions.} We show that all of the upper bounds for specific functions mentioned in the previous section are tight (see \cref{sec:specific-lower}). 

    \item {\bf One-way lower bound for $\ip$.} For the function $\ip:\{0,1\}^n \times \{0,1\}^n \rightarrow [-1,1]$ defined by $\ip(x,y) = (-1)^{\langle x, y \rangle}$, we give a tight lower bound of $\Rhow_{\eps, \delta}(\ip) = \Omega( n\log(1/\delta)/\eps^2)$ on the one-way communication complexity (see \cref{cor:ipcor}).
\end{itemize}

\subsection{Overview of lower bounds}

In this subsection, we elaborate the above lower bounds and state them precisely.
%

\paragraph{A discrepancy-based lower bound.}

Classical results relate the communication complexity of a function to various structural measures associated with the function such as discrepancy, partition number, smooth rectangle bound and information complexity.
It is natural to ask whether one can use any of these measures to obtain lower bounds for \jee  that also have a dependence on the error $\varepsilon$.
We show a lower bound on the communication complexity of \jee of a function $f$, in terms of its worst case discrepancy $\mathrm{Disc}^{\times}(f)$ under a product distribution. Roughly speaking, the discrepancy of a function measures how unbalanced the function can be on a large rectangle in its matrix (see Definition \ref{def:rect-disc} for formal definition).

\begin{theorem}[Informal version of \cref{thm:main}]\label{th:discintro}
    For a function $f: \X \times \Y \to \{-1,1\}$, let
    \[ k = \Omega\left(\log\left(\frac{1}{\mathrm{Disc}^{\times}(f)}\right)\right).\]
    If $\Pi$ is a protocol for \jee of $f$ where the two players transmit $M_A$ and $M_B$ bits respectively then,
    \[ \left( 1 + \frac{M_A}{k}\right) \left(1 + \frac{M_B}{k} \right) \geq \Omega\left( \frac{1}{\varepsilon^2} \right).\]
    In particular, $\Rhow(f) = \Omega(k/\varepsilon^2)$, and $\Rhat(f) = \Omega(k/\varepsilon)$. 
\end{theorem}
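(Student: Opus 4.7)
The plan is to extend the classical discrepancy lower bound for computing $f$ to the setting of estimating $\E_{p,q}[f]$, by combining Yao's minimax principle with a tensorization of the discrepancy statement across the two independent sources of randomness. By Yao's principle, it suffices to exhibit a hard input distribution $\nu$ over pairs $(p,q)$ such that every deterministic protocol with communication $(M_A, M_B)$ errs by more than $\eps$ with constant probability. Let $\mu = \mu_A \times \mu_B$ be a product distribution witnessing $\mathrm{Disc}^{\times}_{\mu}(f) \le 2^{-k}$. I would construct the hard family by drawing independent random sign vectors $\sigma \in \{\pm 1\}^{\X}$ and $\tau \in \{\pm 1\}^{\Y}$ uniformly, and setting $p, q$ to be the conditional distributions of $\mu_A, \mu_B$ on $\{x : \sigma_x = +1\}, \{y : \tau_y = +1\}$ (after normalization). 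Under this construction, the quantity $\E_{p,q}[f]$ is, up to a known rescaling, the bilinear form $\sigma^{\top} M \tau$ with $M_{x,y} = \mu_A(x)\mu_B(y) f(x,y)$, whose large-deviation scale is precisely $\mathrm{Disc}^{\times}_{\mu}(f)$.

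The next step invokes the rectangle property. A deterministic protocol with $M_A + M_B$ bits partitions the inputs into at most $2^{M_A + M_B}$ combinatorial rectangles, on each of which it outputs a single real number. Pulled back through the construction of $\nu$, Alice's $M_A$ bits partition the sign space $\{\pm 1\}^{\X}$ into $2^{M_A}$ cells and Bob's $M_B$ bits partition $\{\pm 1\}^{\Y}$ into $2^{M_B}$ cells, and the $\eps$-accuracy requirement forces the variation of $\sigma^{\top} M \tau$ within each cell-pair to be at most $O(\eps)$. I would then compute, via a Parseval/Khintchine-style identity on the sign cube, the typical variance contributed by each cell, showing that it decays in the cell's size in a manner controlled by $\mathrm{Disc}^{\times}_{\mu}(f) = 2^{-k}$: each $k$-bit "chunk" of a partition cell on either side reduces the residual variance by an additional constant factor.

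The final step is to combine the two sources of randomness multiplicatively. Because $\sigma$ and $\tau$ are independent and act on orthogonal coordinates of the bilinear form, the variance reductions from Alice's and Bob's partitions tensorize into independent factors $(1 + M_A/k)$ and $(1 + M_B/k)$. Requiring the combined resolution to push the estimator's residual variance below $\eps^2$ yields the multiplicative bound $(1 + M_A/k)(1 + M_B/k) \ge \Omega(1/\eps^2)$. The one-way bound $\Rhow(f) = \Omega(k/\eps^2)$ follows by setting $M_B = 0$, and the balanced two-way bound $\Rhat(f) = \Omega(k/\eps)$ follows by setting $M_A = M_B$.

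The main obstacle is this tensorization step. Standard discrepancy-based lower bounds treat communication as a single budget and yield only additive bounds $M_A + M_B = \Omega(k)$; the strengthened product form requires showing that partitioning on each side provides independent, multiplicatively-combining reductions in the estimator's variance. Making this precise likely requires a robust version of the discrepancy method, such as a hypercontractive or Parseval-type bound on the bilinear form $\sigma^{\top} M \tau$ that degrades gracefully with the dimensions of the subcubes on each side, or alternatively a reduction from a query-complexity problem with the correct product structure that is then lifted to communication via a gadget, as suggested by the abstract's reference to query-to-communication lifting.
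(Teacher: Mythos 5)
There is a genuine gap, on two fronts.

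First, the hard distribution you propose does not have enough baseline variance. Your target $\E_{p,q}[f]$ is (up to the normalizations $m_\sigma, m_\tau$) governed by the bilinear form $\sigma^\top M\tau$ with $M_{x,y}=\mu_A(x)\mu_B(y)f(x,y)$, whose variance over uniform random signs is $\|M\|_F^2 = \bigl(\sum_x \mu_A(x)^2\bigr)\bigl(\sum_y \mu_B(y)^2\bigr)$. For the discrepancy-minimizing product distribution this is typically exponentially small (for $\IP_n$ under the uniform $\mu$ it is $2^{-2n}$), so it is far below the $\Omega(\eps^2)$ you would need the \emph{unconditional} variance to satisfy. The argument never gets off the ground with this construction, regardless of how the rectangle analysis proceeds.

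Second, the step you yourself flag --- that Alice's $M_A$ bits and Bob's $M_B$ bits produce \emph{multiplicatively combining} reductions in residual variance, with each ``$k$-bit chunk'' contributing only a constant factor --- is precisely the crux of the theorem and is not established. In the worst case, a handful of Alice's bits could reveal the restriction of $\sigma$ to the $\mu_A$-heaviest coordinates and collapse most of the variance, so the claimed ``Parseval/Khintchine'' reduction does not hold without substantial extra structure. The paper's actual proof takes a different route designed to make both issues disappear: the hard inputs are empirical distributions of $t_X \approx M_A/k$ and $t_Y \approx M_B/k$ i.i.d.\ draws from $\Theta$, turning the target into the lifted quantity $\frac{1}{t_X t_Y}\sum_{i,j} f(x_i,y_j)$, whose baseline variance is $\Theta(1/(t_X t_Y))$ by construction. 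Conditioning on a transcript then yields distributions with max-divergence $\lesssim M_A k$ (resp.\ $M_B k$) from $\Theta^{t_X}$ (resp.\ $\Theta^{t_Y}$); a conjunctive blockwise-dense decomposition (from query-to-communication lifting) reduces to the case where only $O(t/10)$ coordinates are fixed and the rest retain per-coordinate max-divergence $O(k)$, and ``tensoring'' lemmas for discrepancy (bounding $\E[f(X_1,Y_1)f(X_2,Y_2)]$ and $\E[f(X,Y_1)f(X,Y_2)]$ under such distributions) show the conditional variance of the lift remains $\Omega(t_X t_Y)$. The multiplicative bound $(M_A/k)(M_B/k)=\Omega(1/\eps^2)$ then drops out of the product $t_X t_Y$. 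Your instinct that query-to-communication lifting is the right machinery is correct, but it must be applied to a tuple/lift construction, not to sign-vector restrictions.
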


This lower bound reveals a trade-off between the messsage lengths of the two players in the protocol. In a one-round protocol where Alice sends a message and Bob just announces the answer, Bob transmits relatively few bits (in other words, $M_B/k = o(1)$).  Therefore, \cref{th:discintro} implies that $M_A \geq \Omega(k/\varepsilon^2)$. In contrast, for multi-round protocols, it gives a total communication lower bound of $M_A + M_B \geq \Omega(k/\varepsilon)$.

\paragraph{Optimality of variance reduction.}
Theorem \ref{th:discintro} yields the following corollary for the inner-product function which has exponentially small discrepancy under the uniform distribution:
\begin{corollary}[Restatement of \cref{cor:ipcor}]
    Define $\ip:\{0,1\}^n \times \{0,1\}^n \rightarrow [-1,1]$ by $\ip(x,y) = (-1)^{\langle x, y \rangle}$. Then, $\Rhat(\ip) = \Omega(n/\eps)$ and $\Rhow(\ip) = \Omega(n/\eps^2)$. 
\end{corollary}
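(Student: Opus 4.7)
The plan is to derive this corollary as a direct instantiation of \cref{th:discintro}: it suffices to show that the inner-product function has exponentially small product-distribution discrepancy, $\mathrm{Disc}^{\times}(\ip) \leq 2^{-\Omega(n)}$. Once this is in hand, we may take $k = \Theta(n)$ in \cref{th:discintro}, and the two claimed bounds $\Rhat(\ip) = \Omega(n/\eps)$ and $\Rhow(\ip) = \Omega(n/\eps^2)$ follow immediately from the ``in particular'' clauses of that theorem (with $M_A + M_B = \Rhat(\ip)$ in the two-way case, and $M_B = 0$, $M_A = \Rhow(\ip)$ in the one-way case).

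For the discrepancy bound, I would invoke the classical Lindsey's Lemma argument. Take the product distribution $\mu = U \times U$, where $U$ is uniform on $\{0,1\}^n$. The $2^n \times 2^n$ sign matrix $H$ of $\ip$ is a Walsh--Hadamard matrix, so $H H^{\T} = 2^n \Id$ and its spectral norm is $\sigma(H) = 2^{n/2}$. For any combinatorial rectangle $S \times T$, writing the sum as a bilinear form and using Cauchy--Schwarz together with the spectral bound gives
\[ \abs{\sum_{x \in S,\, y \in T} H_{x,y}} \;=\; \abs{\mathbf{1}_S^{\T} H \mathbf{1}_T} \;\leq\; \sigma(H)\,\sqrt{|S|\cdot|T|} \;\leq\; 2^{n/2} \cdot 2^{n} \;=\; 2^{3n/2}. \]
Dividing by the total mass $2^{2n}$ of $\mu$ yields $\lvert \E_\mu[\ip(x,y)\,\mathbf{1}_{S\times T}(x,y)] \rvert \leq 2^{-n/2}$, and maximizing over rectangles gives $\mathrm{Disc}^{\times}(\ip) \leq 2^{-n/2}$. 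Hence $\log(1/\mathrm{Disc}^{\times}(\ip)) \geq n/2$, so $k$ in \cref{th:discintro} can be taken to be $\Theta(n)$.

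There is no real obstacle here beyond the heavy lifting already carried out in the proof of \cref{th:discintro}; this corollary is a two-line application. The only minor sanity checks are that the paper's definition of $\mathrm{Disc}^{\times}$ admits $\mu = U \times U$ as an allowable product distribution (it does, by definition), and that the normalization used for discrepancy in \cref{th:discintro} matches the one used in the spectral bound above. Both are routine.
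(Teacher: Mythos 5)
Your proposal is correct and takes essentially the same approach as the paper: both derive the bounds by applying the discrepancy-based lower bound (\cref{thm:main} / \cref{cor:disclower}, of which \cref{th:discintro} is the informal version) to the fact that $\mathrm{Disc}^{\times}(\ip) \leq 2^{-n/2}$ under the uniform product distribution. The only difference is that the paper simply cites this discrepancy bound as standard, while you spell out the Lindsey's-Lemma derivation ($H$ is Hadamard, $\sigma(H)=2^{n/2}$, Cauchy–Schwarz on $\mathbf{1}_S^{\T}H\mathbf{1}_T$), which is a fine and correct addition of detail rather than a different route.
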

\ignore{
Concretely, for the $\mathbb{F}_2$ inner product function, 
\[\ip_2(x,y) = \sum_{i = 1}^n x_i \cdot y_i (\mod\ 2)\]
the discrepancy under the uniform distribution is $\Omega(n)$.  Here this tradeoff translates to a lower bound of 
\begin{equation}\label{eq:ip-one-way}
 \Rhat^{ow}(\ip) \geq \Omega\left(\frac{n}{\varepsilon^2}\right)
 \end{equation}
for one-round protocols and 
\begin{equation} \label{eq:ip-multi-way}
 \Rhat(\ip) \geq \Omega\left(\frac{n}{\varepsilon}\right)
\end{equation}}

The above results show that the variance reduction protocol in Theorem \ref{thm:var-red1} is optimal in two important ways.  First, the communication complexity of any protocol (irrespective of number of rounds) is at least $\Omega(n/\varepsilon)$ in general. Second, achieving the improved upper bound in \cref{thm:var-red1} requires at least two rounds: $1$-round protocols have a communication complexity of $\Omega(n/\varepsilon^2)$.

\paragraph{Random sampling vs \varred.}
When compared to random sampling, \varred reduces the dependence on $1/\eps$ from quadratic to linear. However the dependence on the randomized complexity of $f$ in the two protocols is different: \varred (Theorem \ref{thm:var-red1}) gives a bound of $O(\Row(f)/\varepsilon)$ bits, whereas random sampling gives $O(\R(f)/\eps^2)$. For functions where $R(f) \approx \Row(f)$, \varred beats random sampling handily. But there are functions where the two are far apart, so which protocol is better depends on the value of $\eps$.  It is natural to ask whether one could devise a protocol with complexity $O(\R(f)/\varepsilon)$ and get the best of both worlds.

We show this is impossible in general: we define the Double-Indexing function $\mathsf{DI}$ (see Theorem \ref{thm:di-lb}), which is a two-sided variant of the standard Indexing function such that 
\begin{enumerate}
    \item $\Row(\DI) = \Omega(n)$, so variance reduction gives communication $O(n/\eps)$, 
    \item $\R(\mathsf{DI}) \leq O(\log n)$, so random sampling gives communication $O(\log(n)/\eps^2)$, 
    \item For $\varepsilon \geq 2/n$, $\Rhat(\mathsf{DI}) = \Omega(\varepsilon^{-2})$.
\end{enumerate}
This rules out a protocol with cost $O(\R(f)/\varepsilon)$ in general.

\ignore{
for the $\mathbb{F}_2$-inner product $\ip_2: \{0,1\}^n \times \{0,1\}^n \to \{0,1\}$, $\ip(x,y) = \sum_{i=1}^n x_i y_i (\mod 2)$, we show that 
\[\Dhat_{\varepsilon}^{ow} (\ip) \geq \Omega\left(\frac{n^2}{\varepsilon^2}\right)   \ \ \ \ (\text{ see Theorem } \ref{thm:addreference}) \]

Notice that the deterministic $1$-way protocols need to transmit $\frac{n^2}{\varepsilon^2}$ bits, while randomized $1$-way protocols only need $\frac{n}{\varepsilon^2}$ bits of communication (see \cref{eq:ip-one-way}).}

\paragraph{Spectral lower bounds.}

By diagonalizing the matrix associated with the function $f: \X \times \Y \to [-1,1]$ in its singular vector basis, one can reduce the problem of \jee to the case where function $f$ is a diagonal matrix.  Crucially, the diagonalizing transformation can be applied locally by the two players. This suggests that the communication complexity of \jee of a function $f$ is intimately tied to the communication complexity of the identity matrix, with the singular values and singular vectors of $f$ controlling how tight this connection is.

We crystallize this intuition by showing the following general lower bound for \jee.
\begin{theorem}[Informal version of \cref{thm:spectral}]
    For every function $f: \X \times \Y \to [-1,1]$, and $t \leq \mathrm{rank}(f)$,
    \[ \Rhat_\varepsilon(f) \geq \Omega\left(\min\left(t, \frac{t^2}{\varepsilon^2 k^2 \lambda_t(f)^2}\right)\right)\]
    where $\lambda_t(f) = \sum_{i=1}^t \sigma_i^{-1}$ is the sum of the reciprocals of the $t$ largest singular values of (the matrix of) $f$.
\end{theorem}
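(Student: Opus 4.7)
The plan is to prove this by reducing \jee of a ``diagonal'' problem on $[t]$ (essentially a scaled Equality/inner-product problem on a universe of size $t$) to \jee of $f$, using the singular value decomposition of $A_f$ as a linear embedding that Alice and Bob can apply locally. Writing $A_f = \sum_{i=1}^{\rank(f)} \sigma_i u_i v_i^{\T}$ and retaining the top $t$ components, I would show that any protocol for $\EE_{f,\e}$ can be used to estimate $\sum_i p_i' q_i'$ for arbitrary $p',q' \in \Delta_{[t]}$ up to an additive error controlled by $\e$, the embedding scale, and $\lambda_t(f)$.

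The reduction itself: let $U_t \in \RR^{|\X| \times t}$ and $V_t \in \RR^{|\Y| \times t}$ collect the top-$t$ singular vectors. Alice, given $p' \in \Delta_{[t]}$, constructs $p = p_{\mathrm{ref}} + \gamma\, U_t D p'$ for a reference distribution $p_{\mathrm{ref}}$ (e.g.\ uniform), diagonal weight matrix $D = \mathrm{diag}(\sigma_i^{-1/2})$, and scale $\gamma$; Bob analogously defines $q$ using $V_t$. A direct calculation then yields
\[
p^{\T} A_f q \;=\; \gamma^2 \sum_{i=1}^t p_i' q_i' \;+\; (\text{cross terms involving } p_{\mathrm{ref}}, q_{\mathrm{ref}}),
\]
and, provided $p_{\mathrm{ref}},q_{\mathrm{ref}}$ are chosen to lie in the subspace orthogonal to the top singular vectors (or the cross terms are explicit constants that both players subtract off), an $\e$-accurate estimate of $p^{\T}A_f q$ translates into an $(\e/\gamma^2)$-accurate estimate of $\sum_i p_i' q_i'$. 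The admissible scale $\gamma$ is restricted by the requirement that $p,q$ be valid probability distributions; this constraint combines the $\sigma_i^{-1/2}$ weights with norms of the singular vectors to produce the $\lambda_t(f) = \sum_i \sigma_i^{-1}$ factor.

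The remaining piece is a lower bound on the reduced ``identity'' problem on $[t]$: estimating $\sum_i p_i'q_i'$ within error $\eta$ requires $\Omega(\min(t, 1/\eta^2))$ communication. I would establish this via two complementary arguments: for $\eta \gtrsim 1/\sqrt{t}$, an adaptation of the known lower bound for Gap-Hamming-Distance on length-$t$ vectors yields $\Omega(1/\eta^2)$; for smaller $\eta$, a direct information-theoretic or counting argument over a packing of $\Omega(t)$ distinguishable distribution pairs on $[t]$ yields $\Omega(t)$. Plugging $\eta = \e/\gamma^2$ with the valid range of $\gamma$ determined above recovers the claimed bound $\Omega\big(\min(t,\, t^2/(\e^2 k^2 \lambda_t(f)^2))\big)$, with $k$ absorbing polylogarithmic factors arising from the embedding.

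The main obstacle is the distribution-validity constraint: the singular vectors $u_i,v_i$ can have negative and unbalanced entries, so to keep $p = p_{\mathrm{ref}} + \gamma U_t D p'$ entrywise non-negative and summing to one, one must choose $p_{\mathrm{ref}}$ and $\gamma$ delicately as a function of $\|u_i\|_\infty$ and the spectral weights. Obtaining a clean dependence on $\lambda_t(f)$ alone---rather than on a messy combination of per-vector $\ell_\infty$-norms---will likely require a smoothing or truncation step, or an averaging argument that replaces $\|u_i\|_\infty$ by an amortized quantity, and it is this step that most plausibly introduces the $k$ factor. A secondary subtlety is verifying that the cross terms $p_{\mathrm{ref}}^{\T}A_f q$ and $p^{\T}A_f q_{\mathrm{ref}}$ are indeed constants known to both parties, which calls for either choosing $p_{\mathrm{ref}},q_{\mathrm{ref}}$ in the orthogonal complement of the top singular subspace or executing an explicit public bias-subtraction step.
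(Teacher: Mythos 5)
Your high-level plan — diagonalize $A_f$ via its SVD and reduce a ``diagonal'' estimation problem on a $t$-dimensional subspace to \jee for $f$ — is the same one the paper uses. But there are two genuine gaps in the execution, both of which the paper's Lemma \ref{lem:unbounded-pq} is specifically designed to remove.

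First, the reduced problem you reduce from does not have the lower bound you claim. You state that estimating $\sum_i p_i' q_i'$ to within error $\eta$ for distributions $p',q' \in \Delta_{[t]}$ requires $\Omega(\min(t, 1/\eta^2))$. That is the $\EQ$ problem on a universe of size $t$, and the paper itself proves (Theorem \ref{thm:eq-upper} and Lemma \ref{lem:eq-lb}) that its complexity is $\tilde\Theta(\min(t, 1/\eta^{2/3}))$ — the exponent is $2/3$, not $2$. The $\Omega(1/\eta^2)$ Gap-Hamming bound (Theorem \ref{thm:cr}) only holds when the inputs are arbitrary $\pmo^t$ vectors, not distributions. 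So the reduction must be from Gap-Hamming proper, with $a,b \in \pmo^t$, exactly as the paper does by setting $\tilde p = \sum_i a_i u_i / \sqrt{\sigma_i}$, $\tilde q = \sum_j b_j v_j / \sqrt{\sigma_j}$, giving $\tilde p^\T A \tilde q = \sum_i a_i b_i$. With your formulation ($p' \in \Delta_{[t]}$) the reduction chain does not establish the stated lower bound.

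Second — and you identify this yourself as the main obstacle — the affine embedding $p = p_{\mathrm{ref}} + \gamma\, U_t D p'$ is not quantitatively controlled. To keep $p$ entrywise nonnegative, $\gamma$ is bottlenecked by $\|U_t D p'\|_\infty$ (or $\|U_t D a\|_\infty$ in the Gap-Hamming version), and there is no reason this should be governed by $\lambda_t(f)=\sum_i 1/\sigma_i$; the singular vectors can have wildly unbalanced entries. Your suggestions of ``smoothing,'' ``truncation,'' or an amortized replacement of $\|u_i\|_\infty$ are not worked out, and it is not clear they can be made to give the clean $\lambda_t$ dependence. The paper avoids this entirely: Lemma \ref{lem:unbounded-pq} shows that a protocol which works for distribution inputs can be run, at essentially the same cost, on \emph{arbitrary} vectors $\tilde p, \tilde q$, with error degrading only as $\|\tilde p\|_1 \|\tilde q\|_1$, by splitting $\tilde p = \tilde p^+ - \tilde p^-$, normalizing each part to a distribution, running the protocol on the four sign-pairs, and rescaling. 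Since $\|\tilde p\|_2 = \sqrt{\lambda_t}$ by orthonormality, one gets $\|\tilde p\|_1 \leq \sqrt{k\lambda_t}$ without any shift-and-scale gymnastics, and the final error $3\eps k\lambda_t$ feeds directly into the Gap-Hamming lower bound. The positive/negative decomposition is what lets you get an $\ell_1$-based (hence $\lambda_t$-based) bound rather than an $\ell_\infty$-based one; that is the missing idea.
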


The above bound is based on a reduction from the Gap-Hamming problem \cite{ChakrabartiR12} and diagonalization. Somewhat surprisingly, this generic result yields tight lower bounds for a variety of specific functions of interest.
For instance, it implies the following tight lower bound for equality.
\begin{equation*}
 \noindent(\text{\cref{lem:eq-lb}}) \qquad \Rhat_\varepsilon(\mathsf{EQ}_n) \geq \Omega\left(\min\left(\frac{1}{\varepsilon^{2/3}},2^n \right)\right)
\end{equation*}

Consider the function $\mathsf{Abs}: [0,1] \times [0,1] \to [0,1]$ given by $\mathsf{Abs}(x,y) = | x - y |$.  The spectral technique implies a lower bound of  
\[ \Rhat_{\varepsilon}(\mathsf{Abs}) \geq \frac{1}{\varepsilon^{2/5}}, \]
tightly matching the upper bound in \cref{thm:abs-upper}.

For the $\ip$ function, it recovers a lower bound of $\Omega(1/\eps)$. While the tight lower bound is $\Omega(n/\eps)$, this is still noteworthy since the intuition behind diagonalization is to reduce to the identity matrix, where an $O(1/\eps^{2/3})$ upper bound holds. 

Finally, the spectral lower bound can be used to show that among all Boolean functions of high rank, the equality function $\mathsf{EQ}$ has the smallest communication complexity, as a function of $\varepsilon$.
\begin{theorem}[Restatement of \cref{thm:boolean-lb}] \label{thm:boolean-lb-restate}
    Let $f:\zo^n\times \zo^n \to \zo$ be a (non-constant) Boolean function. Then 
    \[ \Rhat_\varepsilon(f) = \tilde \Omega\left(\min\left(\frac{1}{\varepsilon^{2/3}}, \frac{\rank(f)}{\log(\rank(f))} \right)\right).\] 
\end{theorem}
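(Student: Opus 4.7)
The plan is to apply the spectral lower bound of \cref{thm:spectral} to $f$, using structural constraints on the singular values of a rank-$r$ Boolean matrix to control the resulting bound.

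I would first reduce to an $r \times r$ nonsingular $\{0,1\}$-matrix $B$: picking any $r$ linearly independent rows of $A_f$ and then any $r$ linearly independent columns yields such a $B$. Since \jee for $B$ reduces to \jee for $f$ by supporting the input distributions $p,q$ on the chosen rows and columns, we have $\Rhat_\eps(f) \ge \Rhat_\eps(B)$, so it suffices to lower bound the latter. Two structural facts on $B$ will be crucial: the Frobenius bound $\sum_i \sigma_i(B)^2 = \|B\|_F^2 \le r^2$ and the determinantal bound $\prod_i \sigma_i(B) = |\det B| \ge 1$, the latter because $B$ is a full-rank integer matrix.

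Next I would apply \cref{thm:spectral} to $B$ with an optimally chosen parameter $t$, bounding $\lambda_t(B) = \sum_{i\le t} \sigma_i(B)^{-1}$ via the two structural constraints above. The spectral bound is most effective on a block of singular values of comparable magnitude $\sigma^\star$, since $\lambda_t$ then behaves like $t/\sigma^\star$ rather than being inflated by a handful of tiny values. So I would bucket the singular values of $B$ into dyadic bands $[2^j,2^{j+1})$ and use the Frobenius and determinantal constraints jointly to argue that some band contains $t^\star=\Omega(r/\log r)$ singular values of common magnitude $\Theta(\sigma^\star)$. Substituting this into the spectral bound and balancing parameters in exactly the way the identity-matrix calculation for $\EQ_n$ (\cref{lem:eq-lb}) extracts the $\eps^{-2/3}$ exponent yields
\[
\Rhat_\eps(f) \;\ge\; \tilde\Omega\!\left(\min\!\left(\frac{1}{\eps^{2/3}},\ \frac{r}{\log r}\right)\right),
\]
as claimed.

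The main technical obstacle is the dyadic pigeonhole step. A rank-$r$ Boolean matrix can a priori have singular values stretching across a range of $2^{\Theta(r\log r)}$ (even with $\prod \sigma_i \ge 1$), and a naive pigeonhole would produce bands of size well below $1$. The refined argument must combine the Frobenius bound (which limits how many singular values can be large) with the determinantal bound (which limits how concentrated the product mass can be at the bottom of the spectrum) to localize the bulk of the spectrum to only a logarithmic range of dyadic scales. The $1/\log r$ loss relative to the $\EQ$ bound traces back precisely to this localization, and is the source of the gap between general Boolean $f$ and the identity matrix.
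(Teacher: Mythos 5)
Your overall framework is the same as the paper's: restrict to a full-rank $r\times r$ Boolean submatrix $B$ (so $|\det B|\ge 1$ and $\|B\|_F^2\le r^2$), feed it into \cref{thm:spectral}, and balance parameters exactly as in the $\EQ$ calculation. However, the step that carries all of the difficulty --- showing $\lambda_t(B)=O(t)$ for some $t=\Omega(r/\log r)$ --- is precisely what you defer as ``the main technical obstacle,'' so as written the proposal has a genuine gap rather than a proof. Moreover, the dyadic-band statement you plan to rely on is both unproved and not sufficient in the form you state it. First, the naive pigeonhole over the $O(\log r)$ dyadic scales between $\Omega(1)$ and $r$ only yields $\Omega(r/\log^2 r)$ singular values in a common band, costing an extra logarithm. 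Second, and more importantly, ``many singular values in one band'' is useless unless that band sits at scale $\Omega(1)$ (equivalently, unless you control $\lambda_t=\sum_{i\le t}\sigma_i^{-1}$): a band of $\Omega(r/\log r)$ singular values of size $r^{-c}$ would make $\lambda_t$ polynomially large, and the threshold $\eps\le \sqrt{t}/(k\lambda_t)$ from \cref{cor:spectral} would then be far too small to produce the $\eps^{-2/3}$ exponent. Ruling out such a low-lying bulk is exactly where the determinant/Frobenius tension must be used quantitatively, and your sketch never carries this out.

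The paper's \cref{lem:boolean-lb} closes this gap in a few lines, with no banding at all: suppose $\sigma_t<1/4$ for $t=\lfloor r/\log r\rfloor$. Then AM--GM applied to $\sigma_1^2,\dots,\sigma_t^2$ together with $\|B\|_F\le r$ gives $\prod_{i\le t}\sigma_i\le (r/\sqrt{t})^{t}$, while each of the remaining $r-t$ singular values contributes a factor less than $1/4$, so $|\det B|<(r/\sqrt{t})^{t}\,4^{-(r-t)}<1$ once $t\le r/\log r$ (and $r\ge 4$), a contradiction. Hence $\sigma_t\ge 1/4$ and $\lambda_t(B)\le 4t$ directly, after which your intended application of \cref{thm:spectral} with $k=\min\bigl(\rank(f),\,\Theta(\eps^{-2/3}\log^{1/3}(1/\eps))\bigr)$ gives the stated bound, just as in \cref{lem:eq-lb}. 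Your reduction to the submatrix and the final balancing are fine; what is missing is this determinant-versus-Frobenius contradiction, which is the actual content of the theorem.
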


\paragraph{Deterministic one-way protocols}

The weakest model of communication is deterministic one-way model wherein Alice creates a deterministic sketch of her distribution and sends it to Bob, who announces the answer.
The multiplicative weights update method can be used to show that every function $f : \zo^n \times \zo^n \to [-1,1]$ admits a canonical deterministic one-way protocol whose communication complexity is $O(n^2/\varepsilon^2)$.

We show that this protocol is optimal in general.  Specifically, we show in \cref{th:iponeway} that 
$$\Dhow_\varepsilon(\ip) = \Omega\lt(\frac{n^2}{\eps^2}\rt).$$

This follows from a tight characterization of the randomized one-way communication for $\ip$, where we show that for any error probability $\delta \geq 2^{-n}$,
\[ \Rhow_{\eps, \delta}(\ip) = \Omega\left(\frac{n \log(1/\delta)}{\eps^2}\right).\]

\subsection{Conclusion and future work}

Our work introduces and motivates the problem of \Jee, and presents a systematic study of its communication complexity. We answer a number of basic problems about it, but several intriguing problems remain open. We highlight some of our favorites:
\begin{itemize}
    \item What is the complexity of \jee for set-disjointness? The spectral method yields an $\Omega(1/\eps^{2/3})$ lower bound. The multiplicative lower bound for set-disjointness as studied in \cite{van2015communication} also implies an additive approximation lower bound of $\Omega(n/\sqrt{\eps})$. The best upper bound is $O(n/\eps)$.

    \item Can we show $\Omega(n/\eps)$ lower bounds for functions under weaker conditions than high discrepancy? 

    \item We have seen that for large rank Boolean functions, $\Rhat(f)$ as a function of $\epsilon$ grows as $1/\eps^c$ for $c \in [2/3,1]$. We know examples where the right dependence is $\Theta(1/\eps^c)$ for $c =\{2/3, 1\}$. What other values can $c$ take?
    
    \item The Boolean functions for which we show $\Rhat(f) = \tilde{O}(1/\eps^{c})$ for $c < 1$ have non-trivial randomized protocols. Is there an upper bound on $\R(f)$ in term of $\Rhat(f)$ that explains this? Or, are there functions where $\R(f) = \Omega(n)$, yet $\Rhat(f)$ grows as $(1/\eps)^c$ for some $c < 1$? 
\end{itemize}

\section{Communication protocols for general functions}
\label{sec:protocols}

We present our protocols for \jee for arbitrary bounded functions, starting with the obvious sampling based protocol. 
We state and analyze it for completeness. Throughout the paper, for any finite set $\mathcal{A}$, $\Delta_\mathcal{A}$ denotes the space of probability distributions on $\mathcal{A}$.

\begin{algorithm}
\caption{Random Sampling Protocol for expectation estimation}
\label{alg:samp-ee}
\noindent 

 \textbf{Parameters: } Function $f: \X \times \Y \to [-1,1]$.  Error parameter $\eps \in [0,1]$. 
 
 \textbf{Inputs: } Alice $\gets p \in \Delta_\X$, Bob  $\gets q \in \Delta_{\Y}$.

\textbf{Output:}  An $\eps$ additive estimate of $\E_{p,q}[f]$.

\begin{enumerate}
\item  Alice and Bob sample $k = O(1/\eps^2)$  points $x_1, \ldots, x_k \sim p$ and $y_1, \ldots, y_k \sim q$ independently.

\item They run a protocol for $f$ on each pair $(x_i, y_i)$ to get $v_i$ such that $|v_i - f(x_i, y_i)| \leq \eps/2$.

\item They return the estimate
$\bar{v} = \frac{1}{k} \sum_{i=1}^k v_i.$

\end{enumerate}
\end{algorithm}

\begin{theorem}
\label{thm:var-red1}
For any function $f: \X \times \Y \to [-1, 1]$, we have $\Rhat_\e(f) \le \tilde{O}(\R(f)/\eps^2)$.
\end{theorem}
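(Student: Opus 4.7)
The plan is to directly verify correctness and complexity of Algorithm~\ref{alg:samp-ee}. Let $\mu = \E_{p,q}[f]$ denote the quantity to be estimated, let $X_i = f(x_i,y_i)$ for the sampled pairs, and set $\bar X = \tfrac{1}{k}\sum_i X_i$, $\bar V = \tfrac{1}{k}\sum_i V_i$ with $k = \Theta(1/\eps^2)$. The error decomposes as $|\bar V - \mu| \le |\bar V - \bar X| + |\bar X - \mu|$, so it suffices to bound each piece by $\eps/2$ with good probability.

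First I would bound the pure sampling error. The $X_i$ are i.i.d., bounded in $[-1,1]$ with mean $\mu$ and variance at most $1$, so Chebyshev's inequality gives $\Pr[|\bar X - \mu| > \eps/2] \le 4/(k\eps^2)$, which is at most $1/6$ for a suitable choice of $k = \Theta(1/\eps^2)$. Next, I would handle the per-pair approximation error. Since Alice and Bob share randomness, each evaluation of $f(x_i,y_i)$ can be run using the best randomized protocol for $f$, and then error-amplified by $O(\log k) = O(\log(1/\eps))$ independent repetitions to drive the failure probability below $\delta_0 = 1/(6k)$, at cost $O(\R(f)\log(1/\eps))$ bits per pair. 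A union bound over the $k$ pairs ensures that with probability at least $5/6$, every $|V_i - X_i| \le \eps/2$, and hence $|\bar V - \bar X| \le \eps/2$. Combining the two events by a union bound yields $|\bar V - \mu| \le \eps$ with probability at least $2/3$, while the total communication is $k \cdot O(\R(f)\log(1/\eps)) = \tilde O(\R(f)/\eps^2)$.

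The only mild subtlety is what ``protocol for $f$'' means in the amplification step. For Boolean $f$, $\R(f)$ is the standard randomized complexity and majority vote over $O(\log(1/\eps))$ repetitions suffices. For real-valued $f \in [-1,1]$, the appropriate interpretation is the complexity of producing an $\eps/2$-accurate estimate of $f(x_i,y_i)$, which can be amplified to failure probability $\delta_0$ by taking the median of independent repetitions (since the median of $O(\log(1/\delta_0))$ estimates concentrates around the true value when each individual estimate is accurate with constant probability). No further communication is needed to coordinate the samples themselves: Alice draws $x_i \sim p$ and Bob draws $y_i \sim q$ locally from their own inputs, and shared randomness is used only to drive the subprotocol on each pair.
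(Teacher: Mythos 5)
Your proof is correct and follows essentially the same approach as the paper: sample $k = \Theta(1/\eps^2)$ pairs, bound the sampling error by Chebyshev/Markov on the variance $\le 1/k$, amplify the subprotocol for $f$ by $O(\log k)$ repetitions to union-bound over the $k$ invocations, and sum the two error contributions. Your write-up is somewhat more explicit than the paper's about the amplification step and the median trick for real-valued $f$, but the underlying argument is the same.
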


\begin{proof}
We show that \cref{alg:samp-ee} achieves the bound.
The output $\bar{v}$ satisfies
\[ \lt|\bar{v} - \frac{1}{k} \sum_{i=1}^k f(x_i, y_i) \rt| \leq \frac{\eps}{2}.\]
The sample average is an unbiased estimator for $\E_{p,q}[f]$ with variance bounded by $1/k$. So taking $k = O(1/\eps^2)$ and using Markov's inequality,  we have with probability 0.9,
\[ \lt|\E_{p,q}[f]  - \frac{1}{k} \sum_{i=1}^k f(x_i, y_i) \rt| \leq \frac{\eps}{2}.\]
This ensures that $\bar{v}$ is within $\eps$ of the true expectation.
The communication used is $k \R(f) = \tilde{O}(\R(f)/\eps^2)$, where we amplify the success probability of a randomized protocol so that we can union bound over $k$ invocations of the protocol.
\end{proof}

\subsection{Debiasing protocol for better variance reduction}\label{sec:debiasing}

In this section we show the following result:

\begin{theorem}\label{thm:mainupper} 
    For any function $f: \X \times \Y \to [-1, 1]$, we have $\Rhat_\e(f) \le \tilde{O}(\Row(f)/\eps)$.
\end{theorem}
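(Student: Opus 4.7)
The plan is a two-round ``debiasing'' protocol based on an ANOVA-style variance-reduction identity. Writing $\mu := \E_{p,q}[f]$, $g(x) := \E_{y \sim q}[f(x,y)]$, $h(y) := \E_{x \sim p}[f(x,y)]$, we decompose $f(x,y) = g(x) + h(y) - \mu + \xi(x,y)$, where the residual $\xi$ satisfies $\E_{x \sim p}[\xi(x,y)] = 0$ for all $y$ and $\E_{y \sim q}[\xi(x,y)] = 0$ for all $x$. The key observation is that these zero marginals kill all cross-covariances in the U-statistic $\bar\xi := \frac{1}{K^2}\sum_{i,j}\xi(x_i,y_j)$: for $(i,j) \ne (i',j')$, conditioning on whichever coordinate is shared (if any) annihilates the expectation via the zero-marginal in the other coordinate, and if no coordinate is shared, the four variables are independent. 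Hence $\Var[\bar\xi] = \E[\xi^2]/K^2 \le 1/K^2$, a factor-$K$ improvement over the $O(1/K)$ variance of the naive U-statistic $\frac{1}{K^2}\sum_{i,j} f(x_i,y_j)$, and this is what ultimately enables the linear-in-$1/\eps$ bound.

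Concretely, set $K = \Theta(1/\eps)$ and $T = \Theta(\log(1/\eps))$. Alice samples $x_1,\ldots,x_K \sim p$ iid and sends Bob $K$ amplified Alice-to-Bob one-way transcripts for $f$---one per $x_i$, with per-query error $\delta_0 = O(\eps)$ and independent randomness across $i$---at total cost $KT\,\Row(f)$. This lets Bob compute an approximation $\hat f(x_i, y)$ for any $y \in \Y$ and each $i$. Symmetrically, Bob samples $y_1,\ldots,y_K \sim q$ and sends Alice $K$ Bob-to-Alice transcripts. Using his exact knowledge of $q$, Bob computes $\bar g := \frac{1}{K}\sum_i \E_{y \sim q}[\hat f(x_i,y)]$ and sends it to Alice in $O(\log(1/\eps))$ bits; Alice analogously computes $\bar h := \frac{1}{K}\sum_j \E_{x \sim p}[\hat f(x,y_j)]$; both can compute $\bar f := \frac{1}{K^2}\sum_{i,j}\hat f(x_i,y_j)$ from the transcripts. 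The output is $\hat\mu := \bar g + \bar h - \bar f$.

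If $\hat f$ were exactly $f$, the decomposition gives $\bar f = \bar g + \bar h - \mu + \bar\xi$, so $\hat\mu = \mu - \bar\xi$ is unbiased with $\Var[\hat\mu] \le 1/K^2 = O(\eps^2)$, and Chebyshev yields $|\hat\mu - \mu| = O(\eps)$ with constant probability. The main obstacle will be controlling the error from using $\hat f$ in place of $f$. For the $\bar g$ term, set $\alpha_i := \E_{y \sim q}[\hat f(x_i,y) - f(x_i,y)]$, so that the deviation is $\frac{1}{K}\sum_i \alpha_i$; the $\alpha_i$'s are iid by the independent-randomness choice, and Jensen's inequality gives $\E[\alpha_i^2] \le \E[(\hat f - f)^2] = O(\delta_0) = O(\eps)$, so the deviation has variance $O(\eps^2)$. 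Analogous bounds hold for $\bar h$ and for $\bar f$ (for the latter, group the $K^2$ errors by the $K$ independent Alice transcripts and apply Jensen row-wise to get per-row contribution with variance $O(\eps)$, then average over the $K$ independent rows). Combining, the total approximation error is $O(\eps)$ with constant probability, and the total communication is $2KT\,\Row(f) + O(\log(1/\eps)) = \tilde O(\Row(f)/\eps)$, establishing the claim.
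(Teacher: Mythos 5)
Your proposal is correct and essentially mirrors the paper's proof. Your ANOVA residual $\xi(x,y) = f(x,y) - g(x) - h(y) + \mu$ is (up to sign and a shift by $\mu$) the paper's debiased function $g(x,y) = \E_{a\sim p}[f(a,y)] + \E_{b\sim q}[f(x,b)] - f(x,y) = \mu - \xi(x,y)$, your estimator $\bar g + \bar h - \bar f$ is algebraically identical to the paper's $\frac{1}{K^2}\sum_{i,j} g(x_i,y_j)$, and the cross-term cancellation driving the $O(1/K^2)$ variance bound and the $O(K)$ one-way transcripts in each direction are the same as in the paper's Lemmas~\ref{lem:row-col}, \ref{lem:comm-g}, and \ref{lem:var-red}.
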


Let us assume that Alice and Bob send each other their $k$
samples. They can now compute the estimate

\begin{equation}\label{eq:debias1}
     z_1 = \frac{1}{k^2} \sum_{i=1}^{k}\sum_{j=1}^k f(x_i, y_j). 
\end{equation}

For (almost) the same communication, in place of $k$ independent samples, we
now have $k^2$ {\em dependent} samples. Optimistically, one might hope that
the variance would behave as $1/k^2$ rather than $1/k$. It is
however not hard to come up with simple examples where the variance is in fact
$1/k$: let $p, q$ both be uniform on $\zo$, and let $f(x, y) = x$.

While the naive attempt fails, we will show that there indeed is a better estimator where the $k^2$ estimates behave as though they are independent.
For a point $(x, y)$, let
\begin{align} 
\label{eq:pr-est}
g(x, y) = \E_{a \sim p}[f(a, y)] + \E_{b \sim q}[f(x, b)] -
f(x,y).
\end{align}
Observe that
\[\E_{x \sim p, y \sim q}[g(x,y)] = \E_{a \sim p, y \sim q}[f(a, y)] +
\E_{x \sim p, b \sim q}[f(x, b)] - \E_{x \sim p, y \sim q}[f(x,y)] = \E_{p,q}[f]\]
so $g$ is an unbiased estimator for $f$. In fact $g$ is an unbiased estimator even
conditioned on either one of  $x$ or $y$. In matrix language, it is unbiased for every row and every column. 

\begin{lemma}
\label{lem:row-col}
  For all $x \in \X, y \in \Y$, it holds that
  \begin{align*}
    \E_{a \sim p}[g(a,y)] =  \E_{b \sim q}[g(x,b)] = \E_{p,q}[f].
   \end{align*}
\end{lemma}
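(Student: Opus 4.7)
The plan is a direct computation using linearity of expectation together with the observation that two of the three terms in the definition of $g$ do not depend on the variable we are averaging over, which makes them effectively constant.

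First I would handle the row identity $\E_{a \sim p}[g(a,y)] = \E_{p,q}[f]$. Expanding the definition of $g$ gives
\[
\E_{a \sim p}[g(a,y)] \;=\; \E_{a \sim p}\Big[\,\E_{a' \sim p}[f(a',y)] \;+\; \E_{b \sim q}[f(a,b)] \;-\; f(a,y)\,\Big].
\]
The first inner expectation $\E_{a' \sim p}[f(a',y)]$ has no dependence on $a$, so pulling it out of the outer expectation leaves it unchanged. The middle term becomes $\E_{a \sim p, b \sim q}[f(a,b)] = \E_{p,q}[f]$ by the independence of $p$ and $q$. The third term is $\E_{a \sim p}[f(a,y)]$, which exactly cancels the first term. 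What remains is $\E_{p,q}[f]$, as desired.

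The column identity $\E_{b \sim q}[g(x,b)] = \E_{p,q}[f]$ follows by the exact symmetric argument, swapping the roles of $(x,p)$ and $(y,q)$: expanding $g(x,b)$, the term $\E_{b' \sim q}[f(x,b')]$ survives and then cancels against $\E_{b \sim q}[f(x,b)]$, while $\E_{a \sim p}[f(a,b)]$ averages over $b \sim q$ to yield $\E_{p,q}[f]$.

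There is no real obstacle here; the lemma is essentially a sanity check that the debiased estimator $g$ defined in \eqref{eq:pr-est} satisfies the stronger row/column unbiasedness property (not just unbiasedness on average), which is the crux of why the $k^2$ cross-evaluations in \eqref{eq:debias1} behave like independent samples. The only care needed is to keep the two copies of the dummy variable (the one being averaged over inside $g$ and the one from the outer expectation) notationally distinct, as I did with $a$ and $a'$ above.
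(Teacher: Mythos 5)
Your proof is correct and essentially identical to the paper's: both expand $g$ by linearity, observe that the term averaged over the "wrong" variable collapses to $\E_{p,q}[f]$, and cancel the remaining two terms against each other. Your note about keeping the dummy variables $a$ and $a'$ distinct is a helpful clarification that the paper's more compact display glosses over, but it is the same computation.
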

\begin{proof}
  We have
  \begin{align*}
      \E_{a \sim p}[g(a,y)] = \E_{a \sim p}[f(a, y)] + \E_{a \sim p} \E_{b \sim q}[f(a, b)] - \E_{a \sim p}[f(a, y)]
      = \E_{p,q}[f].\\
      \E_{b \sim q}[g(x,b)] = \E_{b \sim q} \E_{a \sim p}[f(a,b)]
      + \E_{b \sim q}[f(x, b)]  - \E_{b \sim q}[f(x,b)] = \E_{p,q}[f].
  \end{align*}
\end{proof}
\begin{remark}
Given a function $f(x,y)$ on a product space $\mathcal{X} \times \mathcal{Y}$,  $g(x,y)$ as defined in \eqref{eq:pr-est} is the {\it highest degree} term   the Efron-Stein decomposition of $f$.  We refer the reader to Section 8.3 in O'Donnell \cite{o2014analysis} for an exposition. 
\end{remark}

We now turn to estimating $g(x, y)$, which will be a subroutine in our protocol for $f$.

\begin{lemma}
\label{lem:comm-g}
The function $g(x,y)$ can be estimated within error $\eps/2$ by a protocol that uses $\tilde{O}(\Row(f))$ bits, with failure probability $0.1$.
\end{lemma}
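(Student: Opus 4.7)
My plan is to estimate each of the three terms comprising $g(x,y) = h_1(y) + h_2(x) - f(x,y)$, where $h_1(y) = \E_{a \sim p}[f(a,y)]$ and $h_2(x) = \E_{b \sim q}[f(x,b)]$, separately to within additive error $\eps/6$. The key observation enabling a $\tilde O(\Row(f))$ bound---rather than the $\tilde O(\Row(f)/\eps^2)$ one would get from naive sampling of $f(a,y)$ over $a \sim p$---is that Alice, who holds all of $p$, can compute $h_1(y)$ \emph{exactly} once she has pointwise access to the function $f(\cdot,y)$; and such pointwise access is exactly what the decoding side of a one-way protocol provides. Symmetrically, Bob can compute $h_2(x)$ pointwise from Alice's one-way sketch $M(x)$. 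No sampling of $p$ or $q$ is needed.

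Concretely, let $\Pi_{A\to B}$ denote the optimal one-way protocol from Alice to Bob for $f$, amplified (by majority/median of independent repetitions using shared randomness) to have error probability at most $\delta := c \eps$ for a small absolute constant $c$; this costs $O(\Row(f)\log(1/\delta)) = \tilde O(\Row(f))$ bits and yields, for any inputs $x,y$, a value $\Pi_{A\to B}(M(x), y) \in [-1,1]$ equal to $f(x,y)$ with probability $\geq 1-\delta$ over the shared randomness, where $M(x)$ denotes Alice's message. Define $\Pi_{B\to A}$ with message $N(y)$ analogously. The estimation protocol for $g(x,y)$ is: (i) Bob sends $N(y)$ to Alice, who locally computes $\tilde h_1(y) := \sum_{a \in \X} p(a)\, \Pi_{B\to A}(N(y), a)$; (ii) Alice sends $M(x)$ to Bob, who locally computes $\tilde h_2(x) := \sum_{b \in \Y} q(b)\, \Pi_{A\to B}(M(x), b)$ and, from the same message, also forms $\tilde f := \Pi_{A\to B}(M(x), y)$; (iii) Alice and Bob exchange $\tilde h_1(y)$, $\tilde h_2(x)$, and $\tilde f$ rounded to $O(\log(1/\eps))$ bits, and both output $\tilde g(x,y) := \tilde h_1(y) + \tilde h_2(x) - \tilde f$.

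For the error analysis, the central step is bounding $|\tilde h_1(y) - h_1(y)|$; the bound on $\tilde h_2$ is symmetric and $\tilde f$ is handled by the one-way guarantee directly. Write
\[\tilde h_1(y) - h_1(y) \;=\; \sum_{a \in \X} p(a)\,\big(\Pi_{B\to A}(N(y), a) - f(a,y)\big).\]
Each summand has absolute value at most $2$ and is zero whenever the one-way protocol is correct on the specific input $(a,y)$, so its expectation over the shared randomness is at most $2\delta$. Linearity of expectation gives $\E|\tilde h_1(y) - h_1(y)| \leq 2\delta$, and Markov's inequality yields $\Pr[|\tilde h_1(y) - h_1(y)| > \eps/6] = O(\delta/\eps)$, which is below $1/30$ for a small enough $c$. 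The identical bound applies to $\tilde h_2$, and $\tilde f \neq f(x,y)$ with probability at most $\delta \leq 1/30$. A union bound then gives failure probability at most $0.1$, and on success $|\tilde g - g| \leq 3 \cdot \eps/6 = \eps/2$. The total communication is $\tilde O(\Row(f))$, dominated by the two amplified one-way messages in steps (i) and (ii). The one delicate point to notice is that the errors $\Pi_{B\to A}(N(y), a) - f(a,y)$ across different values of $a$ all share the same random string and are hence \emph{not} independent, so a standard independent-concentration argument does not apply; this is precisely why we route the analysis through a crude Markov bound on the $\ell_1$-averaged error, which suffices since only a constant-probability single-query guarantee is required.
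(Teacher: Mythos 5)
Your protocol is correct and rests on the same key idea as the paper's: a single one-way message about one party's sample lets the other party evaluate $f$ at any point on their side locally, so the cost of forming the conditional expectations does not multiply $\Row(f)$ by any $1/\eps$-factor. The difference is in how the conditional expectation is formed and how the error is controlled. The paper has Bob draw $k=O(1/\eps^2)$ fresh samples $b_1,\dots,b_k\sim q$, average the decoded values $\Pi_{A\to B}(M(x),b_i)$, and then union-bound over all $k+1$ decoder invocations (requiring amplification to failure probability $\delta=O(\eps^2)$); you instead have each party compute the \emph{exact} $q$- (respectively $p$-) weighted sum of decoder outputs over the entire domain and control the error via Markov's inequality applied to $\E\bigl[\lvert\tilde h_1-h_1\rvert\bigr]\le 2\delta$, which only needs $\delta=O(\eps)$. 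Both routes cost $O(\log(1/\eps))$ amplification and give $\tilde O(\Row(f))$. Your analysis is slightly cleaner in that it sidesteps the union bound and handles the shared-randomness correlations directly; the paper's variant has the advantage of only requiring sample access to $p$ and $q$ (rather than exact knowledge), a property the authors explicitly care about. One small framing slip: you contrast your protocol with a hypothetical ``naive sampling over $a\sim p$'' costing $\tilde O(\Row(f)/\eps^2)$, but that baseline does not actually arise --- the paper's sampling variant also sends the one-way message only once and hence also achieves $\tilde O(\Row(f))$; the $1/\eps^2$ there shows up only in the number of \emph{local} decoder calls and in the amplification exponent.
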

\begin{proof}
Alice sends a message about $x$ using a one-way protocol for computing $f$ with failure probability $\delta$. Bob uses this to compute $f(x,y)$. He also uses it to estimate $\E_{b \sim q}[f(x,b)]$ to within $\eps/4$ using $k = O(1/\eps^2)$ samples from his distribution. To union bound over these $k +1$ computations of $f$ we take $\delta = 1/(ck)$ for large enough $c$, hence Alice must send $O(\Row(f)\log(k))$ bits of communication. 

Similarly, Bob sends a message to Alice using $O(\Row(f)\log(k))$ bits that lets her estimate $\E_{a \sim p}[f(a, y)]$ to within $\eps/4$ with probability $0.99$. 

By Equation \eqref{eq:pr-est}, these suffice to estimate $g(x,y)$ within $\eps/2$. The total communication exchanged is 
\[ O(\Row(f)\log(1/\eps)) + O(\Row(f)\log(k)) = O(\Row(f)\log(1/\eps)) = \tilde{O}(\Row(f)) \] 
\end{proof}

A couple of notes about the proof: if the one-way protocols are deterministic, we do  not lose the added $\log(1/\eps)$ factor from amplification. Also, for real-valued $f$ we can allow protocols that have $O(\eps)$ additive error.

We now give our \varred protocol. The key idea is that while the simple approach from \cref{eq:debias1} fails when directly applied to $f$ it surprisingly works for $g$: the $k^2$ estimates of $g$ behave as though they are independent.

\begin{algorithm}
\caption{Debiasing Protocol}
\label{alg:var-red}
\noindent 

 \textbf{Parameters: } Function $f: \X \times \Y \to [-1,1]$.  Error parameter $\eps \in [0,1]$. 
 
 \textbf{Inputs: } Alice $\gets p \in \Delta_\X$, Bob  $\gets q \in \Delta_{\Y}$.

\textbf{Output:}  An $\eps$ additive estimate of $\E_{p,q}[f]$.

\begin{enumerate}
\item  Alice and Bob sample $k = O(1/\eps)$  points $x_1, \ldots, x_k \sim p$ and $y_1, \ldots, y_k \sim q$ independently.

\item They run the protocol for $g$  from Lemma \ref{lem:comm-g} on each pair $(x_i, y_j)$ to get $v_{i,j}$ such that $|v_{i,j} - g(x_i, y_j)| \leq \eps/10$.

\item They return the estimate
\[ \bar{v} = \frac{1}{k^2} \sum_{i=1}^k \sum_{j=1}^k v_{i, j}. \]
\end{enumerate}
\end{algorithm}

\begin{lemma}
\label{lem:var-red}
    Let
    \[ z = \frac{1}{k^2} \sum_{i=1}^k \sum_{j=1}^k g(x_i, y_j)\]
    Then 
    \[ \E[z] = \E_{p,q}[f], \ \ \Var[z] \leq \frac{16}{k^2}.\]
\end{lemma}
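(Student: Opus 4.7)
The expectation identity is immediate from Lemma~\ref{lem:row-col}: since $\E_{x \sim p, y \sim q}[g(x,y)] = \E_{p,q}[f]$ and the pairs $(x_i, y_j)$ each have marginal distribution $p \times q$, linearity of expectation gives $\E[z] = \E_{p,q}[f]$.

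For the variance, the plan is to expand
\[
\Var[z] \;=\; \frac{1}{k^4} \sum_{i,j,i',j'=1}^{k} \Cov\!\big(g(x_i, y_j),\, g(x_{i'}, y_{j'})\big)
\]
and show that every off-diagonal covariance vanishes, leaving only the $k^2$ diagonal terms. I would split into four cases based on whether $i = i'$ and whether $j = j'$. When $i \neq i'$ and $j \neq j'$, the four random variables $x_i, x_{i'}, y_j, y_{j'}$ are mutually independent, so the two $g$-values are independent and the covariance is $0$. The interesting cases are the ``mixed'' ones, say $i = i'$ but $j \neq j'$. Here I would condition on $x_i$: given $x_i$, the variables $y_j$ and $y_{j'}$ are independent, so
\[
\E\!\big[g(x_i, y_j)\, g(x_i, y_{j'}) \mid x_i\big] \;=\; \E_{b \sim q}[g(x_i, b)]^{\,2} \;=\; \E_{p,q}[f]^{\,2},
\]
where the second equality uses the column-unbiasedness half of Lemma~\ref{lem:row-col}. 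Taking outer expectation, $\E[g(x_i, y_j) g(x_i, y_{j'})] = \E_{p,q}[f]^2 = \E[g(x_i,y_j)]\,\E[g(x_i,y_{j'})]$, so the covariance is again $0$. The symmetric case $j = j'$, $i \neq i'$ is handled by the row-unbiased half.

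Thus only the $k^2$ diagonal terms $(i,j) = (i',j')$ survive, giving $\Var[z] = \Var[g(x_1,y_1)]/k^2$. Finally, since $f$ takes values in $[-1,1]$, each of the three summands in the definition \eqref{eq:pr-est} of $g$ lies in $[-1,1]$, so $|g(x,y)| \le 3$ and hence $\Var[g(x_1,y_1)] \le 9 \le 16$, yielding the claimed bound $\Var[z] \le 16/k^2$.

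The whole argument is essentially a bookkeeping exercise; the only substantive point — and the reason the ``$k^2$ dependent samples'' trick succeeds for $g$ but fails for $f$ — is the cancellation of mixed-index covariances, which is precisely what the row- and column-unbiasedness property of Lemma~\ref{lem:row-col} buys us. I do not anticipate any real obstacle beyond careful case analysis.
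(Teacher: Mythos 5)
Your proof is correct and follows essentially the same approach as the paper: the heart of the matter in both is that the row/column unbiasedness of $g$ (Lemma~\ref{lem:row-col}) kills every mixed-index covariance, leaving only the $k^2$ diagonal terms. The paper packages this by centering to $\hat g = g - \mu$ and expanding $k^4\E[(z-\mu)^2]$, while you compute covariances directly via conditioning; these are the same calculation, and your constant ($\Var[g]\le 9$ from $|g|\le 3$) is in fact slightly tighter than the paper's $\E[\hat g^2]\le 16$ from $|\hat g|\le 4$.
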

\begin{proof}
For brevity, let $\mu = \E_{p,q}[f]$. 
We define
$\hat{g}(x,y) = g(x,y) - \mu$ . By Lemma \ref{lem:row-col},
\begin{align*}
  \E_{a \sim p}[\hat{g}(a,y)] = 0, \   \E_{b \sim q}[\hat{g}(x,b)]
  = 0.
\end{align*}
Since $|f(x,y)| \leq 1$,
\[ |\hat{g}(x,y)| = | \E_{a \sim p}[f(a, y)] + \E_{b \sim q}[f(x, b)] -
f(x,y) - \mu| \leq 4.\]

We can write
\begin{align*}
  z - \mu &= \frac{1}{k^2}\sum_{i=1}^k\sum_{j=1}^k \left(g(x_i, y_j) - \mu\right) = \frac{1}{k^2} \sum_{i=1}^k\sum_{j=1}^k \hat{g}(x_i, y_j)
\end{align*}
Hence we have
\begin{align}
\label{eq:var-bd1}
  k^4\E[(z - \mu)^2] &= \sum_{i=1}^k\sum_{j =1}^k
  \E[\hat{g}(x_i, y_j)^2] + \sum_{i=1}^k\sum_{j \neq j' \in [k]}\E[\hat{g}(x_i,
  y_j)\hat{g}(x_i, y_{j'})]\notag\\
  & + \sum_{i \neq i' \in [k]}\sum_{j =1}^k\E[\hat{g}(x_i,
  y_j)\hat{g}(x_{i'}, y_j)] + \sum_{i \neq i' \in [k]}\sum_{j \neq
    j' \in [k]}\E[\hat{g}(x_i, y_j)\hat{g}(x_{i'},y_{j'})]
\end{align}
where the expectations are over the random choices of $\{x_i, y_j\}_{i \in [k], j \in [k]}$. 
We bound each of these terms as follows.
\begin{align*}
  \sum_{i=1}^k\sum_{j =1}^k
  \E[\hat{g}(x_i, y_j)^2] & \leq 16 k^2\\
  \sum_{i=1}^k\sum_{j \neq j' \in [k]}\E[\hat{g}(x_i,
    y_j)\hat{g}(x_i, y_{j'})] &=   \sum_{i=1}^k\sum_{j \neq j' \in [k]}\E_{x_i}\E_{y_j, y_{j'}}[\hat{g}(x_i,y_j)\hat{g}(x_i,y_{j'})]\\
    &= \sum_{i=1}^k\sum_{j \neq j' \in   [k]}\E_{x_i}\E_{y_j}[\hat{g}(x_i,y_j)]\E_{y_{j'}}[\hat{g}(x_i,y_{j'})]
    = 0. \\
\sum_{i \neq i' \in [k]}\sum_{j =1}^k\E[\hat{g}(x_i,
  y_j)\hat{g}(x_{i'}, y_j)] &= \sum_{j =1}^k\sum_{i \neq i' \in [k]}\E[\hat{g}(x_i, y_j)\hat{g}(x_{i'}, y_j)]\\
&= \sum_{j=1}^k\sum_{i \neq i' \in   [k]}\E_{y_j}\E_{x_i}[\hat{g}(x_i,y_j)]\E_{x_{i'}}[\hat{g}(x_{i'},y_{j})]
= 0.\\
\sum_{i \neq i' \in [k]}\sum_{j \neq
    j' \in [k]}\E[\hat{g}(x_i, y_j)\hat{g}(x_{i'},y_{j'})] &= \sum_{i \neq i' \in [k]}\sum_{j \neq
    j' \in [k]}\E_{x_i, y_j}[\hat{g}(x_i, y_j)] \E_{x_{i'}, y_{j'}}[\hat{g}(x_{i'},y_{j'})] 
    = 0.\\
\end{align*}
Plugging these bounds into Equation \eqref{eq:var-bd1} and rearranging gives the desired bound.
\end{proof}

We can now prove \cref{thm:mainupper}, our main upper bound for the \jee problem.
\eat{\begin{theorem}
\label{thm:var-red2}
For all $f:\X \times \Y \rightarrow [-1,1]$, $\Rhat_\e(f) \le \tilde{O}(\Row(f)/\eps)$.
\end{theorem}}
\begin{proof}[Proof of \cref{thm:mainupper}]
Indeed, we show that \cref{alg:var-red} achieves the bound.

Since  $|v_{i, j} - g(x_i, y_j)| \leq \eps/10$ for every $i, j$, and $\bar{v}$ is the average of the $v_{i,j}$s, we have
\[ \lt| \bar{v} - \fr{k^2}\sum_{i,j} g(x_i, y_j) \rt| \leq \frac{\eps}{10}. \]
By Lemma \ref{lem:var-red}, the latter is an unbiased estimator for $\E_{p,q}[f]$ with variance $16/k^2$. So by Markov's inequality, with probability $0.9$ over the choice of samples,
\[ \lt| \fr{k^2}\sum_{i,j} g(x_i, y_j) - \E_{p,q}[f]\rt| \leq \frac{40}{k}. \] 
By choosing $k =O(1/\eps)$, we can bound this by $\eps/2$, which implies that $\bar{v}$ is $\eps$ close to $\E_{p,q}[f]$.
\end{proof}

We note also that if $\X$ and $\Y$ have size $2^n$, we have $\Row(f) \leq n$, which gives the following corollary.

\begin{corollary}
    For every bounded function $f: \zo^n \times \zo^n \to [-1,1]$, we have $\Rhat_\e(f) \leq O(n/\eps)$. 
\end{corollary}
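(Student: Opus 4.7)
The plan is an essentially one-line reduction to \cref{thm:mainupper} once the right bound on $\Row(f)$ is identified. For any function $f: \zo^n \times \zo^n \to [-1,1]$, the trivial ``send your input'' protocol is a deterministic one-way protocol of cost exactly $n$: Alice transmits her $n$-bit string $x$ to Bob, who then knows both $x$ and $y$ and evaluates $f(x,y)$ himself with no further communication and no error. This witnesses $\Row(f) \leq \Dow(f) \leq n$, so plugging this bound into \cref{thm:mainupper} immediately yields $\Rhat_\eps(f) \leq \tilde{O}(n/\eps)$.

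To sharpen $\tilde{O}(n/\eps)$ into the $O(n/\eps)$ claimed in the corollary, I would invoke the observation made in the paragraph immediately after the proof of \cref{lem:comm-g}: the $\log(1/\eps)$ factor hidden by the tilde comes entirely from amplifying a randomized one-way protocol so that one can union-bound over the $O(1/\eps^2)$ local samples used to estimate the row and column means inside $g$. When the underlying one-way protocol is itself deterministic, no amplification is necessary, and the per-pair estimation cost drops from $\tilde{O}(\Row(f))$ to $O(\Dow(f))$. Since ``send your input'' is deterministic, instantiating \cref{alg:var-red} with this choice and $k = O(1/\eps)$ samples per party yields total communication $O(k \cdot n) = O(n/\eps)$: each of Alice and Bob sends $k$ verbatim samples, and Bob (resp.\ Alice) then locally computes $f(x_i, \cdot)$ on $O(1/\eps^2)$ fresh draws from his (resp.\ her) distribution to estimate $\E_{b\sim q}[f(x_i,b)]$ (resp.\ $\E_{a\sim p}[f(a,y_j)]$), with the few short estimate exchanges absorbed into the main term.

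No substantive obstacle is anticipated: the argument is pure bookkeeping on top of \cref{thm:mainupper}, and the only point that needs a moment of care is the accounting of the polylogarithmic factor, which is taken care of by the deterministic-protocol remark after \cref{lem:comm-g}. It is perhaps worth emphasizing that the corollary represents a genuine improvement over the trivial $O(n/\eps^2)$ bound from random sampling (\cref{thm:var-red1}), and that the improvement persists for \emph{every} bounded function on $\zo^n \times \zo^n$, irrespective of its randomized two-way complexity $\R(f)$.
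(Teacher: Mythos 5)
Your proposal is correct and follows the same route as the paper: bound $\Row(f)\le n$ via the trivial deterministic one-way protocol, plug into \cref{thm:mainupper}, and then invoke the remark after \cref{lem:comm-g} that the $\log(1/\eps)$ amplification factor vanishes when the one-way protocol is deterministic. You make explicit a detail the paper leaves implicit—that the corollary's bound is $O(n/\eps)$ rather than $\tilde O(n/\eps)$ precisely because ``send your input'' is deterministic—and your accounting of the $O(k n)$ cost plus negligible estimate exchanges is accurate.
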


\paragraph{Debiasing versus random sampling: no best of both worlds.}

We have seen protocols for \jee that require communication $\tilde{O}(\R(f)/\eps^2)$ and $\Ot(\Row(f)/\eps)$. It raises the natural question whether the bound of $\max(\R(f)/\eps^2, \Row(f)/\eps)$ can be replaced by a bound like $\R(f)/\eps$. We will show that such a bound does not hold for arbitrary functions, and that the dependence on the $\eps$ parameter does actually need to be quadratic for $\eps$ sufficiently small.

We define the Double-indexing function $\DI:\pmo^n \times \pmo^n \to \pmo$ where $n = k + 2^k$ for some integer $k$ as follows. Alice's input is a pair $(i, x)$ for while Bob's input is a pair $(j, y)$ where $i, j \in [2^k], x, y  \in \pmo^k$.  We define $\DI((i,x), (j ,y) = x_jy_i$. In essence, both Alice and Bob have to solve an instance of indexing. It is easy to see that $\R(\DI) = O(k) = O(\log(n))$ since they can exchange $i, j$ in the first round and $x_j, y_i$ in the next. However, the lower bound for standard Indexing \cite{KremerNR99} shows that 
\[ \Row(\DI) =  \Omega(2^k) = \Omega(n).\]
Hence the upper bound for $\Rhat_{\eps}(\DI)$ is $\Ot(\min(\log(n)/\eps^2), n/\eps))$. The two bounds are essentially equal when $\eps \approx 1/n$, for larger $\eps$ we get $O(\log(n)/\eps^2)$, but for smaller $\eps$, we get $O(n/\eps)$. 

We will show the following lower bound:

\begin{theorem}
    \label{thm:di-lb}
    For $\eps \geq 2/n$, $\Rhat_{\eps}(\DI) = \Omega(1/\eps^2)$.
\end{theorem}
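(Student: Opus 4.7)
My plan is to reduce the Gap-Hamming Distance (GHD) problem on bit vectors of length $M = \Theta(1/\eps^2)$ to an instance of $\EE_{\DI,\eps}$. The Chakrabarti--Regev theorem says that distinguishing $\la u, v\ra \ge c\sqrt{M}$ from $\la u, v\ra \le -c\sqrt{M}$ for $u, v \in \pmo^M$ requires $\Omega(M) = \Omega(1/\eps^2)$ randomized communication bits, which would give exactly the bound claimed. I will pick $N = \lfloor 1/(c'\eps) \rfloor$ for a sufficiently large constant $c'$, so that $M = N^2$ and $\eps N^2$ is comfortably smaller than the GHD gap $\sqrt{M} = N$. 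The hypothesis $\eps \ge 2/n$ guarantees $N \le 2^k$, which is exactly the slack I need to fit the reduction inside the index range of $\DI$.

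Given a GHD instance $(u,v) \in \pmo^{N^2}$, I first view $u$ and $v$ as $N \times N$ matrices with entries $u_{i,j}, v_{i,j} \in \pmo$. Alice then sets her distribution $p$ to be uniform over the $N$ points $\{(i, x^{(i)}) : i \in [N]\}$, where $x^{(i)} \in \pmo^{2^k}$ is defined by $x^{(i)}_j = u_{i,j}$ for $j \in [N]$ and $x^{(i)}_j = +1$ for $j > N$. Symmetrically, Bob sets $q$ uniform over $\{(j, y^{(j)}) : j \in [N]\}$, with $y^{(j)}_i = v_{i,j}$ for $i \in [N]$ and $y^{(j)}_i = +1$ for $i > N$. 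The padding coordinates are never accessed, since the sampled $i$ and $j$ always lie in $[N]$.

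A direct calculation will then give
\[ \E_{(i,x)\sim p,\,(j,y)\sim q}\big[\DI((i,x),(j,y))\big] \;=\; \E_{i,j \sim [N]}\big[x^{(i)}_j \, y^{(j)}_i\big] \;=\; \frac{1}{N^2}\sum_{i,j \in [N]} u_{i,j}\, v_{i,j} \;=\; \frac{1}{N^2}\la u, v\ra. \]
Thus any $\eps$-additive estimate of $\E_{p,q}[\DI]$ yields an additive estimate of $\la u, v\ra$ with error at most $\eps N^2 < \sqrt{M}$, which suffices to solve the GHD instance. The $\Omega(M)$ GHD lower bound then forces the $\DI$-estimation protocol to use $\Omega(1/\eps^2)$ bits, giving the claimed $\Rhat_\eps(\DI) = \Omega(1/\eps^2)$.

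There is no substantial obstacle beyond checking that $N$ fits inside $[2^k]$, which is exactly where the hypothesis $\eps \ge 2/n$ is used; the heart of the argument is the identity $\DI((i,x),(j,y)) = x_j y_i$, which makes $\E[\DI]$ under product distributions of this form equal (up to normalization) to the Frobenius inner product of the matrices encoded by Alice and Bob. This also clarifies why the bound degrades precisely at $\eps \approx 1/n$: beyond that point the reduction runs out of index slots to encode a large enough Gap-Hamming instance.
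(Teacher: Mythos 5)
Your proof is correct and takes essentially the same route as the paper: both reduce Gap-Hamming to $\EE_{\DI,\eps}$ by encoding the GHD instance as a pair of matrices, placing uniform distributions on $\{(i, \text{row } i)\}$ and $\{(j, \text{column } j)\}$, and using the identity $\E_{p,q}[\DI] = \langle u, v \rangle / N^2$. The paper runs the reduction on the full $2^k \times 2^k$ matrix directly, while you truncate to an $N \times N$ submatrix with $N = \Theta(1/\eps)$ and pad the unused coordinates; the two are equivalent under the hypothesis $\eps \ge 2/n$.
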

\begin{proof}
    We will reduce from Gap Hamming on Inputs of length $2^{2k}$. We think of the inputs to Alice and Bob as matrices $A, B$ each in $\pmo^{2^k \times 2^k}$. Let $\iprod{A}{B}_F$ denote the entry-wise or Frobenius inner product of $A$ and $B$. Computing this to within error $\eps2^{2k}$ requires $\Omega(\min(1/\eps^2, 2^{2k}))$ bits of communication by Theorem \ref{thm:cr}. 

    However, we can encode this as an instance of $\DI$. Alice gets as input $i \in[2^k]$ drawn uniformly at random and the $i^{th}$ row $x = A[i, :] \in \pmo^{2^k}$. Bob gets as input $j \in [2^k]$ drawn uniformly at random and the $j^{th}$ column $y = B[:, j] \in \pmo^{2^k}$. We then have
    \begin{align*}
        \E[\DI(i, x), (j ,y)] = \fr{2^{2k}} \sum_{i,j}x_jy_i = \fr{2^{2k}}\sum_{i,j}A[i,j]B_[i,j] = \frac{\iprod{A}{B}_F}{2^{2k}}.
    \end{align*}
    Hence estimating $\E[\DI(i, x), (j ,y)]$ requires $\Omega(1/\eps^2)$ bits of communication as long as $\eps \geq 2^{-k}$. For $k\geq 2$, $2^{-k} \geq 2/n$.
\end{proof}

\subsection{Protocols based on spectral properties}\label{sec:spectral}

For any $f: \zo^n \times \zo^n \to [0,1]$ there is always an $\eps$-error protocol of cost $O(n2^n\log(1/\eps))$ for any function where Alice sends Bob all her probabilities to within $\eps/2^n$ accuracy. In this section we discuss how to improve this bound for functions with certain spectral properties.

To each function $f$ we associate its $2^n$ by $2^n$ matrix $A_f$ whose $(i, j)$-th entry is $f(i, j)$. Then, define $\rank(f) = \rank(A_f)$. Additionally, we define the $\eps$-approximate rank $\rank_{\eps}(f)$ to be the minimum rank of any matrix $B$ such that $\|A_f - B\|_{\infty} \leq \eps$. We also define the spectral norm $\sigma(f)$ to be the largest singular value of $A_f$. Similarly, for any $t \in [2^n]$, let $\sigma_t(f)$ denote the $t$'th largest singular value of $f$.

We note also that we may extend these definitions to continuous $O(1)$-Lipschitz functions $f: [0,1]^n \times [0,1]^n \to [-1,1]$ by discretizing the domain to a grid of size $1/m$ for large enough $m = O(1/\eps)$, and then defining the matrix $A_f$ as above. Then Alice and Bob may discretize their distributions to the grid as well, incurring an additional additive error of at most $\e/3$ (for a suitable choice of $m$).

We now present a protocol for \jee which works for functions of low rank.

\begin{theorem}
\label{thm:svd-protocol}
    For any function $f$ with $\rank_{\e/2}(f) =r$, $\Dhat_\e(f) \leq O(rn + r\log(1/\eps))$.
\end{theorem}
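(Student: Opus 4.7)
The plan is to base the deterministic protocol on a rank-$r$ approximation $B$ of the matrix $A_f$ with $\|A_f - B\|_\infty \le \eps/2$; such a $B$ exists by hypothesis, and since both parties know $f$ they can deterministically fix the same $B$ (and its factorization). Both then compute an SVD $B = U\Sigma V^\top$ with $U, V \in \mathbb{R}^{2^n \times r}$ having orthonormal columns and $\Sigma \in \mathbb{R}^{r \times r}$ nonnegative diagonal. Then
\[
\E_{p,q}[B(x,y)] = p^\top B q = \alpha^\top \Sigma \beta, \qquad \alpha := U^\top p, \quad \beta := V^\top q,
\]
and the entrywise approximation guarantees $|\E_{p,q}[f] - \alpha^\top \Sigma \beta| \le \eps/2$.

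The protocol itself is straightforward: Alice computes $\alpha$ locally, sends a quantized version $\tilde\alpha$ to Bob, and Bob computes $\beta$ locally and outputs $\tilde\alpha^\top \Sigma \beta$. The task is then to choose the quantization precision to absorb the remaining $\eps/2$ of the error budget. Because $U$ has orthonormal columns and $p$ is a probability distribution, $\|\alpha\|_2 \le \|p\|_2 \le \|p\|_1 = 1$, and likewise $\|\beta\|_2 \le 1$. If Alice quantizes each coordinate of $\alpha$ to accuracy $\delta/\sqrt{r}$, then $\|\alpha - \tilde\alpha\|_2 \le \delta$, and by Cauchy--Schwarz,
\[
\bigl|(\alpha - \tilde\alpha)^\top \Sigma \beta\bigr| \le \delta \cdot \sigma_1(B) \cdot \|\beta\|_2 \le \delta\, \sigma_1(B).
\]
Since $\|B\|_\infty \le 1 + \eps/2 \le 2$, the spectral norm is crudely bounded by the Frobenius norm: $\sigma_1(B) \le \|B\|_F \le 2 \cdot 2^n$. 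Choosing $\delta = \eps/2^{n+2}$ keeps this error below $\eps/2$.

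Each coordinate of $\alpha$ lies in $[-1,1]$ and is transmitted to precision $\delta/\sqrt{r}$, requiring $O(\log(\sqrt{r}/\delta)) = O(n + \log r + \log(1/\eps)) = O(n + \log(1/\eps))$ bits (using $r \le 2^n$). Sending all $r$ coordinates gives the claimed $O(rn + r\log(1/\eps))$ bound. The only thing requiring care is that Alice and Bob actually produce the same $B$ and the same SVD; this is not a real obstacle, since both know $f$ and can fix a canonical algorithm for computing the rank-$r$ approximation and its factorization (signs of singular vectors can be pinned down by, say, requiring the first nonzero coordinate to be positive). The main conceptual point is the choice to measure quantization error in $\ell_2$ rather than $\ell_\infty$, which is what makes Cauchy--Schwarz bite and keeps only a single $\sigma_1(B)$ factor in the precision requirement.
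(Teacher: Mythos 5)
Your proof is correct and follows essentially the same route as the paper's: both parties fix a rank-$r$ approximation and its SVD, Alice sends the quantized vector $U^\top p$, and Bob completes the bilinear form, with the quantization error absorbed via a Cauchy--Schwarz/Frobenius-norm bound of order $2^n$. The only difference is cosmetic bookkeeping (you track the $\ell_2$ quantization error against $\sigma_1(B)\le\|B\|_F$, while the paper bounds per-coordinate errors via H\"older and then $\|A\|_F\|q\|_2$), and both yield the stated $O(rn + r\log(1/\eps))$ bound.
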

\begin{proof}
    First, note that we may assume without loss of generality that $\rank(f) = r$, since if not we may replace $f$ by a rank-$r$ function that approximates it to within $\e/2$ in infinity norm, which only changes the expectation by at most $\e/2$.

    Let $A_f$ be the matrix corresponding to $f$, and let $A_f = U\Sigma V^{\sfT}$ be its singular value decomposition, where $\sigma_i$ is the $i$-th singular value. Then
    \[ p^{\sfT}Aq = \sum_{i \in [r]}\sigma_i p^{\sfT} u_iv_i^\sfT q\]
    Alice sends Bob $\tilde{p} \in \mathbb{R}^r$ which is $p^{\sfT}U $ rounded to precision $\eps/ 2^{n+1}$.  Bob uses this to compute the estimate
    \[ v =  \sum_{i \in [r]}\sigma_i \tilde{p}_i v_i^{\sfT} q.\]
    We can bound the error of the estimate by
    \begin{align*}
        |v - p^{\sfT}Aq| &\leq \sum_{i \in [r]}(\tilde{p}_i - p^{\sfT}u_i)\sigma_i v_i^{\sfT}q\\
        & \leq \frac{\eps}{2^{n+1}}\sum_{i \in [r]}|\sigma_i v_i^{\sfT}q| \  \ \ \text{(Holder's inequality)}\\
        & \leq \frac{\eps}{2^{n+1}}\lt(\sum_{i \in [r]}\sigma_i^2\rt)^{1/2} \lt(\sum_{i \in [r]}(v_i^tq)^2\rt)^{1/2} \ \text{(Cauchy-Schwarz)}\\
        & \leq \frac{\eps}{2^{n+1}}\|A\|_F\|q\|_2 \  \ \ \  \text{(orthonormality of $v$)}\\
        & \leq \frac{\eps}{2^{n+1}}2^n =\eps.
    \end{align*}
\end{proof}

Next we present a protocol that gives an upper bound in terms of the spectral norm. As a first step, we use a small-communication protocol for estimating the inner product between two real vectors of bounded norm. We define the communication problem $\rip(\eps)$ where Alice and Bob each have input vectors $a \in \mR^n$ and $b \in \mR^n$ respectively, and their goal is to estimate $a\cdot b$ to within additive error $\eps\|a\|_2\|b\|_2$. There is a well known protocol for this that forms the basis of the sketches in \cite{AlonMatiasSzegedy-JCSS99, CharikarChenFarachColton-ICALP02}. We provide the standard proof for completeness.

\begin{lemma}
\label{lem:ip-prot}
    There is a randomized protocol for $\rip(\eps)$ using $\tilde{O}(1/\eps^2)$ bits of communication.
\end{lemma}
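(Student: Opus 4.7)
The plan is to use the classical AMS / Johnson--Lindenstrauss sketch. Using shared randomness, Alice and Bob sample $k = \Theta(1/\eps^2)$ independent Rademacher vectors $r_1, \ldots, r_k \in \pmo^n$. Alice forms $z_i = \langle a, r_i\rangle$ and Bob forms $w_i = \langle b, r_i\rangle$. A direct fourth-moment computation (using that the entries of each $r_i$ are i.i.d.\ $\pm 1$) shows $\E[z_i w_i] = a\cdot b$ and $\Var[z_i w_i] = O(\|a\|_2^2 \|b\|_2^2)$, so $\frac{1}{k}\sum_i z_i w_i$ is within $\eps\|a\|_2\|b\|_2$ of $a\cdot b$ with constant probability by Chebyshev.

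To turn this into a protocol, Alice must send her sketch values to Bob with enough precision. First, Alice sends Bob a $(1\pm\eps/3)$-multiplicative approximation of $\|a\|_2$ in $O(\log(n/\eps))$ bits, and Bob does the same; after rescaling we may assume $\|a\|_2=\|b\|_2=1$. For unit $a$, each $z_i = \langle a, r_i\rangle$ is sub-Gaussian with variance proxy $1$, so $|z_i| = O(\sqrt{\log k})$ with probability $\geq 1 - 1/(100k)$. Alice clips each $z_i$ at this threshold and rounds it to precision $\eps/\sqrt{\log k}$, encoding it in $O(\log(1/\eps))$ bits. She sends all $k$ rounded values $\tilde z_i$ to Bob, for a total of $k\cdot O(\log(1/\eps)) = \tilde O(1/\eps^2)$ bits.

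Bob then computes $\widetilde{\|a\|_2}\,\widetilde{\|b\|_2}\cdot \frac{1}{k}\sum_i \tilde z_i w_i$ and outputs it. A union bound ensures that with constant probability no $z_i$ is actually clipped; and since $|w_i| = O(\sqrt{\log k})$ simultaneously for all $i$ by the same sub-Gaussian tail bound, each per-sample rounding error $|\tilde z_i w_i - z_i w_i|$ is $O(\eps)$, so the empirical average incurs only $O(\eps)$ additional additive error in the unit-normalized problem. Combined with the Chebyshev bound on the sketch and the $(1\pm\eps/3)$-accurate norms, this yields the claimed $\eps\|a\|_2\|b\|_2$ guarantee. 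The sketching idea itself is standard; the only real subtlety is bookkeeping the clipping/rounding so that they contribute only $O(\eps)$ to the final error.
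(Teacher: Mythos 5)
Your proposal is correct and takes essentially the same approach as the paper (shared Rademacher vectors and the AMS-style estimator $\frac{1}{k}\sum_i (a\cdot r_i)(b\cdot r_i)$ with Chebyshev). You are in fact more careful than the paper's one-line treatment of the bit-precision issue: the normalization-then-clip-then-round bookkeeping you give is exactly what is needed to justify the $O(\log(1/\eps))$ bits per sketch value, which the paper asserts without detail.
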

\begin{proof}
    Alice and Bob use shared randomness to sample $u_1, \ldots, u_k \in \pmo^n$ for $k = O(1/\eps^2)$. They compute the estimate
    \[ v = \fr{k}\sum_{i \in [k]}(a\cdot u_i)(b\cdot u_i) \]
    to within accuracy $\eps/2$.
    This requires Alice to send Bob the inner products $\{a\cdot u_i\}_{i \in [k]}$ to $O(\log(1/\eps)$ bits of precision.

    For correctness, we observe that for random $u \in \pmo^k$, $(a\cdot u)(b\cdot u)$ is an unbiased estimator of $a\cdot b$ whose variance is bounded by $O(\|a\|^2_2\|b\|_2^2)$. Hence with $k = O(1/\eps^2)$ samples, the standard deviation is bounded  by $\eps\|a\|_2\|b\|_2$ with probability $0.9$. 
\end{proof}

We next present a protocol for \jee which does well for functions of small spectral norm.

\begin{theorem} \label{thm:spectral-upper}
    $\Rhat_\e(f) \leq \tilde{O}((\sigma(f)/\eps)^{2/3})$.
\end{theorem}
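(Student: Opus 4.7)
The plan is to mimic the heavy-hitter-plus-sampling strategy that gives the $\tilde O(1/\eps^{2/3})$ equality bound (\cref{thm:eq-upper}), using the spectral norm $\sigma(f)$ to control the variance of the sampling step. I would set a threshold $t = \Theta((\eps/\sigma(f))^{2/3})$ and define the heavy sets $H_p = \{i : p_i \geq t\}$ and $H_q = \{j : q_j \geq t\}$; since $p$ and $q$ are probability distributions, $|H_p|, |H_q| \leq 1/t$.

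\textbf{Heavy exchange and easy sums.} In a first round Alice sends $H_p$ together with $\{p_i : i \in H_p\}$ to precision $\poly(\eps)$, and Bob sends $H_q$ and $\{q_j : j \in H_q\}$. This round costs $\tilde O(1/t) = \tilde O((\sigma(f)/\eps)^{2/3})$ bits. Decomposing
\[
p^\top A_f q \;=\; \sum_{i \in H_p,\,j \in H_q} p_i A_{ij} q_j \;+\; \sum_{i \in H_p,\,j \in L_q} p_i A_{ij} q_j \;+\; \sum_{i \in L_p,\,j \in H_q} p_i A_{ij} q_j \;+\; \sum_{i \in L_p,\,j \in L_q} p_i A_{ij} q_j,
\]
each of the first three sums can be computed exactly (up to $O(\eps)$) by one party after the exchange---each party knows $A_f$ publicly and needs only the heavy-side probabilities of one of the two vectors---and the results announced with $O(\log(1/\eps))$ additional bits.

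\textbf{Sampling the light-light part.} For the fourth sum, Alice and Bob draw independent samples $x_1, \ldots, x_k \sim p$ and $y_1, \ldots, y_k \sim q$, exchange them, and form
\[
\hat\mu \;=\; \frac{1}{k^2} \sum_{i,j=1}^k X_{ij}, \qquad X_{ij} := A_f(x_i, y_j)\,\ind{x_i \in L_p}\,\ind{y_j \in L_q}.
\]
The crucial variance calculation uses the spectral norm bound: letting $q_L$ denote the restriction of $q$ to $L_q$, for pairs of samples sharing a row index $i = i'$, $j \neq j'$,
\[
\E[X_{ij} X_{ij'}] \;=\; \sum_{a \in L_p} p_a\,\bigl((A_f q_L)(a)\bigr)^2 \;\leq\; t\,\|A_f q_L\|_2^2 \;\leq\; t\,\sigma(f)^2\,\|q_L\|_2^2 \;\leq\; t^2\sigma(f)^2,
\]
where the last step uses $\|q_L\|_2^2 \leq \|q_L\|_\infty \|q_L\|_1 \leq t$. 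The symmetric case $j = j'$, $i \neq i'$ is analogous, and pairs with no shared index are independent, giving $\Var(\hat\mu) \leq 1/k^2 + 2t^2 \sigma(f)^2/k$. Choosing $k = \Theta(t^2 \sigma(f)^2/\eps^2)$ drives the standard deviation below $\eps$; exchanging the $2k$ samples costs $\tilde O(k)$ bits, which with $t = (\eps/\sigma(f))^{2/3}$ equals $\tilde O((\sigma(f)/\eps)^{2/3})$, balancing the two rounds.

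\textbf{Main obstacle.} The only delicate step is the covariance bound: the operator inequality $\|A_f q_L\|_2 \leq \sigma(f)\|q_L\|_2$ together with the heavy-hitter bound $\|q_L\|_2^2 \leq t$ upgrades the naive $\Var = O(1/k)$ (which would only recover the $\tilde O(1/\eps^2)$ random-sampling bound of \cref{thm:var-red1}) to $\Var = O(t^2 \sigma(f)^2/k)$. This is exactly the step that uses the spectral-norm hypothesis, and it reduces to the equality analysis when $\sigma(f) = 1$.
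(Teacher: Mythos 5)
Your plan is genuinely different from the paper's. The paper never exchanges samples of $p$ or $q$ for the light part: instead it writes $(p^\ell)^\sfT A q^\ell$ as a real inner product $a\cdot b$ with $a_i=\sqrt{\sigma_i}\,(p^\ell)^\sfT u_i$, $b_i=\sqrt{\sigma_i}\,v_i^\sfT q^\ell$, bounds $\|a\|_2^2,\|b\|_2^2\le\sigma\beta$, and then runs an AMS/$\pm 1$-sketch protocol (\cref{lem:ip-prot}) whose cost is $\tilde O(1/\delta^2)$ with error $\delta\|a\|_2\|b\|_2$. Your version replaces this by exchanging $k$ raw samples from $p$ and $q$ and averaging $A_f$ over the $k^2$ pairs, with the spectral norm appearing only in the variance calculation. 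This is an appealing, more elementary route, and your covariance bound $\E[X_{ij}X_{ij'}]=\sum_{a\in L_p}p_a(A_f q_L)(a)^2\le t^2\sigma^2$ is exactly the right use of $\sigma(f)$.

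However, there is a genuine gap in the diagonal term. You bound $\Var(X_{ij})\le 1$, which contributes $1/k^2$ to $\Var(\hat\mu)$ and forces $k\ge \Omega(1/\eps)$. But the theorem's bound $(\sigma/\eps)^{2/3}$ is \emph{smaller} than $1/\eps$ whenever $\sigma(f)<1/\sqrt\eps$; in particular, for $\sigma(f)=O(1)$ — which includes the flagship case $A_f=\Id$, where the statement promises $\tilde O(1/\eps^{2/3})$ — your analysis only delivers $\tilde O(1/\eps)$. So as written the proposal does not prove the theorem in the regime $\sigma(f)<1/\sqrt\eps$ (and it is exactly this regime that makes the spectral bound useful).

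The gap is fixable by applying the same operator-norm idea to the diagonal: since each column of $A_f$ has $\ell_2$-norm at most $\sigma(f)$,
\[
\E[X_{ij}^2]\;=\;\sum_{a\in L_p,\,b\in L_q}p_aq_b\,A_f(a,b)^2
\;\le\; t\sum_{b\in L_q}q_b\,\|A_f e_b\|_2^2\;\le\; t\,\sigma(f)^2 ,
\]
replacing $1/k^2$ by $t\sigma^2/k^2$. With $t=(\eps/\sigma)^{2/3}$ and $k=1/t$ all three variance contributions become $O(\eps^2)$, and the communication balances to $\tilde O((\sigma/\eps)^{2/3})$ as in the paper. You should also note, as the paper implicitly does, that exchanging samples or heavy indices carries an extra $n$-bit-per-item cost; this appears in both protocols and is not specific to your route.
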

\begin{proof}
For a parameter $\beta$, defin $p^h$ to be the vector of \textit{heavy} components: $p^h_i = p_i$ if $p_i \geq \beta$ and $0$ otherwise. Let $p^\ell = p - p^h$ be the vector of \textit{light} components. Then, note that

\begin{equation*}
    p^{\sfT}Aq = (p^h)^{\sfT}Aq^h + (p^h)^{\sfT}Aq^\ell + (p^\ell)^{\sfT}Aq^h + (p^\ell)^{\sfT}Aq^\ell.
\end{equation*}

Then, we provide an algorithm in \cref{alg:spec-ee} which works essentially as follows. We estimate the first three terms by exchanging the heavy components, and then estimate the last term using the inner product protocol from Lemma \ref{lem:ip-prot}.

\begin{algorithm}
\caption{Spectral protocol for expectation estimation}
\label{alg:spec-ee}
\noindent 

 \textbf{Parameters: } Function $f: \zo^n \times \zo^n \to [-1,1]$.  Error parameter $\eps \in [0,1]$. 
 
 \textbf{Inputs: } Alice $\gets p \in \Delta_\X$, Bob  $\gets q \in \Delta_{\Y}$.

\textbf{Output:}  An $\eps$ additive estimate of $\E_{p,q}[f]$.

\begin{enumerate}
\item Set $\beta = (\eps/\sigma(A))^{2/3}$. 

\item Let $p^h$ be the vector of heavy components $p^h_i = p_i$ if $p_i \geq \beta$ and $0$ otherwise. Let $p^\ell = p - p^h$.

\item  
Alice and Bob exchange the vectors $p^h$ and $q^h$ (to $O(\log(1/\eps)$ bits of precision), and use this to compute $u  = (p^h)^{\sfT}Aq^h + (p^h)^{\sfT}Aq^{\ell} + (p^\ell)^{\sfT}Aq^h$ to within an additive error of $\eps/2$. \label{step:heavy-exch}

\item Let $A = U \Sigma V^t$. Alice and Bob respectively form the vectors 
\begin{align*} 
    a = \sum_{i} \sqrt{\sigma_i}p_i^{\ell}u_i, \  \ b = \sum_{j} \sqrt{\sigma_i}q_i^{\ell}v_j.
\end{align*}
They use Lemma \ref{lem:ip-prot} to compute $v$ such that 
\[ |v - a \cdot b| \leq \delta \|a\|_2\|b\|_2. \]  \label{step:ip-prot}

\item They return $u  + v$. 
\end{enumerate}
\end{algorithm}

To analyze the error of the protocol, we note that step \ref{step:heavy-exch} requires $\Ot(1/\beta)$ bits of communication, and results in additive error $\eps/2$.  

For step \ref{step:ip-prot},  we note that
\[ a\cdot b = \sum_{i \in \zo^n} \sigma_ip^\ell_iu_i q^\ell_i v_i = (p^\ell)^\sfT Aq^\ell.\]
We bound the norm of $a$ using the orthonormality of the $u_i$s as
\[ \|a\|_2^2 = \sum_i \sigma_i(p^\ell_i)^2 \leq \sigma \sum_{i}(p^\ell_i)^2 \leq \sigma\beta\]
where we use $p^\ell_i \leq \beta$ and $\sum_i p^\ell_i \leq 1$. A similar bound holds for $b$. Hence the error of step \ref{step:ip-prot} is bounded by $\delta \sigma \beta$; for this to be bounded by $\eps/2$ we set $\delta = \eps/(2\sigma\beta)$. This step requires $\tilde{O}(\sigma^2\beta^2/\eps^2)$ bits of communication. 

We balance the communication in the two steps by taking  $\beta = c(\eps/\sigma)^{2/3}$ for some constant $c$. We get a protocol that gives $\eps$ error in total using $\tilde{O}((\sigma/\eps)^{2/3})$ bits of communication.
\end{proof}

We can combine this with the protocol from \cref{thm:svd-protocol} to give a protocol that only requires a bound on $\sigma_t$ for some $t \geq 1$. This allows us to effectively ignore $t$ large singular values, at a cost of $\tilde{O}(tn)$. 

\begin{lemma}
    For any $f: \zo^n \times \zo^n \to [-1,1]$ and $t \in\{1, \ldots, 2^n\}$, we have $\Dhat_\e(f) \leq \tilde{O}(t\log(\sigma_1(f)) + (\sigma_t(f)/\eps)^{2/3})$.
\end{lemma}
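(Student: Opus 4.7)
The plan is to combine the two protocols proved earlier---the low-rank SVD protocol of \cref{thm:svd-protocol} and the spectral-norm protocol of \cref{thm:spectral-upper}---via a singular-value split. Write $A = A_f = \sum_{i=1}^{2^n} \sigma_i u_i v_i^\sfT$ in its SVD, which both players can compute from their shared knowledge of $f$. Decompose $A = A_{\le} + A_{>}$ where $A_{\le} = \sum_{i=1}^{t-1} \sigma_i u_i v_i^\sfT$ captures the top $t-1$ components and $A_{>} = \sum_{i\ge t} \sigma_i u_i v_i^\sfT$ captures the rest. The final protocol runs one subprotocol for each piece, obtains estimates of $p^\sfT A_{\le} q$ and $p^\sfT A_{>} q$ to additive accuracy $\eps/2$, and outputs their sum.

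For the low-rank piece, $A_{\le}$ has rank at most $t-1$, and (crucially for the error analysis) $\|A_{\le}\|_F = \bigl(\sum_{i=1}^{t-1}\sigma_i^2\bigr)^{1/2} \le \sqrt{t}\,\sigma_1(f)$. I would re-run the proof of \cref{thm:svd-protocol}, replacing the crude bound $\|A\|_F \le 2^n$ (which produced the $rn$ term there) by this Frobenius bound. Alice sends Bob the vector $p^\sfT U_{\le}\in\RR^{t-1}$ rounded to precision $\eps/(2\sqrt{t}\,\sigma_1(f))$; the Cauchy--Schwarz step in the original proof then yields additive error at most $\eps/2$. The communication is $O\bigl((t-1)\log(\sqrt{t}\,\sigma_1(f)/\eps)\bigr) = \tilde{O}(t\log\sigma_1(f))$.

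For the high-tail piece, every entry of $A_{>}$ is bounded in magnitude by its operator norm: $|A_{>}(i,j)| = |e_i^\sfT A_{>} e_j| \le \|A_{>}\|_{\mathrm{op}} = \sigma_t(f)$. Hence the rescaled matrix $A'_{>} := A_{>}/\sigma_t(f)$ has entries in $[-1,1]$ and spectral norm $1$. Apply \cref{thm:spectral-upper} to $A'_{>}$ with target error $\eps' = \eps/(2\sigma_t(f))$, obtaining an estimate of $p^\sfT A'_{>} q$ using $\tilde{O}((1/\eps')^{2/3}) = \tilde{O}((\sigma_t(f)/\eps)^{2/3})$ bits, and then multiply by $\sigma_t(f)$ to recover an $\eps/2$-accurate estimate of $p^\sfT A_{>} q$.

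Adding the two estimates, the triangle inequality gives total additive error at most $\eps$, and summing the two communication bounds gives $\tilde{O}(t\log\sigma_1(f) + (\sigma_t(f)/\eps)^{2/3})$, as claimed. The only real subtlety (and the step that deserves the most care) is the first one: verifying that \cref{thm:svd-protocol} was implicitly using only a Frobenius-norm bound on the matrix, so that it can be applied to the unbounded matrix $A_{\le}$ after substituting the tighter bound $\sqrt{t}\,\sigma_1(f)$ in place of $2^n$. Everything else is a direct invocation of the previous two protocols.
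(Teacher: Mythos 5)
Your proposal is correct and follows essentially the same route as the paper's proof: split the spectrum at the $t$-th singular value, handle the head by exchanging $O(t)$ rounded singular-vector projections (cost $\tilde{O}(t\log\sigma_1)$), and handle the tail with the spectral protocol at cost $\tilde{O}((\sigma_t/\eps)^{2/3})$. The only differences are organizational---the paper sends projections of the light parts and runs \cref{alg:spec-ee} on the $j>t$ components directly, whereas you invoke \cref{thm:svd-protocol}'s analysis with the sharper Frobenius bound $\sqrt{t}\,\sigma_1$ and apply \cref{thm:spectral-upper} black-box to $A_{>}/\sigma_t$ after correctly noting that its entries are bounded by $\sigma_t$; the extra additive $t\log t$ from your precision choice is harmless at the paper's level of bookkeeping.
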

\begin{proof}
For brevity, let $\sigma_i \equiv \sigma_i(f)$ for all $i$.
    Alice and Bob first exchange their $\beta$-heavy elements $p^h$ and $q^h$. Then, for all $j \leq t$, they exchange $(p^\ell)^\sfT u_j$ and $v_j^\sfT q^\ell$ to within precision $\eps/\sigma_1$, which costs $\tilde{O}(t\log(\sigma_1/\eps))$ bits. They then run \cref{alg:spec-ee} to estimate the contribution from the singular vectors for $j > t$, which costs $\tilde{O}((\sigma_t/\eps)^{2/3})$. 
\end{proof}

We can use this to derive the following useful bound.

\begin{corollary}
    Let $t_n(f) =|\{i:\sigma_i(f) > n\}|$. Then we have
    \[ \Rhat_\e(f) \leq \tdO(t_n(f)n + (n/\eps)^{2/3}). \]
\end{corollary}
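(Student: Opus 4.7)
The plan is to apply the preceding lemma with $t$ chosen just large enough that the residual spectral norm drops to at most $n$. By the definition of $t_n(f)$, every singular value $\sigma_i(f)$ with $i > t_n(f)$ is at most $n$, so I would take $t = t_n(f) + 1$ (equivalently $t_n(f)$, if one reads the preceding lemma as controlling the residual by $\sigma_{t+1}$; the off-by-one will be absorbed by the case analysis below). This ensures $\sigma_t(f) \le n$, so the second term $(\sigma_t(f)/\eps)^{2/3}$ in the lemma's bound becomes at most $(n/\eps)^{2/3}$.

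For the first term $t \log \sigma_1(f)$ from the preceding lemma, I would use the crude estimate $\sigma_1(f) \le \|A_f\|_F \le 2^n$, which holds because $f$ takes values in $[-1,1]$ on a $2^n \times 2^n$ domain. Hence $\log \sigma_1(f) \le n$ and $t \log \sigma_1(f) \le (t_n(f)+1)\, n$. When $t_n(f) \ge 1$, the extra additive $n$ is absorbed into $t_n(f)\cdot n$. When $t_n(f) = 0$, every singular value is at most $n$, and I would simply apply \cref{thm:spectral-upper} directly with $\sigma(f) \le n$ to obtain $\tdO((n/\eps)^{2/3})$. Combining the two cases yields the stated bound $\tdO(t_n(f)\cdot n + (n/\eps)^{2/3})$. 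There is no real obstacle here; the argument is bookkeeping on top of the preceding lemma, with the universal inequality $\log \sigma_1(f) \le n$ doing the work of converting the $\log \sigma_1(f)$ factor into the $n$ that appears in the target bound.
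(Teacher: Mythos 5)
Your proof is correct and is essentially the natural derivation (the paper gives none, stating the corollary as an immediate consequence of the preceding lemma). Choosing $t$ so that $\sigma_t(f) \le n$ and bounding $\log \sigma_1(f) \le \log \|A_f\|_F \le n$ is exactly the intended bookkeeping, and your handling of the off-by-one via the $t_n(f)=0$ case is fine.
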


\section{Protocols for specific functions}
\label{sec:specific}

\renewcommand{\f}{\frac}
\renewcommand{\fh}{\hat f}
\renewcommand{\al}{\alpha}
\renewcommand{\e}{\varepsilon}
\renewcommand{\del}{\partial}
\renewcommand{\p}[1]{\left(#1\right)}

\subsection{\texorpdfstring{$\EQ$}{EQ}}
We consider the case where $f$ is the equality function $\EQ(x, y) = \ind{x = y}$ for $x, y \in \zo^n$. The matrix $A_\EQ: \zo^n \times \zo^n \to \zo$ is the $2^n \times 2^n$ identity matrix $\Id$. Hence computing the matrix expectation corresponds to computing the inner product of the distributions $p$ and $q$:
\[ \E_{x \sim p, y\sim q}[\Id(x,y)] = \sum_{x \in \pmo^n}p(x)q(x).\]
We will write $\iprod{p}{q}$ as short hand for this inner product. 

In this section, we will prove the following upper bound.
\begin{theorem} \label{thm:eq-upper}
$\Rhat_\e(\EQ) \leq \tilde{O}(1/\eps^{2/3})$.
\end{theorem}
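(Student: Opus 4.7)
The plan is to exploit the fact that $A_{\EQ}$ is the identity matrix, so that $\sigma(\EQ)=1$, and apply (a specialization of) the spectral protocol of \cref{thm:spectral-upper}. Concretely, the bound will follow by writing $\E_{p,q}[\EQ] = \iprod{p}{q}$ and decomposing each distribution into a \textit{heavy} and a \textit{light} part. Fix a threshold $\beta = \Theta(\e^{2/3})$. Alice writes $p = p^h + p^\ell$, where $p^h(x) = p(x)$ when $p(x) \ge \beta$ and $0$ otherwise; Bob does the same with $q$. Then
\[
\iprod{p}{q} = \iprod{p^h}{q^h} + \iprod{p^h}{q^\ell} + \iprod{p^\ell}{q^h} + \iprod{p^\ell}{q^\ell},
\]
and I will estimate the first three terms exactly by exchanging the heavy components, and the last term via the sketching protocol of \cref{lem:ip-prot}.

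For the heavy terms, since $\|p\|_1 = \|q\|_1 = 1$, there are at most $1/\beta$ heavy coordinates on each side, and the supports together with values to $O(\log(1/\e))$ bits of precision can be exchanged in $\tilde{O}(1/\beta) = \tilde{O}(\e^{-2/3})$ bits of communication, sufficient to compute $\iprod{p^h}{q^h}$, $\iprod{p^h}{q^\ell}$, and $\iprod{p^\ell}{q^h}$ to within total additive error $\e/2$. For the light-light term, Alice and Bob invoke \cref{lem:ip-prot} on the vectors $p^\ell$ and $q^\ell$ viewed as real vectors in $\mathbb{R}^{2^n}$, with target accuracy parameter $\delta$. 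The key observation is that the light vectors are \emph{small in $\ell_2$}:
\[
\|p^\ell\|_2^2 \;=\; \sum_{x:\,p(x)<\beta} p(x)^2 \;\le\; \beta \sum_x p(x) \;\le\; \beta,
\]
and similarly $\|q^\ell\|_2^2 \le \beta$, so that $\|p^\ell\|_2 \|q^\ell\|_2 \le \beta$. Choosing $\delta = \e/(2\beta)$ in \cref{lem:ip-prot} yields an additive error of at most $\e/2$ on $\iprod{p^\ell}{q^\ell}$, at a communication cost of $\tilde{O}(1/\delta^2) = \tilde{O}(\beta^2/\e^2)$.

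Summing, the total communication is $\tilde{O}\bigl(1/\beta + \beta^2/\e^2\bigr)$, which is optimized by the choice $\beta = \e^{2/3}$, yielding $\tilde{O}(\e^{-2/3})$. I expect the only delicate step to be the variance calculation for the light-light part: the $\ell_2$ bound above is what makes the sketch beat the naive $1/\e^2$ sampling rate, and it relies crucially on the fact that $A_{\EQ} = \Id$ has spectral norm $1$, so no cross-terms between coordinates appear. (An alternative, shorter route is to simply invoke \cref{thm:spectral-upper} with $\sigma(\EQ)=1$; the direct argument above is essentially that protocol specialized to the identity matrix, and it matches the lower bound proved later in \cref{lem:eq-lb}.)
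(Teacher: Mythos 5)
There is a gap in the accounting for the heavy exchange. You claim ``the supports together with values to $O(\log(1/\e))$ bits of precision can be exchanged in $\tilde{O}(1/\beta)$ bits of communication,'' but each heavy coordinate is an element of $\zo^n$, so specifying its \emph{index} costs $n$ bits, and the exchange actually costs $\tilde{O}(n/\beta)$. Balancing that against the $\tilde{O}(\beta^2/\e^2)$ cost of the light--light sketch gives $\tilde{O}(n^{2/3}/\e^{2/3})$, not the $n$-free bound claimed in the theorem (recall the paper's convention that $\tilde{O}$ hides polylogs in $1/\e$ only). Your ``shorter route'' of invoking \cref{thm:spectral-upper} does not repair this, since the heavy-exchange step there (Step~\ref{step:heavy-exch} of \cref{alg:spec-ee}) also ships heavy indices living in $\zo^n$.

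The paper closes this gap with two preprocessing steps specific to $\EQ$. First, it drops all coordinates of mass below $\e$; \cref{lem:drop} shows this perturbs $\iprod{p}{q}$ by at most $2\e$, so the surviving supports have size $O(1/\e)$. Second, both players apply a shared random hash $h:\zo^n\to[m]$ with $m=O(1/\e^2)$, which is collision-free on the combined surviving supports with probability $0.9$; because $\EQ$ is preserved by any injection of the support, $\iprod{p'}{q'}=\iprod{p^\e}{q^\e}$ exactly on this event. After hashing, a coordinate costs only $O(\log(1/\e))$ bits, so the heavy exchange is genuinely $\tilde{O}(1/\beta)$. This $\EQ$-specific hashing is precisely what buys the $n$-free bound, and it does not extend to a general bounded-spectral-norm matrix $A$ since $A$ is not invariant under a relabeling of indices. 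Your handling of the light--light term is otherwise a correct alternative: you use the AMS sketch of \cref{lem:ip-prot} with $\|p^\ell\|_2\|q^\ell\|_2\le\beta$ and $\delta=\e/(2\beta)$, whereas the paper renormalizes $q^\ell$ to a distribution and has Alice average $\ind{p'(x_k)<\beta}p'(x_k)$ over $t=1/\e^{2/3}$ samples, exploiting the pointwise bound by $\beta$. Both cost $\tilde{O}(\beta^2/\e^2)$ and are interchangeable once the hashing is in place.
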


For ease of notation, we will show that we can estimate $\iprod{p}{q}$ within error $O(\e)$ instead of $\e$.

We say that $x \in \zo^n$ is $\e$-heavy for $p$ if $p(x) \geq \e$. We set $p^{\e}$ to be the vector where we set all coordinates in $p$ that are less than $\e$ to $0$. Dropping coordinates can only reduce the inner product, but the following lemma show that it does not reduce by too much.

\begin{lemma}
\label{lem:drop}
    We have $\iprod{p}{q} \leq \iprod{p^\e}{q^\e} + 2\e. $
\end{lemma}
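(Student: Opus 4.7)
The plan is to decompose each distribution into its heavy and light parts, expand the inner product bilinearly, and then bound the contribution of the light parts using the fact that each light coordinate has mass at most $\e$.

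First I would write $p = p^\e + p^\ell$ where $p^\ell = p - p^\e$ is supported on $\{x : p(x) < \e\}$, and similarly $q = q^\e + q^\ell$. Expanding bilinearly gives
\[
\iprod{p}{q} - \iprod{p^\e}{q^\e} \;=\; \iprod{p^\e}{q^\ell} + \iprod{p^\ell}{q^\e} + \iprod{p^\ell}{q^\ell}.
\]
The trick to getting the constant $2$ rather than the naive $3$ is to regroup these three nonnegative terms by re-combining two of them:
\[
\iprod{p}{q} - \iprod{p^\e}{q^\e} \;=\; \iprod{p}{q^\ell} + \iprod{p^\ell}{q^\e},
\]
using $\iprod{p^\e}{q^\ell} + \iprod{p^\ell}{q^\ell} = \iprod{p}{q^\ell}$.

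Now each of the two remaining pieces is easy. For $\iprod{p}{q^\ell} = \sum_{x : q(x) < \e} p(x) q(x)$, we bound $q(x) < \e$ pointwise and pull it out, leaving $\e \sum_x p(x) \leq \e$. Symmetrically, for $\iprod{p^\ell}{q^\e} = \sum_{x : p(x) < \e} p(x) q^\e(x)$, we use $p(x) < \e$ pointwise to get $\e \sum_x q^\e(x) \leq \e$. Adding these gives the claimed bound of $2\e$.

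There is no real obstacle here; the only subtle point is the regrouping step that avoids double-counting the ``both light'' coordinates into two separate $\e$-bounds, which would only yield $3\e$.
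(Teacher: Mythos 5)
Your proof is correct and takes essentially the same approach as the paper: both decompose into heavy and light parts and bound the light-coordinate contributions by pulling out the pointwise bound $< \e$. The only cosmetic difference is that you partition the ``not both heavy'' set exactly via the regrouping, whereas the paper uses an inclusion-exclusion-style union bound over $\{p(x)<\e\}\cup\{q(x)<\e\}$, which overcounts the ``both light'' coordinates but still yields $2\e$; the two routes are the same argument.
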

\begin{proof}
    We can write 
    \begin{align*} 
        \iprod{p}{q} - \iprod{p^\e}{q^\e} &= \sum_{x:~ p(x) < \e \text{ or } q(x) < \e} p(x)q(x)\\
     & \leq \sum_{x:~ p(x) < \e} p(x)q(x) + \sum_{x:~ q(x) < \e} p(x)q(x)\\
     & \leq \e \sum_{x:~ p(x) < \e} q(x) + \e \sum_{x:~ q(x) < \e} p(x)\\
     & \leq 2\e.
    \end{align*}  
\end{proof}

Thus, to show \cref{thm:eq-upper}, it will suffice for us to provide a protocol to estimate $\iprod{p^\e}{q^\e}$ to within additive error $O(\e)$ for some parameter $\e$ to be chosen later. Note that both $p^\e$ and $q^\e$ have support size at most $1/\e$.

Using the public randomness, we sample a uniformly random hash function $h: \zo^n \to [m]$ for $m = 40/\e^2$. Note that since $p^\e$ and $q^\e$ have
support sizes at most $O(1/\e)$, the probability that $h$ maps any two elements in the support of $p^\e$ or $q^\e$ to the same value is at most $0.1$ by a union bound. Thus, letting $p'$ and $q'$ be the distributions on $[m]$ defined by mapping $p^\e$ and $q^\e$ through $h$, we have with probability at least $0.9$ that $\iprod{p'}{q'} = \iprod{p^\e}{q^\e}$. Thus, it remains only to show a protocol to estimate $\iprod{p'}{q'}$.

A simple $1$-round protocol is where Alice sends Bob the list of her non-zero probability elements, along with the probabilities of those elements to within additive error $\eps^2$. This uses $O(\log(1/\eps))$ bits of communication per element. This requires $\Ot(1/\eps)$ bits of communication. 

We can improve on the communication in 2 rounds as follows. Let $\beta > \e$ be a parameter to be chosen later. We can write
\begin{align*}
    \iprod{p'}{q'} = \sum_{x \in [m]} p'(x)q'(x) = \sum_{x: p'(x) \geq \beta}p'(x)q'(x) + \sum_{x: p'(x) < \beta}p'(x)q'(x).
\end{align*}
We can again estimate the first summation by Alice sending Bob all her $\beta$-heavy elements and their probabilities. To estimate the second term, let $q'' = q' / \norm{q'}_1$ be the probability distribution obtained by renormalizing $q'$. Then,
\begin{equation} \label{eq:eq-estimate}
    \sum_{x:p'(x) < \beta}p'(x)q'(x) = \norm{q'}_1 \E_{x \sim q''}[\ind{p'(x) < \beta}p'(x)].
\end{equation}
Bob now sends Alice $\norm{q'}_1$ (to precision $\e$) and $t$ samples $x_1, \ldots, x_t$ from $q''$. Alice then computes and returns
\[ \fr{t}\sum_{k \in [t]} \ind{p'(x_k) < \beta}p'(x_k) \]
which is an unbiased estimator for \eqref{eq:eq-estimate}. Note that the quantity in the expectation in \eqref{eq:eq-estimate} is bounded by $\beta$, so the variance of this estimator is at most $\beta^2/t$.
Hence with probability $0.9$, the error of the estimator is no more than $10\beta/\sqrt{t}$. We now choose $\beta = \eps^{2/3}$ and $t = 1/\eps^{2/3}$ so that this error is at most $O(\eps)$. The total communication of the protocol is then $\Ot(1/\beta)$ to communicate the $\beta$-heavy elements and $\Ot(t)$ to send the samples, so the total communication is $\tilde{O}(1/\eps^{2/3})$, as desired. This completes the proof of \cref{thm:eq-upper}.

\smallskip
We also note that this implies an upper bound for any $f$ whose matrix is sparse.

\begin{corollary} \label{cor:sparse}
Suppose that $f: \zo^n \times \zo^n \to [-1, 1]$ is such that the matrix $A_f$ is $s$-sparse (i.e., has at most $s$ nonzero entries per row). Then, $\Rhat_\e(f) \leq \tilde{O}(s^{5/3} / \eps^{2/3})$.
\end{corollary}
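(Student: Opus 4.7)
The plan is to decompose the $s$-sparse matrix $A_f$ into $s$ matrices each with at most one nonzero entry per row, and reduce each to a pair of $\EQ$-style inner products that can be handled by the protocol of \cref{thm:eq-upper}.

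Concretely, I would list the nonzero column indices of each row of $A_f$ and write $A_f = \sum_{i=1}^{s} M_i$, where $M_i$ picks up the $i$-th nonzero of each row (if it exists). Each $M_i$ has at most one nonzero entry per row, the decomposition is known to both players since $f$ is, and by linearity $\E_{p,q}[f] = \sum_{i=1}^{s} p^{\sfT} M_i q$. Estimating each term within $\e/s$ will then yield an overall $\e$-estimate.

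For a fixed $M_i$ with partial column map $\pi_i$ and value map $m_i(x)\in[-1,1]$, Alice can locally form the signed measure $\tilde p_i(y) = \sum_{x:\pi_i(x)=y} p(x)\, m_i(x)$, obtaining $p^{\sfT} M_i q = \sum_y \tilde p_i(y)\, q(y)$, together with the crucial bound $\|\tilde p_i\|_1 \leq \sum_x p(x)|m_i(x)| \leq 1$. Splitting $\tilde p_i = \alpha_+ \hat p_+ - \alpha_- \hat p_-$ into its normalized positive and negative parts (so $\hat p_\pm$ are probability distributions on $\zo^n$ and $\alpha_+ + \alpha_- \leq 1$), we get
\[
 p^{\sfT} M_i q \;=\; \alpha_+ \iprod{\hat p_+}{q} \;-\; \alpha_- \iprod{\hat p_-}{q},
\]
a signed combination of two inner products between probability distributions on $\zo^n$, which is exactly the task solved by the protocol underlying \cref{thm:eq-upper}. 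Alice can carry out the split locally since she knows $p$, $\pi_i$, and $m_i$.

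Running the protocol of \cref{thm:eq-upper} on each of the $2s$ inner products with target accuracy $\e/s$ costs $\tilde{O}((s/\e)^{2/3})$ bits per call, for a total of $\tilde{O}(s^{5/3}/\e^{2/3})$. Amplifying each subroutine's failure probability to $O(1/s)$ costs only log factors absorbed in $\tilde{O}$, and a union bound yields constant overall success. The errors accumulate benignly: component $i$ contributes at most $(\alpha_+^{(i)} + \alpha_-^{(i)})(\e/s) \leq \e/s$ to the total, which sums to $\e$ across the $s$ components. I don't expect any serious obstacle --- the reduction is a clean row decomposition plus a sign split --- and the one thing worth verifying is that the proof of \cref{thm:eq-upper} really estimates $\iprod{p}{q}$ for arbitrary probability distributions $p,q$ on $\zo^n$ (rather than only those arising from the $\EQ$ problem setup), which is immediate from inspection.
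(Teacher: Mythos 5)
Your proof is correct, and its skeleton is the same as the paper's: decompose $A_f$ into $s$ matrices with at most one nonzero per row, reduce each piece to an inner-product instance by pushing $p$ through the partial column map, and invoke \cref{thm:eq-upper} with accuracy $\Theta(\e/s)$, for a total of $\tilde{O}(s\cdot(s/\e)^{2/3}) = \tilde{O}(s^{5/3}/\e^{2/3})$. The only real divergence is how the bounded real-valued entries are handled. The paper first approximates each $1$-sparse piece by a dyadic combination of $O(\log\inv\e)$ matrices that are $1$-sparse with $\{0,1\}$ entries, so every call to the $\EQ$ protocol is on a literal indicator matrix; you instead absorb the values $m_i(x)\in[-1,1]$ into a signed measure $\tilde p_i$ with $\|\tilde p_i\|_1\le 1$ and split it into normalized positive and negative parts, making two $\EQ$ calls per piece with coefficients $\alpha_\pm$ summing to at most $1$. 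Your variant is slightly cleaner (no dyadic step, no extra $\log\inv\e$ factor in the number of calls, and the error bookkeeping $(\alpha_++\alpha_-)\e/s\le\e/s$ is immediate); it is essentially the same sign-splitting device the paper uses in \cref{lem:unbounded-pq}, and your closing check is the right one — the protocol of \cref{thm:eq-upper} indeed estimates $\iprod{p}{q}$ for arbitrary probability distributions on $\zo^n$, and Alice, who holds $\alpha_\pm$ and computes the final estimate in that protocol, can take the linear combination locally. The further improvement to $s^{4/3}$ noted in \cref{rem:improved-sparse} applies equally to your version.
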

\begin{proof}
Note that we can write $A_f$ as the sum of $s$ 1-sparse matrices $A_1, \ldots, A_s$, where each $A_i$ has at most one nonzero entry per row.

Further, we can approximate each $A_i$ via a dyadic composition by a sum of multiples (with bounded coefficients) of $O(\log \inv \e)$ matrices which are $1$-sparse and have entries in $\{0, 1\}$.

Thus, it is enough to provide a protocol for the case where $A_f$ is 1-sparse and has entries in $\{0, 1\}$; we may then run this protocol on all the $\Ot(s)$ matrices (with error parameter $\Ot(\e / s)$) and take the appropriate linear combination the estimates.

In the case where $A_f$ is 1-sparse and has entries in $\{0, 1\}$, we can view $f$ as defining a partial function from $\zo^n$ to $\zo^n$. In this case, for each value of $x$, there is at most one value of $y$ such that $f(x, y) = 1$. Thus, Alice can instead take the distribution $p'$ to be the distribution on $\zo^n$ defined by mapping $p$ through this function (we can increase $n$ by 1 to have a new value to which to map the $x$ values which do not have a corresponding $y$ value). Then, we simply use the protocol of \cref{thm:eq-upper} to estimate this inner product to within additive error $O(\e)$ using $\tilde{O}(1/\eps^{2/3})$ bits of communication. This completes the proof.
\end{proof}

\begin{remark} \label{rem:improved-sparse}
The bound in \cref{cor:sparse} can actually be improved to $\tilde{O}(s^{4/3} / \eps^{2/3})$ because each instance of the protocol has independent error, so we may reduce their error parameters to $\Ot(\e / \sqrt s)$. We omit the details.
\end{remark}

\subsection{\texorpdfstring{$\GT$}{GT}} \label{sec:gt-upper}
We now consider the case of the greater-than function $\GT(x, y) = \ind{x \geq y}$. (Here $x, y \in \{0, 1\}^n$ are viewed as integers in $[2^n]$.)  Note that the matrix $A_\GT$ is the upper-triangular matrix defined by $A_\GT(x, y) = 1$ if $x \geq y$ and $0$ otherwise. We show the following upper bound.

\begin{theorem} \label{thm:gt-upper}
$\Rhat_\e(\GT) \leq \tilde{O}(n / \eps^{2/3})$, achieved by a protocol which uses 2 rounds of communication
\end{theorem}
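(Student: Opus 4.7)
The idea is to use an $\alpha$-quantile sketch of $q$ as a control variate so that the Monte Carlo estimator for the residual has tiny range and expectation (both $O(\alpha)$), and therefore variance $O(\alpha^2)$. Balancing the sketch cost $n/\alpha$ against the sampling cost $tn\sim n\alpha^2/\eps^2$ will give $\alpha=\eps^{2/3}$ and total cost $\tilde O(n/\eps^{2/3})$. Throughout, use the identity $\Pr[x\ge y]=\E_{x\sim p}[F_q(x)]$ where $F_q(x)=\Pr_{y\sim q}[y\le x]$, and fix $\alpha=\Theta(\eps^{2/3})$.

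\emph{Round 1 (Bob $\to$ Alice).} Bob sends the $\alpha$-quantile boundaries $0=b_0<b_1<\cdots<b_{\lceil 1/\alpha\rceil}=2^n$ of $q$ (so $q([b_{i-1},b_i))=\alpha$ for each $i$), using $\tilde O(n/\alpha)=\tilde O(n/\eps^{2/3})$ bits. Define the step function $G(x)=\alpha(b(x)-1)$, where $b(x)$ is the index of the bucket $B_{b(x)}=[b_{b(x)-1},b_{b(x)})$ containing $x$. Then
\[
Y(x)\;:=\;F_q(x)-G(x)\;=\;q\bigl([b_{b(x)-1},x]\bigr)\;\in\;[0,\alpha]
\]
pointwise.

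\emph{Round 2 (Alice $\to$ Bob).} Alice draws $t=\Theta(1/\eps^{2/3})$ i.i.d.\ samples $x_1,\ldots,x_t\sim p$ and sends them to Bob, together with the bucket probabilities $p(B_i)$ (to $O(\log 1/\eps)$ bits each). Total cost: $\tilde O(tn+1/\alpha)=\tilde O(n/\eps^{2/3})$. Bob, using his $q$ (hence $F_q$) and the received information, outputs
\[
\hat\mu \;=\; \sum_{i} p(B_i)\,\alpha(i-1) \;+\; \frac{1}{t}\sum_{j=1}^t Y(x_j).
\]
The first summand equals $\E_p[G]$ and the sample average is an unbiased estimator of $\E_p[Y]$, so $\E[\hat\mu]=\E_p[F_q]=\Pr[x\ge y]$.

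\emph{Variance analysis.} Since $Y\in[0,\alpha]$ and
\[
\E_p[Y]\;=\;\sum_i\sum_{\substack{x,y\in B_i\\ y\le x}}p(x)q(y)\;\le\;\sum_i p(B_i)q(B_i)\;=\;\alpha,
\]
we get $\E_p[Y^2]\le \alpha\cdot \E_p[Y]\le \alpha^2$, so $\Var[\hat\mu]\le \alpha^2/t=O(\eps^2)$. Chebyshev's inequality then gives $|\hat\mu-\Pr[x\ge y]|\le O(\eps)$ with constant probability. The main obstacle is identifying the right control variate: the naive estimator $t^{-1}\sum_j F_q(x_j)$ has variance $\Theta(1/t)$, forcing $t=\Theta(1/\eps^2)$. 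Subtracting the coarse CDF-approximation $G$ simultaneously shrinks the range and the mean of the sampled quantity to $\alpha$, and hence the variance to $\alpha^2$, which reduces the required sample complexity to $t=\alpha^2/\eps^2=\Theta(1/\eps^{2/3})$. Balancing against the quantile-sketch cost yields $\alpha=\eps^{2/3}$ and the claimed $\tilde O(n/\eps^{2/3})$ two-round bound.
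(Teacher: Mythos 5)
Your protocol is correct and is essentially the paper's own proof in lightly different clothing: bucketing the line by quantile intervals of $q$, computing the coarse ``staircase'' part exactly from the transmitted bucket masses $p(B_i)$ (this is exactly the paper's off-diagonal sum $\sum_{j>j'}p_jq_{j'}$), and Monte-Carlo estimating the within-bucket residual $Y(x)=q_{b(x)}\Pr_{y\sim q|_{b(x)}}[y\le x]\le\alpha$ with $t=\Theta(\alpha^2/\eps^2)$ samples, giving the same $n/\alpha$ versus $n\alpha^2/\eps^2$ balance and $\alpha=\eps^{2/3}$ as in the paper (which uses the common refinement of both players' partitions, an inessential difference). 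The one caveat, shared with the paper's \cref{fact:partition-weak}, is that exact $\alpha$-quantiles need not exist when $q$ has atoms of mass greater than $\alpha$, so your bounds $q(B_i)=\alpha$ and $Y\le\alpha$ fail there; such heavy atoms must be handled separately (e.g., Bob names the at most $1/\alpha$ heavy points and Alice sends $\Pr_{x\sim p}[x\ge a]$ for each, costing only $\tilde O(n/\alpha)$ extra bits, after which $G$ should be defined via the true cumulative bucket masses rather than $\alpha(b(x)-1)$).
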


We will use the following fact.
\begin{fact} \label{fact:partition-weak}
Let $p$ be a probability distribution on $[2^n]$. There exists a partition $J(p, \beta)$ of $[2^n]$ into disjoint intervals $I_1, \ldots, I_m$ where $m = O(1/\beta)$ such that $p(I_j) \le \beta$ for all $j$.
\end{fact}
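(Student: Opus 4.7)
The plan is a simple greedy construction from left to right. I process the elements $x = 1, 2, \ldots, 2^n$ in increasing order while maintaining a current interval $I$, initially empty. For each $x$, I test whether adjoining $x$ to $I$ would keep the mass at most $\beta$; if so I include $x$ in $I$, and if not I close $I$ off (appending it to the output partition) and open a fresh current interval $I = \{x\}$. At the end I close the last current interval. By construction every output interval is a contiguous block of $[2^n]$, the intervals are disjoint and cover $[2^n]$, and each has mass at most $\beta$.

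To bound the number $m$ of intervals, I use the following greedy invariant: whenever an interval $I_j$ with $j < m$ was closed, the element $x$ that caused the closure became the first element of $I_{j+1}$ and satisfied $p(I_j) + p(\{x\}) > \beta$. Since $p(\{x\}) \le p(I_{j+1})$, this yields $p(I_j) + p(I_{j+1}) > \beta$ for every consecutive pair. Grouping the intervals into consecutive pairs $(I_1, I_2), (I_3, I_4), \ldots$ and summing, the total mass strictly exceeds $\beta \lfloor m/2 \rfloor$; since the total mass is $1$, this gives $m \le 2/\beta + 1 = O(1/\beta)$.

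The one point that requires a word of care is an element $x$ with $p(\{x\}) > \beta$: the greedy rule would force $\{x\}$ into its own interval of mass exceeding $\beta$, violating the stated bound. There are at most $1/\beta$ such heavy atoms, so one can either (i) apply the construction to the sub-distribution on the light elements and treat each heavy atom as its own additional interval, at the cost of widening the bound on $p(I_j)$, or (ii) observe that in the intended application the distribution has already been preprocessed so that every atom has mass at most $\beta$. I expect that no step beyond this bookkeeping is needed, and the entire argument should fit in a short paragraph once the greedy procedure is spelled out.
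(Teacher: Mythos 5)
Your greedy construction and pairing argument are correct, and they match the approach the paper uses for the continuous analogue \cref{lem:partition-strong}, whose proof proceeds by the same greedy splitting but explicitly assumes a continuous distribution to sidestep the atom issue; \cref{fact:partition-weak} itself is stated in the paper with no proof at all.

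The heavy-atom point you flag is more than bookkeeping: \cref{fact:partition-weak} is false as literally stated. If $p$ is a point mass at some $x_0 \in [2^n]$, then every interval containing $x_0$ has mass $1 > \beta$, so no valid partition exists for any $\beta < 1$. The statement that is actually true, and that the application needs, is that there is a partition into $O(1/\beta)$ intervals such that each $I_j$ either satisfies $p(I_j) \le \beta$ or is a singleton. Your greedy procedure produces exactly this, and your pairing bound $m \le 2/\beta + 1$ goes through unchanged, since the invariant $p(I_j) + p(I_{j+1}) > \beta$ for consecutive intervals does not depend on the per-interval mass bound holding. In the downstream use inside \cref{thm:gt-upper}, the variance analysis of \cref{alg:gt} relies on the quantity $q_j \Pr_{y \sim q|_j}[x \ge y]$ being bounded by $\beta$, which fails on a heavy-atom singleton; those $O(1/\beta)$ singletons must be handled explicitly (Bob communicates the heavy atoms and their masses, and the corresponding contribution is computed exactly rather than sampled), which adds only $O(n/\beta)$ bits and does not change the stated complexity. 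Your option~(i) is the right fix; option~(ii) does not apply here since $p,q$ are arbitrary worst-case inputs with no preprocessing guarantee.
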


Now, Alice and Bob each compute partitions $I^A_1, \dots, I^A_{m_A}$ and $I^B_1, \dots, I^B_{m_B}$ of $[2^n]$ using \cref{fact:partition-weak} with parameter $\beta$. Let $I_1, \dots, I_m$ be the common refinement of these two partitions, so that $m = m_A + m_B - 1$. Define $p_j = p(I_j)$  and $q_j = q(I_j)$ for each $j$, and let $p|_j$ and $q|_j$ denote the conditional distributions of $p$ and $q$ on $I_j$. Now, we can write

\begin{align}
    \E_{x \sim p, y \sim q}[\ind{x \geq y}] 
    &= \sum_{j, j'} p_j q_{j'} \E_{x \sim p|_j, y \sim q|_{j'}}[\ind{x \geq y}] \nonumber \\
    &= \sum_{j} p_j q_j \E_{x \sim p|_j, y \sim q|_j}[\ind{x \geq y}] + \sum_{j > j'} p_j q_{j'} \nonumber \\
    &= \E_{\substack{x \sim p \\ I_j \ni x}} \left[ q_j \Pr_{y \sim q|_j}[x \geq y] \right] + \sum_{j > j'} p_j q_{j'}. \label{eq:gt-decomp}
\end{align}

Note that given the interval partition, Bob can compute the sum (the second term in \eqref{eq:gt-decomp}) from just the values $p_j$. Additionally, for any specific value $x$, Bob can also compute the value inside the expectation in \eqref{eq:gt-decomp}, since he knows $q_j$ and the distribution $q|_j$. Thus, Bob can estimate the value of the expectation given random samples from $p$. This gives rise to the following protocol, illustrated in \cref{alg:gt}: Alice and Bob share the partition, and Alice sends the values $p_j$ for each $j$. Alice also sends $k = O((\beta/\e)^2)$ samples from $p$. Bob then computes the sum in \eqref{eq:gt-decomp} and estimates the expectation using the samples.

\begin{algorithm}
    
\caption{Randomized protocol for expectation estimation for $\GT$}
\label{alg:gt}
\noindent

\textbf{Inputs:} Alice $\gets p$, Bob $\gets q$. \\
\textbf{Output:} An estimate of $\E_{x \sim p, y \sim q}[\ind{x \geq y}]$.
\begin{enumerate}
\item Set $\beta = O(\e^{2/3})$.
\item Bob sends Alice the endpoints of the partition $J(q, \beta)$.
\item Alice sends Bob the endpoints of the partition $J(p, \beta)$. and the values $p_j$ (to a precision of $\e^3$) for each $j$.
\item Alice sends Bob $k = O((\beta/\e)^2)$ samples from $p$.
\item Bob computes the sum in \eqref{eq:gt-decomp} and estimates the expectation using the samples.
\end{enumerate}
\end{algorithm}

\begin{proof}[Proof of \cref{thm:gt-upper}]
We show that \cref{alg:gt} estimates $\E_{x \sim p, y \sim q}[\ind{x \geq y}]$ to within an additive error of $O(\e)$. The sum in \eqref{eq:gt-decomp} is estimated to within $O(\e)$ since there are $O(\beta^{-2})$ terms and each $p_j$ is estimated to within $\e^3$. For the expectation term, note that the value inside the expectation is bounded by $\beta$, so the variance of the estimate is $O(\beta^2/k) = O(\e^2)$. Thus, by Chebyshev's inequality, the estimate is within $O(\e)$ of the true value with probability at least $2/3$, as desired.

Finally, note that the communcation complexity of the protocol is as follows: it takes $O(n/\beta)$ bits for Alice and Bob to send their partitions, $\Ot(1/\beta)$ bits for Alice to send the values $p_j$, and $O(nk) = O(n(\beta/\e)^2)$ bits for Alice to send the samples. Thus, the overall communication complexity is $\tilde{O}(n/\beta + n(\beta/\e)^2)$. Setting $\beta = \e^{2/3}$ gives the desired bound of $\tilde{O}(n/\e^{2/3})$.
\end{proof}

In fact, we can improve the dependence in $n$ of the complexity of this protocol by using more rounds of communication.

\begin{theorem} \label{thm:gt-strong}
    $\Rhat_\eps(\GT) \le \Ot(\log(n) / \eps^{2/3})$.
\end{theorem}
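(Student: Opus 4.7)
My plan is to modify \cref{alg:gt} so that every transmission of an element of $[2^n]$ is replaced by an interactive subroutine of cost $\tilde O(\log n)$. The key subroutine is the classical multi-round randomized $\GT$ protocol, which enables Alice with input $x\in [2^n]$ and Bob with input $y\in [2^n]$ to compute $\ind{x\ge y}$ with success probability $1-\delta$ in $\tilde O(\log n \cdot \log(1/\delta))$ bits. Treating this as a \emph{comparison oracle}, each comparison between an integer held by Alice and one held by Bob costs $\tilde O(\log n)$ bits. The protocol also uses stratified sampling with Neyman-optimal allocation, so that only Alice needs to compute the partition structure; this sidesteps the two-way partition exchange of \cref{alg:gt}.

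Concretely, set $\beta = \eps^{2/3}$. Alice locally computes her $\beta$-partition $I_1^A,\ldots,I_{m_A}^A$ of $[2^n]$ with $p(I_l^A) = \beta$ and $m_A = O(1/\beta)$. Since $\E_{X \sim p}[F_q(X)] = \Pr[X\ge Y]$, it suffices to estimate $\E[F_q(X)] = \sum_l \beta \cdot \E[F_q(X)\mid X\in I_l^A]$. In Step~1, for each stratum $l$ Alice and Bob jointly compute $\tilde w_l$, an $O(\eps)$-approximation to $w_l := q(I_l^A) = F_q(\max I_l^A)-F_q(\min I_l^A)$, by running a binary search on Bob's CDF at each of Alice's endpoints: Bob proposes candidate pivots (medians of the remaining range of $q$) and the parties invoke the comparison oracle to resolve each comparison. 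Each endpoint costs $\tilde O(\log n)$ bits, and there are $O(1/\beta)$ endpoints, for a total of $\tilde O(\log n/\eps^{2/3})$ bits.

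In Step~2, using the $\tilde w_l$'s, Alice chooses Neyman-optimal sample sizes $k_l \propto \tilde w_l$ with total $k = \sum_l k_l = \Theta(\eps^{-2/3})$, and she draws $k_l$ samples $X_{l,i}\sim p|_{I_l^A}$. For each sample she and Bob run the same interactive binary search to compute $F_q(X_{l,i})$ to precision $\eps$, at a cost of $\tilde O(\log n)$ bits per sample; over all samples this is $\tilde O(k\log n) = \tilde O(\log n/\eps^{2/3})$. Alice then outputs $\hat \mu = \sum_l \beta \cdot \tfrac{1}{k_l}\sum_i F_q(X_{l,i})$.

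For the variance analysis, Popoviciu's inequality gives $\Var(F_q(X)\mid X\in I_l^A)\le w_l^2/4$, so $\Var(\hat\mu_l) \le w_l^2/(4k_l)$, and under $k_l \propto w_l$ (with $\sum_l w_l = 1$) the total variance telescopes as $\Var(\hat\mu) = \sum_l \beta^2 w_l^2/(4k_l) = \beta^2/(4k)$, giving standard deviation $O(\eps)$ for $k = \Theta(\eps^{-2/3})$ and $\beta = \eps^{2/3}$. The main obstacle is managing the composition of approximation errors: the finite-precision $\tilde w_l$ perturbs the allocation, the $F_q$-precision per sample contributes additive bias, and the union bound over $\tilde O(1/\eps^{2/3})$ oracle calls requires per-call amplification (absorbed into the $\tilde O$). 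A standard tightening of constants together with a small threshold on $\tilde w_l$ (to avoid allocating to strata whose estimated $q$-mass is below noise level; their total contribution is $O(\eps)$) handles these, yielding the claimed $\tilde O(\log n/\eps^{2/3})$ bound.
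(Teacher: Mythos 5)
Your proposal is correct in outline and reaches the same $\tilde O(\log n/\eps^{2/3})$ bound, but it takes a genuinely different route from the paper's proof sketch. The paper keeps the two-sided common-refinement decomposition \eqref{eq:gt-decomp} of \cref{alg:gt}, in which both $p_j\le\beta$ and $q_j\le\beta$, and its $\tilde O(\log n)$ improvement comes purely from \emph{simulating} Steps 2--5 of \cref{alg:gt} with the $\GT$ protocol: merge sort to interleave the two sets of interval endpoints, then binary searches to compute the $p_j,q_j$ values and the per-sample conditional probabilities $\Pr_{y\sim q|_j}[x\ge y]$. You instead discard Bob's partition entirely, decompose as $\E[F_q(X)]=\sum_l p(I_l^A)\,\E[F_q(X)\mid X\in I_l^A]$ using Alice's strata alone, and recover the variance control that the common refinement gave for free by allocating samples via Neyman's optimal rule $k_l\propto w_l$ together with Popoviciu's bound $\Var(F_q(X)\mid X\in I_l^A)\le w_l^2/4$. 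Both protocols then rely on the same primitive of $\tilde O(\log n)$-bit interactive binary search using $\GT$ as a comparison oracle. The paper's route is a closer-to-the-metal black-box simulation of \cref{alg:gt}; yours is a cleaner one-sided design but must pay for it with some bookkeeping that the paper's two-sided argument avoids — namely the fractional/rounded $k_l$ and the treatment of low-mass strata. Two minor slips worth flagging: the partitioning \cref{fact:partition-weak} guarantees $p(I_l^A)\le\beta$, not $=\beta$, so $\hat\mu$ should use the exact masses $p(I_l^A)$ (which Alice knows) as stratum weights — the argument survives since they are all $\le\beta$; and for strata with $\eps\lesssim w_l < 1/k$ you must round $k_l$ up to $1$ rather than to $0$, which keeps $\sum_l k_l=O(m_A+k)=O(\eps^{-2/3})$ and the variance contribution of such strata bounded by $\sum_l \beta^2 w_l^2/4\le \beta^2\eps^{2/3}/4=O(\eps^2)$, so the claimed bound holds.
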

\begin{proof}[Proof sketch]
    We sketch how to modify \cref{alg:gt} to achieve the desired complexity. We will use the classic result (see \cite{KN97}) that there is a protocol for $\GT$ using $O(\log n)$ bits of communication. The basic idea is that we can use the $\GT$ protocol to simulate \cref{alg:gt} without Alice or Bob sending the interval endpoints explicitly. 

    More specifically, Alice and Bob can use a merge sort to figure out the relative order of all their interval endpoints. In this way, we use $O(m) = O(1/\beta)$ instances of the $\GT$ protocol. The values $p_j$ (and similarly $q_j$ by Bob) can also be computed by Alice (up to $\poly(\e)$ precision) using a binary search to figure out where each interval endpoint lies relative to the distribution $p$, using $O(\log \inv \e) = \Ot(1)$ instances of $\GT$ per endpoint. Similarly, for each sampled point $x$, Bob can use a similar binary search to estimate $\Pr_{y \sim q|_j}[x \geq y]$.
\end{proof}

We note that as a corollary, we also get a protocol for Toeplitz matrices which are defined by a sequence with few changes.

\begin{corollary}
Suppose that $f$ corresponds to a Toeplitz matrix whose diagonal entries are given by a (bounded in $[-1, 1]$) sequence $a_{-2^n+1}, \ldots, a_{2^n-1}$ which is piecewise constant, changing at most $r$ times. Then, $\Rhat_\eps(f) \le \tilde{O}(r^{5/3}n / \eps^{2/3})$.
\end{corollary}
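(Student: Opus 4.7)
The plan is to decompose the Toeplitz matrix as a short linear combination of shifted $\GT$ matrices, estimate each component using \cref{thm:gt-upper}, and combine via a union bound. Concretely, if $\{a_k\}_{k=-2^n+1}^{2^n-1}$ is piecewise constant with breakpoints $t_1 < t_2 < \cdots < t_r$ and values $c_0, c_1, \ldots, c_r \in [-1,1]$, I would first write
\[
a_k \;=\; c_0 \;+\; \sum_{i=1}^{r} (c_i - c_{i-1}) \cdot \ind{k \ge t_i}.
\]
Substituting $k = x - y$, this expresses $A_f = c_0 J + \sum_{i=1}^r (c_i - c_{i-1}) A_{\GT,t_i}$, where $J$ is the all-ones matrix and $A_{\GT,t_i}(x,y) = \ind{x-y \ge t_i}$ is a shifted greater-than matrix. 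Since each $c_i \in [-1,1]$, every coefficient has magnitude at most $2$.

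Next, I would observe that each shifted greater-than expectation reduces to an instance of $\GT$. Given distributions $p$ on $[2^n]$ and $q$ on $[2^n]$, the quantity $\E_{x \sim p, y \sim q}[\ind{x - y \ge t_i}]$ equals $\E_{x' \sim p_i, y \sim q}[\GT(x', y)]$ where $p_i$ is the pushforward of $p$ under the shift $x \mapsto x - t_i$. (To keep a common integer domain, we can embed everything into $[2^{n+1}]$; this only costs an additive $O(1)$ in the bit-length parameter and is absorbed into the $\tilde{O}(n)$.) Hence, by \cref{thm:gt-upper}, each such expectation can be estimated to error $\eps'$ with probability $1-\delta'$ using $\tilde{O}(n / \eps'^{2/3})$ bits.

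The final step is a union bound and an error budget. Choose $\eps' = \eps/(4r)$ and $\delta' = 1/(10r)$, run the shifted $\GT$ protocol for each $i \in [r]$, and output $c_0 + \sum_{i=1}^r (c_i - c_{i-1}) \hat{v}_i$ where $\hat{v}_i$ is the $i$-th estimate. With probability at least $0.9$, the total error is bounded by
\[
\sum_{i=1}^{r} |c_i - c_{i-1}| \cdot \eps' \;\le\; 2r \cdot \frac{\eps}{4r} \;=\; \frac{\eps}{2}.
\]
The communication cost is $r$ invocations, each of cost $\tilde{O}(n / (\eps/r)^{2/3}) = \tilde{O}(n r^{2/3} / \eps^{2/3})$, for a total of $\tilde{O}(r^{5/3} n / \eps^{2/3})$, matching the claim.

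Nothing in this plan is particularly delicate; the main conceptual step is the decomposition formula, and the only mildly annoying bookkeeping is the domain shift (ensuring shifted coordinates still fit in $O(n)$ bits). In the style of \cref{rem:improved-sparse}, one could likely shave the bound to $\tilde{O}(r^{4/3} n / \eps^{2/3})$ by exploiting independence of the estimators' errors to set $\eps'$ only to $\tilde{O}(\eps/\sqrt{r})$ rather than $\eps/r$, but the stated corollary only requires the weaker union-bound argument above.
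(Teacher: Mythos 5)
Your proof is correct and follows essentially the same route as the paper: decompose the Toeplitz matrix as a short linear combination of (at most $r$) shifted $\GT$ matrices with coefficients of magnitude at most $2$, run the $\GT$ protocol with error parameter $\Theta(\eps/r)$ on each, and sum. You are slightly more careful than the paper's terse proof in spelling out the constant offset $c_0$, the explicit telescoping coefficients $c_i - c_{i-1}$, and the domain-shift bookkeeping, and you correctly identify the $r^{4/3}$ improvement in the spirit of \cref{rem:improved-sparse} that the paper also remarks on.
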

\begin{proof}
We can write any such function $f$ as the sum of shifted $\GT$ functions as follows:
\begin{equation*}
    f(x, y) = \sum_{i=1}^{r} c_i \cdot \GT(x + t_i, y),
\end{equation*}
where $|c_i| \le 2$ and $t_i \in [-2^n, 2^n]$ are the shift amounts corresponding to the change points of the sequence. Therefore, we can run the protocol for $\GT$ from \cref{thm:gt-upper} for each of these $r$ terms, with an error bound of $\e/r$ for each term. This gives the desired communication complexity of $\tilde{O}(r^{5/3}n / \eps^{2/3})$.
\end{proof}

\begin{remark}
Similarly to \cref{rem:improved-sparse}, the dependence on $r$ in the above corollary can be improved to $r^{4/3}$ by reducing the error parameter of each instance of the $\GT$ protocol to $\e / \sqrt r$.
\end{remark}

\begin{remark}
It is not possible to get a nontrivial bound for all Toeplitz matrices: consider the Paley matrix, which is a Toeplitz matrix whose entry $(i, j)$ is given by the Legendre symbol of $i-j$ modulo a prime $p \approx 1/\e$. This matrix has orthogonal rows of norm roughly $\sqrt{p}$, so one can use the spectral lower bound which we prove in \cref{thm:spectral} to show that the communication complexity of \jee with respect to the Paley matrix is $\Omega(1/\e)$.
\end{remark}

\subsection{Absolute difference}
Next we turn to the absolute difference function, $\ab(x, y) = |x - y|$, in the continuous setting $\X = \Y = [0, 1]$. In this case, we show the following upper bound.

\begin{theorem} \label{thm:abs-upper}
    $\Rhat_\eps(\ab) \leq \tilde{O}(1/\eps^{2/5})$.
\end{theorem}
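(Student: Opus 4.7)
The plan is a two-round protocol combining interval-partition exchange (in the spirit of the $\GT$ protocol \cref{alg:gt}) with a carefully debiased sampling step. Fix $\beta = \Theta(\e^{2/5})$. In the first round Alice and Bob apply \cref{fact:partition-weak} to obtain partitions $J(p,\beta)$, $J(q,\beta)$, exchange their endpoints, and take the common refinement $I_1,\ldots,I_m$ with $m = O(1/\beta)$; then every $I_j$ satisfies $p_j := p(I_j) \le \beta$ and $q_j := q(I_j) \le \beta$ simultaneously. They also exchange the values $p_j, q_j$ and the conditional means $\mu_j^p = \E_{x \sim p}[x \mid x \in I_j]$ and $\mu_j^q$ to $\poly(\e)$ precision. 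This first round uses $\tilde O(1/\beta) = \tilde O(\e^{-2/5})$ bits.

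We split the target expectation as
\[ \E_{x \sim p,\, y \sim q}\bigl[|x-y|\bigr] \;=\; \sum_{j \ne j'} p_j q_{j'}\, \E_{x \sim p|_j,\, y \sim q|_{j'}}[|x-y|] \;+\; D, \]
where $D = \sum_j p_j q_j\, \E_{x \sim p|_j,\, y \sim q|_j}[|x-y|]$. Since the refined cells are pairwise disjoint intervals, for $j \ne j'$ the sign of $x - y$ is constant on $I_j \times I_{j'}$, so $\E_{x \sim p|_j,\, y \sim q|_{j'}}[|x-y|] = |\mu_j^p - \mu_{j'}^q|$ \emph{exactly}. Hence Bob can compute the off-diagonal sum as $\sum_{j \ne j'} p_j q_{j'}\,|\mu_j^p - \mu_{j'}^q|$ from the first-round data, with error controlled by the $\poly(\e)$-precision quantization.

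To estimate $D$, Alice draws $k = \Theta(\beta^4/\e^2)$ i.i.d.\ samples $x_1,\ldots,x_k \sim p$, transmits each to $O(\log(1/\e))$ bits of precision, and Bob returns
\[ \hat D \;=\; \frac{1}{k} \sum_{i=1}^k q_{j(x_i)}\, \E_{y \sim q|_{I_{j(x_i)}}}\bigl[|x_i-y|\bigr], \]
where $j(x)$ denotes the cell containing $x$. Conditioning on $x$ first gives $\E[\hat D] = D$. The decisive point is that each summand is bounded: since $|x_i - y| \le \beta$ for $y \in I_{j(x_i)}$ and $q_{j(x_i)} \le \beta$, every term lies in $[0, \beta^2]$. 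Therefore the per-sample variance is at most $\beta^4$, and with $k = \Theta(\beta^4/\e^2)$ samples Chebyshev yields $|\hat D - D| = O(\e)$ with constant probability.

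The sampling phase costs $\tilde O(k) = \tilde O(\beta^4/\e^2)$ bits, so the total cost is $\tilde O(1/\beta + \beta^4/\e^2)$, minimized at $\beta = \e^{2/5}$ to give $\tilde O(\e^{-2/5})$ as required. The key subtlety — and what I expect to be the main obstacle the analysis must overcome — is producing this extra factor of $\beta$ in the variance beyond a naive $\GT$-style analysis. Sampling $(x, y)$ pairs directly and computing $|x-y|\cdot\ind{\text{same cell}}$ yields variance only $\beta^3$ via $\Pr[\text{same cell}] \le \beta$, which would give only $\e^{-1/2}$; integrating $y$ out analytically through the inner conditional expectation introduces the additional factor $q_{j(x_i)} \le \beta$, producing the $\beta^4$ bound on variance and improving the exponent from $1/2$ to $2/5$. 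Setting up the partition so that both $p_j \le \beta$ and $q_j \le \beta$ hold in every cell simultaneously (via the common refinement) is essential for both the per-summand bound and the disjointness used for the off-diagonal identity.
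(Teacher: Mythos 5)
Your approach is structurally the same as the paper's: partition, compute the off-diagonal contribution exactly from the transmitted $p_j,q_j,\mu^p_j,\mu^q_j$, and estimate the diagonal contribution by sampling $x\sim p$ and letting Bob integrate $y$ out analytically. But there is a genuine gap in the justification of the per-term bound. You invoke \cref{fact:partition-weak}, which only guarantees that every refined cell $I_j$ has small mass under each of $p$ and $q$; it says nothing about the \emph{width} of the cells. Yet your decisive claim ``$|x_i-y|\le\beta$ for $y\in I_{j(x_i)}$'' is a width bound. With the weak fact, a common-refinement cell can simultaneously have width $\Theta(1)$ and positive mass under both $p$ and $q$ (e.g.\ Alice's partition leaves a wide low-mass block $[0,0.5]$ and Bob's leaves an overlapping wide low-mass block $[0.25,0.75]$; the intersection $[0.25,0.5]$ has width $\Theta(1)$ and positive mass under both). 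In that case a per-sample term of your estimator is bounded only by $q_{j(x_i)}\cdot(\text{cell width})\le\beta\cdot\Theta(1)$, giving variance $O(\beta^2)$ per sample rather than $O(\beta^4)$, and the optimization $1/\beta = \beta^2/\e^2$ collapses back to $\beta=\e^{2/3}$ and communication $\tilde O(\e^{-2/3})$ — the $\GT$ bound, not the claimed $\e^{-2/5}$.

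The paper avoids this exactly by proving the strengthening you need, \cref{lem:partition-strong}: a partition into $O(1/\beta)$ intervals where every interval has mass at most $\beta$ \emph{and} width at most $\beta$ (greedily start a new interval when either quantity hits $\beta$; the count at most doubles). Substituting that lemma for \cref{fact:partition-weak} makes every step of your argument go through verbatim, and in fact recovers precisely the paper's \cref{alg:abs}. So this is not a different route — it is the paper's route with the wrong partition lemma cited, and the missing width control is load-bearing; the rest of your calculation (unbiasedness of $\hat D$ by conditioning on the cell, exactness of the off-diagonal sum via constant sign of $x-y$ across disjoint intervals, the balancing $\beta=\e^{2/5}$) is correct.
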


The protocol will be very similar to the one for $\GT$ described in \cref{sec:gt-upper}. We will use the following stronger version of \cref{fact:partition-weak}.

\begin{lemma}
\label{lem:partition-strong}
Let $p$ be a probability distribution on $[0, 1]$. There exists a partition $J'(p, \beta)$ of $[0,1]$ into disjoint intervals $I_1, \ldots, I_m$ where $m = O(1/\beta)$ such that: 
\begin{itemize}
\item The width of every interval is at most $\beta$. 
\item $p(I_j) \le \beta$ for all $j$.
\end{itemize}
\end{lemma}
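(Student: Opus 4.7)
The plan is to construct $J'(p,\beta)$ by a straightforward greedy left-to-right sweep: at each step, extend the current interval as far to the right as possible subject to both the width constraint and the mass constraint, then close it off and repeat.

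More concretely, I would define endpoints $0 = x_0 \le x_1 \le \cdots \le x_m = 1$ recursively as follows. Having chosen $x_k < 1$, set
\[
x_{k+1} \;=\; \sup\bigl\{\,x \in [x_k, 1] \,:\, x - x_k \le \beta \ \text{and}\ p\bigl([x_k, x)\bigr) \le \beta \,\bigr\},
\]
and let $I_{k+1} = [x_k, x_{k+1})$, with the last interval closed on the right so that the $I_j$'s tile $[0,1]$. By construction, every $I_j$ has width at most $\beta$. For the mass condition, one has to be slightly careful at the right endpoint: $p([x_k,x_{k+1}))\le \beta$ is automatic, and pushing the single boundary point $x_{k+1}$ into $I_{k+1}$ only for $j=m$ does not affect the argument (one can, if desired, assume $p$ has no atoms of mass $>\beta$, or deal with such atoms by giving each one its own singleton interval at the cost of at most $1/\beta$ extra intervals).

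The core step is bounding $m = O(1/\beta)$. I would classify each non-terminal interval $I_j$ ($j<m$) as either \emph{width-saturated}, meaning $x_{j+1} - x_j = \beta$, or \emph{mass-saturated}, meaning $p(I_j) \ge \beta/2$ (since the greedy step stopped before reaching width $\beta$ only because the mass budget was reached, up to boundary effects). Width-saturated intervals are disjoint subintervals of $[0,1]$ each of length $\beta$, so there are at most $1/\beta$ of them. Mass-saturated intervals are disjoint and each has $p$-mass at least $\beta/2$, so there are at most $2/\beta$ of them. Adding the final interval gives $m \le 3/\beta + 1 = O(1/\beta)$, as required.

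The only delicate part is the bookkeeping at saturation boundaries --- making sure that when the greedy process stops, it is because either the width or the mass constraint is actually tight, so that the counting argument goes through. This is purely a matter of choosing the open/closed conventions for the intervals consistently and, if one wishes to allow arbitrary distributions, treating atoms of large mass as their own singleton intervals. No analytic difficulty arises beyond this.
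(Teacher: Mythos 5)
Your greedy left-to-right sweep, classifying each interval as width-saturated or mass-saturated and counting at most $O(1/\beta)$ of each kind, is exactly the paper's proof, which closes an interval whenever either the width or the accumulated probability reaches $\beta$ and bounds the partition size by $2/\beta$. The only divergence is your handling of atoms: the paper simply assumes the distribution is continuous, and indeed an atom of mass greater than $\beta$ makes the condition $p(I_j)\le\beta$ unachievable for any interval partition, so your proposed singleton-interval fix cannot literally restore the statement there either --- but this caveat is shared with the paper, not a gap relative to it.
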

\begin{proof}
    We construct the partition greedily by starting a new interval each time either the probability or the width of the current interval reaches $\beta$ (assuming that the distribution is continuous). Since we can start at most $1/\beta$ new intervals for either reason, the size of the partition is bounded by $2/\beta$. 
\end{proof}

Again, we let Alice and Bob each compute a partition and let $I_1, \dots, I_m$ be their common refinement. Let $p_j, q_j, p|_j, q|_j$ be defined as before. Then, similarly to \eqref{eq:gt-decomp}, we can write
\begin{align}
\E_{x \sim p, y \sim q}[|x - y|]
&= \sum_{j, j'} p_j q_{j'} \E_{x \sim p|_j, y \sim q|_{j'}}[|x - y|] \nonumber \\
&= \sum_{j} p_j q_j \E_{x \sim p|_j, y \sim q|_j}[|x - y|] 
+ \sum_{j \neq j'} p_j q_{j'} \E_{x \sim p|_j, y \sim q|_{j'}}[x - y] \nonumber \\
&= \E_{\substack{x \sim p \\ I_j \ni x}} \left[ q_j \E_{y \sim q|_j}[|x - y|] \right]
+ \sum_{j \neq j'} p_j q_{j'} \left| \E_{x \sim p|_j}[x] - \E_{y \sim q|_{j'}}[y] \right|. \label{eq:abs-decomp}
\end{align}
Similarly, it is now enough for Alice to send $p_j$ and the conditional expectations $\E_{x \sim p|_j}[x]$ for all $j$ to compute the second term, in addition to samples from $p$ to estimate the first term. Note that this time, the quantity inside the expectation is bounded by $\beta^2$ instead of $\beta$ since the width of each interval is at most $\beta$. This gives rise to the following protocol, illustrated in \cref{alg:abs}.

\begin{algorithm}

\caption{Randomized protocol for expectation estimation for $\ab$}
\label{alg:abs}
\noindent
\textbf{Inputs:} Alice $\gets p$, Bob $\gets q$. \\
\textbf{Output:} An estimate of $\E_{x \sim p, y \sim q}[|x - y|]$.
\begin{enumerate}
\item Set $\beta = O(\e^{2/5})$.
\item Bob sends Alice the endpoints of the partition $J'(q, \beta)$.
\item Alice sends Bob the endpoints of the partition $J'(p, \beta)$, the values $p_j$ (to a precision of $\e^3$) for each $j$, and the values $\E_{x \sim p|_j}[x]$ (to a precision of $\e^3$) for each $j$.
\item Alice sends Bob $k = O((\beta^2/\e)^2)$ samples from $p$.
\item Bob computes the sum in \eqref{eq:abs-decomp} and estimates the expectation using the samples.
\end{enumerate}

\end{algorithm}

The analysis of \cref{alg:abs} is identical to that of \cref{alg:gt}, and we omit the details. The key difference is that the quantity inside the expectation in \eqref{eq:abs-decomp} is bounded by $\beta^2$ instead of $\beta$, which allows us to set $\beta = \e^{2/5}$ instead of $\e^{2/3}$, giving us the result of \cref{thm:abs-upper}.

\smallskip
Additionally, the result of \cref{thm:abs-upper} can be extended to convex Lipschitz functions:
\begin{corollary} \label{cor:convex-lipschitz}
    Let $f: [0, 1]^2 \to [-1, 1]$ be a $1$-Lipschitz convex function. Then, $\Rhat_\eps(f) \leq \tilde{O}(1/\eps^{2/5})$.
\end{corollary}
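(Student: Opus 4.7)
The plan is to reduce to the $\ab$ protocol of \cref{thm:abs-upper} by exploiting the integral representation of $1$D convex $1$-Lipschitz functions in terms of the family $\{|y-t|\}_{t \in [0,1]}$.

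First, I observe that Alice can, without any communication, compute the one-variable function $g(y) := \E_{x \sim p}[f(x,y)]$, since both $p$ and $f$ are known to her. The function $g$ inherits from $f$ the properties of being convex, $1$-Lipschitz, and bounded in $[-1,1]$ (the first two by Jensen's inequality / linearity of expectation applied to the corresponding inequalities for $f$ in the $y$-variable).

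Next, I apply the standard representation theorem for $1$D convex functions. Let $\mu$ denote the distributional second derivative of $g$; it is a non-negative measure on $[0,1]$, and since $g$ is $1$-Lipschitz we have $\mu([0,1]) = g'(1^-) - g'(0^+) \leq 2$. Integrating twice, there exist $a, b \in \mathbb{R}$ (with $|a|, |b| = O(1)$) such that
\begin{equation*}
g(y) = a + by + \int_0^1 (y-t)_{+}\, d\mu(t).
\end{equation*}
Writing $(y-t)_{+} = \tfrac{1}{2}\big((y-t) + |y-t|\big)$ and absorbing the linear part into $a$ and $b$, I obtain
\begin{equation*}
g(y) = a' + b' y + c \cdot \E_{t \sim \nu}\big[|y - t|\big],
\end{equation*}
where $\nu := \mu / \mu([0,1])$ is a probability distribution on $[0,1]$, $c = \mu([0,1])/2 \in [0,1]$, and $|a'|, |b'| = O(1)$.

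Taking expectation over $y \sim q$ yields
\begin{equation*}
\E_{x \sim p, y \sim q}[f(x,y)] \;=\; \E_{y \sim q}[g(y)] \;=\; a' + b' \E_{q}[y] + c \cdot \E_{t \sim \nu,\, y \sim q}\big[|t - y|\big].
\end{equation*}
The protocol is then: Alice computes $(a', b', c, \nu)$ from $g$, sends $(a', b', c)$ to Bob to precision $O(\e)$ (using $\tilde O(1)$ bits), and then Alice (playing her input as $\nu$) and Bob (with input $q$) invoke \cref{alg:abs} to estimate $\E_{t \sim \nu,\, y \sim q}[|t-y|]$ to additive error $\e/2$, using $\tilde O(1/\e^{2/5})$ bits by \cref{thm:abs-upper}. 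Bob can compute $\E_q[y]$ exactly from his own distribution, and outputs $a' + b'\E_q[y] + c \cdot \hat e$. Since $c \leq 1$, the final additive error is at most $\e/2$, and the total communication is $\tilde O(1/\e^{2/5})$. The only subtle step is verifying the $1$D convex representation; the rest is bookkeeping.
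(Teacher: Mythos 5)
Your proof is correct. It follows the same high-level reduction as the paper — express the problem in terms of $\ABS$ via an integral representation of one-dimensional convex $1$-Lipschitz functions, then invoke \cref{thm:abs-upper} — but the bookkeeping is genuinely arranged differently, and it is worth noting the contrast. The paper applies its \cref{lem:convex-lipschitz-1d} \emph{slicewise}, representing each $g_x(y)=f(x,y)$ as $c_x+\E_{z\sim D_x}[|y-z|]$ and then averaging over $x\sim p$ on the outside, so that the distribution $D$ fed into \cref{alg:abs} is simply ``sample $x\sim p$, then $z\sim D_x$''; a notable feature of \cref{lem:convex-lipschitz-1d} is that there is no explicit linear term, because the construction allows the measure $D$ to place atoms at $\{0,1\}$, which generate the linear part of $g_x$ automatically. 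You instead average first, forming $g=\E_{x\sim p}[f(x,\cdot)]$, and then invoke the textbook integral representation $g(y)=a+by+\int(y-t)_+\,d\mu(t)$ with $\mu=g''$, which produces an explicit linear remainder that Bob must handle by computing $\E_q[y]$ himself. Both are valid; the slicewise version is marginally cleaner for sample-access implementations (Alice never needs the full distribution $p$, only the ability to sample from it and from each $D_x$), whereas your version requires Alice to know $p$ well enough to form $g$ and its second-derivative measure. Since the paper's model grants full knowledge of the distributions, your protocol is a correct proof of the stated bound. One small loose end you already flagged implicitly: when $\mu([0,1])=0$ (i.e., $g$ affine), take $c=0$ and $\nu$ arbitrary.
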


The proof of this fact follows essentially by approximating each $g_x(y) = f(x, y)$ by a linear combination of absolute value functions. Indeed, we have the following fact.

\begin{fact} \label{lem:convex-lipschitz-1d}
Let $f:[0,1]\to[-1,1]$ be convex and $1$-Lipschitz. Then there exist a probability measure $D$ on $[0,1]$ and a constant $c \in [-2, 2]$ such that
\[
f(x)=c+\mathbb{E}_{z\sim D}[|x-z|]\qquad\text{for all }x\in[0,1].
\]
\end{fact}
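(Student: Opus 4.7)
The plan is to construct $D$ directly from the derivative of $f$. Since $f$ is convex and $1$-Lipschitz, its right derivative $f'_+$ is non-decreasing and takes values in $[-1, 1]$. The key observation is that for each $z \in [0,1]$, the function $x \mapsto |x-z|$ has derivative $\mathrm{sgn}(x-z)$, so that for any probability measure $D$ on $[0,1]$,
\[
\frac{d}{dx}\,\E_{z \sim D}[|x-z|] \;=\; D([0,x)) - D((x,1]) \;=\; 2\,D([0,x)) - 1,
\]
at any $x \in [0,1]$ that is not an atom of $D$. To match $f'$, we therefore want $D([0,x)) = (1 + f'(x))/2$ at points of differentiability of $f$.

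Concretely, let $\mu$ be the second-derivative measure of $f$ on $(0,1)$, i.e., the Stieltjes measure induced by the non-decreasing function $f'_+$. I would set $D(\{0\}) = (1 + f'_+(0))/2$, $D(\{1\}) = (1 - f'_-(1))/2$, and $D(A) = \tfrac{1}{2}\,\mu(A)$ for Borel $A \subseteq (0,1)$. The two endpoint atoms are nonnegative since $f'_+(0), f'_-(1) \in [-1,1]$, and a direct computation shows $D([0,1]) = 1$, so $D$ is a probability measure; the endpoint atoms precisely compensate for the fact that the range of $f'$ need not cover all of $[-1,1]$.

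Having defined $D$, I would set $g(x) := \E_{z \sim D}[|x-z|]$ and verify that $g'(x) = f'(x)$ at every point of differentiability of $f$. This follows from differentiating under the integral (justified by dominated convergence since $|\mathrm{sgn}(x-z)| \le 1$) and plugging in the formula $D([0,x)) = (1+f'(x))/2$ that arises from the construction. Both $f$ and $g$ are Lipschitz, hence absolutely continuous, so agreeing in derivative almost everywhere implies that $f - g$ is constant; call this constant $c$. Since $f(0) \in [-1,1]$ and $g(0) = \E_{z \sim D}[z] \in [0,1]$, we obtain $c = f(0) - g(0) \in [-2, 1] \subseteq [-2, 2]$.

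The only real subtlety is the bookkeeping around atoms at the endpoints: one must ensure that the masses at $0$ and $1$ are chosen so that $D$ is simultaneously a probability measure and produces the correct derivative on the interior. Beyond that, the argument is essentially the observation that integrals of $|x-z|$ against probability measures on $[0,1]$, up to additive constants, parametrize exactly the convex $1$-Lipschitz functions on $[0,1]$ whose derivative ranges within $[-1,1]$.
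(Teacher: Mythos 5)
Your proposal is correct and takes essentially the same approach as the paper: you construct the same measure $D$ (you specify it via the second-derivative Stieltjes measure plus endpoint atoms, the paper equivalently via its CDF $F(x)=(1+f'_+(x))/2$), verify that $g(x)=\E_{z\sim D}[|x-z|]$ has the same derivative as $f$ by dominated convergence, and conclude $f-g$ is constant using absolute continuity of Lipschitz functions.
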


Since the proof is slightly technical and is not a statement about communication complexity, we defer its proof to \cref{sec:appendix-convex-lipschitz-1d}.

Now we can prove \cref{cor:convex-lipschitz}.
\begin{proof}[Proof of \cref{cor:convex-lipschitz}]
For each $x$, let $g_x(y) = f(x, y)$. By the previous lemma, for each $x$ there exist a constant $c_x \in [-2, 2]$ and a probability measure $D_x$ on $[0, 1]$ such that
\[g_x(y) = c_x + \E_{z \sim D_x}[|y - z|].\]
Therefore,
\begin{align*}
\E_{x \sim p, y \sim q}[f(x, y)]
&= \E_{x \sim p}[c_x] + \E_{\substack{x \sim p}} \left[\E_{z \sim D_x, y \sim q}[|y - z|] \right]\\
&= \E_{x \sim p}[c_x] + \E_{\substack{z \sim D, y \sim q}} [|y - z|],
\end{align*}
where $D$ is the distribution (known by Alice) obtained by first sampling $x \sim p$ and then $z \sim D_x$. Thus, Alice can send an estimate of $\E_{x \sim p}[c_x]$ to Bob using $O(\log \inv \e)$ bits, and then they can use the protocol from \cref{thm:abs-upper} to estimate $\E_{z \sim D, y \sim q} [|y - z|]$ to within an additive error of $O(\e)$. This gives the desired result.
\end{proof}

\begin{remark}
Consider the following communication problem: Alice has a convex function $f: [0, 1] \to [-1, 1]$ and Bob has a distribution $q$ on $[0, 1]$, and their goal is to estimate $\E_{y \sim q}[f(y)]$ to within additive error $\e$. Note that this is an instance of the expectation estimation problem for absolute difference, since by \cref{lem:convex-lipschitz-1d}, we can write $f$ as (a constant shift of) the expectation of $|x-y|$ as $x$ is drawn from some distribution $p$. The work of \cite{GORSS24} gives a one-way protocol for this problem with communication $\tilde{O}(1/\eps^{2/3})$, by finding an approximate linear basis for the space of bounded convex functions.
\end{remark}

\subsection{Smooth functions}
Next we consider smooth functions $f: [0, 1]^2 \to [-1, 1]$ whose $k$-th derivatives $\partial^i f / \partial^i x \partial^{k-i} y$ are bounded in $[-1, 1]$, for an absolute constant $k$. Let $\mD^{(k)}$ denote the set of such functions. We show a lower bound for such functions.

\begin{theorem} \label{thm:smooth-upper}
For $f \in \mD^{(k)}$, we have $\Rhat_\e(f) \leq \tilde{O}(\e^{-1/(k+1)})$.
\end{theorem}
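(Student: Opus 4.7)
The plan is to combine a partition-based Taylor expansion with the debiasing protocol from \cref{thm:mainupper}. By a standard discretization (using that bounded $k$-th derivatives on $[0,1]^2$ imply $O(1)$-Lipschitzness), Alice and Bob may first round $p, q$ to a common grid of size $\poly(1/\eps)$, incurring only $O(\eps)$ additive error. After this reduction the sets $\X, \Y$ are finite of size $\poly(1/\eps)$, so the effective input size is $n = O(\log(1/\eps))$.

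Fix $\beta = \eps^{1/(k+1)}$ and partition $[0,1]$ into $m = O(1/\beta)$ contiguous intervals $I_1, \ldots, I_m$ of width at most $\beta$. Let $x_j$ denote the center of $I_j$ and $j(x)$ the interval containing $x$. Inside each cell $I_j \times I_{j'}$ replace $f$ by its degree-$(k-1)$ Taylor polynomial around $(x_j, y_{j'})$,
\[
P_{j,j'}(x,y) \;=\; \sum_{a+b < k} \frac{(\partial_x^a \partial_y^b f)(x_j, y_{j'})}{a!\,b!}\,(x-x_j)^a (y-y_{j'})^b.
\]
Multivariate Taylor's theorem, together with the hypothesis $|\partial_x^a\partial_y^b f|\le 1$ for $a+b=k$, gives that the residual $R(x,y) := f(x,y) - P_{j(x),j'(y)}(x,y)$ is pointwise bounded by $C_k\beta^k$ with $C_k = 2^k/k!$. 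The polynomial contribution to $\E_{p,q}[f]$ equals
\[
\sum_{j,j'}\sum_{a+b<k} \frac{(\partial_x^a\partial_y^b f)(x_j, y_{j'})}{a!\,b!}\, M_{j,a}\, N_{j',b},
\]
where $M_{j,a} = \int_{I_j}(x-x_j)^a\,p(x)\dd x$ and $N_{j',b}$ is its analogue for $q$. Exchanging these $O(k^2/\beta)$ moments to $\poly(\eps)$ precision costs $\tdO(1/\beta)$ bits, and since $|M_{j,a}| \le p_j\beta^a$, a short calculation shows that this precision contaminates the Taylor estimate by only $O(\eps)$.

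For the residual I would apply \cref{thm:mainupper} to the rescaled function $\tilde R := R/(C_k\beta^k)\colon \X \times \Y \to [-1,1]$. Both parties know $f$ and the partition, so once Alice sends her sample $x$ at the grid precision Bob can compute $f(x,y)$ and $P_{j(x),j'(y)}(x,y)$, hence $\tilde R(x,y)$, locally; thus $\Row(\tilde R) = O(\log(1/\eps))$. Invoking \cref{thm:mainupper} with error parameter $\eps/(3 C_k \beta^k)$ yields an $(\eps/3)$-accurate estimate of $\E_{p,q}[R]$ at cost $\tdO(\Row(\tilde R)\cdot \beta^k/\eps) = \tdO(\beta^k/\eps)$. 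Summing the Taylor and residual estimates produces an $\eps$-accurate estimate of $\E_{p,q}[f]$ with total communication $\tdO(1/\beta + \beta^k/\eps)$, minimized at $\beta = \eps^{1/(k+1)}$ to yield $\tdO(\eps^{-1/(k+1)})$.

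The key conceptual point is that the debiasing protocol must be applied to the residual rather than to $f$ itself: replacing \cref{thm:mainupper} by naive random sampling on $R$ would cost $\tdO(\beta^{2k}/\eps^2)$, giving only $\tdO(\eps^{-2/(2k+1)})$ overall, which is strictly weaker than the claim. The linear-in-$1/\eps'$ dependence of \cref{thm:mainupper}, combined with the $\beta^k$ magnitude of the residual, is what drives the $\eps^{-1/(k+1)}$ rate. The main bookkeeping obstacle will be controlling precision in the moment exchange and in the discretization of $\tilde R$ jointly, since one is summing $\poly(1/\beta)$ products of small quantities; this is routine but must be arranged so the cumulative error stays $O(\eps)$.
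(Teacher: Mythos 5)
Your proof is correct and matches the paper's strategy at every key step: partition into $\tilde O(1/\beta)$ intervals, degree-$(k-1)$ Taylor expansion so that the residual is pointwise $O(\beta^k)$, exchange $\tilde O(1/\beta)$ scalar quantities to estimate the polynomial part, apply the debiasing protocol (\cref{thm:mainupper}) to the rescaled residual at cost $\tilde O(\beta^k/\eps)$, and balance at $\beta=\eps^{1/(k+1)}$; the observation that debiasing (rather than naive sampling) is essential to beat $\eps^{-2/(2k+1)}$ is also exactly the right takeaway. The implementation differs in one respect worth noting: the paper Taylor-expands only in the $y$ variable, keeping the coefficients as functions of $x$, so Alice transmits derivative expectations $\E_{x\sim p}\!\left[\partial_y^i f(x, j\alpha)\right]$; you instead expand jointly in $(x,y)$ around cell centers so the Taylor coefficients are known constants and Alice transmits $f$-independent localized moments $M_{j,a}$. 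Both designs carry the same $\tilde O(1/\beta)$-value budget and both control the precision loss via $\sum_j |M_{j,a}|\le (\beta/2)^a$ (resp.\ boundedness of derivatives), so they are interchangeable here; your upfront discretization also lets you avoid the paper's explicit per-box Lipschitz argument for $\hat f$, which is a mild simplification. One small bookkeeping slip: the number of moments Alice sends is $O(k/\beta)$, not $O(k^2/\beta)$, since $M_{j,a}$ is indexed by $j$ and $a$ only; this does not affect the $\tilde O$ bound.
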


First we show the following fact.
\begin{fact} \label{lem:bounded-lower-deriv}
If $f \in \mD^{(k)}$, then its $\ell$-th derivatives $\partial^\ell f / \partial x^i \partial y^{\ell-i}$ are bounded by $O(1)$ for all $\ell \le k$.
\end{fact}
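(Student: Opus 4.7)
The plan is to reduce to a one-variable Landau--Kolmogorov-type inequality and then apply it twice (first in $x$, then in $y$).

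\textbf{Step 1 (1D lemma).} I would first establish the following: if $h:[0,1]\to\RR$ is $k$-times differentiable with $\|h\|_\infty\le A$ and $\|h^{(k)}\|_\infty\le B$, then $\|h^{(j)}\|_\infty\le C_k(A+B)$ for all $0\le j\le k$, where $C_k$ depends only on $k$. Fix any $x_0\in[0,1]$ and assume, without loss of generality, $x_0\le 1/2$. Taylor's theorem at $x_0$ gives
\[
h(x_0+t) \;=\; \sum_{j=0}^{k-1}\frac{h^{(j)}(x_0)}{j!}\,t^j + R(t),\qquad |R(t)|\le \frac{B\,t^k}{k!}.
\]
Evaluate this at the $k$ points $t_i = i/(2k)$ for $i=1,\ldots,k$ (all inside $[0,1]$), producing $k$ linear equations in the $k$ unknowns $h^{(0)}(x_0),\ldots,h^{(k-1)}(x_0)$. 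The coefficient matrix $M_{ij}=t_i^{\,j}/j!$ is a scaled Vandermonde matrix, invertible with entries of $M^{-1}$ depending only on $k$. Since each right-hand side is bounded in absolute value by $A+B/k!$, inverting yields $|h^{(j)}(x_0)|\le C_k(A+B)$ for $j<k$; the case $j=k$ is given. (For $x_0>1/2$, use $t_i=-i/(2k)$ instead.)

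\textbf{Step 2 (Apply in $x$).} For every fixed $y\in[0,1]$, the one-variable function $x\mapsto f(x,y)$ is bounded by $1$, and its $k$-th derivative $\partial^k f/\partial x^k(\cdot,y)$ is bounded by $1$, since this is one of the mixed $k$-th derivatives controlled by the hypothesis $f\in\mD^{(k)}$. Applying the 1D lemma with $A=B=1$ yields $\|\partial^i f/\partial x^i(\cdot,y)\|_\infty\le C_k$ for every $0\le i\le k$, uniformly in $y$.

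\textbf{Step 3 (Apply in $y$).} Fix any $0\le i\le k$ and $x\in[0,1]$, and set $g_x(y):=\partial^i f/\partial x^i(x,y)$. Step 2 gives $\|g_x\|_\infty\le C_k$, while
\[
g_x^{(k-i)}(y) \;=\; \frac{\partial^k f}{\partial x^i\,\partial y^{k-i}}(x,y)
\]
is bounded by $1$ by hypothesis. Applying the 1D lemma to $g_x$ (with order $k-i$, $A=C_k$, $B=1$) gives $\|\partial^{i+j}f/(\partial x^i\,\partial y^j)(x,\cdot)\|_\infty\le C'_k$ for every $0\le j\le k-i$. This covers every mixed derivative of $f$ of order $\ell=i+j\le k$.

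\textbf{Main obstacle.} The only real subtlety is handling boundary points in the 1D lemma, since one cannot evaluate $h$ on both sides of $x_0$ when $x_0$ is near $0$ or $1$; the one-sided sampling $t_i=i/(2k)$ (or $-i/(2k)$) sidesteps this cleanly. Everything else is a direct double application of the 1D bound, with constants depending only on $k$, which is acceptable since $k$ is an absolute constant.
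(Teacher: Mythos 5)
Your proof is correct, and it takes a genuinely different route from the paper's. The paper argues by backward induction on the order $\ell$, going from $k$ down: it assumes the $(\ell+1)$-th mixed partials are bounded, deduces that any $\ell$-th mixed partial $g$ is Lipschitz in each variable, then rules out $g$ being large anywhere by integrating $g$ against a nonnegative polynomial weight $W(x,y) = x^i(1-x)^i\, y^{\ell-i}(1-y)^{\ell-i}$ whose boundary derivatives vanish, integrating by parts $\ell$ times to pass the derivatives onto $W$ and land on $\iint f\,\partial_x^i\partial_y^{\ell-i}W$, which $|f|\le 1$ controls. You instead prove a one-variable Landau--Kolmogorov inequality on $[0,1]$ via Taylor expansion at $k$ interior sample points and inverting the resulting scaled Vandermonde system, then apply it twice, first in $x$ (for each fixed $y$) and then in $y$ (for each fixed $x$ and each $i$), at order $k-i$. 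Your approach is more modular: the only nontrivial step is a self-contained 1D inequality, after which the 2D statement falls out mechanically, and the Vandermonde inversion makes the dependence of the constants on $k$ completely transparent. The paper's integration-by-parts argument avoids any Taylor remainder bookkeeping and never needs a separate 1D lemma, but it trades this for the polynomial-bump construction and the analysis of the ratios $B_{i,\ell}/A_{i,\ell}$. Both approaches implicitly use equality of mixed partials (your Step 3 identifies $\partial_y^{k-i}\partial_x^i f$ with the hypothesis's $\partial_x^i\partial_y^{k-i}f$, and the paper likewise when it differentiates $g$), which is a harmless standing assumption here. One small remark: your 1D system is stated with $k$ unknowns $h^{(0)},\dots,h^{(k-1)}$, though $h^{(0)}(x_0)$ is already known to be at most $A$; this costs nothing but could be streamlined to a $(k-1)\times(k-1)$ system for $h^{(1)},\dots,h^{(k-1)}$.
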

We defer the proof of this fact to \cref{sec:appendix-bounded-lower-deriv}

\begin{proof}[Proof of \cref{thm:smooth-upper}]
Split the interval $\Y = [0, 1]$ into intervals $I_0, \dots, I_{\inv \al - 1}$ of size $\al = \e^{1/(k+1)}$, so that $I_j = [j\al, (j+1)\al]$. For each $x$, let $g_x(y) = f(x, y)$. Then, on each interval $I_j$ we can approximate $f(x, y) = g_x(y)$ with a Taylor series in $y$ centered at $j\al$:
\begin{equation*}
f(x, y) = \sum_{i=0}^{k-1} \f{g_x^{(i)}(j\al)}{i!}(y - j\al)^i + O(\al^k), \qquad y \in I_j,
\end{equation*}
where $g_x^{(i)}$ denotes the $i$-th derivative (in $y$) of $g_x$ (which by \cref{lem:bounded-lower-deriv} is bounded by $O(1)$). Therefore, we can get a uniform approximation to $f(x, y)$ everywhere as follows:
\begin{gather*}
f(x, y) = \fh(x, y) + O(\al^k), \qquad \text{for all $y$},\\
\fh(x, y) \coloneqq \sum_{j=0}^{\inv \alpha - 1} \sum_{i=0}^{k-1} \f{g_x^{(i)}(j\al)}{i!} \ind{y \in I_j} (y - j\al)^i.
\end{gather*}

Now, we will describe separate protocols to estimate $\E_{p,q}[\fh]$ and $\E_{p,q}[f-\fh]$. First, we observe that we can write
\begin{equation} \label{eq:double-exp-smooth}
\E_{p, q}[\fh] = \sum_{j=0}^{\inv \al - 1} \sum_{i=0}^{k-1} \f{1}{i!} \E_{x\sim p}[g_x^{(i)}(j\al)] \E_{y \sim q}[\ind{y \in I_j} (y - j\al)^i],
\end{equation}
so Alice may send the $k\inv \al = O(\inv \al)$ values $\E_{x\sim p}[g_x^{(i)}(j\al)]$ for each $i, j$ with precision $O(\e\al)$ (using $O(\log \inv \e)$ bits per value). With knowledge of $q$, Bob can then directly compute $\E_{p,q}[\fh]$ using \eqref{eq:double-exp-smooth}. Note that $g_x^{(i)}(j\al)$ and $(y - j\al)^i$ are both bounded by $O(1)$, so this indeed results in an $O(\e)$-approximation of $\E_{p, q}[\fh]$. 

It remains to approximate $\E_{p,q}[f-\fh]$; note that by the earlier discussion, $f - \fh$ is bounded by $O(\al^k)$. We split the interval $\X = \Y = [0, 1]$ into intervals $I'_0, \dots, I'_{\inv \e - 1}$ of length $\e$ which are a subdivision of the intervals $I_j$. Now, within each box $I'_r \times I'_s$ such that $I'_s \subset I_j$, we have
\begin{gather}
\fh(x, y) = \sum_{i=0}^{k-1} \f{1}{i!} \cdot g_x^{(i)}(j\al) \cdot (y - j\al)^i, \nonumber \\
\f{\del}{\del x} \fh(x, y) = \sum_{i=0}^{k-1} \f{1}{i!} \cdot \f{d}{dx}\p{g_x^{(i)}(j\al)} \cdot  (y - j\al)^i, \label{eq:smooth-partial-x} \\
\f{\del}{\del y} \fh(x, y) = \sum_{i=1}^{k-1} \f{1}{(i-1)!} \cdot g_x^{(i)}(j\al) \cdot  (y - j\al)^{i-1}. \label{eq:smooth-partial-y}
\end{gather}
Note that
\[\f{d}{dx}\p{g_x^{(i)}(j\al)} = \left.\f{\del^{i+1}}{\del x \del y^{i}} f(x, y)\right|_{y=j\al},\]
which is bounded by $O(1)$ by \cref{lem:bounded-lower-deriv}. Thus, the partial derivatives of $\fh$ given by \eqref{eq:smooth-partial-x} and \eqref{eq:smooth-partial-y} are all bounded by $O(1)$. Therefore, $\fh$ is $O(1)$-Lipschitz within each such box $I'_r \times I'_s$. Note that $f$ is also $O(1)$-Lipschitz everywhere by \cref{lem:bounded-lower-deriv}. Therefore, we have
\begin{equation*}
(f-\fh)(x, y) = (f-\fh)(r\e, s\e) + O(\e), \qquad \text{for all $(x, y) \in I'_r \times I'_s$}. 
\end{equation*}
Therefore, we may instead just consider the distributions $p$ and $q$ as their marginal distributions on the intervals $I'_r$ and $I'_s$ (in other words, we take the corresponding distributions over $r$ and $s$), and approximate $\E[(f-\fh)(r\e, s\e)]$. Recall that $f - \fh$ is bounded by $\al^k$. Then, scaling up by $\al^{-k}$, this is just an instance of expectation estimation for a general function on a discrete domain of size $2^n = 1/\e$ and with error $\e/\al^k$. By \cref{thm:mainupper}, it is possible to do this with communication complexity $\Ot(\al^k/\e)$. Thus, the overall communication complexity of this protocol is $\Ot(\inv \al + \al^k/\e) = \Ot(\e^{-1/(k+1)})$, as desired.

\end{proof}
\section{A spectral technique for lower bounds}

So far, we have considered protocols for \jee for distributions $p \in \Delta_\X, q \in \Delta_\Y$. In the finite domain setting,  \jee is equivalent to computing $p^tMq$ where $M = [f(x,y)]$. A simple argument shows that we can relax the constraint on $p, q$ to allow vectors of bounded $\ell_1$ norm, which can have negative entries and get an additive guarantee that degrades with the norm. The stronger guarantee is helpful for proving lower bounds as we will soon see. We only need this in the discrete setting where $M$ is a matrix in $[-1,1]^{k \times k}$, but a similar statement holds for the continuous setting as well. 

\begin{lemma}
\label{lem:unbounded-pq}
    Let $M \in [-1,1]^{k \times k}$. Assume there is a communication protocol, where 
    Alice and Bob given $p \in \Delta_{[k]}$ and $q \in \Delta_{[k]}$ respectively,  use communication $c$ to compute an $\eps$ additive approximation to $p^tMq$ with probability $1 - \delta$. Then for any $\tilde{p} \in \RR^k$ and $\tilde{q} \in \RR^k$ where $\max(\|\tp\|_1, \|\tq\|_1) \leq B$ given to Alice and Bob respectively, there is a protocol that computes $\tilde{p}^tM\tilde{q}$ within additive error 
    \[ \eps' = \eps(\|\tp\|_1\|\tq\|_1 + \|\tp\|_1 + \|\tq\|_1)\]  
    using $O(c + \log(B/\eps))$ bits of communication.  
\end{lemma}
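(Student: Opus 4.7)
The plan is to reduce the unbounded case to four invocations of the given probability-distribution protocol by decomposing $\tp$ and $\tq$ into their positive and negative parts. Write $\tp = \tp^+ - \tp^-$ and $\tq = \tq^+ - \tq^-$, where $\tp^+, \tp^-, \tq^+, \tq^-$ are coordinate-wise nonnegative. Expanding the bilinear form yields
\[
\tp^\T M \tq \;=\; \sum_{a,b \in \{+,-\}} (-1)^{\mathbbm{1}[a=-]+\mathbbm{1}[b=-]} (\tp^a)^\T M (\tq^b).
\]
For each pair $(a,b)$, set $\alpha_a = \|\tp^a\|_1$ and $\beta_b = \|\tq^b\|_1$; note $\alpha_a, \beta_b \le B$. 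When both are nonzero, define the probability distributions $p_a = \tp^a/\alpha_a$ and $q_b = \tq^b/\beta_b$, so that $(\tp^a)^\T M (\tq^b) = \alpha_a \beta_b \cdot (p_a^\T M q_b)$; when either is zero, the term is $0$ and is handled trivially.

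Alice sends $\alpha_+, \alpha_-$ to Bob, and Bob sends $\beta_+, \beta_-$ to Alice, each to additive precision $\eps$. Since each scalar lies in $[0,B]$, this takes $O(\log(B/\eps))$ bits total. Then for each of the four pairs $(a,b)$, Alice and Bob run the given protocol on the probability distributions $p_a, q_b$ to produce $\hat X_{a,b}$ with $|\hat X_{a,b} - p_a^\T M q_b| \le \eps$ (each succeeds with probability $1-\delta$; a union bound gives overall failure probability at most $4\delta$, which is fine up to a constant, or can be restored by independent repetition at the cost of an extra $\log(1/\delta)$ factor absorbed into $c$). The total communication is $4c + O(\log(B/\eps)) = O(c + \log(B/\eps))$. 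They then return the combined estimate $\hat Z = \sum_{a,b} (-1)^{\mathbbm{1}[a=-]+\mathbbm{1}[b=-]}\, \hat\alpha_a \hat\beta_b \hat X_{a,b}$.

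The only remaining step, and the main (but routine) technical point, is the error analysis. Since $|p_a^\T M q_b| \le 1$, a standard three-term expansion gives
\[
\bigl|\hat\alpha_a \hat\beta_b \hat X_{a,b} - \alpha_a \beta_b (p_a^\T M q_b)\bigr| \;\le\; \hat\alpha_a \hat\beta_b \cdot \eps + \hat\alpha_a \cdot \eps + \beta_b \cdot \eps,
\]
where the first term comes from the protocol's error on $\hat X_{a,b}$ and the remaining two from the $\eps$-precision approximations of the norms (after choosing $\eps$ as the common precision, up to absolute constants). Summing over the four sign patterns and using $\sum_a \alpha_a = \|\tp\|_1$ and $\sum_b \beta_b = \|\tq\|_1$ yields total error at most a constant times $\eps(\|\tp\|_1\|\tq\|_1 + \|\tp\|_1 + \|\tq\|_1)$, matching the claimed $\eps'$ after absorbing constants into $\eps$. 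The main ``obstacle'' is just this bookkeeping; the approach itself is essentially forced by linearity.
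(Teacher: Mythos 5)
Your proposal is correct and follows essentially the same approach as the paper: decompose $\tp$ and $\tq$ into positive and negative parts, rescale to obtain four pairs of probability distributions, run the given protocol on each pair, exchange the $\ell_1$-norms with $O(\log(B/\eps))$-bit precision, and recombine. The one difference is that you have both parties send their (approximate) norms, which introduces an extra approximation in both $\hat\alpha_a$ and $\hat\beta_b$ and costs you a constant factor plus an $O(\eps^2)$ additive term; the paper instead has only one party send norms (the other uses its own exact values) and then selects which direction to run based on which of $\|\tp\|_1,\|\tq\|_1$ is smaller (or averages the two), landing exactly on the stated bound $\eps(\|\tp\|_1\|\tq\|_1 + \|\tp\|_1 + \|\tq\|_1)$. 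For the lemma's downstream use this is immaterial since constants are absorbed anyway, but it is worth noting that your version is not literally $\le \eps'$ when $\|\tp\|_1,\|\tq\|_1 \ll 1$.
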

\begin{proof}
    We can write $\tp = \tp^+ - \tp^-, \tq = \tq^+ - \tq^-$ where each of $\tp^+, \tp^-, \tq^+, \tq^-$ only have non-negative entries. Alice defines distributions $p^+ = \tp^+/\|\tp^+\|_1$ and $p^- = \tp^-/\|\tp^-\|_1$, while Bob defines $q^+$ and $q^-$ similarly. We have
    \begin{align} 
    \label{eq:4-part}
    \tp M \tq &= (\tp^+)^t M \tq^+ + (\tp^-)^t M\tq^- - (\tp^+)^tM\tq^- - (\tp^-)^t M\tq^+
    \end{align}   
    Observe that 
    \begin{align} 
    \label{eq:part-1}
        (\tp^+)^t M \tq^+ = \|\tp^+\|_1(p^+)^tMq^+ \|\tq^+\|_1,
    \end{align}
    so it can be estimated by running the protocol for $(p^+, q^+)$ and rescaling. This suggests the following protocol:
    \begin{itemize}
    \item Alice and Bob run the protocol for distributions for all four pairs in $\{p^+, p^-\}\times \{q^+, q^-\}$ using $4c$ bits of communication. 
    \item Alice sends Bob the values $\|\tp^+\|_1, \|\tp^-\|_1$ to within an additive $\eps$ using $O(\log(B/\eps)$ bits of communication.
    \item Bob computes the output by suitably rescaling and then summing the outcomes of each pair. 
    \end{itemize}
    
    By Equation \eqref{eq:part-1}, $(\tp^+)^t M \tq^+ = a b c$, 
    where $a =\|\tp^+\|_1 $, $b = (p^+)^tMq^+$ and $c = \|\tq^+\|_1$. 
    Our protocol will approximate this by $a'b'c$, where $a'$ is the approximation of $\|\tp^+\|_1$ and $b'$ is the outcome of the protocol for $M$ on $p^+, q^+$, so $|b'| \leq 1$. Hence 
    \[ |a' - a| \leq \eps, \ |b ' - b| \leq \eps. \]
    We can bound
    \[ |a'b' - ab| \leq |a(b - b')| + |(a - a')b'| \leq \eps(\|\tp^+\|_1 + 1) \]
    hence the error in computing $abc$ is bounded by $\eps(\|\tp^+\|_1 + 1)\|\tq^+\|_1$. 
    Summing the error over all four terms gives a bound of
    \[ \eps(\|\tp^+\|_1 + \|\tp^-\|_1 + 2)\|(\tq^+\|_1 + \|\tq^-\|_1) \leq \eps(\|\tp\|_1 +2)\|\tq\|_1.\]
    
    We can also reverse the roles of Alice and Bob, by having Bob send $\|\tq^+\|_1, \|\tq^-\|_1$ to within an additive $\eps$, and Alice output the answer. This will give error 
    $\eps(\|\tq\|_1 + 2)\|\tp\|_1$. By randomizing between the two,\footnote{This can be easily derandomized by comparing $\|p\|_1$ and $\|q\|_1$} we can obtain the average error which is
    \[ \eps(\|\tp\|_1\|\tq\|_1 + \|\tp_1\|_1 + \|\tq_1\|_1).\]
\end{proof}

\paragraph{Gap Hamming} 
In the Gap-Hamming problem, the inputs are $a ,b \in \{1,-1\}^t$ and we are interested in approximating 
 \[\iprod{a}{b} = \sum_{i \in [t]}a_ib_i. \]
 Concretely, the goal of $\gh_{t,\eps}(a,b)$ is to decide whether $\iprod{a}{b} \geq \eps t$ or $\iprod{a}{b} \leq - \eps t$. 
The following lower bounds for Gap-Hamming were first shown by Chakrabarti and Regev.
\begin{theorem}
\label{thm:cr}
\cite{ChakrabartiR12}
    We have 
    \begin{align*} 
    \R(\gh_{t, \eps}) &= \Omega(\min(1/\eps^2,t)),\\
    \D(\gh_{t, 0.1}) &= \Omega(t).
    \end{align*}
\end{theorem}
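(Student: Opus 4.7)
My plan is to prove the randomized bound $\R(\gh_{t,\eps}) = \Omega(\min(1/\eps^2, t))$ via an information-theoretic argument on a carefully chosen hard product distribution; the deterministic bound $\D(\gh_{t,0.1}) = \Omega(t)$ then follows either as the special case $\eps = 0.1$ combined with the trivial $\R \leq \D$, or by a direct rank / fooling-set argument on the communication matrix restricted to the promise region.

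For the randomized lower bound, the plan is to construct a distribution $\mu$ on $\{\pm 1\}^t \times \{\pm 1\}^t$ indexed by a hidden bit $\tau \in \{\pm 1\}$: sample $\tau$ uniformly, sample $a$ uniformly in $\{\pm 1\}^t$, and set each $b_i$ independently equal to $\tau a_i$ with probability $(1+\eps)/2$ and to $-\tau a_i$ otherwise. Then $\langle a, b \rangle / t$ concentrates around $\tau \eps$, so any correct protocol $\Pi$ must recover $\tau$ with constant advantage, forcing $I(\tau; \Pi(a, b)) = \Omega(1)$ by Fano's inequality. The goal is then to upper bound this mutual information by $O(c \eps^2)$, where $c$ is the total communication. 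I would exploit two structural facts: $\mu$ factorizes across the $t$ coordinates conditioned on $\tau$, and any transcript leaf of $\Pi$ induces a product distribution on $(a, b)$ by the standard rectangular property. A direct-sum / tensorization argument via Hellinger distance or the cut-and-paste lemma should bound the per-coordinate information leakage by $O(c \eps^2 / t)$; summing across the $t$ coordinates and equating with $\Omega(1)$ yields $c = \Omega(1/\eps^2)$, with the $\min$ with $t$ accounting for the trivial upper bound on any protocol's cost.

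The main obstacle is the tensorization step for multi-round protocols: naive KL-based chain-rule arguments yield weaker quantitative dependence and do not handle interactive transcripts cleanly. The Chakrabarti-Regev innovation is to work with a carefully designed information measure (closely related to a corruption / smooth-rectangle bound) that tensorizes correctly under interaction, combined with an anti-concentration / noise-stability argument on the Boolean cube showing that low-communication rectangles cannot reliably separate $\mu \mid \tau = +1$ from $\mu \mid \tau = -1$. Reproducing this piece with tight constants is the technically delicate part; once in place, the remainder of the proof is routine, and the deterministic statement falls out of the $\eps = 0.1$ instance.
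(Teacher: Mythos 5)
The paper does not prove this theorem; it is imported verbatim as a citation to Chakrabarti and Regev, so there is no in-paper proof to compare your sketch against. Treating your proposal on its own terms, though: the information-complexity route you outline is not what Chakrabarti and Regev actually do, and the step you flag as ``technically delicate'' is in fact the well-known obstruction that stalled this lower bound for years. Before Chakrabarti--Regev, the naive direct-sum/Hellinger argument you gesture at was tried repeatedly and is known \emph{not} to tensorize in the required way for Gap-Hamming, because the $t$ coordinates jointly determine $\tau$ through a global threshold rather than through any single coordinate, and a protocol can spend all its bits learning a few coordinates exactly without leaking $\Omega(1)$ information about the sign of the sum. Chakrabarti and Regev's actual argument is geometric: it proceeds via the corruption/smooth-rectangle bound and a sharp anti-concentration statement about the inner product on large product sets of the Boolean cube, not via an interactive information measure. (Vidick's and Sherstov's later simplifications are also in this geometric/corruption family.) So the sentence ``The Chakrabarti--Regev innovation is to work with a carefully designed information measure... that tensorizes correctly under interaction'' misattributes the technique, and your plan effectively defers the entire proof to a tensorization lemma that, in the form you describe, does not exist.

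There is also a secondary issue with your hard distribution: with bias $(1+\eps)/2$, the inner product concentrates at $\tau\eps t$ with standard deviation $\Theta(\sqrt{t})$, so for $\eps$ near $1/\sqrt{t}$ (exactly the regime where $\min(1/\eps^2, t) = t$ is the target) the promise $|\langle a,b\rangle|\ge\eps t$ holds only with constant probability rather than overwhelmingly; one needs either a constant blow-up of the bias or a conditioning step, and this interacts with the information accounting. If you want a self-contained alternative to citing \cite{ChakrabartiR12}, the cleanest path is Vidick's short proof via the corruption bound and anti-concentration of $\langle a,b\rangle$ restricted to large rectangles, rather than a direct-sum argument.
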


\begin{definition}\label{def:lambdat}
    Given a matrix $A \in [-1,1]^{k \times k}$, and $t \leq \rank(A)$, let 
    \[ \lambda_{t}(A) = \sum_{i =1}^t\fr{\sigma_i}\]
    where the $\sigma_i$s are the singular values of $A$.  
\end{definition}

If $A$ is invertible, then $1/\sigma_i$ are the singular values of $A^{-1}$ and $\lambda_k(A)$ is the trace norm of $A^{-1}$. By allowing $t < k$, we allow the flexibility of finding a subspace of the domain over which the inverse is well behaved. 

Let us estimate $\lambda_t(k)$ for some simple matrices.
\begin{itemize}
\item For the $k \times k$ identity matrix $I_k$, $\lambda_t(I_k) =t$ for all $t \leq k$.
\item For $k \times k$ Hadamard matrix $H_k$ in $\{-1,1\}^k$, all the singular values are $\sqrt{k}$, so $\lambda_t(H_k) = t/\sqrt{k}$.
\end{itemize}

For the Hadamard matrix, we get $k^{1/2}$ for $t = k$. We show a general lower bound which implies that this is tight.

\begin{lemma}
\label{lem:lambda-lb}
    We have
    \begin{align}
    \label{eq:lambda-lb}
         \lambda_t(A) \geq \frac{t^{3/2}}{k}.
    \end{align}
\end{lemma}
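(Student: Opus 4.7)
The plan is to bound $\lambda_t(A)$ from below using two applications of Cauchy--Schwarz together with the Frobenius norm bound that comes from $A \in [-1,1]^{k\times k}$.

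First, since every entry of $A$ is bounded by $1$, we have the Frobenius bound
\[
\sum_{i=1}^{\rank(A)} \sigma_i^2 \;=\; \|A\|_F^2 \;\leq\; k^2.
\]
In particular, restricting to the top $t$ singular values, $\sum_{i=1}^t \sigma_i^2 \leq k^2$. By Cauchy--Schwarz applied to this sum with the all-ones vector, I get an $\ell_1$ bound on the top singular values:
\[
\sum_{i=1}^t \sigma_i \;\leq\; \sqrt{t}\,\sqrt{\sum_{i=1}^t \sigma_i^2} \;\leq\; k\sqrt{t}.
\]

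Next, I apply Cauchy--Schwarz a second time, this time pairing $\sqrt{\sigma_i}$ with $1/\sqrt{\sigma_i}$:
\[
t^2 \;=\; \left(\sum_{i=1}^t \sqrt{\sigma_i}\cdot\tfrac{1}{\sqrt{\sigma_i}}\right)^{\!2} \;\leq\; \left(\sum_{i=1}^t \sigma_i\right)\!\left(\sum_{i=1}^t \tfrac{1}{\sigma_i}\right) \;=\; \left(\sum_{i=1}^t \sigma_i\right)\lambda_t(A).
\]
Combining the two displayed inequalities gives
\[
\lambda_t(A) \;\geq\; \frac{t^2}{\sum_{i=1}^t \sigma_i} \;\geq\; \frac{t^2}{k\sqrt{t}} \;=\; \frac{t^{3/2}}{k},
\]
which is the claimed bound. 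There is no real obstacle here; the only thing to be careful about is that $t \leq \rank(A)$ ensures all $\sigma_i$ for $i \leq t$ are strictly positive so that $1/\sigma_i$ is well defined, and that the Hadamard example from the preceding paragraph shows the bound is tight (both Cauchy--Schwarz applications become equalities when all $\sigma_i$ are equal to $\sqrt{k}$).
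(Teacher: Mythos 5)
Your proof is correct and follows essentially the same route as the paper: both rely on the Frobenius bound $\|A\|_F \le k$, a Cauchy--Schwarz step to get $\sum_{i\le t}\sigma_i \le k\sqrt{t}$, and the inequality $(\sum_{i\le t}\sigma_i)(\sum_{i\le t}1/\sigma_i)\ge t^2$ (the paper cites this as AM--HM, you derive it via Cauchy--Schwarz with the pairing $\sqrt{\sigma_i}\cdot 1/\sqrt{\sigma_i}$, which is the standard proof of AM--HM, so these are the same step). No gaps.
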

\begin{proof}
    By the AM-HM inequality,
    \begin{align}
    \label{eq:am-hm}
        \lt(\sum_{i =1}^t\sigma_i \rt) \lt(\sum_{i=1}^t \fr{\sigma_i}\rt) \geq t^2. 
    \end{align}
    Since $A \in [-1,1]^{k \times k}$, 
    \[ \lt(\sum_i \sigma_i^2\rt)^{1/2} \leq \|A\|_F \leq k.\]
    By Cauchy-Schwartz,
    \[  \lt(\sum_{i=1}^t \sigma_i\rt) \leq \sqrt{t}(\sigma_i^2)^{1/2} \leq \sqrt{t}k\]
    Plugging this into Equation \eqref{eq:am-hm} gives the desired bound.
\end{proof}

Given a matrix $A \in [-1,1]^{k \times k}$, and a function $f:\X \times \Y \to [-1,1]$, we say $f$ realizes the matrix $A$ if there exist $\{x_i \in \X\}_{i \in [k]}$ and $\{y_j \in \Y\}_{j \in [k]}$ so that 
\[ A = [f(x_i, y_j)]_{i \in [k], j \in [k]}.\] 
In the case where the domain is discrete, this is the same as saying that the communication matrix $M_f$ contains $A$ as a submatrix. 

We show a generic lower bound on \jee for functions whose communication matrices have submatrices with nice spectral properties as quantified by $\lambda_t(\;)$ from \cref{def:lambdat}: 

\begin{theorem}
\label{thm:spectral}
If $f:\X \times \Y \rightarrow [-1,1]$ realizes a matrix $A \in [-1,1]^{k \times k}$, then for all $t \leq \rank(A)$,  
\[  \Rhat_{\epsilon}(f) \geq \Omega\left(\min\left(t, \frac{t^2}{\eps^2 k^2\lambda_t(A)^2}\right) \right) \]
\end{theorem}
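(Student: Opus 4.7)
The plan is to reduce from the Gap-Hamming problem, using the SVD of $A$ to convert a Gap-Hamming instance into a pair of (signed, unbounded) ``distributions'' whose bilinear form with $A$ equals the desired inner product.

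Let $A = \sum_{i=1}^{\rank(A)} \sigma_i u_i v_i^{\sfT}$ be the SVD. Given Gap-Hamming inputs $a, b \in \{-1,1\}^t$, Alice constructs
$$\tp = \sum_{i=1}^t \frac{a_i}{\sqrt{\sigma_i}} u_i \in \RR^k, \qquad \tq = \sum_{i=1}^t \frac{b_i}{\sqrt{\sigma_i}} v_i \in \RR^k.$$
Because $u_i^{\sfT} A v_j = \sigma_i \delta_{ij}$, a direct computation gives $\tp^{\sfT} A \tq = \sum_i a_i b_i = \iprod{a}{b}$. By orthonormality of the singular vectors,
$$\|\tp\|_2^2 = \sum_{i=1}^t \frac{a_i^2}{\sigma_i} = \lambda_t(A),$$
and likewise for $\tq$, so Cauchy--Schwarz (since $\tp,\tq \in \RR^k$) yields $\|\tp\|_1, \|\tq\|_1 \leq \sqrt{k\, \lambda_t(A)}$.

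Now suppose a protocol $\Pi$ achieves $\Rhat_{\eps}(f) = c$ on distributions supported on the $k$ rows/columns where $A$ is realized. By \cref{lem:unbounded-pq} applied with $B = \sqrt{k\lambda_t(A)}$, we obtain a protocol that, on inputs $\tp, \tq$ as above, approximates $\tp^{\sfT} A \tq = \iprod{a}{b}$ to within additive error
$$\eps\bigl(\|\tp\|_1\|\tq\|_1 + \|\tp\|_1 + \|\tq\|_1\bigr) = O\bigl(\eps \cdot k\, \lambda_t(A)\bigr),$$
using $c + O(\log(k\lambda_t(A)/\eps))$ bits. Setting $\eps' := C\eps k \lambda_t(A)/t$ for a suitable constant $C$, the approximation error is at most $\eps' t / 3$, which is enough to decide $\gh_{t, \eps'}$. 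By \cref{thm:cr},
$$c \;\geq\; \Omega\!\left(\min\!\left(\frac{1}{(\eps')^2},\, t\right)\right) \;=\; \Omega\!\left(\min\!\left(\frac{t^2}{\eps^2 k^2 \lambda_t(A)^2},\, t\right)\right),$$
which is the claimed bound. The regime $\eps k \lambda_t(A) \geq t$ makes the bound trivial, so we may assume $\eps' \leq 1$ and the Gap-Hamming lower bound is applicable.

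The only genuinely nontrivial ingredient is the choice of $\tp, \tq$ using the $1/\sqrt{\sigma_i}$ weighting, which simultaneously (i) diagonalizes the bilinear form against $A$ into the identity, and (ii) keeps the $\ell_2$ norms proportional to $\sqrt{\lambda_t(A)}$; the loss in converting $\ell_2$ to $\ell_1$ via Cauchy--Schwarz is what introduces the factor of $k$. Everything else is bookkeeping: verifying the reduction goes through \cref{lem:unbounded-pq} (which requires only bounded $\ell_1$ norms, not probability distributions), balancing the error parameter to match Gap-Hamming's threshold, and invoking the randomized lower bound.
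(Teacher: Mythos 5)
Your proposal is correct and follows essentially the same route as the paper's proof: the $1/\sqrt{\sigma_i}$-weighted SVD construction of $\tp,\tq$, the $\ell_2$-to-$\ell_1$ conversion via Cauchy--Schwarz giving $\|\tp\|_1,\|\tq\|_1 \le \sqrt{k\lambda_t(A)}$, the appeal to \cref{lem:unbounded-pq} for signed vectors, and the reduction to Gap-Hamming via \cref{thm:cr}. The paper additionally invokes \cref{lem:lambda-lb} to note $k\lambda_t(A)\ge 1$ when simplifying the error term, but this is a cosmetic difference.
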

\begin{proof}
Let $\{x_i \in \X\}_{i \in [k]}$ and $\{y_j \in \Y\}_{j \in [k]}$ be such that $A_{ij} = f(x_i, y_j)$. Let $r = rank(A)$, and let the singular-value-decomposition of $A$ be $A = U\Sigma V^t$ where we denote the columns of $U$ and $V$ by $\{u_i \in \RR^k\}_{i \in [r]}$ and $\{v_j \in \RR^k\}_{j \in [k]}$, and the singular values by $\{\sigma_i\}_{i \in [r]}$. 
    
    We will prove the bound using a reduction from Gap-Hamming on $\pmo^t$. 
    Given an instance $(a, b)$ of Gap-Hamming in $\pmo^t$, we define vectors $\tilde{p} \in\RR^k$ and $\tilde{q} \in \RR^k$ supported on the $x_i$'s and $y_j$'s respectively by
    \begin{align*} 
        \tilde{p} = \sum_{i \in [t]} \frac{a_iu_i}{\sqrt{\sigma_i}},\\
        \tilde{q} = \sum_{j \in [t]} \frac{b_jv_j}{\sqrt{\sigma_j}}.
    \end{align*}

\eat{
First observe that 
\begin{align*}
    \tilde{p}^t A \tilde{q} &= \sum_{i, j \in [k]} \frac{a_i b_j}{\sqrt{\sigma_i \sigma_j}} u_i^T A v_j.
\end{align*}
Now, by the properties of singular vectors, $u_i^T A v_j = \sigma_i$ if $i = j$ and $0$ otherwise. Thus, 

$$ \tilde{p}^t A \tilde{q} = \sum_{i \in [k]} a_i b_i. $$}

    If we define the vectors $c, d \in \RR^k$ by $c_i = a_i/\sqrt{\sigma_i}$ and $d_i = b_i/\sqrt{\sigma_i}$, then by orthogonality of $u_i$s and $v_j$s, we have $\tilde{p}^tU = c^t, V^t\tilde{q} = d$. Hence
    \begin{align*}
        \tilde{p}^tA\tilde{q} = \tilde{p}^tU\Sigma V^t\tilde{q} = c^t\Sigma d = \sum_{i \in [t]} a_ib_i.
    \end{align*}

We next bound the $\ell_1$ norms of $\tilde{p}, \tilde{q}$ so as to use \cref{lem:unbounded-pq}. We can bound $\|\tilde{p}\|_1$ by observing that
    \begin{align*}
        \|\tilde{p}_2\|_2 = \left(\sum_{i \in [t]} \frac{a_i^2}{\sigma_i}\right)^{1/2} = \sqrt{\lambda_{t}(A)}\\
        \|\tilde{p}\|_1 \leq \sqrt{k}\|\tilde{p}\|_2 \leq \sqrt{k \lambda_{t}(A)} 
    \end{align*}
    Similarly we get $
        \|\tilde{q}\|_1  \leq \sqrt{k \lambda_{t}(A)}$.
     By Lemma \ref{lem:unbounded-pq}, a protocol for $\EE_{f, \eps}$ can estimate $\iprod{a}{b}$ to within error 
     \[ \eps(k \lambda_{t}(A) + 2\sqrt{k \lambda_t(A)}) \leq 3\eps k \lambda_t(A)\] 
     where we use $k\lambda_t(A) \geq 1$ by Equation \eqref{eq:lambda-lb}.
     By theorem \ref{thm:cr}, this requires $\Omega(\min(t, t^2/(\eps^2 k^2\lambda_t(A)^2))$ bits of communication.
\end{proof}

An easy to use corollary is the following, which is obtained by taking $\eps$ sufficiently small in the statement above. 

\begin{corollary}
    \label{cor:spectral}
    If $f$ realizes the matrix $A \in [-1,1]^{k \times k}$, then for all $t \leq \rank(A)$,  $\EE(f, \eps)$ requires $\Omega(t)$ bits of communication for
\[ \eps \leq \frac{\sqrt{t}}{k}\fr{\lambda_t(A)}.\]
\end{corollary}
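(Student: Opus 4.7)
The plan is to derive this as an immediate consequence of \cref{thm:spectral} by choosing $\eps$ small enough to make the second argument of the minimum dominate the first. The spectral theorem gives
\[ \Rhat_\eps(f) \geq \Omega\left(\min\left(t, \frac{t^2}{\eps^2 k^2 \lambda_t(A)^2}\right)\right), \]
and the corollary simply identifies the threshold on $\eps$ below which the $t$ branch becomes the binding one.

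The calculation is a one-liner: the inequality $\frac{t^2}{\eps^2 k^2 \lambda_t(A)^2} \geq t$ is equivalent to $\eps^2 \leq \frac{t}{k^2 \lambda_t(A)^2}$, i.e., $\eps \leq \frac{\sqrt{t}}{k \lambda_t(A)}$, which is precisely the hypothesis. Substituting back, the minimum equals $t$ up to constants, yielding $\Rhat_\eps(f) = \Omega(t)$.

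There is essentially no obstacle here; the only thing to double-check is that the range $t \leq \rank(A)$ is compatible with the hypotheses of \cref{thm:spectral} (it is, since the theorem is stated under the same constraint) and that the chosen $\eps$ is feasible in the sense $\eps > 0$, which follows from $\lambda_t(A) < \infty$ whenever $t \leq \rank(A)$. So the proof is a single sentence invoking \cref{thm:spectral} with the stated choice of $\eps$.
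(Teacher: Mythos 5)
Your proposal is correct and matches the paper's intended derivation exactly: the paper introduces \cref{cor:spectral} with the remark that it is ``obtained by taking $\eps$ sufficiently small in the statement above,'' and your one-line computation showing that $\eps \le \sqrt{t}/(k\lambda_t(A))$ makes the $t$ branch of the minimum in \cref{thm:spectral} the binding one is precisely that derivation.
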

To get a strong lower bound, we need to find matrices $A$ realized by $f$ where $\lambda_t(A)$ is as small as possible.  As a sanity check, note that Theorem \ref{thm:mainupper} guarantees that $t$ bits of communication suffice to achieve error $O(\Row(f)/t)$. So the bound on $\eps$ given by Theorem \ref{thm:spectral} cannot be asympototically stronger  than $1/t$ (say $1/t^{0.9}$). Indeed, using the lower bound on $\lambda_t(A)$ from Equation \ref{eq:lambda-lb},
\[ \frac{\sqrt{t}}{k}\fr{\lambda_t(A)} \leq \frac{\sqrt{t}}{k}\frac{k}{t^{3/2}} = \fr{t}. \]
So the strongest lower bound this method can prove is $\Omega(1/\eps)$ for error $\eps$.

\subsection{Lower bounds for specific functions} \label{sec:specific-lower}
We will use the spectral technique to show tight lower bounds as a function of $1/\eps$ for the \jee problems considered so far. 

\begin{lemma} \label{lem:ip-lb}
    We have $\Rhat_{\eps}(\ip) \geq \Omega(\min(1/\eps, 2^n))$.
\end{lemma}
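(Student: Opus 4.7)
The plan is to apply the spectral lower bound of \cref{thm:spectral} by exhibiting, for a carefully chosen parameter $k'$, a $k' \times k'$ submatrix of $A_\ip$ that is a scaled Hadamard matrix, and then optimizing the choice of $k'$ against $\eps$. The key intuition is that the calculation after \cref{cor:spectral} shows $\Omega(1/\eps)$ is the strongest bound achievable via this technique in general, and that this bound is realized exactly when all singular values are roughly equal, which is precisely the Hadamard case.

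First, I would observe that for any $n' \le n$, the function $\ip$ realizes (in the sense preceding \cref{thm:spectral}) the $k' \times k'$ Hadamard matrix $H_{k'}$ with $k' = 2^{n'}$: index rows and columns by $\{0,1\}^{n'} \times \{0\}^{n-n'} \subseteq \{0,1\}^n$, so that $\ip(x,y) = (-1)^{\langle x', y'\rangle}$ where $x', y' \in \{0,1\}^{n'}$ are the first $n'$ coordinates. As noted in the discussion before \cref{lem:lambda-lb}, every singular value of $H_{k'}$ equals $\sqrt{k'}$, and hence $\lambda_t(H_{k'}) = t/\sqrt{k'}$ for every $t \le k'$.

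Next I would plug these parameters into \cref{thm:spectral}. Choosing $t = k'$ (its maximum allowed value), the lower bound becomes
\[
\Rhat_\eps(\ip) \;\ge\; \Omega\!\left(\min\!\left(k',\; \frac{(k')^2}{\eps^2 (k')^2 \lambda_{k'}(H_{k'})^2}\right)\right) \;=\; \Omega\!\left(\min\!\left(k',\; \frac{1}{\eps^2 k'}\right)\right),
\]
since $\lambda_{k'}(H_{k'})^2 = k'$. The two terms inside the min are balanced at $k' \approx 1/\eps$, yielding $\Omega(1/\eps)$.

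Finally I would handle the regime $\eps \ge 2^{-n}$ by taking $n' = \lceil \log_2(1/\eps)\rceil$ (so $k' = \Theta(1/\eps) \le 2^n$), giving a bound of $\Omega(1/\eps)$; and the regime $\eps < 2^{-n}$ by taking $n' = n$, so $k' = 2^n$ and $1/(\eps^2 k') > 2^n$, giving a bound of $\Omega(2^n)$. Combining the two cases yields $\Rhat_\eps(\ip) \ge \Omega(\min(1/\eps, 2^n))$. The only subtlety to verify is that rounding $k'$ down to the nearest power of two loses only a constant factor, which is immediate; there are no genuine obstacles since the Hadamard structure makes every quantity in \cref{thm:spectral} explicit.
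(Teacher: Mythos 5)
Your proposal is correct and follows essentially the same route as the paper: realize a $k'\times k'$ Hadamard submatrix of $\ip$, use $\lambda_{k'}(H_{k'})=\sqrt{k'}$ in the spectral bound of \cref{thm:spectral} with $t=k'$, and choose $k'\approx\min(1/\eps,2^n)$. The only cosmetic difference is that you invoke \cref{thm:spectral} directly rather than the packaged \cref{cor:spectral}, which is the same calculation.
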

\begin{proof}
    The inner product function on $n$ bits realizes a Hadamard submatrix of size $2^\ell \times 2^\ell$ for every $\ell \leq n$.  We have shown that $\lambda_k(H_k) = \sqrt{k}$. This gives a lower bound of $\Omega(k)$ for $\eps \leq 1/k$ for every $k$ that is a power of $2$ bounded by $2^n$. This gives a $\Omega(1/\eps)$ lower bound for every $\eps \geq 2^{-n}$. 
\end{proof}

We show a tight lower bound for the Equality function on $n$ bits.
\begin{lemma}
\label{lem:eq-lb}
    We have $\Rhat_\eps(\EQ_n) \ge \Omega(\min(1/\eps^{2/3}, 2^n))$.
\end{lemma}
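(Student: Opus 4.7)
The plan is to apply the spectral lower bound from \cref{thm:spectral} to the $k\times k$ identity submatrix of the communication matrix of $\EQ_n$. For any $k\leq 2^n$, the equality function realizes the identity matrix $I_k$: simply pick any $k$ distinct elements $x_1,\ldots,x_k\in\zo^n$ and set $y_i=x_i$. All singular values of $I_k$ are equal to $1$, so by \cref{def:lambdat} we have $\lambda_t(I_k)=t$ for every $t\leq k$.

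Next I would instantiate \cref{thm:spectral} with $A=I_k$ and $t=k$. Plugging in $\lambda_k(I_k)=k$ yields
\[
\Rhat_\eps(\EQ_n) \;\geq\; \Omega\!\left(\min\!\left(k,\ \frac{k^2}{\eps^2 k^2 \cdot k^2}\right)\right) \;=\; \Omega\!\left(\min\!\left(k,\ \frac{1}{\eps^2 k^2}\right)\right).
\]
This bound holds for every $k\leq 2^n$, so I would optimize over $k$. The two terms inside the minimum are balanced when $k^3 = 1/\eps^2$, i.e., $k=\eps^{-2/3}$, yielding a lower bound of $\Omega(\eps^{-2/3})$ as long as $\eps^{-2/3}\leq 2^n$.

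Finally, to obtain the stated bound $\Omega(\min(\eps^{-2/3},2^n))$, I would handle the regime $\eps^{-2/3}>2^n$ by instead taking $k=2^n$. In that case $\frac{1}{\eps^2 k^2} = \frac{1}{\eps^2 2^{2n}} > 2^n$, so the minimum is $\Omega(2^n)$. Combining the two cases yields $\Rhat_\eps(\EQ_n)\geq \Omega(\min(\eps^{-2/3},2^n))$. There is no substantial obstacle here: the only thing to verify is that the optimization over $k$ is done correctly and that $I_k$ is always realized as a submatrix of $A_{\EQ_n}$, both of which are immediate.
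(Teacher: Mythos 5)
Your proof is correct and follows essentially the same approach as the paper: realize $I_k$ inside $A_{\EQ_n}$, take $t=k$ with $\lambda_k(I_k)=k$, plug into \cref{thm:spectral}, and optimize over $k$ (with the cap $k\leq 2^n$) to balance the two terms at $k=\eps^{-2/3}$. The paper phrases the optimization via \cref{cor:spectral} rather than the raw form of \cref{thm:spectral}, but the computation is identical.
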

\begin{proof}
    Since the communication matrix for identity is $I_{2^n}$, it contains a $k \times k$ submatrix for $k \leq 2^n$. Setting $t = k$, since $\lambda_k(I_k) = k$, Theorem \ref{thm:spectral} implies that error $\eps \leq 1/k^{3/2}$ requires $\Omega(k)$ bits of communication. Parametrizing by $\eps$, this give a $1/\eps^{2/3}$ lower bound for all $\eps \geq 2^{-3n/2}$. For $\eps < 2^{-3n/2}$, we get a $\Omega(2^n)$ lower bound by considering $\eps = 2^{-3n/2}$. 
\end{proof}

A similar lower bound applies to any function that realizes a $k \times k$ matrix $A$ where $\lambda_k(A) = O(k)$. In particular, we get that the protocol for $\GT_n$ from \cref{thm:gt-strong} is tight upto a logarithmic factor:

\begin{lemma}
\label{lem:gt-lb}
    We have $\Rhat_{\eps}(\GT_n) \ge \Omega(\min(1/\eps^{2/3}, 2^n))$.
\end{lemma}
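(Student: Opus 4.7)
The plan is to instantiate the spectral lower bound (\cref{thm:spectral}) with an appropriate submatrix of the communication matrix of $\GT_n$. Concretely, I will exhibit a $k \times k$ submatrix $A$ realized by $\GT_n$ with $\lambda_k(A) = O(k)$, so that the bound from \cref{thm:spectral} gives $\Rhat_\eps(\GT_n) \ge \Omega(\min(k, 1/(\eps^2 k^2)))$, and then optimize over $k$.

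First, I observe that $\GT_n$ realizes the $k \times k$ lower-triangular all-ones matrix $L_k$ for every $k \le 2^n$: if Alice's and Bob's inputs are restricted to any common set of $k$ distinct integers, sorted in increasing order along the rows and columns, then $\GT_n(x_i, y_j) = \ind{i \ge j}$, which is exactly $L_k$. Hence by \cref{thm:spectral} applied with $A = L_k$ and $t = k$,
\[
\Rhat_\eps(\GT_n) \ \ge \ \Omega\!\left(\min\!\left(k,\, \frac{1}{\eps^2 \lambda_k(L_k)^2}\right)\right).
\]

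Next, I bound $\lambda_k(L_k) = \sum_{i=1}^k 1/\sigma_i(L_k)$, which (since $L_k$ is invertible) equals the nuclear norm $\|L_k^{-1}\|_*$. A direct calculation shows that $L_k^{-1}$ is the bidiagonal matrix with $1$ on the diagonal and $-1$ on the subdiagonal; in particular it has only $2k-1$ nonzero entries, each of magnitude $1$, so $\|L_k^{-1}\|_F = \sqrt{2k-1}$. Using the standard inequality $\|M\|_* \le \sqrt{\rank(M)}\,\|M\|_F$, I conclude
\[
\lambda_k(L_k) \ = \ \|L_k^{-1}\|_* \ \le \ \sqrt{k}\cdot\sqrt{2k-1} \ = \ O(k).
\]

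Plugging this into the spectral bound gives $\Rhat_\eps(\GT_n) \ge \Omega(\min(k,\, 1/(\eps^2 k^2)))$ for every $k \le 2^n$. Balancing the two terms yields $k = \eps^{-2/3}$, producing the lower bound $\Omega(\eps^{-2/3})$ whenever $\eps^{-2/3} \le 2^n$. For smaller $\eps$ (i.e.\ $\eps < 2^{-3n/2}$) I instead take $k = 2^n$, which gives $\Omega(\min(2^n, 1/(\eps^2 2^{2n}))) = \Omega(2^n)$ since $\eps^2 2^{2n} < 2^{-n}$ in that regime. Combining the two cases yields the claimed bound $\Omega(\min(\eps^{-2/3}, 2^n))$.

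The only nonroutine step is verifying $\lambda_k(L_k) = O(k)$, but this follows immediately from the explicit form of $L_k^{-1}$; the rest is a direct invocation of \cref{thm:spectral} followed by optimization over $k$. In particular, this matches the upper bound of \cref{thm:gt-strong} up to logarithmic factors, showing that $1/\eps^{2/3}$ is the correct dependence on the error for $\GT_n$.
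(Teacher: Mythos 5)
Your proof is correct and follows essentially the same route as the paper: instantiate \cref{thm:spectral} with the $k\times k$ triangular all-ones submatrix realized by $\GT_n$, bound $\lambda_k$ by a simple computation on the explicit bidiagonal inverse, and optimize over $k$. The only cosmetic difference is that the paper bounds the spectral norm of $T_k^{-1}$ by $2$ (giving $\lambda_k(T_k)\le 2k$ since all singular values of $T_k$ are $\ge 1/2$), whereas you use $\|M\|_*\le\sqrt{\rank(M)}\,\|M\|_F$ with the Frobenius norm $\sqrt{2k-1}$; both are equally elementary and yield $\lambda_k=O(k)$.
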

\begin{proof}
    $\GT_n$ realizes the upper triangular $k \times k$ matrix $T_k$ for all $k \leq 2^n$. 
    We claim that $\lambda_k(T_k) \leq 2k$. Theorem \ref{thm:spectral} then gives the lower bound by a similar argument to Lemma \ref{lem:eq-lb}. 

    The inverse of $T_k$ is the matrix $k \times k$ matrix $U_k$ where $U_{ii} = 1$ and $U_{i, i+1} = -1$. It follows that $\|U_k\| \leq 2$, hence $\sigma_i(U_k) \geq 1/\|U_k\|_2 \geq 1/2$ for all $i$. So $\lambda_k(\GT_k) \leq 2k$.
\end{proof}

The spectral method also gives tight lower bounds for some real-valued functions. We show a lower bound matching \cref{thm:smooth-upper}:
\begin{theorem} \label{thm:smooth-lower}
There is $f \in \mD^{(k)}$ such that $\Rhat_{\eps}(f) \ge \Omega(\e^{-1/(k+1)})$.
\end{theorem}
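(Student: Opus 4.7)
The plan is to combine the spectral lower bound of \cref{thm:spectral} with an explicit construction of a smooth function in $\mD^{(k)}$ that embeds a suitably rescaled Hadamard matrix on a fine grid of $[0,1]^2$.

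I would set $m$ to be a power of two with $m = \Theta(\eps^{-1/(k+1)})$, fix a $C^\infty$ bump $\phi : \RR \to [0,1]$ supported in $[0,1]$ with $\phi(1/2) = 1$, and define translated bumps $\psi_i(x) = \phi(mx - i)$ for $i = 0, 1, \ldots, m-1$. Taking an $m \times m$ Sylvester--Hadamard matrix $H \in \{\pm 1\}^{m \times m}$, I would set
\[
f(x, y) \;=\; c \sum_{i,j=0}^{m-1} H_{ij}\, \psi_i(x)\, \psi_j(y), \qquad c = \Theta(m^{-k}).
\]
The key feature is that the supports of the $\psi_i$'s are pairwise disjoint intervals of length $1/m$, so at any given $(x,y)$ at most one term of the double sum is nonzero. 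Combined with $\|\psi_i^{(\ell)}\|_\infty = m^\ell \|\phi^{(\ell)}\|_\infty$, this routinely yields $|f| \leq 1$ and $|\partial_x^i \partial_y^{k-i} f| \leq 1$ for all $0 \leq i \leq k$, so $f \in \mD^{(k)}$ (with the implicit constant inside $c$ absorbing the factors of $\|\phi^{(\ell)}\|_\infty$).

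Evaluating $f$ at the grid points $x_i = y_i = (i + 1/2)/m$ gives $f(x_i, y_j) = c\, H_{ij}$, so $f$ realizes the matrix $A = cH \in [-1,1]^{m \times m}$. Since all singular values of $H$ equal $\sqrt{m}$, we have $\sigma_\ell(A) = c \sqrt{m}$ and therefore $\lambda_m(A) = m/(c\sqrt{m}) = \Theta(m^{k+1/2})$. Applying \cref{thm:spectral} with $t = m$ gives
\[
\Rhat_\eps(f) \;\geq\; \Omega\!\left(\min\!\left(m,\ \frac{m^2}{\eps^2\, m^2\, m^{2k+1}}\right)\right) \;=\; \Omega\!\left(\min\!\left(m,\ \frac{1}{\eps^2\, m^{2k+1}}\right)\right),
\]
and choosing $m = \Theta(\eps^{-1/(k+1)})$ balances both terms at $\Theta(\eps^{-1/(k+1)})$, which is the desired bound.

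The main obstacle is making sure the construction genuinely sits inside $\mD^{(k)}$ while realizing a matrix whose spectrum is maximally spread. The disjoint-support property of the $\psi_i$'s is what makes this bookkeeping tight: without it, a mixed partial $\partial_x^i \partial_y^{k-i} f$ would be a double sum that could accumulate an extra factor of $m$, forcing $c \ll m^{-k}$ and breaking the balance between the two branches of the spectral lower bound. One minor technicality is that Sylvester--Hadamard matrices exist only for $m$ a power of $2$, but rounding $\eps^{-1/(k+1)}$ to the nearest such power changes constants only, which is absorbed in the $\Omega(\cdot)$.
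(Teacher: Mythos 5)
Your proposal is correct and takes essentially the same approach as the paper: both tile a rescaled bump function against a maximally-spread $\pm 1$ matrix on an $m\times m$ grid (with $m\approx \eps^{-1/(k+1)}$ and amplitude $\Theta(m^{-k})$), check that the disjoint supports keep the function inside $\mD^{(k)}$, and invoke the spectral bound with $\lambda_m \approx m^{k+1/2}$ to balance the two branches. The only (cosmetic) difference is that the paper uses an explicit polynomial bump and routes the spectral argument through the intermediate Lemma \ref{lem:ip-lb} for $\ip$, whereas you use a generic $C^\infty$ bump and apply \cref{thm:spectral} to the realized Hadamard submatrix directly.
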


We begin by defining the function
\begin{equation*}
g(x, y) = 
\begin{cases}
\f 1C \p{\f12 + x}^{k+1} \p{\f12 - x}^{k+1} \p{\f12 + y}^{k+1} \p{\f12 - y}^{k+1}, & \text{$x, y \in \left[-\f12, \f12\right]$}, \\
0, & \text{else},
\end{cases}
\end{equation*}
where the constant $C$ is chosen such that the $k$-th derivatives of $g$ are bounded in $[-1, 1]$. Note that $g(x, y)$ has $k$-th derivatives (and all lower derivatives) equal to $0$ at the boundary of its support $[-\f12, \f12]^2$. Now, let $\hat g$ be the scaled version of $g$ given by
\begin{equation*}
    \hat g(x, y) = \al^k g(x/\al, y/\al).
\end{equation*}
Note that $\hat g$ is supported on $[-\al/2, \al/2]^2$, and has $k$-th derivatives bounded in $[-1, 1]$.

Now, we will prove \cref{thm:smooth-lower} by reducing it to the lower bound for $\ip$, given by \cref{lem:ip-lb}. Let $\al = \e^{1/(k+1)}$. Let $M_\ip$ be the inner product matrix with size $2^{n'} = 1/\al$. Then, write the following function $f$:
\begin{equation*}
    f(x, y) = \sum_{i, j \in [1/\al]} M_\ip(i, j) \hat g(x - i\al, y - j\al).
\end{equation*}
This is a combination of non-overlapping translations of $\hat g$, so it also has $k$-th derivatives bounded in $[-1, 1]$. Also, for all $i, j$, we have $f(i\al, j\al) = c \al^k M_\ip(i, j)$, for some constant $c > 0$. Now, for any instance of $\EE_{M_\ip}$, we can construct an instance of $\EE_{f, \e}$ by letting Alice and Bob take distributions $p'$ and $q'$ supported on the points $\{i\al\}_{i \in [1/\al]}$ and setting their probabilities to be the same as in the original instance. Then, we have
\begin{equation*}
    \E_{x \sim p', y \sim q'}[f(x, y)] = c\al^k \E_{x \sim p, y \sim q}[\ip(x, y)],
\end{equation*}
so if we can solve $\EE_{f, \e}$, we can solve $\EE_{M_\ip,\, \e'}$ for $e' = \e / (c \al^k)$. Therefore, by \cref{lem:ip-lb}, we have $\EE(f, \e) \ge \Omega(\min(1/\e',2^{n'})) = \Omega(\min(c\al^k/\e, 1/\al)) = \Omega(1/\e^{1/(k+1)})$, as desired.

Recall that $\ab(x,y) = |x - y|$ for $x,y, \in [0,1]$, and that $\EE(\ab, \eps) = \tilde{O}(1/\eps^{2/5})$ by \cref{thm:abs-upper}. We will show a tight lower bound (up to logarithmic terms).
For this we need to following technical claim.

\begin{claim}
\label{claim:gl}
Let $D_k \in [0,1]^{k \times k}$ where $D_k = |i -j|/k$ for $i, j \in [k]$. Then $\lambda_k(D_k) \leq\sqrt{3/2}k^2$.
\end{claim}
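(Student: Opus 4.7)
The plan is to relate $\lambda_k(D_k)$ to the Frobenius norm of an explicit inverse matrix, and to bound the latter by an exact column-by-column computation. Let $M = k D_k$, so $M_{ij} = |i - j|$ is integer-valued. Since singular values scale linearly, $\sigma_i(D_k) = \sigma_i(M)/k$, giving
\[
\lambda_k(D_k) \;=\; k \sum_{i=1}^k \frac{1}{\sigma_i(M)} \;=\; k\,\|M^{-1}\|_*,
\]
where $\|\cdot\|_*$ denotes the nuclear norm. Cauchy--Schwarz applied to the singular values of $M^{-1}$ yields $\|M^{-1}\|_* \le \sqrt{k}\,\|M^{-1}\|_F$, so it suffices to prove the Frobenius-norm bound $\|M^{-1}\|_F^2 \le 3k/2$; this then gives $\lambda_k(D_k) \le k\cdot\sqrt{k}\cdot\sqrt{3k/2} = \sqrt{3/2}\,k^2$.

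To bound $\|M^{-1}\|_F^2 = \sum_j \|M^{-1} e_j\|_2^2$, I will write each column of $M^{-1}$ down explicitly. The key observation is that the discrete second-difference operator acts as a left inverse of $M$ (up to a factor of two) on interior indices: using the fact that $|i-1-j| - 2|i-j| + |i+1-j|$ equals $2$ when $j=i$ and $0$ otherwise, a direct computation shows that for every $v \in \mathbb{R}^k$ and every $2 \le i \le k-1$,
\[
(Mv)_{i-1} - 2(Mv)_i + (Mv)_{i+1} \;=\; 2 v_i.
\]
Setting $v = M^{-1} e_j$ forces $v_i = 0$ at every interior $i \notin \{j-1, j, j+1\}$, so each column of $M^{-1}$ is supported in $\{j-1, j, j+1\} \cup \{1, k\}$.

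For an interior column $2 \le j \le k-1$, one checks directly that $M\cdot\tfrac{1}{2}(e_{j-1} - 2 e_j + e_{j+1}) = e_j$ (rows $1$ and $k$ vanish by the same second-difference identity), so $M^{-1} e_j = \tfrac{1}{2}(e_{j-1} - 2 e_j + e_{j+1})$ and $\|M^{-1} e_j\|_2^2 = 3/2$. For $j = 1$, the second-difference identity still forces all interior entries of $v = M^{-1} e_1$ except $v_2 = 1/2$ to vanish, leaving the two unknowns $v_1, v_k$ determined by rows $1$ and $k$ of $Mv = e_1$; solving this $2\times 2$ system gives
\[
M^{-1} e_1 = -\frac{k-2}{2(k-1)}\,e_1 + \tfrac{1}{2}\,e_2 + \frac{1}{2(k-1)}\,e_k,
\qquad
\|M^{-1} e_1\|_2^2 = \frac{k^2 - 3k + 3}{2(k-1)^2} \le \tfrac{1}{2} \text{ for } k \ge 2,
\]
and the $j = k$ column is symmetric. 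Summing all contributions gives $\|M^{-1}\|_F^2 \le (k-2)\cdot\tfrac{3}{2} + 1 \le \tfrac{3k}{2}$, completing the argument. The main obstacle is cleanly handling the two boundary columns: the second-difference identity only pins down the interior coordinates of $M^{-1} e_j$, so the entries $v_1$ and $v_k$ for $j \in \{1, k\}$ must be recovered by solving a small explicit linear system arising from the top and bottom rows of $M$.
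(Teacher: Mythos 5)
Your proof is correct, and the high-level skeleton coincides with the paper's: reduce to the integer matrix $M = kD_k$, use the Cauchy--Schwarz inequality $\|M^{-1}\|_* \le \sqrt{k}\,\|M^{-1}\|_F$, and then bound $\|M^{-1}\|_F^2 \le 3k/2$ via the explicit form of $M^{-1}$, yielding $\lambda_k(D_k)=k\|M^{-1}\|_*\le \sqrt{3/2}\,k^2$. The genuine difference is how the explicit inverse is obtained. The paper observes that $M$ is the distance matrix of the path $P_k$ and invokes the Graham--Lov\'asz formula (\cite{GL78}): $M^{-1} = -\tfrac12 L(P_k) + \tfrac{vv^\top}{2(k-1)}$ with $v = e_1+e_k$, from which the Frobenius bound follows by counting entries. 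You instead re-derive the inverse from scratch via the discrete second-difference identity $(Mv)_{i-1}-2(Mv)_i+(Mv)_{i+1}=2v_i$ for interior $i$, which pins down the interior support of each column and leaves a small boundary system to solve; your column vectors agree exactly with what the Graham--Lov\'asz formula gives. The upshot is that your argument is self-contained and more elementary, at the cost of a bit more bookkeeping, whereas the paper's is shorter but relies on citing a known structural result about tree distance matrices. One small point you should make explicit for a polished write-up: for $j\in\{1,k\}$ the second-difference constraint plus the two boundary rows only produce a \emph{candidate} $v$; you should either verify $Mv=e_j$ on the remaining rows directly (a short telescoping check), or note that invertibility of $M$ follows from the interior columns already having verified preimages together with the Graham--Pollak determinant formula (or the $2\times 2$ boundary block being nonsingular), so the candidate must be the true column of $M^{-1}$.
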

\begin{proof}
    The matrix $E_k = kD_k$ is the distance matrix for the path $P_k$ on $k$ vertices, whose $i,j$ entry is the shortest path distance between vertex $i$ and $j$. An explicit formula for its inverse is given by \cite[Lemma 1]{GL78}. We only state it in the special case of the path $P_n$. Let $v = e_1 + e_k \in \{0, 1\}^k$, and let $L(P_k)$ denote the Laplacian of the path $P$. Then
    \[ E_k^{-1} = -\fr{2}L(P_k) + \frac{vv^t}{2(k-1)}.\]
    $E_k^{-1}$ has $k$ entries bounded by $1$ and $2k$ entries bounded by $1/2$, so
    \[ \|E_k^{-1}\|_F \leq \sqrt{3k/2}.\]
    Note that $E_k^{-1} = D_k^{-1}/k$, so its singular values are $\mu_i = 1/(k\sigma_i)$ where $\sigma_i$ are the singular values of $D_k$. This implies that
    \begin{align*}
        \sum_{i \in [k]} \fr{k\sigma_i} = \sum_{i \in [k]}\mu_i \leq \sqrt{k} \|E_k^{-1}\|_F \leq \sqrt{3/2}{k}. 
    \end{align*}
    hence we get 
    \[ \lambda_k(D_k) =\sum_{i \in [k]}\fr{\sigma_i} \leq \sqrt{3/2}k^2. \]  
\end{proof}

\begin{lemma}
\label{lem:abs-lb}
    We have $\Rhat_\e(\ab) \ge \Omega(1/\eps^{2/5})$.
\end{lemma}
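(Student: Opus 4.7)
The plan is to derive the lower bound as a direct consequence of the spectral technique (Theorem \ref{thm:spectral}) combined with the already-proved bound on $\lambda_k(D_k)$ in Claim \ref{claim:gl}. The key observation is that $\ab$ realizes the scaled distance matrix $D_k$ on a uniform grid: choosing $x_i = i/k$ and $y_j = j/k$ for $i,j \in [k]$ gives
\[
\ab(x_i, y_j) = |x_i - y_j| = \frac{|i-j|}{k} = (D_k)_{ij}.
\]
So for every $k$, the function $\ab$ realizes $D_k \in [0,1]^{k \times k}$.

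Next, I apply Theorem \ref{thm:spectral} with this matrix and the choice $t = k = \rank(D_k)$. Combined with Claim \ref{claim:gl}, which gives $\lambda_k(D_k) \le \sqrt{3/2}\, k^2$, we obtain
\[
\Rhat_\e(\ab) \;\geq\; \Omega\!\left(\min\!\left(k,\; \frac{k^2}{\e^2\, k^2\, (k^2)^2}\right)\right) \;=\; \Omega\!\left(\min\!\left(k,\; \frac{1}{\e^2 k^4}\right)\right).
\]

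Finally, I balance the two terms inside the minimum by choosing $k$ so that $k = 1/(\e^2 k^4)$, i.e., $k = \Theta(\e^{-2/5})$ (this is a valid choice as long as $\e^{-2/5}$ is at most the grid size, which is unrestricted here since we work in the continuous setting $[0,1]$). With this choice, both terms are $\Theta(\e^{-2/5})$, giving the desired lower bound $\Rhat_\e(\ab) \ge \Omega(\e^{-2/5})$.

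There is essentially no obstacle: the heavy lifting is already done by Claim \ref{claim:gl}, whose inversion-of-the-path-distance-matrix calculation (via the Graham--Lov\'asz formula) yields the crucial quadratic bound on $\lambda_k(D_k)$. The only minor step is verifying that $\ab$ indeed realizes $D_k$ on a uniform grid, which is immediate. The resulting $\Omega(\e^{-2/5})$ bound matches the upper bound of Theorem \ref{thm:abs-upper} up to logarithmic factors, confirming that the debiased estimator is tight for the absolute-difference function.
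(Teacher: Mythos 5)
Your proof is correct and takes essentially the same approach as the paper: realize $D_k$ on a uniform grid, apply \cref{thm:spectral} (or equivalently \cref{cor:spectral}) with $t=k$ and the bound $\lambda_k(D_k) = O(k^2)$ from \cref{claim:gl}, and balance to get $k = \Theta(\eps^{-2/5})$. The paper phrases it via \cref{cor:spectral} (it shows $\eps \le 1/(2k^{5/2})$ forces $\Omega(k)$ communication, then re-parametrizes in $\eps$), but this is just a repackaging of the same calculation you did directly from \cref{thm:spectral}.
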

\begin{proof}
    For all $k \geq 1$, the function $\ab$ realizes the matrix $D_k$.  Applying Claim \ref{claim:gl} in Theorem \ref{thm:spectral} says that error $\eps \leq 1/(2k^{5/2})$ requires $\Omega(k)$ bits of communication, which translates to a lower bound of $\Omega(1/\eps^{5/2})$ as a function of $\eps$. 
\end{proof}

\subsection{Lower bounds for Boolean functions with high rank}

We now come to our main lower bound in this section, which is that no Boolean function of large rank is much easier than $\EQ$ for expectation estimation. This is in contrast to the bounded setting, where natural functions like $\ab$ have much lower complexity (as a function of the error $\varepsilon$. Our lower bound says that a lower bound of $\tilde{\Omega}(1/\eps^{2/3})$ holds for sufficiently large $\eps$ ($\eps \geq \Omega(1/\rank(f)^{3/2})$). 

\begin{theorem} \label{thm:boolean-lb}
    Let $f:\zo^n\times \zo^n \to \zo$ be a (non-constant) Boolean function. Then 
    \[ \Rhat_\e(f) \ge \Omega(\min((1/\eps)/\log(1/\eps))^{2/3}, \rank(f)/\log(\rank(f))).\] 
\end{theorem}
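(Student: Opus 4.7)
The plan is to apply the spectral lower bound of \cref{thm:spectral} to a carefully chosen $k \times k$ submatrix of the communication matrix $A_f$. The core technical step is to exhibit, for a suitable range of $k$, a submatrix of $A_f$ whose spectral quantity $\lambda_k(\cdot)$ is small.

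The key structural claim I would establish is the following. \emph{For every Boolean matrix $M$ of rank $r$ and every integer $k$ with $1 \le k \le c\,r/\log r$ (for some small absolute constant $c$), there is a $k \times k$ submatrix $A$ of $M$ with $\lambda_k(A) = O(k \log k)$.} I would prove this in two stages. First, extract an $r \times r$ full-rank Boolean submatrix $B_0$ of $M$; such a submatrix exists because $\rank(M) = r$. Since $B_0$ has integer entries, $|\det B_0| \ge 1$, so $\sum_i \log \sigma_i(B_0) \ge 0$, and combined with the Frobenius bound $\sigma_1(B_0) \le \|B_0\|_F \le r$, a logarithmic-scale pigeonhole argument shows that $\sigma_t(B_0) \ge 1/\log r$ for all $t \le c\,r/\log r$. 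Second, I would convert this bound on the $k$-th singular value of $B_0$ into the existence of a well-conditioned $k \times k$ \emph{submatrix} of $B_0$. Letting $V_k$ and $U_k$ be the matrices of top-$k$ right and left singular vectors of $B_0$, I would use a maximum-volume row/column selection in $V_k$ and $U_k$ (à la Goreinov--Tyrtyshnikov or Gu--Eisenstat) to find sets $S, T$ of size $k$ such that the resulting $k \times k$ submatrix $A = B_0[S, T]$ satisfies $\sigma_{\min}(A) \geq \sigma_k(B_0)/\polylog(k)$. Consequently $\lambda_k(A) \le k/\sigma_{\min}(A) = O(k \log k)$, as claimed.

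Given the claim, applying \cref{thm:spectral} to the $k \times k$ submatrix $A$ with $t = k$ gives
\[
\Rhat_\e(f) \;\ge\; \Omega\!\left(\min\!\left(k,\; \frac{k^2}{\e^2\,k^2\,\lambda_k(A)^2}\right)\right) \;=\; \Omega\!\left(\min\!\left(k,\; \frac{1}{\e^2 k^2 \log^2 k}\right)\right).
\]
Balancing the two terms requires $k^3 \log^2 k \asymp 1/\e^2$, i.e.\ $k \asymp \e^{-2/3}/\log^{2/3}(1/\e)$. Provided this value of $k$ satisfies $k \le c\,r/\log r$, this yields the lower bound $\tilde\Omega(\e^{-2/3})$. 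Otherwise we must cap $k$ at $\Theta(r/\log r)$, in which case the first term $k$ in the min is the binding one and the lower bound degrades to $\Omega(r/\log r)$. Taking the minimum of the two regimes matches the claimed bound.

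The main obstacle is the structural claim, and within it the second stage. The determinant constraint $|\det B_0| \ge 1$ directly controls only the \emph{product} of the singular values of $B_0$, and although the pigeonhole step controls $\sigma_k(B_0)$, singular values of a square submatrix can in general be much smaller than those of the parent matrix. Handling this interlacing loss -- via volume-maximizing subset selection and the quantitative bounds it yields -- is the main technical ingredient, and it is also the source of the polylogarithmic slack that appears both as $\log k$ inside $\lambda_k$ and ultimately as the $\log(1/\e)$ and $\log r$ factors in the theorem statement.
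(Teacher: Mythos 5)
Your high-level plan (apply \cref{thm:spectral} to a Boolean submatrix with controlled $\lambda$) is the right one, but your route to it has a genuine gap, and you have also missed a simplification that makes the whole subset-selection machinery unnecessary.

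The gap is in the second stage of your structural claim. You want a $k\times k$ submatrix $A$ of $B_0$ with $\sigma_{\min}(A) \geq \sigma_k(B_0)/\mathrm{polylog}(k)$ via max-volume selection. The Goreinov--Tyrtyshnikov / Gu--Eisenstat type guarantees do not give a polylogarithmic loss: the standard bound, when selecting $k$ rows (resp.\ columns) from an $r\times k$ orthonormal matrix $V_k$ (resp.\ $U_k$), is $\sigma_{\min}(V_k[T,:]) \gtrsim 1/\sqrt{k(r-k)+1}$, so for your $k\times k$ submatrix the loss factor is roughly $\Theta(kr)$, not $\mathrm{polylog}(k)$. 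Here $r = \rank(f)$ may be as large as $2^n$ while $k \asymp \eps^{-2/3}$, so this factor is catastrophic and the resulting bound on $\lambda_k(A)$ would be far too weak to recover the claimed $\tilde\Omega(\eps^{-2/3})$. (There is also the secondary issue that $A = B_0[S,T]$ picks up a contribution from the tail singular values of $B_0$, which can further degrade $\sigma_{\min}(A)$.)

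More importantly, the entire two-stage construction (go to a large $r\times r$ full-rank matrix $B_0$, then subselect) is unnecessary. \Cref{thm:spectral} is stated for any $t \le \rank(A)$, and the paper exploits exactly this flexibility. For any $k \leq \rank(f)$, one can take \emph{any} full-rank $k\times k$ Boolean submatrix $A$ realized by $f$ (such a submatrix trivially exists) and then apply the spectral bound with $t = \lfloor k/\log k\rfloor$, \emph{not} $t=k$. The AM--GM and determinant argument you sketch in stage one, applied directly to the $k\times k$ matrix $A$ rather than to $B_0$, already yields $\sigma_t(A) \geq 1/4$ for $t = \lfloor k/\log k\rfloor$ (this is the paper's \cref{lem:boolean-lb}), hence $\lambda_t(A) \leq 4t$. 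Plugging into \cref{thm:spectral} with this $t$ gives the bound, with the $\log$ factors coming from the gap between $t$ and $k$. By insisting on $t=k$ you forced yourself to control $\sigma_{\min}(A)$, the hardest singular value to bound for a Boolean matrix (it can be exponentially small), and that is precisely what creates the need for the unsound subset-selection step.
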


Our key technical lemma is the following:

\begin{lemma}
\label{lem:boolean-lb}
    For $k \geq 4$, let $B \in [-1,1]^k$ be a bounded matrix with $|\mathrm{det}(B)| \geq 1$. For $t= \lfloor k/(\log k)\rfloor$, $\lambda_t(B) \leq 4t$.
\end{lemma}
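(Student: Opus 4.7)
The plan is to reduce the lemma to a single bound on the $t$-th singular value of $B$: writing the singular values in decreasing order $\sigma_1 \geq \cdots \geq \sigma_k$, we have $\lambda_t(B) = \sum_{i=1}^t \sigma_i^{-1} \leq t/\sigma_t$, so it suffices to prove $\sigma_t \geq 1/4$ whenever $t = \lfloor k/\log k\rfloor$ and $k \geq 4$.

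To lower-bound $\sigma_t$, I combine the two hypotheses on $B$. The assumption $|\det B| \geq 1$ gives $\prod_{i=1}^k \sigma_i \geq 1$, while $B \in [-1,1]^{k\times k}$ gives $\sum_{i=1}^k \sigma_i^2 = \|B\|_F^2 \leq k^2$. Suppose for contradiction that $\sigma_t < 1/4$; monotonicity then forces $\sigma_i < 1/4$ for all $i \geq t$, so $\prod_{i=t}^k \sigma_i < (1/4)^{k-t+1}$. AM-GM applied to $\sigma_1^2,\ldots,\sigma_{t-1}^2$ together with $\sum_i \sigma_i^2 \leq k^2$ yields $\prod_{i=1}^{t-1} \sigma_i \leq (k^2/(t-1))^{(t-1)/2}$. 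Multiplying,
\[ 1 \leq \prod_{i=1}^k \sigma_i < \left(\frac{k^2}{t-1}\right)^{(t-1)/2} \cdot \left(\frac{1}{4}\right)^{k-t+1}, \]
which I will show is impossible.

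Taking $\log$ (base $2$), this reduces to showing $(t-1)\log k - \tfrac{t-1}{2}\log(t-1) < 2(k-t+1)$. I verify this with two elementary facts: (i) $(t-1)\log k \leq k$, directly from $t \leq k/\log k$; and (ii) $k \geq 4$ implies $\log k \geq 2$, hence $t \leq k/2$ and $2(k-t+1) \geq k+2$. Dropping the non-positive term $-\tfrac{t-1}{2}\log(t-1)$, the chain $(t-1)\log k \leq k < k+2 \leq 2(k-t+1)$ closes the contradiction, so $\sigma_t \geq 1/4$ and $\lambda_t(B) \leq 4t$.

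The main obstacle is the last step: the target constant $4$ is loose but the margin in the calculation is only a single factor of $2$, so the small-$k$ behaviour needs checking. This is precisely why the hypothesis $k \geq 4$ appears. The rest of the argument---the reduction from $\lambda_t(B)$ to $\sigma_t$, and the combination of AM-GM with the determinant constraint---is standard.
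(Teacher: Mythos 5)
Your proof is correct and follows essentially the same approach as the paper's: assume $\sigma_t < 1/4$ for contradiction, bound the product of the large singular values via AM-GM against $\|B\|_F^2 \le k^2$, bound the product of the small ones by a power of $1/4$, and derive $\prod_i \sigma_i < 1$, contradicting $|\det B| \ge 1$. The only cosmetic differences are that you split the product at index $t-1$ (placing $\sigma_t$ in the small group) whereas the paper splits at $t$, and you finish the arithmetic in log form with the observations $(t-1)\log k \le k$ and $2(k-t+1) \ge k+2$, while the paper uses the nested chain $k/\sqrt{t} \le \sqrt{2k\log k} \le k^2/4 \le 4^{k/t - 1}$; both close the same gap.
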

\begin{proof}
    Let $\sigma_1 \geq \cdots \geq \sigma_k$ be the singular values of $B$. We claim that for $t$ as above, $\sigma_t \geq 1/4$. Assuming this holds true,
    \[\lambda_t(B) =  \sum_{i \in [t]} \fr{\sigma}_i \leq \frac{t}{\sigma_t} \leq 4t.\]

    To prove the claim, assume for contradiction that $\sigma_t < 1/4$. We can bound the product of the first $t$ singular values using the AM-GM inequality applied to the $\sigma_i^2$ as
    \begin{align*}
        \prod_{i \in [t]}\sigma_i \leq \lt(\fr{t}\sum_{i \in [t]}\sigma_i^2\rt)^{t/2} = \lt(\frac{\|B\|^2_F}{t}\rt)^{t/2} \leq \lt(\frac{k}{\sqrt{t}}\rt)^t. 
    \end{align*}
    Since all singular values for $i \geq t +1$ are bounded by $1/4$, their product is bounded as
    \begin{align*}
        \prod_{i=t +1}^k\sigma_i < \fr{4^{k - t}}
    \end{align*}
    But then we have
    \begin{align}
    \label{eq:det-bound}
        \prod_{i \in [k]}\sigma_i < \lt(\frac{k}{\sqrt{t}}\rt)^t\fr{4^{k -t}} = \lt(\frac{k/\sqrt{t}}{4^{k/t -1}}\rt)^t.
    \end{align}
    For $k \geq 4$, we claim that 
    \[ \frac{k}{\sqrt{t}} \leq \sqrt{2k \log(k)} \leq \frac{k^2}{4} \leq  4^{k/t -1}.\]
    The first and second inequality hold for $k \geq 4$, and the last holds since $t \leq k/\log(k)$, so $k/t \geq \log(k)$.
    By Equation \eqref{eq:det-bound} we get $\prod_{i \in [k]}\sigma_i < 1$, which contradicts our assumption about the determinant of $B$.
\end{proof}

\begin{proof}[Proof of Theorem \ref{thm:boolean-lb}]
    We may assume $\rank(f) \geq 4$, since otherwise we only need to show a lower bound of some constant, which holds trivially since $f$ is not constant. 
    For $4 \leq k \leq \rank(f)$, let $A \in \zo^{k \times k}$ be a full rank Boolean matrix realized by $f$. Since $A$ is Boolean, $\det(A) = 1$. By Lemma \ref{lem:boolean-lb}, 
    \[ \lambda_t(A) \leq 4k/\log(k) \ \text{for} \ t = \lfloor k/(\log(k)) \rfloor \geq k/(2\log(k)). \]
    We observe that
    \[\frac{\sqrt{t}}{k \lambda_t(A)} \geq \frac{\sqrt{k/\log(k)}}{k}\frac{\log(k)}{4k}= \frac{\sqrt{\log(k)}}{4k^{3/2}}. \]
    Theorem \ref{thm:spectral} implies a lower bound of $\Omega(t) = \Omega(k/\log(k))$ for error $\eps \leq \sqrt{\log(k)}/(4k^{3/2})$. By choosing 
    \[ k = \min(\rank(f), \Omega(\log(1/\eps)^{1/3}/\eps^{2/3}) \]
    we derive a lower bound of
    \[ \Omega(k/\log(k)) = \Omega(\min((1/\eps)/\log(1/\eps))^{2/3}, \rank(f)/\log(\rank(f)). \]
\end{proof}

\newcommand{\cS}{\mathcal{S}}
\newcommand{\cT}{\mathcal{T}}
\newcommand{\nb}{t}

\section{Lower bound from general discrepancy over product distributions}

\newcommand{\discf}{\mathrm{Disc}_f}
\newcommand{\disct}{\mathrm{Disc}_{\Theta}}
\newcommand{\Disc}{\mathrm{Disc}}
\newcommand{\dinf}{\mathrm{D}_{\infty}}
\newcommand{\Hinf}{\mathrm{H}_{\infty}}
\newcommand{\IP}{\mathsf{IP}}


In this section we prove \cref{th:discintro}. We begin by recalling the definition of discrepancy of a function, a parameter that has been extensively used in  communication complexity. In the following, to be concrete, we will work with functions $f:[q] \times [q] \rightarrow \{1,-1\}$ (i.e., we identify the domains as $[q]$). 

A general product distribution over $[q] \times [q]$ is written as $\Theta_X \times \Theta_Y$ for some marginal distributions $\Theta_X, \Theta_Y$.  However, for notational simplicity, we will use $\Theta \times \Theta$ to denote a product distribution over $[q] \times [q]$.  In fact, \cref{thm:main} and its proof  apply as is to arbitrary product distributions $\Theta_X \times \Theta_Y$, where $X$ and $Y$ don't have the same marginal distribution.

\begin{definition}
\label{def:rect-disc}
    For $f:[q]\times [q] \rightarrow \{1,-1\}$, the discrepancy $\disct(f)$ under a distribution $\Theta$ is the following supremum over all rectangles $R \subset [q] \times [q]$,

    \[ \disct (f) = \sup_{R \subset [q] \times [q]} \Pr_{\Theta \times \Theta}[(x,y) \in R] \cdot \big |\E[f(x,y) | x,y \in R] \big| \]
    
    The discrepancy of the function $\Disc^{\times}(f)$ under product distributions is,

    \[ \Disc^{\times}(f) = \inf_{\text{ product distributions } \Theta} \left(\disct(f) \right) .\]
  
\end{definition}

Fix a function \(f:[q]\times[q]\to\{-1,1\}\) (the proof also holds for any \(f\) with \(|f|\le 1\)).
Recall that in \jee, Alice and Bob receive distributions \(\mu_A,\mu_B\) on
\([q]\), respectively, and their goal is to estimate $\E_{\mu_A, \mu_B}[f]$. 
\eat{
\[
  f(\mu_A,\mu_B)\;=\;\E_{a\sim\mu_A,\;b\sim\mu_B}\big[f(a,b)\big]
\]
to additive error at most \(\epsilon\).}
%

%
%

This section is devoted to proving the following lower bound based on discrepancy.
\begin{theorem}
  \label{thm:main}
    Fix a $f: [q] \times [q] \to \{-1,1\}$ and a product distribution $\Theta \times \Theta$ \footnote{We denote a product distribution by $\Theta \times \Theta$ for notational simplicity, but in general a product distribution $\Theta_X \times \Theta_Y$ may have different marginals} over $[q] \times [q]$.
    Let 
\[ k = \frac{1}{240}\log \left(\frac{1}{\Disc_{\Theta}(f)} \right)\]
and fix any $\epsilon > \left(\frac{1}{\Disc_{\Theta }(f)}\right)^{1/200}$.
  Let $\Pi$ be any randomized protocol that on every pair of input distributions $(\mu_A,\mu_B)$,
  outputs \(\Pi(\mu_A,\mu_B)\) with
  \[
    \Pr\big[\,|f(\mu_A,\mu_B)-\Pi(\mu_A,\mu_B)|>\epsilon\,\big]\;\le\; \epsilon^2,
  \]
  where Alice communicates at most \(m_A \cdot k\) bits and Bob communicates at most \(m_B \cdot k\) bits.  
  Then we have
  \[
    \lceil m_A \rceil \cdot \lceil m_B \rceil \;\ge\; \Omega\!\left(\frac{1}{\epsilon^2}\right).
  \]
\end{theorem}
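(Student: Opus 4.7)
My plan is to reduce \jee for $f$ to a Gap-Hamming-style problem, using the low-discrepancy structure of $f$ under $\Theta\times\Theta$ to encode $\pm 1$-valued vectors as (differences of) probability distributions on $[q]$. The first ingredient is to find, by sampling $k\approx \log(1/\disct(f))$ points from $\Theta$ on each side, a $k\times k$ sub-block of $f$ that is well-spread spectrally, i.e., whose singular values are all of order $\sqrt k$. This follows from standard random-matrix bounds for pseudo-random $\pm 1$ matrices, because the discrepancy assumption implies that the entries of a typical random sub-block are uncorrelated enough under $\Theta\times\Theta$ for matrix concentration to apply. The specific constant $1/240$ in the choice of $k$ is there to absorb the slack in these concentration bounds.

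Given this near-Hadamard block $B$, I encode Gap-Hamming into \jee. Given $x\in\{\pm 1\}^{N_A}$ and $y\in\{\pm 1\}^{N_B}$, I form signed measures $\tilde p_x,\tilde q_y$ supported in the sampled block such that $\tilde p_x^\top A_f\tilde q_y$ matches a rescaling of $\langle x,y\rangle$; the well-conditioned singular values of $B$ let me keep $\|\tilde p_x\|_1,\|\tilde q_y\|_1$ bounded by a controlled polynomial in $k$. I then apply \cref{lem:unbounded-pq} to run the \jee protocol on $\tilde p_x,\tilde q_y$ via four calls on positive/negative decompositions, obtaining an estimator of $\langle x,y\rangle$ with additive error proportional to $\epsilon\cdot\|\tilde p_x\|_1\|\tilde q_y\|_1$. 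This reduces a scaled Gap-Hamming problem (with $N_A\leftrightarrow m_A,\; N_B\leftrightarrow m_B$) to \jee, and applying the Chakrabarti--Regev lower bound (in an appropriately round-sensitive form) yields $\lceil m_A\rceil\cdot\lceil m_B\rceil\ge\Omega(1/\epsilon^2)$ after bookkeeping of polynomial losses. The condition $\epsilon>(1/\disct(f))^{1/200}$ is exactly what is needed to absorb these losses.

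The main obstacle is obtaining the \emph{product} form $m_A\cdot m_B$ rather than the sum form $m_A+m_B$ that Chakrabarti--Regev gives directly. I expect to handle this either by appealing to a round-sensitive information-complexity version of the Gap-Hamming lower bound, separately lower-bounding the contributions of each side's transcript, or by directly analyzing the rectangle partition induced by the protocol: a partition into $M_A\cdot M_B$ rectangles gives a bilinear factorization of the output through at most $M_A+M_B$ effective ``features,'' whereas an $\epsilon$-approximation of a Hadamard-like bilinear form requires approximate bilinear rank $\Omega(1/\epsilon^2)$ by a Kashin-type packing argument. A secondary technical issue is the quantitative extraction of the near-Hadamard block from the discrepancy assumption, and the precise tracking of constants so as to match the theorem's exact parameters.
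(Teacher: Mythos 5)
Your proposal differs fundamentally from the paper's proof, and it contains gaps that I believe are fatal.

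\paragraph{What the paper actually does.} The paper proves \cref{thm:main} via a lifting-style argument, not a Gap-Hamming reduction. It defines a lifted function $F(x,y)=\sum_{i\in[t_X],j\in[t_Y]}f(x_i,y_j)$, picks a hard distribution where Alice's input is the empirical distribution of $t_X$ i.i.d.\ draws from $\Theta$ (and similarly for Bob), conditions on a typical transcript, and then uses the conjunctive block-wise dense decomposition of \cite{GLMWZ16,KMR21} (adapted to max-divergence) to show that the conditional variance of $F$ must remain $\Omega(t_X t_Y)$; the discrepancy hypothesis controls the cross-correlation terms $\E[f(x_i,y_j)f(x_{i'},y_{j'})]$ via \cref{lem:disc-Dinfty,lem:pair-robust,lem:pairpair-from-disc}. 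This directly contradicts the protocol's $O(\varepsilon^2)$ conditional variance and yields the product bound $\lceil m_A\rceil\lceil m_B\rceil=\Omega(1/\varepsilon^2)$. Your plan — extract a near-Hadamard submatrix, encode Gap-Hamming via signed measures, invoke \cref{lem:unbounded-pq} and Chakrabarti–Regev — is essentially the paper's \emph{spectral} lower bound, \cref{thm:spectral}, not \cref{thm:main}.

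\paragraph{Why your approach cannot prove \cref{thm:main}.} First, the claim that low rectangle discrepancy under $\Theta\times\Theta$ implies a random $k\times k$ sub-block is \emph{well-conditioned} (all singular values $\Theta(\sqrt k)$) is false: even a genuinely random $\pm 1$ matrix has condition number $\Theta(k)$, with the smallest singular value of order $1/\sqrt k$ rather than $\sqrt k$; matrix concentration controls the top of the spectrum, not the bottom. Without a lower bound on the smallest singular values, the $\ell_1$ norms of your signed encoding vectors blow up and \cref{lem:unbounded-pq} gives no usable error guarantee. Second, and more structurally, a Gap-Hamming reduction via a $k\times k$ Hadamard-like block yields at most the spectral bound: for such a block $\lambda_k\approx\sqrt k$, so \cref{thm:spectral} gives $\Rhat_\varepsilon=\Omega(1/\varepsilon)$ after optimizing $k$. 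This misses the crucial extra factor of $\log(1/\Disc_\Theta(f))$ that \cref{thm:main} delivers (and which, for $\IP_n$, upgrades $\Omega(1/\varepsilon)$ to $\Omega(n/\varepsilon)$). The paper explicitly remarks that the spectral method "recovers a lower bound of $\Omega(1/\varepsilon)$" for $\IP$, whereas the discrepancy bound gives the tight $\Omega(n/\varepsilon)$; your reduction cannot close that gap because the Gap-Hamming instance you construct lives in dimension $O(k)$, not dimension $k\cdot\log(1/\Disc)$. Finally, as you yourself flag, Chakrabarti–Regev gives a sum-form lower bound $m_A+m_B=\Omega(1/\varepsilon^2)$, not the product form; your two proposed fixes (round-sensitive Gap-Hamming, or approximate bilinear rank) are not developed and neither is known to give the asymmetric trade-off needed here. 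The paper sidesteps this entirely: the product form emerges from the variance decomposition, where the sizes $t_X=10\lceil m_A\rceil$, $t_Y=10\lceil m_B\rceil$ enter multiplicatively as $\Var(F)=\Omega(t_X t_Y)$.
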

We first state a couple of immediate consequences.
\begin{corollary} \label{cor:disclower}
Fix a function $f: [q] \times [q] \to \{-1,1\}$ 
Then for all $\epsilon > (\Disc^{\times}(f))^{1/200}$,
\begin{align*}
    \Rhat_{\epsilon}(f) & \geq \Omega\left(\frac{1}{\epsilon} \cdot \log\left(\frac{1}{\Disc^{\times}(f)}\right) \right),\\
\Rhow_{\epsilon}(f) & \geq \Omega\left(\frac{1}{\epsilon^2} \cdot \log\left(\frac{1}{\Disc^{\times}(f)}\right)   \right).
\end{align*}

\end{corollary}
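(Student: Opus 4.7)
The plan is to derive both bounds directly from Theorem \ref{thm:main} by substituting appropriate parameters; no new technical ideas are required beyond straightforward bookkeeping and an application of AM-GM. First, I would fix a product distribution $\Theta$ on $[q] \times [q]$ with $\Disc_\Theta(f) \le 2\,\Disc^\times(f)$ (such a $\Theta$ exists since $\Disc^\times$ is defined as an infimum), and set $k = \tfrac{1}{240}\log(1/\Disc_\Theta(f)) = \Theta(\log(1/\Disc^\times(f)))$. The hypothesis $\epsilon > (\Disc^\times(f))^{1/200}$ then implies (up to absolute constants) the hypothesis of Theorem \ref{thm:main} for this $\Theta$. Given any randomized protocol that $\epsilon$-estimates $f(\mu_A,\mu_B)$ using $C_A$ bits from Alice and $C_B$ bits from Bob, I would set $m_A := C_A/k$ and $m_B := C_B/k$, and Theorem \ref{thm:main} immediately yields
\[
  (m_A + 1)(m_B + 1) \;\ge\; \lceil m_A \rceil \lceil m_B \rceil \;\ge\; \Omega(1/\epsilon^2).
\]

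For the two-way bound, I would invoke AM-GM: $\tfrac{m_A + m_B + 2}{2} \ge \sqrt{(m_A+1)(m_B+1)} \ge \Omega(1/\epsilon)$, which gives $m_A + m_B \ge \Omega(1/\epsilon)$ for $\epsilon$ small enough. The total communication is $(m_A + m_B)k \ge \Omega\bigl((1/\epsilon)\log(1/\Disc^\times(f))\bigr)$, matching the claimed bound on $\Rhat_\epsilon(f)$. For the one-way bound, Bob sends no bits, so $m_B = 0$ and $\lceil m_B \rceil = 1$; Theorem \ref{thm:main} then forces $\lceil m_A \rceil \ge \Omega(1/\epsilon^2)$, hence $m_A \ge \Omega(1/\epsilon^2)$, and Alice's communication is $m_A k \ge \Omega\bigl((1/\epsilon^2)\log(1/\Disc^\times(f))\bigr)$, as desired.

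The main subtleties are the gap between $\Disc_\Theta(f)$ and $\Disc^\times(f)$, which only affects $k$ by a constant factor, and the conversion from a constant-failure protocol to the $\epsilon^2$-failure regime required by Theorem \ref{thm:main}. The latter is handled by standard amplification at a cost of $O(\log(1/\epsilon))$ repetitions; in the regime $\epsilon > (\Disc^\times(f))^{1/200}$, this factor is dominated by $\log(1/\Disc^\times(f))$ and is absorbed into the hidden constants. Since these are routine, the entire proof reduces to the two short applications of Theorem \ref{thm:main} outlined above.
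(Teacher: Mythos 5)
Your proof takes the same route as the paper: apply \cref{thm:main} with $m_A = C_A/k$, $m_B = C_B/k$, use AM-GM on the product lower bound $\lceil m_A \rceil \lceil m_B \rceil = \Omega(1/\epsilon^2)$ to get the two-way bound, and observe that Bob's message is negligible in the one-way case so $\lceil m_B \rceil = 1$. The core argument matches the paper's line for line (the paper writes $m_A+m_B \geq \lceil m_A\rceil+\lceil m_B\rceil -2 \geq 2\sqrt{\lceil m_A\rceil\lceil m_B\rceil}-2$, which is the same AM-GM step).

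The one place you add something the paper does not is the paragraph about amplification, and that paragraph contains an error. You are right that the hypothesis of \cref{thm:main} asks for failure probability $\leq \epsilon^2$, whereas the $\Rhat_\epsilon$ measure is taken (by the paper's stated convention) at some constant $\delta < 1/2$, so some reconciliation is needed. But your claim that the $O(\log(1/\epsilon))$ repetition cost is ``absorbed into the hidden constants'' because $\log(1/\epsilon)$ is dominated by $\log(1/\Disc^\times(f))$ is not correct: $\log(1/\epsilon)$ is not a constant, and in the allowed range $\epsilon > (\Disc^\times(f))^{1/200}$ it can be as large as $\Theta\bigl(\log(1/\Disc^\times(f))\bigr)$. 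If you amplify a constant-error protocol of cost $C$ to $\epsilon^2$-error at cost $C \cdot O(\log(1/\epsilon))$ and then invoke \cref{thm:main}, you obtain $C = \Omega\bigl(\log(1/\Disc^\times(f))/(\epsilon\log(1/\epsilon))\bigr)$, which in the worst case degrades to $\Omega(1/\epsilon)$ and loses the discrepancy factor entirely—exactly the factor the corollary is designed to capture. The paper's own proof of the corollary silently skips this point (it applies \cref{thm:main} directly, evidently treating the failure probability of the underlying protocol as already compatible with the theorem's hypothesis), so the mismatch is arguably an imprecision in the paper rather than a unique flaw in your argument—but the resolution you propose does not actually close the gap, and you should not present it as routine.
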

\begin{proof}
The total number of bits communicated by the protocol is at least $(m_A + m_B) \cdot k \geq (\lceil m_A \rceil + \lceil m_B \rceil -2) \cdot k$ which is at least $ \Omega(\frac{1}{\epsilon}) \cdot k$ since $\lceil m_A \rceil \lceil m_B \rceil \geq \Omega(1/\epsilon^2)$.  This implies the first inquality.

For the second inequality, observe that in a one-way protocol, Alice sends a message and Bob outputs the answer (or vice versa).  Bob only communicates $\log (1/\epsilon)$ bits to announce the answer implying that $\lceil m_B \rceil = 1$.  The inequality follows from \cref{thm:main}
\end{proof}
\begin{corollary}\label{cor:ipcor}
For all $\epsilon > 2^{-\Theta(\sqrt{n})}$,
\begin{align*}
    \Rhat_{\epsilon}(\IP_n) & \geq \Omega\left(\frac{n}{\epsilon} \right), \\
\Rhow_{\epsilon}(\IP_n) & \geq \Omega\left(\frac{n}{\epsilon^2} \right).\\
\end{align*}
    
\end{corollary}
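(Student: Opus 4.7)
The plan is to deduce \cref{cor:ipcor} directly from \cref{cor:disclower} by exhibiting a product distribution under which $\IP_n$ has exponentially small discrepancy. The only ingredient beyond the already-proved \cref{cor:disclower} is the classical Lindsey/spectral discrepancy bound for the inner product function.

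Taking $\Theta$ to be the uniform distribution on $\{0,1\}^n$, I would prove that $\Disc_{\Theta}(\IP_n) \le 2^{-n/2}$. The $\pm 1$-valued communication matrix of $\IP_n$ is, up to sign, the $2^n \times 2^n$ Hadamard matrix $H_n$, whose nonzero singular values all equal $2^{n/2}$. Hence for every rectangle $S \times T \subseteq \{0,1\}^n \times \{0,1\}^n$, a one-line Cauchy--Schwarz argument gives
\[
  \Bigl|\sum_{(x,y)\in S\times T} \IP_n(x,y)\Bigr| \;=\; |\mathbf{1}_S^\top H_n \mathbf{1}_T| \;\le\; \|H_n\|\cdot \|\mathbf{1}_S\|_2 \|\mathbf{1}_T\|_2 \;\le\; 2^{n/2}\sqrt{|S||T|} \;\le\; 2^{3n/2}.
\]
Dividing by $2^{2n}$ rewrites the left-hand side as exactly the probabilistic quantity appearing in \cref{def:rect-disc}, so $\Disc^{\times}(\IP_n) \le \Disc_{\Theta}(\IP_n) \le 2^{-n/2}$.

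Plugging this bound into \cref{cor:disclower} gives $\log(1/\Disc^{\times}(\IP_n)) \ge n/2 = \Omega(n)$, which directly yields $\Rhat_{\epsilon}(\IP_n) \ge \Omega(n/\epsilon)$ and $\Rhow_{\epsilon}(\IP_n) \ge \Omega(n/\epsilon^2)$, as required. The admissibility condition $\epsilon > (\Disc^{\times}(\IP_n))^{1/200} \ge 2^{-n/400}$ that \cref{cor:disclower} demands is comfortably implied by the stated hypothesis $\epsilon > 2^{-\Theta(\sqrt n)}$ (for small $n$ the conclusion is trivial, absorbed into the hidden constants).

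There is really no obstacle in this proof: the Lindsey/Hadamard discrepancy bound is folklore, and the remainder is a substitution into an already-proved corollary. The only mildly subtle bookkeeping is the range of $\epsilon$, since the exponent $1/200$ in \cref{cor:disclower} restricts how far down in $\epsilon$ one can push the reduction; but the gap between the true threshold $2^{-\Omega(n)}$ and the more conservative $2^{-\Theta(\sqrt n)}$ stated in the corollary is so wide that this restriction imposes no meaningful constraint.
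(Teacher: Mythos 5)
Your proposal is correct and follows the same route as the paper: instantiate \cref{cor:disclower} with the uniform distribution $\Theta$ and the bound $\mathrm{Disc}_{\Theta}(\IP_n) \le 2^{-n/2}$. The paper simply cites the discrepancy bound as known, whereas you supply the one-line Hadamard/Lindsey argument for it and carefully verify the $\epsilon$-range condition; both are routine and the substance of the argument is identical.
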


\begin{proof}
Both inequalities follow by using $\disct(\IP_n) \leq 2^{-n/2}$ under the uniform distribution over $\{0,1\}^n \times \{0,1\}^n$ along with \cref{cor:disclower}
\end{proof}

\subsection{Proof overview}
\newcommand{\FL}{f^{t\times t}}
We next sketch the proof of \cref{thm:main} for the special case where $m_A = \Theta(m_B)$; this is enough to get the lower bound on $\Rhat_\eps(f)$. For $f:[q] \times [q] \to \{-1,1\}$, define a new \emph{lifted function} $\FL:[q]^t \times [q]^t \rightarrow [-1,1]$ as follows:
$$\FL(x,y) = \frac{1}{t^2} \sum_{i,j \in [t]} f(x_i, y_j).$$

We show a \emph{lifting theorem}: computing $\FL$ within error $\Omega(1/t)$, needs $\Omega(t \log(1/\Disc^\times(f)))$ bits of communication:

\begin{claim}[Informal; special case of \cref{thm:main}]\label{clm:discinformal}
There exists a constant $c > 0$ such that for any $f : [q] \times [q] \rightarrow \{1,-1\}$, and $t < q^{c}$, computing $\FL$ to error $c /t$ requires $c t \cdot \log(1/\Disc^\times (f))$ bits of communication.
\end{claim}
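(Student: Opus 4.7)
The plan is to prove the claim by reducing the lifted task to a distributed estimation problem, and then lower-bounding the lifted task by a direct-sum argument based on discrepancy. First, observe that if Alice sets her input distribution to $\mu_A = \Unif(x_1, \ldots, x_t)$ and Bob sets $\mu_B = \Unif(y_1, \ldots, y_t)$, then $\E_{\mu_A,\mu_B}[f] = \FL(x,y)$ exactly. So any protocol for $\EE_{f, c/t}$ immediately solves the $\FL$ problem to error $c/t$, and conversely, a communication lower bound for computing $\FL$ to error $c/t$ implies the distributed estimation lower bound (with $\eps \asymp 1/t$).

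For the $\FL$ lower bound, I would use a corruption/discrepancy argument under a hard input distribution. Let $\Theta \times \Theta$ be the product distribution on $[q]\times[q]$ achieving $\Disc^{\times}(f) = D$, and draw $x \sim \Theta^t$ and $y \sim \Theta^t$ independently to form the input distribution $\mathcal{I}$ on $[q]^t \times [q]^t$. Under $\mathcal{I}$ we have $\E[\FL] = \mu := \E_\Theta[f]$ and $\Var(\FL) = \Theta(1/t)$, so the desired error $c/t$ is much tighter than the ambient fluctuation $\Theta(1/\sqrt{t})$: the estimator must extract nontrivial information from the input rather than just output $\mu$. A deterministic protocol partitions $[q]^t \times [q]^t$ into combinatorial rectangles $R = A \times B$, each labeled by some output $v_R \in [-1,1]$. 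The key technical step is to show that on any rectangle $R$ whose $\mathcal{I}$-mass is at least $D^{\Omega(t)}$, the conditional distribution of $\FL(x,y)$ remains spread out with standard deviation $\Omega(1/\sqrt{t})$, so the protocol's single output $v_R$ must err by more than $c/t$ on a constant fraction of $R$. Consequently, to achieve error $c/t$ with failure probability $\leq c^2/t^2$, the protocol must partition $\mathcal{I}$ into at least $D^{-\Omega(t)}$ rectangles, forcing at least $\Omega(t \log(1/D))$ total bits of communication. The asymmetric tradeoff $\lceil m_A \rceil \cdot \lceil m_B \rceil \geq \Omega(1/\eps^2)$ in the full \cref{thm:main} then follows by separately tracking how many distinct messages Alice and Bob send: each of Alice's $x_i$'s independently contributes $\Omega(\log(1/D))$ bits that need to be reflected in the set of Alice-messages, and similarly for Bob.

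The main obstacle will be establishing the rectangle-discrepancy amplification: it is not immediate that the discrepancy of $f$ under $\Theta$ upgrades to a $D^{\Omega(t)}$ bound for the lifted function $\FL$ under $\Theta^t \times \Theta^t$, since $\FL$ is an average and hence has much larger level sets than $f$. I plan to overcome this by exploiting the coordinate-wise product structure directly: conditioning on an arbitrary subset of coordinates of $x$ and $y$, the remaining $f(x_i,y_j)$ entries are independent and each has discrepancy $\leq D$ on any sub-rectangle of the remaining coordinates, so a hybrid argument (swapping one coordinate at a time to a ``generic'' $\Theta$-sample) shows that no rectangle of mass $\gg D^{\Omega(t)}$ can noticeably bias the sum. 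This is essentially a real-valued analogue of the direct-product lifting machinery from query-to-communication lifting alluded to in the introduction, and I expect the formalization to proceed via an information-theoretic chain rule on the protocol's transcript rather than an explicit combinatorial argument.
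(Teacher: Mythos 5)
Your reduction from $\FL$ to distributed estimation matches the paper exactly, and the outer shape of your corruption argument --- pass to a deterministic protocol under the hard product distribution $\Theta^t\times\Theta^t$, then show that on any large-mass rectangle the conditional law of $\FL$ retains non-negligible variance --- is also the right skeleton (the paper phrases ``large rectangle'' as bounded max-divergence from $\Theta^t$ after conditioning on a typical transcript). But there are two problems, one quantitative and one structural.

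Quantitatively, $\Var_{\Theta^t\times\Theta^t}(\FL)=\Theta(1/t)$ holds only when the degree-one Efron--Stein components of $f$ under $\Theta$ are non-constant; for $\IP_n$ under the uniform distribution these vanish identically and the unconditional variance is already $\Theta(1/t^2)$. That weaker bound is in fact the correct target: the paper proves $\Var(\FL)=\Omega(1/t^2)$ after conditioning on the transcript, and combines this with the protocol's conditional squared-error bound $O(\eps^2)$ to force $\eps=\Omega(1/t)$. Your ``err on a constant fraction of $R$ by more than $c/t$'' framing implicitly assumes $\Omega(1/\sqrt t)$ fluctuation, which one does not have in general.

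The more serious gap is in your amplification step. You claim that ``conditioning on an arbitrary subset of coordinates of $x$ and $y$, the remaining $f(x_i,y_j)$ entries are independent'' with per-coordinate discrepancy at most $D$. After restricting to a rectangle $A\times B$ of non-negligible $\Theta^t$-mass this is simply false: the conditional law of $(x_i)_{i\notin I}$ given $x\in A$ and given the values on $I$ is an arbitrary correlated distribution on $[q]^{t-|I|}$, not a product, so swapping one coordinate at a time to a generic $\Theta$-sample does nothing to decouple the others, and the hybrid you sketch does not close. This is precisely the obstacle the paper's conjunctive block-wise dense decomposition (\cref{lem:cbd-decomp}, adapted from \cite{GLMWZ16,KMR21}) is built to handle: it further partitions each rectangle so that on every piece a small set of coordinates is \emph{fixed}, while every subset of the remaining coordinates has bounded max-divergence from the corresponding power of $\Theta$ jointly; cross-correlations $\E[f(x_i,y_j)f(x_{i'},y_{j'})]$ are then controlled by the discrepancy tensoring lemmas (\cref{lem:pairpair-from-disc}, \cref{lem:pair-robust}), and a separate conditioning on the partially-fixed blocks handles the $Z_1$ interference. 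Some tool of this kind --- combinatorial rather than a chain-rule over transcripts --- is what your proposal is missing.
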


Solving \jee to within error $\eps$ for $f$ is clearly stronger than computing $\FL$ to error $\eps$: Given a protocol for \jee for $f$, we can solve $\FL$ on inputs $x \in [q]^t, y \in [q]^t$ by taking Alice's distribution to be uniform on the coordiantes of $x$, and Bob's distribution to be uniform on the coordiantes of $y$.

There is a rich body of work on proving lifting theorems in communication complexity; here we will use techniques from \cite{GLMWZ16, KMR21}.

\subsubsection{Lower bound for computing \texorpdfstring{$\IP_n^{t \times t}$}{IP\_n\^\{t x t\}}}
Let us specialize further to the important case where $[q] \equiv \{0,1\}^n$, and $f$ is $\IP_n$. Here, the discrepancy under $\Theta = \Unif(\{0,1\}^n \times \{0,1\}^n)$ is $2^{-n/2}$. The proof for general functions and arbitrary $\Theta$ is similar in spirit but involves several technical challenges. Working with $\IP_n$ lets us illustrate the main ideas more clearly. 

For brevity, let $F = \IP_n^{t \times t}$. 
We first pick a \emph{hard distribution} for $F$. The natural one works: Let $X, Y \in (\{0,1\}^n)^t$ be independent and uniformly at random. By Yao's min-max principle, we can focus on deterministic protocols for computing $F$ under this distribution. 

Now, suppose there is a protocol $\Pi$ that succeeds in computing $F$ to within accuracy $\varepsilon$ with probability at least $1 - \eps^{O(1)}$ and communication at most $C$. Then, by a standard averaging argument, there exists a large rectangle $R \subseteq (\{0,1\}^n)^t \times ((\{0,1\}^n)^t$ on which the variance of $F$ is small. In particular, it follows that there exist distributions $D_A', D_B'$ on $((\{0,1\}^n)^t$ with \emph{min-entropy} \footnote{For any distribution $p$ over a domain $[N]$, min-entropy is $\min_{a \in Supp(p)} - \log_2 p(a)$.} at least $n t - C$ each such that for $X \sim D_A', Y \sim D_B'$,
\begin{equation}\label{eq:varprotocol}
\Var(F(X,Y)) = O(\varepsilon^2).    
\end{equation}

We next give a sketch as to why this is impossible by the following structural statement:
\begin{claim}[Informal]\label{clm:varip}
    Let $X \sim D_A', Y \sim D_B'$ be independently distributed on $(\{0,1\}^n)^t$ with min-entropy at least $nt - C$ for $C < c_0 n t$. Then, $\Var(\IP^{t \times t}(X,Y)) = \Omega(1/t^2)$. 
\end{claim}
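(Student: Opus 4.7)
The plan is to compute the variance of $F = \IP^{t\times t}(X,Y)$ directly via Fourier expansion on $\{0,1\}^n$. Writing $F^2$ as a double sum and using linearity of expectation gives
\[
\E[F^2] \;=\; \frac{1}{t^4}\sum_{(i,j),(i',j')} \E_{X,Y}\big[(-1)^{\langle X_i, Y_j\rangle \oplus \langle X_{i'}, Y_{j'}\rangle}\big].
\]
The $t^2$ diagonal terms $(i,j) = (i',j')$ contribute exactly $1/t^2$. For off-diagonal pairs, each exponent can be rewritten as a single $\F_2$-inner product: $\langle X_i,\, Y_j \oplus Y_{j'}\rangle$ when $i=i'$, $\langle X_i \oplus X_{i'},\, Y_j\rangle$ when $j = j'$, and $\langle (X_i, X_{i'}),\, (Y_j, Y_{j'})\rangle$ on $\{0,1\}^{2n}$ when all four indices are distinct. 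Similarly $\E[F]^2$ factors through the individual biases $a_{ij} = \E[(-1)^{\langle X_i, Y_j\rangle}]$. The goal is to show that the total contribution of the off-diagonal terms, together with $\E[F]^2$, is much smaller than $1/t^2$.

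The workhorse is the standard IP concentration estimate: if $P, Q$ are independent on $\{0,1\}^m$ with min-entropies $H_\infty(P) \geq m - \alpha$ and $H_\infty(Q) \geq m - \beta$, then $|\E[(-1)^{\langle P, Q\rangle}]| \leq 2^{(\alpha+\beta-m)/2}$. This follows from Cauchy--Schwarz plus Parseval, since $\sum_v \hat{P}(v)^2 = 2^m\|P\|_2^2 \leq 2^{m-H_\infty(P)}$. To apply it, we need min-entropy bounds on the relevant marginals of $X$ and $Y$. Summing out $t-k$ coordinates gives that any $k$-coordinate marginal of $X$ (resp.\ $Y$) has min-entropy at least $nk - C$; in particular two-coordinate marginals have min-entropy $\geq 2n - C$, and the XORs $X_i \oplus X_{i'}$ and $Y_j \oplus Y_{j'}$ (obtained by collapsing one coordinate of a two-coordinate marginal) have min-entropy $\geq n - C$. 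Plugging these into the IP lemma bounds each off-diagonal term by $2^{C - n/2}$ (in the single-$n$ cases) or $2^{C - n}$ (in the all-distinct case), and each $|a_{ij}| \leq 2^{C - n/2}$. Summing over $O(t^3)$ terms of the first kind and $O(t^4)$ of the second, we obtain $\Var(F) \geq 1/t^2 - O(2^{C - n/2}/t + 2^{C - n} + 2^{2C-n})$, which is $\Omega(1/t^2)$ whenever $C$ is small compared to $n$.

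The main obstacle is that the hypothesis $C \leq c_0 n t$ permits $C$ to far exceed $n$ once $t$ is large, in which case single-coordinate marginal min-entropies become vacuous and the above argument collapses. To handle this regime we instead work with conditional min-entropies via a chain rule: since $H_\infty(X) \geq nt - C$, the average conditional min-entropy $\tilde H_\infty(X_i \mid X_{<i})$ is at least $n - C/t \geq (1-c_0)n$, and by a Markov-type argument this pointwise bound holds (up to constant-factor slack) for a constant fraction of coordinates $i$; the same applies to $Y$. Conditioning on $X_{-i}, Y_{-j}$ for such a ``good'' index pair and invoking the IP lemma in conditional form controls the relevant expectations. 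The delicate technical step is the bookkeeping: carving out a subset of coordinate pairs $(i,j)$ on which conditional min-entropies are simultaneously high, and arguing that the complementary ``bad'' set of pairs cannot contribute enough negative off-diagonal correlation to cancel the $1/t^2$ diagonal. This is where one adapts the min-entropy cleanup lemmas from \cite{GLMWZ16, KMR21}, and is the crux of the proof.
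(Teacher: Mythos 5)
Your first paragraph is a correct and clean derivation for the regime where $C$ is much smaller than $n$: the diagonal contributes $1/t^2$, each off-diagonal term is bounded via the two-source-extractor property of $\IP$ applied to marginals, and when $C \ll n$ these bounds beat the $O(t^4)$ count of terms. However, as you rightly flag, this is not the claim: the hypothesis only gives $C < c_0\,nt$, so single- and two-coordinate marginal min-entropies can be vacuous once $t$ is large, and the entire first argument collapses. The real content of the claim is precisely this large-$C$ regime, and there your proposal has a genuine gap.

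The fix you sketch---chain rule to get average conditional min-entropy $\gtrsim n - C/t$, Markov to get pointwise bounds "for a constant fraction of coordinates," then "carving out a subset of coordinate pairs $(i,j)$"---mischaracterizes what the cleanup lemmas from \cite{GLMWZ16,KMR21} actually deliver, and the remaining step (showing the bad pairs cannot cancel the diagonal) is asserted rather than argued. Two specific problems. First, the cbd decomposition is a partition of the \emph{input space}, not a single global selection of good coordinates: each piece $\mathcal{S}_\ell$ of the partition comes with its \emph{own} set of fixed coordinates $I_\ell$ of size $O(C/n)$, and density holds on $[t]\setminus I_\ell$ only within that piece. A single Markov-over-$i$ argument does not yield this structure, and it is exactly this per-piece fixing that lets one control all off-diagonal correlations at once rather than pair-by-pair. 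Second, and more importantly, once you restrict to the dense coordinates $\overline{I}\times\overline{J}$, the cross-terms $Z_{1,A},Z_{1,B}$ (sums over $I\times\overline{J}$ and $\overline{I}\times J$) are not zero-mean and can a priori cancel the variance of the dense part $Z_2$; the paper neutralizes them by observing each is a function of one player's input alone (because the other side is fixed on $I$ or $J$), and then conditioning on a typical value of $(Z_{1,A},Z_{1,B})$. This conditioning preserves independence of $X,Y$ and degrades min-entropies only slightly, after which the off-diagonal bound on $Z_2^2$ finishes the job. Your proposal of conditioning on $X_{-i},Y_{-j}$ per pair does not have this property---the conditionings for different $(i,j)$ are inconsistent with one another, and a conditional bound on a single cross-term $\E[f(X_i,Y_j)f(X_{i'},Y_{j'})]$ under a pair-specific conditioning does not assemble into a bound on $\Var(F)$. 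So the part you explicitly label "the crux of the proof" is indeed the crux, and it is not resolved by what you wrote.
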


\paragraph{The case of ``juntas''} Let $X \sim D_A', Y \sim D_B'$. For intuition, let us first consider the ideal ``junta'' case where $X,Y$ are fixed on $O(C/n)$ coordiantes and the rest are uniform. This is commensurate with the min-entropy deficit: fixing each coordinate (an element of $\{0,1\}^n$) costs $n$ bits of entropy. 

Let $I, J \subseteq [t]$ with $|I|, |J| = O(C/n)$ be such that $X_i$ are fixed for $i \in I$, and the same holds for $Y_j, j \in J$. The remaining coordinates are uniformly random. Now, 
\begin{align}\label{eq:Fdecomp}
    t^2 F(X,Y) &= \sum_{i,j \in [t]} \IP(X_i, Y_j) \nonumber\\
    &= \underbrace{\sum_{i \in I, j \in I} \IP(X_i, Y_j)}_{:= Z_0} + \underbrace{\underbrace{\sum_{i \in I, j \notin J} \IP(X_i, Y_j)}_{:=Z_{1,B}} + \underbrace{\sum_{i \notin I, j \in J} \IP(X_i, Y_j)}_{:=Z_{1,A}}}_{:=Z_1} + \underbrace{\sum_{i \notin I, j \notin J} \IP(X_i, Y_j)}_{:= Z_2}.
\end{align}

Naturally, under our assumptions, $Z_0$ is constant. Further, as $X_i, Y_j$ are independent uniformly random over $\{0,1\}^n$, a simple calculation shows that 
\begin{equation}\label{eq:ipsketch1}
\E[Z_1] = \E[Z_2] = \E[Z_1 Z_2] = 0;\;\; \E[Z_2^2] = (t - |I|) (t - |J|). 
\end{equation}

This in particular implies that 
$$\Var(F(X,Y)) \geq \frac{1}{t^4} (t-|I|) (t - |J|) = \Omega(1/t^2),$$
proving the required conclusion from \cref{clm:varip}.


\paragraph{The case of \emph{blockwise-dense} distributions} The above argument had a severe restriction that it only applied to $D_A'$, $D_B'$ of a very special structure: all but $O(C/n)$ coordinates are fixed and rest are uniformly random. 

Fortunately, we can vaguely reduce to the above case by using the notion of \emph{blockwise-dense} distributions studied in the context of lifting theorems (cf.~\cite{GLMWZ16}). We will state a more general and formal result later; for now let us describe what we need for $\IP$ informally: \cite{GLMWZ16} show that any distribution $D_A'$ on $(\{0,1\}^n)^t$ of min-entropy at least $nt - C$ can be written as a convex-combination of distributions with the following properties:
$$D_A' = \sum_{\ell} \lambda_\ell D_A^{(\ell)} + \lambda_{err} D_A^{err},$$

\begin{enumerate}
    \item Each $D_A^{(\ell)}$ is $0.9$-\emph{conjunctive blockwise-dense} (CBD): there exists a set $I_\ell \subseteq [n]$, $|I_\ell| = O(C/n)$ such that $D_A^{(\ell)}$ is fixed in the coordiantes in $I_\ell$; for any $S \subseteq [t]\setminus I_\ell$, the min-entropy of $D_A^{(\ell)}$ restricted to coordinates in $S$ is at least $0.9 |S| n$. 
    \item $\lambda_{err} < 2^{-\Omega(nt - C)}$.
\end{enumerate}

We can do a similar decomposition for $D_B'$. By the law of total variance, to lower bound the variance of $F(X,Y)$ for $X \sim D_A', Y \sim D_B'$, it suffices to lower bound the variance of $F(X,Y)$ for any $X \sim D_A^{(\ell)}$, and $Y \sim D_B^{(\ell')}$. The error terms corresponding to $\lambda_{err}$ are negligible. 

Let us focus on one such term: We have $X \sim D_A^{(\ell)}$, $Y \sim D_B^{(\ell')}$ where $X,Y$ are CBD-distributions as defined above. For brevity, let $I \equiv I_\ell, J \equiv J_\ell$ be the fixed coordinates of size $O(C/n)$. We can once again decompoe $t^2 F(X,Y)$ as a sum of random variables $Z_0 + Z_1 + Z_2$ as in \cref{eq:Fdecomp}. 

As before, $Z_0$ is a constant. However, we no longer even have $\E[Z_1] = 0$ or $\E[Z_2] = 0$. Nevertheless, we can now use properties of $\IP_n$ to argue that $Var(Z_2) = \Omega(t^2)$, and $Z_1$ cannot destroy this variance. 

\paragraph{Variance of $Z_2$} We exploit the well-known fact that $\IP_n$ has the following  discrepancy property (\cite{ChorGoldreich})\footnote{In another parlance, $\IP_n$ is a \emph{two-source extractor.}}: For $U,V \sim \{0,1\}^d$ of min-entropy at least $d - k_1, d - k_2$ respectively, we have 
\begin{equation}\label{eq:ipext}
\E[\IP(U,V)] \leq 2^{k_1 + k_2 - d}    
\end{equation}

In our case, the individual terms in $Z_2$, $\IP(X_i, Y_j)$ for $i \notin I, j \notin J$, all have min-entropy at least $0.9 n$. Thus, 
\begin{equation} \label{eq:z2exp}
\E[Z_2] = t^2 2^{-\Omega(n)}    
\end{equation}

Further, to calculate the variance, we have

\begin{equation}\label{eq:z2var}
\E[Z_2^2] = (t - |I|) (t- |J|) + \sum_{(i,j) \neq (i',j') \in \overline{I} \times \overline{J}} \E\left[\IP(X_i, Y_j) \cdot \IP(X_{i'}, Y_{j'})\right].
\end{equation}

Now note that for $(i,j) \neq (i', j')$ we must have either $i \neq i'$ or $j \neq j'$. Therefore, as $X, Y$ are CBD-distributions, the total entropy in $((X_i, X_{i'})), (Y_j, Y_{j'})$ is at least $3 \cdot (0.9 n)$. Thus, applying \cref{eq:ipext}, we immediately get
the following bound for all cross-terms: if $(i,j) \neq (i',j') \in \overline{I} \times \overline{J}$,  

\begin{equation}\label{eq:z2cross}
\E\left[\IP(X_i, Y_j) \cdot \IP(X_{i'}, Y_{j'})\right] = \E[\IP_{2n}((X_i, X_{i'}), (Y_j, Y_{j'}))] \leq 2^{.3 n - 2n} = 2^{-\Omega(n)}.
\end{equation}

Thus, we can conclude that 
$$\Var(Z_2) \geq \Omega(t^2) - t^2 2^{-\Omega(n)}.$$

\paragraph{Interference from partially fixed blocks} We are now almost done but for the fact that we don't have $\E[(Z_1 - \E[Z_1]) Z_2] = 0$ as in the easy junta case. Fortunately, we can handle this in one of two ways. A more detailed calculation can be used to directly show that the cross-terms arising in $Z_1 Z_2$ have small expectations $2^{-\Omega(n)}$ by \cref{eq:ipext} once again. 

However, with a view towards extending to the case of general functions, we present an alternate, cleaner way. Note that in \cref{eq:Fdecomp}, we can further decompose $Z_1 = Z_{1,B} + Z_{1,A}$. As the coordinates of $X$ in $I$ are fixed, $Z_{1,B}$ only depends on $Y$; similarly $Z_{1,A}$ is a function of $X$ alone. Thus, we can condition on a typical value of $Z_{1,A}, Z_{1,B}$ while mainting independence of $X, Y$. 

Concretely, say, we condition on values of $Z_{1,A}, Z_{1,B}$ that occur with probability at least $2^{-0.1n}$. Under this conditioning, the distributions of $X, Y$ will $0.8$-CBD-distributions and are still independent. Thus, we can use the same exact calculations for lower bounding the variance of $Z_2$ (and hence of $Z$ as $Z_0, Z_1$ are now fixed) by following Equations \ref{eq:z2exp},\ref{eq:z2var},\ref{eq:z2cross} (with a slightly different constant in $2^{-\Omega(n)}$). 

\subsubsection{Lower bound for computing \texorpdfstring{$f^{t \times t}$}{f\^\{t x t\}} from discrepancy} Let us now return to the proof of \cref{thm:main}, specifically the special case as alluded to in \cref{clm:discinformal}. 

Let $\Theta$ be a product distribution under which $\Disc_{\Theta}(f)$ is small. As before, the idea is to use the following hard distribution: $X \sim [q]^t$, where each $X_i$ is independently drawn from $\Theta$ and the same for $Y$. By Yao's min-max principle, suppose there is a deterministic protocol $\Pi$ to compute $\FL$ within accuracy $\varepsilon = c_0/t$ with communication at most $C < c_0 t \log(1/\Disc^\times(f))$. 

The first difference from $\IP$ is that $\Theta$ need not be an uniform distribution \footnote{It is possible to reduce the lower-bound proof to the uniform distribution by modifying the domain $[q]$ to a bigger domain $[q']$, so that the non-uniform distirbution on $[q]$ corresponds to a uniform distribution over $[q']$.  While this simplifies the distribution $\Theta$, all the remaining technicalities including the need for max-divergence measure persist.  We encourage the reader to think of $\Theta$ as a uniform distribution over $[q]$, since all of the technical details are exactly the same for an arbitrary distribution $\Theta$.}. 
Thus, when moving to a large \emph{rectangle} over $[q] \times [q]$, we should use measure of \emph{largeness} relative to $\Theta$. We do so by looking at \emph{max-divergence} as the measure: For distributions $P, Q$ on the same space define
 $$ D_\infty(P\Vert Q)\ :=\ \log\max_{z}\frac{P(z)}{Q(z)}.$$

Now, a standard averaging argument shows that under the guarantees of the protocol, there exist independent distributions $D_A', D_B'$  on $[q]$ with $D_\infty(D_A' \| \Theta) \leq C, D_\infty(D_B' \| \Theta) \leq C$ such that 
$$\Var_{X \sim D_A', Y \sim D_B'}[\FL(X,Y)] = O(\eps^2).$$

Just as before, we will show that this is impossible by showing an analog of \cref{clm:varip}:

\begin{claim}[Informal]\label{clm:varipf}
    Let $X \sim D_A', Y \sim D_B'$ be independently distributed on $[q]$ with $D_\infty(D_A' \| \Theta)$, $D_\infty(D_B' \| \Theta) < C$ for $C < c_0 t \cdot \log (1/\Disc^\times(f))$. Then, $\Var(\FL(X,Y)) = \Omega(1/t^2)$. 
\end{claim}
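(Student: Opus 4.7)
The plan is to follow the high-level strategy used for $\IP$ in the overview, replacing its extractor property by the discrepancy of $f$ with respect to $\Theta$. By passing to a product distribution $\Theta$ that (nearly) achieves the infimum in $\Disc^{\times}(f)$, one may assume $\dinf(D_A' \,\|\, \Theta^t), \dinf(D_B' \,\|\, \Theta^t) \le C$ with $C \le c_0\, t \log(1/\disct(f))$ for a small absolute constant $c_0$. The first step is to invoke the blockwise-dense decomposition of \cite{GLMWZ16, KMR21}, generalized from a uniform to an arbitrary product base measure $\Theta^t$: write $D_A' = \sum_\ell \lambda_\ell D_A^{(\ell)} + \lambda_{\mathrm{err}} D_A^{\mathrm{err}}$ with $\lambda_{\mathrm{err}} \le 2^{-\Omega(t \log(1/\disct(f)))}$, where each $D_A^{(\ell)}$ is $0.9$-conjunctively blockwise dense (CBD): there is a set $I_\ell \subseteq [t]$ of size $O(C / \log(1/\disct(f)))$ on which $D_A^{(\ell)}$ is concentrated at a single point, while for every $S \subseteq [t] \setminus I_\ell$ the $S$-marginal has max-divergence at most $0.1\, |S| \log(1/\disct(f))$ from $\Theta^S$; analogously for $D_B'$. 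By the law of total variance, it suffices to lower bound the variance under an arbitrary pair $(D_A^{(\ell)}, D_B^{(\ell')})$, the error mass contributing $o(1/t^2)$.

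Fix such a pair, set $I = I_\ell$ and $J = J_{\ell'}$, and decompose
$$t^2\, \FL(X, Y) \;=\; Z_0 + Z_{1,A} + Z_{1,B} + Z_2,$$
where $Z_0 = \sum_{i \in I,\, j \in J} f(X_i,Y_j)$ is a constant, $Z_{1,A} = \sum_{i \notin I,\, j \in J} f(X_i,Y_j)$ is a function of $X$ alone, $Z_{1,B} = \sum_{i \in I,\, j \notin J} f(X_i,Y_j)$ is a function of $Y$ alone, and $Z_2 = \sum_{i \notin I,\, j \notin J} f(X_i,Y_j)$. Mirroring the $\IP$ argument, I would then condition on a typical value of $(Z_{1,A}, Z_{1,B})$; since these are functions of $X$ and $Y$ separately, the conditioning preserves independence, and selecting a value whose probability is at least $\disct(f)^{\Omega(1)}$ per block degrades the $0.9$-CBD parameter only to, say, $0.8$. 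After the conditioning $Z_0, Z_{1,A}, Z_{1,B}$ are constants, so it remains to show $\Var(Z_2) = \Omega(t^2)$.

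The key estimate replacing the IP extractor bound is the following consequence of discrepancy: if $U, V$ are distributions on $[q]$ with $\dinf(U \,\|\, \Theta), \dinf(V \,\|\, \Theta) \le k$, then expressing each as a convex combination of $\Theta$-conditionals on sets of $\Theta$-mass at least $2^{-k}$ and applying Definition~\ref{def:rect-disc} termwise gives
$$\bigl|\, \E_{X \sim U,\, Y \sim V}[f(X,Y)] \,\bigr| \;\le\; 2^{2k}\cdot \disct(f).$$
This yields $|\E[Z_2]| \le t^2 \cdot \disct(f)^{\Omega(1)}$. For $\E[Z_2^2]$, each of the $(t-|I|)(t-|J|)$ diagonal terms contributes exactly $1$ since $f$ is $\pm 1$-valued, while each off-diagonal $\E[f(X_i,Y_j) f(X_{i'},Y_{j'})]$ reduces to a product-discrepancy estimate on $g((x_1,x_2),(y_1,y_2)) := f(x_1,y_1) f(x_2,y_2)$ evaluated on a pair $((X_i, X_{i'}), (Y_j, Y_{j'}))$ whose marginals have bounded max-divergence from $\Theta^2$ by the CBD property. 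Hence each cross-term is at most $\disct(f)^{\Omega(1)}$, and summing gives $\Var(Z_2) \ge (1-o(1))(t-|I|)(t-|J|) = \Omega(t^2)$ as long as $c_0$ is sufficiently small.

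The main obstacle is the product-discrepancy estimate for $g$: while intuitively ``two independent copies of a low-discrepancy function are low-discrepancy,'' the rectangles used in $\Disc_{\Theta^2}(g)$ are arbitrary subsets of $[q]^2$ rather than products of subsets of $[q]$, so the bound does not follow from a one-line computation. When $i \ne i'$ and $j \ne j'$ one can condition on $X_{i'}, Y_{j'}$ using the CBD property and reduce to the single-copy discrepancy estimate above; the degenerate cases $i = i'$ or $j = j'$ (but not both) require a separate one-sided argument exploiting that one coordinate is shared across the two factors of $g$. Tracking how the CBD parameters propagate through the conditioning on $(Z_{1,A}, Z_{1,B})$ and through the formation of coordinate pairs is the primary technical work, and is where adapting \cite{GLMWZ16, KMR21} from uniform-measure Boolean lifts to arbitrary product measures $\Theta$ demands the most care.
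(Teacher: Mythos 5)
Your proposal follows essentially the same route as the paper: the same CBD decomposition of $D_A', D_B'$ relative to $\Theta^t$, the same $Z_0+Z_{1,A}+Z_{1,B}+Z_2$ split with conditioning on typical values of $(Z_{1,A},Z_{1,B})$, the same single-pair discrepancy bound $|\E[f(U,V)]|\le 2^{C_X+C_Y}\disct(f)$ (the paper proves it via a layer-cake integral over superlevel sets of the likelihood ratios, whereas you sketch a convex-combination-of-$\Theta$-conditionals decomposition, but the bound is identical), and the same case split in estimating $\E[Z_2^2]$ — a clean decoupled case $i\neq i', j\neq j'$ handled by conditioning on $(X_{i'},Y_{j'})$, and a shared-coordinate case handled by a more delicate one-sided conditioning, matching the paper's Lemmas~\ref{lem:pairpair-from-disc} and~\ref{lem:pair-robust}. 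You correctly identify the shared-coordinate cross-terms and the $\Theta$-relative (max-divergence) generalization of the CBD machinery as the technically demanding steps, which is exactly where the paper's effort is concentrated.
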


The proof of the above follows the same high-level approach as for $\IP$.
Let $\Delta = \log(1/Disc^\times(f))$. We first show a simple extension of the CBD decomposition result of \cite{GLMWZ16} that applies to arbitrary distributions $\Theta$ on $[q]$ instead of the uniform distribution. This reduces us to study the case where $X, Y$ have the following property: $X,Y$ are fixed in some sets $I , J \subset [t]$ of size $O(C/\Delta)$, and for any $S \subseteq \overline{I}$, we have 
$$D_\infty(X_{|S} \| \Theta^S) \leq 0.1 \Delta|S|,$$

where $X_{|S}$ denotes $X$ restricted to the coordinates in $S$, and $\Theta^S$ denotes the distribution on $[q]^S$ where each coordinate is drawn independently from $\Theta$. A similar condition holds for $Y$. 

Given the above, we can once again use a decomposition of $\FL(X,Y)$ as in \cref{eq:Fdecomp} by writing $t^2 F(X,Y) = Z_0 + Z_1 + Z_2$. 

The key step is to lower bound the variance of $Z_2$. We do so by first showing an analog of \cref{eq:ipext} for any $f$ with small discrepancy (see \cref{lem:disc-Dinfty}): For any random variables $U,V$ over $[q]$ with $D_\infty(U \| \Theta) \leq k_1$,  $D_\infty(V \| \Theta) \leq k_2$, we have 
$$\E[f(U,V)] \leq 2^{k_1 + k_2 - \Delta}.$$

Next, to bound $\E[Z_2^2]$ we need an analog of \cref{eq:z2cross}. Unfortunately, this is not as clean as for $\IP$ and we have to consider different cases. Nevertheless, we can show similar \emph{tensoring} inequalities for discrepancy (see \cref{lem:pairpair-from-disc},  \cref{lem:pair-robust}): if $(i,i') \neq (j,j') \in \overline{I} \times \overline{J}$, then
$$\E[f(X_i, X_{i'}) f(Y_j, Y_{j'})] \leq 2^{-\Omega(\Delta)}. $$
We can then finish the argument by substituting the above equations in place of \cref{eq:ipext,eq:z2exp,,eq:z2var,eq:z2cross} and conditioning on $Z_{1,A}, Z_{1,B}$ as for $\IP$. 



\subsection{Preliminaries}
We make a few definitions and prove a few simple facts needed to carry out the proof of \cref{thm:main}.
\paragraph{Max-Divergence}
\begin{definition}
    For two distributions $\Phi$ and $\Theta$ over a domain $[q]$, let
    \[ \dinf(\Phi \| \Theta) = \sup_{z} \log\left( \frac{\Pr_{Z \sim \Phi} [Z = z]}{\Pr_{Z \sim \Theta}[Z = z]}\right) \]    
\end{definition}
We will often write $\dinf(Z \| \Theta)$ for a random variable $Z$ over the domain, to denote the $\dinf(\Phi \| \Theta)$ of its probability density function $\Phi$.
Any base of logarithm is fine as long as it is consistent; we use base 2 so \(\log q=n\).
\begin{lemma}[Typical $D_{\infty}$-bound]
\label{lem:typical-Dinf}
Let $(X,Z)$ be random variables with $Z$ supported on a finite set $\mathcal D$.
For $C>0$, define the \emph{typical} set
\[
\mathcal T_C \;:=\; \Bigl\{\, z\in\mathcal D \ \Big|\ 
D_{\infty}\!\bigl(P_{X\mid Z=z}\,\Vert\,P_X\bigr)\ \le\ \log|\mathcal D|\,+\,C \Bigr\}.
\]
Then
\[
\Pr[\,Z\notin \mathcal T_C\,]\ \le\ 2^{-C}.
\]
\end{lemma}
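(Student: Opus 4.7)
The plan is to unwind the definition of the typical set through Bayes' rule and then apply a simple witness argument. First I would observe that by definition, $z \notin \mathcal{T}_C$ if and only if there exists some $x_z$ (depending on $z$) such that $P_{X \mid Z=z}(x_z) > 2^C \cdot |\mathcal{D}| \cdot P_X(x_z)$. For each $z \notin \mathcal{T}_C$, I would fix such a witness $x_z$.

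The key manipulation is the symmetry of the pointwise likelihood ratio: since $P_{X \mid Z=z}(x) \cdot P_Z(z) = P_{X,Z}(x,z) = P_X(x) \cdot P_{Z \mid X=x}(z)$, we have the identity
\[
\frac{P_{X \mid Z=z}(x)}{P_X(x)} \;=\; \frac{P_{Z \mid X=x}(z)}{P_Z(z)}.
\]
Applying this with $x = x_z$, the defining inequality for $z \notin \mathcal{T}_C$ becomes $P_{Z \mid X=x_z}(z) > 2^C \cdot |\mathcal{D}| \cdot P_Z(z)$, or equivalently
\[
P_Z(z) \;<\; \frac{P_{Z \mid X=x_z}(z)}{2^C \, |\mathcal{D}|}.
\]

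Summing this bound over all $z \notin \mathcal{T}_C$ yields
\[
\Pr[Z \notin \mathcal{T}_C] \;=\; \sum_{z \notin \mathcal{T}_C} P_Z(z) \;<\; \frac{1}{2^C \, |\mathcal{D}|} \sum_{z \notin \mathcal{T}_C} P_{Z \mid X=x_z}(z).
\]
Here one cannot simply invoke $\sum_z P_{Z \mid X=x}(z)=1$ because the witness $x_z$ varies with $z$; this is the only subtle point. The workaround is the crude but sufficient bound $P_{Z \mid X=x_z}(z) \le 1$ together with $|\{z \notin \mathcal{T}_C\}| \le |\mathcal{D}|$, giving a sum of at most $|\mathcal{D}|$. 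Plugging this in cancels the $|\mathcal{D}|$ and yields $\Pr[Z \notin \mathcal{T}_C] < 2^{-C}$, as claimed.
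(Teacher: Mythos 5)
Your proof is correct. It differs slightly in packaging from the paper's: the paper defines $g(z)=\sup_x P_{X\mid Z=z}(x)/P_X(x)$, observes $P_Z(z)g(z)\le 1$ (equivalently $\sup_x P_{Z\mid X=x}(z)\le 1$), sums to get $\E[g(Z)]\le|\mathcal D|$, and finishes with Markov's inequality; you instead extract a witness $x_z$ for each bad $z$, flip via Bayes to get $P_Z(z)<2^{-C}/|\mathcal D|$, and finish by counting at most $|\mathcal D|$ bad values. Both routes hinge on the same fact (the likelihood ratio times $P_Z(z)$ is at most $1$) and both pass through the crude bound that contributes the $|\mathcal D|$ factor, so the mathematical content is the same; yours is a touch more elementary in that it avoids forming the expectation and invoking Markov, instead bounding each bad $z$ directly and counting. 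You correctly flag and handle the one subtlety, namely that the witness $x_z$ depends on $z$ so one cannot sum $P_{Z\mid X=x_z}(z)$ to $1$; the fallback $P_{Z\mid X=x_z}(z)\le 1$ is exactly what is needed. One could add a word that the witness necessarily has $P_X(x_z)>0$ (since $P_{X\mid Z=z}(x_z)>0$ forces $P_{X,Z}(x_z,z)>0$), so the Bayes identity is legitimate, but that is routine.
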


\begin{proof}
Write $g(z):=\exp\!\bigl(D_{\infty}\!(P_{X\mid Z=z}\Vert P_X)\bigr)
=\sup_x \frac{P_{X\mid Z=z}(x)}{P_X(x)}$. Note that for every $z$,
\[
P_Z(z)\,g(z)
\;=\; P_Z(z)\,\sup_x \frac{P_{X\mid Z=z}(x)}{P_X(x)}
\;=\; \sup_x \frac{P_{X,Z}(x,z)}{P_X(x)}
\;\le\; 1,
\]
since $P_{Z\mid X}(z\mid x)\le 1$ for all $x,z$. Summing over $z\in\mathcal D$ gives
\[
\mathbb E\big[g(Z)\big]
\;=\;\sum_{z} P_Z(z)\,g(z)
\;=\;\sum_{z} \sup_x \frac{P_{X,Z}(x,z)}{P_X(x)}
\;\le\; |\mathcal D|.
\]
By Markov's inequality,
\[
\Pr\!\left[g(Z)>\,|\mathcal D|\,2^{C}\right]
\;\le\; \frac{\mathbb E[g(Z)]}{|\mathcal D|\,2^{C}}
\;\le\; 2^{-C}.
\]
Equivalently, $\Pr\!\left[D_{\infty}(P_{X\mid Z}\Vert P_X) > \log|\mathcal D|+C\right]\le 2^{-C}$,
which is the claim.
\end{proof}

\subsection{Discrepancy and max-divergence}

\begin{definition}[Discrepancy w.r.t.\ $\Theta$]
For $f:[q]\times[q]\to \mathbb{R}$ and a baseline distribution $\Theta$ on $[q]$, define
\[
\mathrm{Disc}_\Theta(f)\ :=\ \sup_{\text{rectangles }R\subseteq[q]\times[q]}
\ \Big|\ \E_{(x,y)\sim\Theta\times\Theta}\big[\,f(x,y)\,\mathbf 1_{R}(x,y)\,\big]\ \Big|.
\]
Equivalently, $\mathrm{Disc}_\Theta(f)=\sup_{R} \Pr_{\Theta\times\Theta}[R]\cdot \big|\E_{\Theta\times\Theta}[f\mid (x,y)\in R]\big|$.
\end{definition}

Here is a simple lemma saying that discrepancy gives reasonable bounds for correlation of $f$ with product distributions as long as their ratios with $\Theta$ are bounded. Similar results have been proved in the literature in various contexts, we include the proof here for completeness.

\begin{lemma}[Discrepancy bound under $D_\infty$-closeness to $\Theta$]
\label{lem:disc-Dinfty}
Let $\Theta$ be a distribution on $[q]$ and let $f:[q]\times[q]\to \mathbb{R}$. For distributions $P_X,Q_Y$ on
$[q]$ with finite max-divergences
\[
D_\infty(P_X\Vert\Theta)\;=\;\log\max_{x}\frac{P_X(x)}{\Theta(x)}\;=\;C_X\;<\infty,
\qquad
D_\infty(Q_Y\Vert\Theta)\;=\;\log\max_{y}\frac{Q_Y(y)}{\Theta(y)}\;=\;C_Y\;<\infty,
\]
we have
\begin{equation}\label{eq:disc-Dinfty}
\Big|\,\E_{x\sim P_X,\,y\sim Q_Y}\big[f(x,y)\big]\,\Big|
\;\le\;
\mathrm{Disc}_\Theta(f)\cdot 2^{\,C_X+C_Y}.
\end{equation}
\end{lemma}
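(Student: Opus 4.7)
The plan is to reduce the expectation under $P_X \times Q_Y$ to an integral of expectations over rectangles under $\Theta \times \Theta$, and then apply the discrepancy bound pointwise.

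First I would perform a change of measure. Let $\alpha(x) = P_X(x)/\Theta(x)$ and $\beta(y) = Q_Y(y)/\Theta(y)$. By the hypothesis on max-divergence, $0 \le \alpha(x) \le 2^{C_X}$ and $0 \le \beta(y) \le 2^{C_Y}$. Then
\[
\E_{x \sim P_X,\, y \sim Q_Y}[f(x,y)] \;=\; \E_{(x,y)\sim \Theta \times \Theta}\bigl[\alpha(x)\,\beta(y)\,f(x,y)\bigr].
\]

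Next I would use the layer-cake representation
\[
\alpha(x) \;=\; \int_0^{2^{C_X}} \mathbf{1}\{\alpha(x) \ge s\}\, ds, \qquad \beta(y) \;=\; \int_0^{2^{C_Y}} \mathbf{1}\{\beta(y) \ge t\}\, dt,
\]
and plug this into the expectation above. By Fubini (all quantities are bounded), I can swap integrals to obtain
\[
\E_{\Theta \times \Theta}[\alpha\,\beta\,f] \;=\; \int_0^{2^{C_X}}\!\!\int_0^{2^{C_Y}} \E_{\Theta \times \Theta}\Bigl[\mathbf{1}\{\alpha(x) \ge s\}\,\mathbf{1}\{\beta(y) \ge t\}\,f(x,y)\Bigr]\, dt\, ds.
\]

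Now the key observation is that for each fixed $(s,t)$, the set $R_{s,t} = \{x : \alpha(x) \ge s\} \times \{y : \beta(y) \ge t\}$ is a combinatorial rectangle in $[q] \times [q]$. Hence by the definition of $\mathrm{Disc}_\Theta(f)$,
\[
\bigl|\E_{\Theta \times \Theta}[\mathbf{1}_{R_{s,t}}\cdot f]\bigr| \;\le\; \mathrm{Disc}_\Theta(f).
\]
Applying this bound inside the double integral and integrating yields
\[
\bigl|\E_{P_X, Q_Y}[f]\bigr| \;\le\; \mathrm{Disc}_\Theta(f) \cdot 2^{C_X} \cdot 2^{C_Y},
\]
which is exactly \eqref{eq:disc-Dinfty}.

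There is no real obstacle here — the argument is essentially a clean application of the layer-cake decomposition combined with the definition of discrepancy. The only subtlety to mention is that the argument uses the fact that super-level sets of functions of $x$ (resp.\ $y$) alone give rectangles, which is what makes the product structure of $P_X \times Q_Y$ (versus an arbitrary joint distribution) essential.
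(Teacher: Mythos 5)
Your proposal is correct and is essentially identical to the paper's own proof: the same change of measure to likelihood ratios $P_X/\Theta$ and $Q_Y/\Theta$, the same layer-cake decomposition with Fubini, and the same pointwise application of the rectangle discrepancy bound to the super-level-set rectangles. (The choice of strict versus non-strict inequality in the indicator is immaterial to the Lebesgue integral.)
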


\begin{proof}
Define the likelihood ratios $r_X(x):=P_X(x)/\Theta(x)$ and $r_Y(y):=Q_Y(y)/\Theta(y)$. Then
$0\le r_X(x)\le 2^{C_X}$ and $0\le r_Y(y)\le 2^{C_Y}$ for all $x,y$, and
$\sum_x \Theta(x)r_X(x)=\sum_y\Theta(y)r_Y(y)=1$. We can write
\[
\E_{x\sim P_X,\,y\sim Q_Y}\big[f(x,y)\big]
=\E_{(x,y)\sim \Theta\otimes\Theta}\!\big[f(x,y)\,r_X(x)\,r_Y(y)\big].
\]
Decompose $r_X,r_Y$ as follows,
\[
r_X(x)=\int_{0}^{2^{C_X}}\!{\bf 1}\{r_X(x)>s\}\,ds,\qquad
r_Y(y)=\int_{0}^{2^{C_Y}}\!{\bf 1}\{r_Y(y)>t\}\,dt.
\]
Substituting and interchanging the integrals and expectations,
\[
\E_{\Theta\otimes\Theta}\!\big[f\,r_X r_Y\big]
=\int_{0}^{2^{C_X}}\!\!\int_{0}^{2^{C_Y}}
\E_{\Theta\otimes\Theta}\!\Big[f(x,y)\,{\bf 1}\{r_X(x)>s\}\,{\bf 1}\{r_Y(y)>t\}\Big]\,dt\,ds.
\]
For each $(s,t)$, the indicator ${\bf 1}\{r_X(x)>s\}{\bf 1}\{r_Y(y)>t\}$ is the indicator of a rectangle
$R_{s,t}=A_s\times B_t$ with $A_s:=\{x:r_X(x)>s\}$ and $B_t:=\{y:r_Y(y)>t\}$. By the definition of
$\mathrm{Disc}_\Theta(f)$ (unnormalized rectangle discrepancy under $\Theta$),
\[
\Big|\E_{\Theta\otimes\Theta}\big[f(x,y)\,{\bf 1}_{R_{s,t}}(x,y)\big]\Big|
\le \mathrm{Disc}_\Theta(f)\quad\text{for every rectangle }R_{s,t}.
\]
Therefore,
\[
\Big|\,\E_{x\sim P_X,\,y\sim Q_Y} f(x,y)\,\Big|
\le \int_{0}^{2^{C_X}}\!\!\int_{0}^{2^{C_Y}} \mathrm{Disc}_\Theta(f)\,dt\,ds
= \mathrm{Disc}_\Theta(f)\cdot 2^{C_X+C_Y},
\]
which is \eqref{eq:disc-Dinfty}. 
\end{proof}

\paragraph{Conjunctive block-wise dense distributions}


We restate the definition of conjunctive block-wise dense distributions (cbd) of \cite{GLMWZ16} in terms of max-divergences, in place of min-entropy.  For our theorems, it becomes necessary to develop decompositions parametrized by max-divergences.

\begin{definition}[cbd distributions]
A distribution $\mathcal D$ over $[q]^t$ is $k$-\emph{conjunctive block-wise dense} ($k$-cbd)
if there exist a set $J\subseteq[t]$ of \emph{fixed} coordinates with $|J|\le t/10$ and a value
$\alpha\in [q]^J$ such that $X_J=\alpha$ under $\mathcal D$, and for every
$S\subseteq [t]\setminus J$ we have
\[
\dinf\!\big((\mathcal D)_S\ \Vert\ \Theta^S\big)\ \le\ k\,|S|.
\]
We call the indices in $J$ fixed and those in $[t]\setminus J$ unfixed.
\end{definition}

We adapt the  decomposition of arbitrary distributions into cbd distributions, parametrized by max-divergence $\dinf$.  Our adaptation below follows along the lines of Lemma 6.1 in \cite{KMR21}.

\begin{lemma}[cbd decomposition]
\label{lem:cbd-decomp}
Let $\Theta$ be any distribution on $[q]$. Let $\cD$ be a distribution on $[q]^t$
with $\dinf(\cD\Vert\Theta^t)\le kt$. Then there exists a partition
\[
[q]^t\ =\ \Bigl(\ \bigcup_{i=1}^{N}\cS_i\Bigr)\ \cup\ \cS_{\mathrm{err}}
\]
such that:
\begin{itemize}
\item for each $i$, the conditioned distribution $\cD\mid\cS_i$ is $10k$-cbd (with respect to $\Theta$);
\item $\Pr_{x\sim\cD}[x\in \cS_{\mathrm{err}}]\ \le\ 1/100$.
\end{itemize}
\end{lemma}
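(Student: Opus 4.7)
I will follow the density-restoring partitioning strategy used in query-to-communication lifting (G\"o\"os--Lovett--Meka--Watson--Zhang, reformulated via max-divergence by Knop--Meka--Riazanov, which matches the setup here). The partition will be produced greedily by a recursive procedure which at each node either recognises the current conditional sub-distribution as $10k$-cbd and emits it as one of the pieces $\mathcal{S}_i$, or identifies a coordinate block on which the density condition is violated, splits according to its most-overrepresented value, and recurses separately on the ``heavy'' side (where that value is fixed) and the complementary ``light'' side.

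\textbf{Algorithm.} Start with $\textsf{Decompose}(\mathcal{D},\emptyset,())$. In a call $\textsf{Decompose}(\mathcal{D}',J,\alpha)$, first check whether $D_\infty\!\left((\mathcal{D}')_S\,\|\,\Theta^S\right)\le 10k|S|$ for every $S\subseteq[t]\setminus J$; if so, emit $\{x\in[q]^t:x_J=\alpha\}\cap\supp(\mathcal{D}')$ as a cbd piece and return. Otherwise, if $|J|>t/10$, place $\supp(\mathcal{D}')$ into $\mathcal{S}_{\mathrm{err}}$ and return. Otherwise, pick the lexicographically smallest minimal violating $S^\star\subseteq[t]\setminus J$ and the maximiser $\beta^\star=\argmax_\beta (\mathcal{D}')_{S^\star}(\beta)/\Theta^{S^\star}(\beta)$, and recurse on $(\mathcal{D}'|_{X_{S^\star}=\beta^\star},\,J\cup S^\star,\,(\alpha,\beta^\star))$ and on $(\mathcal{D}'|_{X_{S^\star}\neq\beta^\star},\,J,\,\alpha)$. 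Every emitted piece is then $10k$-cbd at $(J,\alpha)$ with $|J|\le t/10$ by the stopping rules, so the cbd part of the conclusion is immediate.

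\textbf{Bounding $\Pr[\mathcal{S}_{\mathrm{err}}]$.} The key calculation is that each heavy step scales the atomic density ratio $\mathcal{D}'(\cdot)/\Theta^{[t]\setminus J}(\cdot)$ by a factor of at most $\Theta^{S^\star}(\beta^\star)/\Pr_{\mathcal{D}'}[X_{S^\star}=\beta^\star]\le 2^{-10k|S^\star|}$, so $D_\infty(\mathcal{D}'\,\|\,\Theta^{[t]\setminus J})$ drops by at least $10k|S^\star|$ across that step. Starting from the initial budget $D_\infty(\mathcal{D}\,\|\,\Theta^t)\le kt$, the cumulative size of heavy blocks along any root-to-leaf trajectory is at most $kt/(10k)=t/10$. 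Consequently any trajectory that triggers the $|J|>t/10$ cutoff must contain at least one light step. Summing over these error trajectories and bounding the $\mathcal{D}$-mass that flows through light branches via Lemma~\ref{lem:typical-Dinf} gives $\Pr_{x\sim\mathcal{D}}[x\in\mathcal{S}_{\mathrm{err}}]\le 2^{-\Omega(kt)}$, which is at most $1/100$ once $kt$ is above an absolute constant; for smaller $kt$ the whole domain is already a single $10k$-cbd piece with $J=\emptyset$ and the lemma is trivial.

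\textbf{Main obstacle.} The only subtle point is the error bound above. A light step $\{X_{S^\star}\neq\beta^\star\}$ can inflate $D_\infty$ rather than reduce it, so a uniform budget argument on $D_\infty$ does not extend through mixed heavy/light trajectories. My resolution---matching what is done in GLMWZ and KMR---is to charge $|J|$-growth only to heavy steps and to control the residual $\mathcal{D}$-mass of mostly-light trajectories using Lemma~\ref{lem:typical-Dinf}. The factor of $10$ between the hypothesis $D_\infty(\mathcal{D}\,\|\,\Theta^t)\le kt$ and the target ``$10k$-cbd'' is precisely what provides the slack needed to push $\Pr[\mathcal{S}_{\mathrm{err}}]$ below $1/100$.
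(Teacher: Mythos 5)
Your proposal follows a recursive binary (heavy/light) density-restoring tree, which is a genuinely different algorithm from the paper's. The paper instead uses a \emph{linear peeling} process: maintain a single remainder $R_i$; while $\cD(R_i) > 1/100$, pick an inclusion-maximal heavy pair $(I,\alpha)$ in $\cD\mid R_i$, emit the piece $\{x\in R_i: x_I=\alpha\}$, and set $R_{i+1}=R_i\setminus C_{i+1}$. The whole bound on $|I|$ then follows from one line: because the loop invariant guarantees $\cD(R_i)>1/100$, the conditioned distribution $\cD\mid R_i$ has $D_\infty(\cD\mid R_i\,\|\,\Theta^t)\le kt+\log 100$, and so $\rho|I|\le kt+\log 100$ directly, with no need to track how $D_\infty$ evolves along a trajectory. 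The error set is just the final $R_i$ if the mass ever drops below $1/100$. This sidesteps the entire bookkeeping problem you have to confront.

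In your version, there is a genuine gap precisely where you flag it. You correctly observe that a pure-heavy trajectory has cumulative heavy-block size at most $t/10$ by the $D_\infty$ budget argument, so an error trajectory must contain a light step. But the light step $\{X_{S^\star}\neq\beta^\star\}$ both sheds some mass and \emph{inflates} $D_\infty$ by up to $\log\bigl(1/(1-\Pr[X_{S^\star}=\beta^\star])\bigr)$, and these two effects can be arbitrarily mismatched: when $\Pr[X_{S^\star}=\beta^\star]$ is tiny (but still $\rho$-heavy relative to a tiny $\Theta^{S^\star}(\beta^\star)$), the light branch keeps almost all the mass while the budget argument gets invalidated. Your proposed fix --- ``charge $|J|$-growth only to heavy steps and control residual mass via Lemma~\ref{lem:typical-Dinf}'' --- does not close this: Lemma~\ref{lem:typical-Dinf} controls the conditional $D_\infty$ given a \emph{coarse} random variable $Z$, not the conditional given an event of the form $\{X_{S^\star}\neq\beta^\star\}$, and in any case it does not by itself convert ``$D_\infty$ crept up'' into ``mass is small.'' One needs to track, along each root-to-leaf path, the cumulative product of light-branch probabilities against the cumulative $D_\infty$ inflation, and show these trade off correctly; you assert this but do not prove it. The paper's loop invariant makes this entire issue disappear.

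A second, concrete error is the fallback for small $kt$: it is not true that $D_\infty(\cD\,\|\,\Theta^t)\le kt$ being small makes $\cD$ a single $10k$-cbd piece with $J=\emptyset$. Take $\cD$ whose deficiency is concentrated on a single coordinate, so $D_\infty(\cD_{\{1\}}\,\|\,\Theta)=kt$; for $t>10$ this exceeds $10k\cdot|\{1\}|=10k$, so $\cD$ is not $10k$-cbd even though the global $D_\infty$ is as small as you like. So the case you dismiss as trivial is not, and your exponential bound $2^{-\Omega(kt)}$ does not fall back to anything in that regime. The paper's proof handles all $k$ uniformly via the $1/100$ invariant (modulo a harmless constant in the choice $\rho = 10k + 10\log 100/t$).

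In short: the algorithmic skeleton you describe is plausible (a density-restoring tree can be made to work), but the two load-bearing claims --- the error-mass bound and the small-$kt$ fallback --- are respectively unproven and false as stated. The paper's one-pass peeling with the $\cD(R_i)>1/100$ invariant is both simpler and airtight on exactly the points where your proposal is weak.
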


\begin{proof}
We will use three basic facts about $\dinf$.

\medskip
\noindent\textbf{(F1) Marginal monotonicity.}
For any $S\subseteq[t]$ and any distribution $\mu$ on $[q]^t$,
\(
\dinf(\mu_S\Vert\Theta^S)\ \le\ \dinf(\mu\Vert\Theta^t).
\)
This is because
$$\mu_S(x_S)/\Theta^S(x_S) \!=\! \sum_{y}\!\big(\mu(x_S,y) \Theta^{S^c}(y)/\sum_{y} \Theta^t(x_S,y) \Theta^{S^c}(y)\big)
\le \sup_{x}\mu(x)/\Theta^t(x).$$

\smallskip
\noindent\textbf{(F2) Conditioning inequality.}
Let $\mu$ be a distribution on coordinates $S$, fix $I\subseteq S$ and $\alpha\in[q]^I$
with $\mu[X_I=\alpha]>0$, and set $R=S\setminus I$. Then
\begin{equation}
\label{eq:cond}
\dinf\!\big(\mu\mid X_I=\alpha\ \Vert\ \Theta^R\big)
\ \le\
\dinf(\mu\Vert\Theta^S)\ -\ \log\frac{\mu_I(\alpha)}{\Theta^I(\alpha)}.
\end{equation}
Indeed,
\(
\frac{(\mu\mid X_I=\alpha)(y)}{\Theta^R(y)}
= \frac{\mu(y,\alpha)}{\Theta^S(y,\alpha)}\cdot \frac{\Theta^I(\alpha)}{\mu_I(\alpha)}.
\)

\smallskip
\noindent\textbf{(F3) Inclusion-maximal heavy implies immediate CBD.}
Fix $\rho>0$ and say that $(I,\alpha)$ is \emph{$\rho$-heavy} for $\mu$ if
$\log\big(\mu_I(\alpha)/\Theta^I(\alpha)\big)\ \ge\ \rho\,|I|$.
If $(I,\alpha)$ is $\rho$-heavy and \emph{inclusion-maximal} (no strict superset of $I$
is $\rho$-heavy), then the conditional
$\mu' := \mu\mid (X_I=\alpha)$ is $\rho$-CBD on the remaining coordinates:
for every $U\subseteq S\setminus I$,
\[
\dinf(\mu'_U\Vert\Theta^U)\ \le\ \rho\,|U|.
\]
\emph{Proof of (F3):} Otherwise, for some $U$ we would have, by \eqref{eq:cond},
\(
\dinf(\mu_{I\cup U}\Vert\Theta^{I\cup U}) >
\dinf(\mu'_U\Vert\Theta^U)+\log\frac{\mu_I(\alpha)}{\Theta^I(\alpha)}
> \rho|U|+\rho|I|=\rho|I\cup U|,
\)
contradicting inclusion-maximality of $I$. \qed

\medskip
\noindent\textbf{The algorithm (single error set).}
Set
\[
\rho\ :=\  10k\ +\ \frac{10\log 100}{t}
\qquad(\text{note that }\rho\le 20k \text{ for large } k).
\]
Initialize the remainder $R_0:=[q]^t$ and $i\gets 0$.
While $\cD(R_i)>1/100$ and the induced $\mu_i:=\cD\mid R_i$ admits a nonempty
$\rho$-heavy pair $(I,\alpha)$ on its unfixed coordinates, do:
\begin{enumerate}
\item Pick an \emph{inclusion-maximal} $\rho$-heavy $(I,\alpha)$ for $\mu_i$;
let $C_{i+1}:=\{x\in R_i:\ x_I=\alpha\}$.
\item By (F3), the piece $\cD\mid C_{i+1}$ is $\rho$-CBD; moreover, since $\rho\le 20k$,
it is also $20k$-CBD.
\item Remove it: set $R_{i+1}:=R_i\setminus C_{i+1}$ and increment $i$.
\end{enumerate}
When the loop stops:
\begin{itemize}
\item If it stopped because \emph{no} $\rho$-heavy pair remains in $R_i$, then
$\cD\mid R_i$ is $\rho$-CBD by definition, hence $20k$-CBD; output it as a final good piece and set $\cS_{\mathrm{err}}:=\varnothing$.
\item Otherwise, it stopped because $\cD(R_i)\le 1/100$; in that case declare
$\cS_{\mathrm{err}}:=R_i$. This is the \emph{only} error set we produce.
\end{itemize}
By construction the output pieces $\{C_j\}$ together with $R_i$ (if nonempty) form a disjoint partition of $[q]^t$, and
$\cD(\cS_{\mathrm{err}})\le 1/100$.

\medskip
\noindent\textbf{Bound on the number of fixed coordinates per piece.}
Fix any set $C$ produced in the loop, arising from a $\rho$-heavy pair $(I,\alpha)$ for $\mu:=\cD\mid R$ with $\cD(R)>1/100$.
We claim that $|I|\le t/10$, so the good piece $\cD\mid C$ is $(20k)$-cbd with at most $t/10$ fixed coordinates.

Indeed, by (F1) and the definition of heaviness,
\[
\rho\,|I|\ \le\ \dinf(\mu_I\Vert\Theta^I)\ \le\ \dinf(\mu\Vert\Theta^t).
\]
Using the standard conditioning bound (apply \eqref{eq:cond} with $I=\emptyset$ and event $R$),
\[
\dinf(\mu\Vert\Theta^t)\ =\ \dinf(\cD\mid R\ \Vert\ \Theta^t)
\ \le\ \dinf(\cD\Vert\Theta^t)\ +\ \log\frac{1}{\cD(R)}
\ \le\ kt\ +\ \log 100.
\]
Combining and recalling our choice of $\rho$,
\[
\rho\,|I|\ \le\ kt+\log 100
\qquad\Longrightarrow\qquad
|I|\ \le\ \frac{kt+\log 100}{\,10k+\tfrac{10\log 100}{t}\,}\ =\ \frac{t}{10}.
\]

The same argument applies verbatim to the final remainder $R$ if the loop terminated because no $\rho$-heavy pair remains: in that case $I=\varnothing$ and there are no fixed coordinates.

\medskip
\noindent\textbf{Conclusion.}
Each output set $C$ yields a good piece $\cD\mid C$ that is $\rho$-CBD and hence $20k$-CBD, with at most $t/10$ fixed coordinates. If the loop ends by exhaustion of heavy pairs, the final remainder is also good; otherwise the single error set has $\cD$-mass at most $1/100$. This proves the lemma.
\end{proof}

\subsection{Proof of Theorem \ref{thm:main}}
\begin{proof} 
For a vector \(x=(x_1,\dots,x_{\nb})\in [q]^{\nb}\), define the empirical distribution \(\mu_x\) on \([q]\) by
sampling a uniformly random coordinate: \(\beta\sim \mu_x\) means that \(\beta=x_I\) for a uniform
\(I\in[\nb]\).


Set \(\nbx=  10 \lceil m_A \rceil\)  and \(\nby= 10 \lceil  m_B \rceil \). 
Without loss of generality, assume that $\lceil m_A \rceil \cdot \lceil m_B \rceil \le \frac{1}{100 \epsilon^2}$.
Consider the ``hard'' product distribution on inputs
\((\mu_x,\mu_y)\) defined by
\begin{itemize}
  \item \(\cD_A:\) sample \(x\leftarrow [q]^{\nbx}\) from distribution $\Theta^{\nbx}$  and give Alice \(\mu_x\);
  \item \(\cD_B:\) sample \(y\leftarrow [q]^{\nby}\) from distribution $\Theta^{\nby}$ and give Bob \(\mu_y\).
\end{itemize}
Let \(\Pi_A,\Pi_B\) denote the message transcripts sent by Alice and Bob. For
\(\pi_A\in\{0,1\}^{m_A \cdot k}\) define
\[
  \mathrm{vol}(\pi_A)\;=\;\Pr_{x\sim\cD_A,\,y\sim\cD_B}\big[\Pi_A(\mu_x,\mu_y)=\pi_A\big],
\]
and analogously for \(\mathrm{vol}(\pi_B)\).
Applying Lemma \ref{lem:typical-Dinf} to the random variables $\pi_A$ and $\pi_B$, we conclude that with probability at least $\frac{3}{4}$, one has typical transcript $\pi = (\pi_A,\pi_B)$ such that conditioning on \(\pi\), the product input distribution becomes \(\cD'_A\times \cD'_B\) where
\[
  D_\infty(\cD'_A \| \Theta^{\nbx})\;\le\;  m_A \cdot k + 2 \quad\text{and}\quad
    D_\infty(\cD'_B \| \Theta^{\nby})\;\le\;  m_B \cdot k + 2 
\]



For \(x\in[q]^{\nbx}\), \(y\in[q]^{\nby}\) define
\[
  F(x,y)\;=\;\sum_{i=1}^{\nbx}\sum_{j=1}^{\nby} f(x_i,y_j).
\]
Note that \(f(\mu_x,\mu_y)=\frac{1}{\nbx\nby}F(x,y)\). We will prove the following  lower bound on variance of $F$ under distributions that have high min-entropy with respect to $\Theta$.

\begin{lemma}
\label{lem:variance}
Fix integers $k, \nbx,\nby$ and a probability distribution $\Theta$ over $[q]$.
Let \(\cD'_A\) be a distribution over \([q]^{\nbx}\) with \(D_\infty(\cD'_A \| \Theta^{\nbx})\le (\nbx/10) \cdot k \),
and let \(\cD'_B\) be a distribution over \([q]^{\nby}\) with \(D_\infty(\cD'_B \| \Theta^{\nby})\le (\nby/10) \cdot k\).

For every function $f: [q] \times [q] \to \{-1,1\}$, we have
\[
  \Var_{x\sim\cD'_A,\,y\sim\cD'_B}\left[ \sum_{i \in [\nbx], j \in [\nby]} f(x_i,y_j)\right]
  \;\ge\;
  \frac{\nbx \nby}{4}  \ -\ 2^{6k} \nbx^{34} \nby^{34}  \,
 \Disc_\Theta(f)\ .
\]
\end{lemma}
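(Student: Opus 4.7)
The strategy is to decompose both input distributions into cbd components via Lemma~\ref{lem:cbd-decomp} and then lower bound the variance within each component by splitting $F$ into blocks and controlling cross terms with Lemma~\ref{lem:disc-Dinfty}. Applying Lemma~\ref{lem:cbd-decomp} with inner parameter $k/10$ to $\cD'_A$ and $\cD'_B$ separately writes each as a mixture of $k$-cbd distributions plus an error set of $\cD$-mass at most $1/100$ on each side. Since $X$ and $Y$ are independent, by the law of total variance,
\[
\Var(F) \;\ge\; \E\bigl[\Var(F \mid C_A, C_B)\bigr],
\]
where $C_A, C_B$ label the active cbd components. The combined error mass is at most $2/100$, so (after absorbing a constant factor in the leading term) it suffices to prove the lower bound conditional on a pair of non-error $k$-cbd components: $X \sim \cD_X$ is fixed on a set $I \subseteq [\nbx]$ with $|I| \le \nbx/10$, $Y \sim \cD_Y$ is fixed on $J \subseteq [\nby]$ with $|J| \le \nby/10$, and every marginal of $\cD_X$ (resp.\ $\cD_Y$) on a subset $S$ of the unfixed coordinates satisfies $\dinf((\cD_X)_S \| \Theta^S) \le k|S|$.

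I then decompose
\[
F(X,Y) \;=\; Z_0 \;+\; Z_{1,A}(X) \;+\; Z_{1,B}(Y) \;+\; Z_2(X,Y)
\]
along the four blocks $I \times J$, $\overline{I} \times J$, $I \times \overline{J}$, and $\overline{I} \times \overline{J}$, so that $Z_0$ is a constant, $Z_{1,A}$ depends only on $X$, $Z_{1,B}$ only on $Y$, and $Z_2 = \sum_{i \notin I,\, j \notin J} f(X_i, Y_j)$. A second application of the law of total variance gives $\Var(F) \ge \E[\Var(Z_2 \mid Z_{1,A}, Z_{1,B})]$. Conditioning on typical values of the independent quantities $Z_{1,A}$ and $Z_{1,B}$ via Lemma~\ref{lem:typical-Dinf} (noting that each takes values in a set of size $O(\nbx\nby)$) inflates the max-divergences of $X$ and $Y$ relative to $\Theta^{\nbx}$ and $\Theta^{\nby}$ on small subsets by only an additive $O(\log(\nbx\nby) + k)$ term, and preserves the independence of $X$ and $Y$.

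Now expand $\Var(Z_2) = \E[Z_2^2] - \E[Z_2]^2$. The diagonal of $\E[Z_2^2]$ contributes $|\overline{I}| \cdot |\overline{J}| \ge 0.81\,\nbx\nby$ since $f^2 \equiv 1$. For the single-term expectations appearing in $\E[Z_2]^2$ I apply Lemma~\ref{lem:disc-Dinfty} directly, using $\dinf(X_i \| \Theta), \dinf(Y_j \| \Theta) \le k + O(\log(\nbx\nby))$ to obtain a bound $2^{O(k)} \Disc_\Theta(f)$ per term. For each off-diagonal cross term $\E[f(X_i, Y_j) f(X_{i'}, Y_{j'})]$, I view $(X_i, X_{i'})$ and $(Y_j, Y_{j'})$ as independent draws from joint distributions with $\dinf(\cdot \| \Theta^2) \le 2k + O(\log(\nbx\nby))$ (by the cbd pair-bound plus the post-conditioning inflation) and reduce to the discrepancy of $f$ by further conditioning on one coordinate on each side and applying Lemma~\ref{lem:disc-Dinfty} to the surviving singleton pair. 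Summing the $\nbx^2 \nby^2$ cross terms plus $\E[Z_2]^2$ yields an error of the form $2^{O(k)} \nbx^{O(1)} \nby^{O(1)} \Disc_\Theta(f)$, which comfortably fits inside the target slack $2^{6k} \nbx^{34} \nby^{34} \Disc_\Theta(f)$.

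The main obstacle is the bound on pair correlations $\E[f(X_i, Y_j) f(X_{i'}, Y_{j'})]$: unlike the $\IP$ case, which admits a direct two-source extractor argument, here one must case-analyze according to which of the indices $i, i', j, j'$ coincide (the all-distinct case versus the two partial-collision cases $i=i'$, $j\neq j'$ and $i \neq i'$, $j=j'$), and in each case combine the cbd pair-$\dinf$ bound with an additional typical-value conditioning on one coordinate before invoking the discrepancy-under-$\dinf$ inequality for $f$. The generous $\nbx^{34} \nby^{34}$ slack in the statement accommodates the $\log(\nbx\nby)$ penalties incurred at each conditioning step; once the pair-correlation inequality is in hand the rest is straightforward bookkeeping across pairs and components.
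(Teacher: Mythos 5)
Your proposal follows the same overall strategy as the paper: decompose both distributions into cbd pieces via Lemma~\ref{lem:cbd-decomp}, reduce to the cbd case by the law of total variance, split $F$ into the four blocks $Z_0, Z_{1,A}, Z_{1,B}, Z_2$, condition on typical values of the independent one-sided sums $Z_{1,A}, Z_{1,B}$ via Lemma~\ref{lem:typical-Dinf}, and then expand $\Var(Z_2)$, using the $\nbx\nby$-size diagonal as the main term and discrepancy to kill the cross terms. You correctly identify that the crux is the pair-correlation bound $|\E[f(X_i,Y_j)f(X_{i'},Y_{j'})]|$, and that the all-distinct case reduces cleanly to Lemma~\ref{lem:disc-Dinfty} after conditioning on one coordinate per side (this is the paper's Lemma~\ref{lem:pairpair-from-disc}).

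The gap is in the partial-collision case $i=i'$, $j\neq j'$. After conditioning on a typical $Y_{j'}=y_2$, the quantity you must bound is $\E\big[f(X_i,Y_j)\,f(X_i,y_2)\big]$, where $X_i$ appears in \emph{both} factors. Your sketch says to ``apply Lemma~\ref{lem:disc-Dinfty} to the surviving singleton pair,'' but that lemma bounds $\E[f(U,V)]$ for independent $U,V$ with controlled max-divergence, and here the integrand is $f(X_i,Y_j)f(X_i,y_2)$, not $f(X_i,Y_j)$. Conditioning on ``one coordinate on each side'' does not peel off the second factor: $f(X_i,y_2)$ is a nontrivial $\pm1$-valued function of $X_i$. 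The paper's Lemma~\ref{lem:pair-robust} handles this with an additional idea — after fixing a typical $y_2$, it bins on the value of the derived random variable $Z=f(X,y_2)$ and splits into \emph{heavy} bins (where conditioning on $Z=z$ reveals at most $C$ bits about $X$, so Lemma~\ref{lem:disc-Dinfty} applies to $X\mid Z=z$ and $Y_1$) and \emph{light} bins (which have small probability mass). That is not a ``coordinate'' conditioning, and your sketch does not reproduce it. (Alternatively one can observe that the function $g_{y_2}(x,y):=f(x,y)f(x,y_2)$ satisfies $\Disc_\Theta(g_{y_2})\le 2\,\Disc_\Theta(f)$ because $x\mapsto f(x,y_2)$ only splits each test rectangle into two rectangles, and then apply Lemma~\ref{lem:disc-Dinfty} directly to $g_{y_2}$ — a slicker route than the paper takes — but you would need to state and prove this inequality, which your proposal does not.) So the proposal captures the paper's architecture and correctly flags where the difficulty lies, but the step you flag as hard is in fact left unresolved.
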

We will defer the proof of the Lemma to the next subsection.  Assuming the lemma, we can finish the proof of \cref{thm:main}.
Substituting the values for $k, \nbx$ and $\nby$, and using $\nbx \cdot \nby \leq 100 \lceil m_A \rceil \cdot \lceil m_B \rceil \leq \frac{1}{\epsilon^2}$, we get that 
\(\Var_{x\sim\cD'_A,y\sim\cD'_B}\big[F(x,y)\big]\ge \frac{\nbx\nby}{5}\) using bounds on  discrepancy of $f$; this immediately implies
\[
  \Var_{(\mu_x,\mu_y)\sim \cD'_A\times\cD'_B}\big[f(\mu_x,\mu_y)\big]
  \;=\;\Var\!\left[\frac{F}{\nbx\nby}\right]\;\ge\;\frac{1}{5\,\nbx\nby}.
\]
Since the protocol achieves squared-loss (and hence variance) at most \(\epsilon^2\) conditioned on
\(\pi\), we obtain
\[
  \epsilon^2\;\ge\;\Var\big[f(\mu_x,\mu_y)\big]\;\ge\;\frac{1}{5\,\nbx\nby},
\]
which yields \(\nbx\nby\ge 1/(5\epsilon^2)\), and hence
\(\lceil m_A \rceil \lceil m_B \rceil = (\nbx/10)(\nby/10) \ge 1/(500 \epsilon^2)\) as required.
\end{proof}


\subsection{Lower bounding variance of large rectangles}

We will show \cref{lem:variance} in two stages: first for the case of conjunctive block-wise dense distributions, which is then generalized to arbitrary large rectangles. 

\paragraph{Conjunctive block-wise dense distributions}


\begin{lemma}[cbd case]
\label{lem:cbd-variance}
Fix integers $k,\nbx,\nby,n>0$.
Let $\cD_A$ be $k$-cbd on $[q]^{\nbx}$ and $\cD_B$ be $k$-cbd on $[q]^{\nby}$.
\[
  \Var_{x\sim\cD_A,\,y\sim\cD_B}\left[ \sum_{i \in [\nbx], j \in [\nby]} f(x_i,y_j)\right]
  \;\ge\;
  \tfrac{\nbx \nby}{3} \ -\ 2^{6k} \nbx^{34} \nby^{34}  \,
 \Disc_\Theta(f)\ .
\]
\end{lemma}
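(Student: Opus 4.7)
The plan is to decompose $F(X,Y)=\sum_{i\in[\nbx],j\in[\nby]} f(X_i,Y_j)$ according to the fixed/unfixed coordinates of the two cbd distributions and then use the law of total variance to reduce to bounding the variance of the fully--unfixed piece. Concretely, let $I\subseteq[\nbx]$ and $J\subseteq[\nby]$ be the fixed coordinate sets of $\cD_A$ and $\cD_B$ (with $|I|\le \nbx/10$, $|J|\le \nby/10$). I would write
\[
F(X,Y)\;=\;Z_0 \;+\; Z_{1,A}(X)\;+\;Z_{1,B}(Y)\;+\;Z_2(X,Y),
\]
where $Z_0=\sum_{i\in I,j\in J}f(X_i,Y_j)$ is a constant, $Z_{1,B}=\sum_{i\in I,j\notin J}f(X_i,Y_j)$ is a function of $Y$ alone (since $X_I$ is fixed), $Z_{1,A}$ is a function of $X$ alone by symmetry, and $Z_2=\sum_{i\notin I,j\notin J}f(X_i,Y_j)$ involves only unfixed coordinates.

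Next I would exploit that $X$ and $Y$ are independent and that $Z_{1,A},Z_{1,B}$ are univariate in $X$ and $Y$: the law of total variance gives
\[
\Var(F)\;\ge\;\E_{z_A,z_B}\!\big[\Var(Z_2\mid Z_{1,A}=z_A,\,Z_{1,B}=z_B)\big].
\]
The crucial point is that conditioning on the pair $(Z_{1,A},Z_{1,B})$ preserves independence of $X$ and $Y$. Using \cref{lem:typical-Dinf} on each of $(X,Z_{1,A}(X))$ and $(Y,Z_{1,B}(Y))$ (each $Z_{1,\cdot}$ has range bounded by $\poly(\nbx,\nby)$), for all but an $O(\poly(\nbx\nby)^{-1})$-mass set of $(z_A,z_B)$ the conditioned distributions $\cD_A', \cD_B'$ remain independent and are still cbd with respect to $\Theta$ with a degraded parameter $k'= k+O(\log(\nbx\nby))$. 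On this typical event $Z_0+Z_{1,A}+Z_{1,B}$ is fixed, so $\Var(F\mid z_A,z_B)=\Var(Z_2\mid z_A,z_B)$, reducing everything to bounding $\Var(Z_2)$ under a pair of cbd distributions.

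To lower bound $\Var(Z_2)=\E[Z_2^2]-\E[Z_2]^2$, I would expand
\[
\E[Z_2^2]=\!\!\!\sum_{i\notin I,\,j\notin J}\!\!\E[f(X_i,Y_j)^2]\;+\!\!\!\!\!\sum_{\substack{(i,j)\neq(i',j')\\ i,i'\notin I,\,j,j'\notin J}}\!\!\!\!\!\E[f(X_i,Y_j)f(X_{i'},Y_{j'})].
\]
Since $f\in\{-1,1\}$, the diagonal contributes $(\nbx-|I|)(\nby-|J|)\ge \tfrac{81}{100}\nbx\nby$. For each cross term I plan to invoke the discrepancy tensoring lemmas (\cref{lem:pairpair-from-disc} and \cref{lem:pair-robust}, established elsewhere in the paper) together with \cref{lem:disc-Dinfty} applied to the marginals of the cbd distribution: any such term is bounded by $2^{O(k')}\Disc_\Theta(f)^{\Omega(1)}$. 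Summing over at most $\nbx^2\nby^2$ such pairs and bounding $\E[Z_2]^2$ similarly via \cref{lem:disc-Dinfty}, I would obtain
\[
\Var(Z_2\mid z_A,z_B)\;\ge\;\tfrac{4}{5}\nbx\nby\;-\;2^{O(k)}\nbx^{O(1)}\nby^{O(1)}\,\Disc_\Theta(f),
\]
and averaging over the typical $(z_A,z_B)$ would yield the claimed bound after tracking constants (the exponent $34$ in the statement is a loose accounting of the cross-terms and the degradation $k'=k+O(\log(\nbx\nby))$).

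The main obstacle I anticipate is the cross-term analysis. Unlike the $\IP$ case, where every off-diagonal term becomes $\IP$ on doubled inputs and is handled by a single two-source extractor bound, here the pairs $(i,j)\neq(i',j')$ split into several sub-cases depending on whether $i=i'$, $j=j'$, or both differ, and the relevant joint marginals of the cbd distributions are only controlled in the $D_\infty$-sense --- not as product measures. Justifying that $\Disc_\Theta(f)$ tensorizes under cbd marginals (rather than product ones) to give a clean $2^{-\Omega(\log(1/\Disc_\Theta(f)))}$ bound per cross-term is precisely the content of \cref{lem:pairpair-from-disc}/\cref{lem:pair-robust}, and correctly balancing the $D_\infty$-deficit incurred by conditioning against the discrepancy gain is the delicate step that governs the final constants.
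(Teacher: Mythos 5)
Your proposal is correct and follows essentially the same route as the paper's proof: the same four-part decomposition $F = Z_0 + Z_{1,A} + Z_{1,B} + Z_2$ by fixed/unfixed coordinates, conditioning on the values of $Z_{1,A},Z_{1,B}$ (which preserves independence of $X$ and $Y$ since these are univariate), using \cref{lem:typical-Dinf} with a union bound over singletons and pairs to control the $D_\infty$-deficit after conditioning, and then bounding $\Var(Z_2)$ term-by-term via \cref{lem:disc-Dinfty}, \cref{lem:pair-robust}, and \cref{lem:pairpair-from-disc}. You also correctly identify the cross-term case analysis as the technical crux and attribute it to the right lemmas, so this is the paper's argument in all essentials.
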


\begin{proof}

Write $J_X\subseteq[\nbx]$ and $J_Y\subseteq[\nby]$ for the (at most) $t_A/10$ and $t_B/10$
\emph{fixed} blocks of $\cD_A$ and $\cD_B$, respectively, and set
\[
\overline{J_X}:=[\nbx]\setminus J_X,\quad \overline{J_Y}:=[\nby]\setminus J_Y,\quad \text { so that }
|\overline{J_X}|\ (\ge 0.9\nbx),\ \ |\overline{J_Y}|\ (\ge 0.9\nby).
\]
For $x\sim \cD_A$ and $y\sim \cD_B$ define
\[
B[i,j]\ :=\ f(x_i,y_j)\in \{-1,1\},\qquad 
F(x,y)\ :=\ \sum_{i=1}^{\nbx}\sum_{j=1}^{\nby} B[i,j].
\]

Our goal is to lower bound the variance of $F(x,y)$.  Note that $F$ can be written as,
\begin{equation}\label{eq:expandf}
F(x,y) = \underbrace{\sum_{i \in J_X, j \in J_Y} F(x_i, y_j)}_{\text{$x_i,y_j$ fixed}} + \underbrace{\sum_{i \in J_X, j \in \overline{J}_Y} F(x_i, y_j)}_{\text{$x_i$ fixed}} +
\underbrace{\sum_{i \in \overline{J}_X, j \in J_Y} F(x_i, y_j)}_{\text{$y_j$ fixed}} +
\underbrace{\sum_{i \in \overline{J}_X, j \in \overline{J}_Y} F(x_i, y_j)}_{\text{$x_i,y_j$ high min-entropy}}
\end{equation}
Much of the variance of $F$ is derived from the final term in the above equation.  The first term is constant and does not affect the variance.  To control the effect of the second and third terms, we will condition the distributions $\cD_A$, $\cD_B$ on the values of these terms..

\paragraph{Step 1: Two one-sided ``typical'' events that preserve independence.}

Define the events
\[
\mathcal{E}_Y(\beta_Y):=\left\{\,  \sum_{(i,j) \in J_X \times \overline{J_Y}} f(x_i,y_j)  =\beta_Y\,\right\},\qquad
\mathcal{E}_X(\beta_X):=\left\{\,  \sum_{(i,j) \in \overline{J_X} \times J_Y} f(x_i,y_j)   =\beta_X\,\right\}
\]
Since the coordinates in $J_X$ are constant, the event $\mathcal{E}_Y$ depends only on $y$.  Similarly, the event $\mathcal{E}_X$ only depends on $x$.  Hence conditioning on the event
\[
\mathcal{E}\ :=\ \mathcal{E}_X(\beta_X)\ \wedge\ \mathcal{E}_Y(\beta_Y)
\]
preserves independence of $(x_i)_{i\in\overline{J_X}}$ and $(y_j)_{j\in\overline{J_Y}}$.

The $k$-cbd property implies that 
\begin{align}
\forall i\in\overline{J_X}:\quad
&D_\infty\big(\mathrm{Law}(x_i)\ \Vert\ \Theta\big)\ \le\ k,
&&
\forall\,i\neq i'\in\overline{J_X}:\ \
D_\infty\big(\mathrm{Law}(x_i,x_{i'})\ \Vert\ \Theta^{\otimes 2}\big)\ \le\ 2k,\\
\forall j\in\overline{J_Y}:\quad
&D_\infty\big(\mathrm{Law}(y_j)\ \Vert\ \Theta\big)\ \le\ k,
&&
\forall\,j\neq j'\in\overline{J_Y}:\ \
D_\infty\big(\mathrm{Law}(y_j,y_{j'})\ \Vert\ \Theta^{\otimes 2}\big)\ \le\ 2k.
\end{align}
We will use Lemma \ref{lem:typical-Dinf} to bound the $D_{\infty}$ quantities after conditioning on $\mathcal{E}$.  
Let $C$ be a parameter that we fix as\  $C = 4 \log t_X \cdot t_Y$.
The event $\mathcal{E}$ corresponds to conditioning on at most $b := 2\log(t_X \cdot t_Y) $ bits.
By applying Lemma \ref{lem:typical-Dinf}
along with union bound on all singletons $i \in \overline{J_X} \cup \overline{J_Y}$ and pairs $i,i'$ in  $\overline{J_X}$ and $\overline{J_Y}$, we conclude the following.

For \emph{all but a $(t_X^2 + t_Y^2) \cdot 2^{-C}$-fraction} of $\beta_X$ (resp.\ $\beta_Y$), the following
$D_\infty$-bounds hold under $\mathcal{E}$:
\begin{align}
\forall i\in\overline{J_X}:\quad
&D_\infty\big(\mathrm{Law}(x_i\mid \mathcal{E})\ \Vert\ \Theta\big)\ & \le\ k+ b + C & < k + 2C \\
\forall\,i\neq i'\in\overline{J_X}:\ \
&D_\infty\big(\mathrm{Law}((x_i,x_{i'})\mid \mathcal{E})\ \Vert\ \Theta^{\otimes 2}\big)\ & \le\ 2k+ b +C &< 2k + 2C ,
\label{eq:X-Dinf}\\
\forall j\in\overline{J_Y}:\quad
&D_\infty\big(\mathrm{Law}(y_j\mid \mathcal{E})\ \Vert\ \Theta\big)\ &\le\ k+ b +C & < k+ 2C,\\
\forall\,j\neq j'\in\overline{J_Y}:\ \
&D_\infty\big(\mathrm{Law}((y_j,y_{j'})\mid \mathcal{E})\ \Vert\ \Theta^{\otimes 2}\big)\ & \le\ 2k+ b +C &< 2k + 2C.
\label{eq:Y-Dinf}
\end{align}
We henceforth fix $\beta_X,\beta_Y$ with
these properties; all expectations below are conditioned on $\mathcal{E}$, and $(x_i)$ and $(y_j)$
for unfixed indices remain independent.

\paragraph{Step 2: Variance expansion under $\mathcal{E}$.}
Rewrite \eqref{eq:expandf} as 
\[
\widehat{F}\ :=\ \sum_{i\in\overline{J_X}}\sum_{j\in\overline{J_Y}} f(x_i,y_j),
\qquad\text{so that}\qquad
F\ =\ \widehat{F}\ +\ \text{(a constant under $\mathcal{E}$)}.
\]
Thus $\Var(F\mid\mathcal{E})=\Var(\widehat{F}\mid\mathcal{E})$, and
\begin{align}
\Var(\widehat{F}\mid\mathcal{E})
&=\sum_{i\in\overline{J_X},\,j\in\overline{J_Y}} \E\big[f(x_i,y_j)^2\mid\mathcal{E}\big]
+\sum_{i\in\overline{J_X},\,j\neq j'} \E\big[f(x_i,y_j)f(x_i,y_{j'})\mid\mathcal{E}\big]\notag\\
&\quad+\sum_{i\neq i',\,j\in\overline{J_Y}} \E\big[f(x_i,y_j)f(x_{i'},y_j)\mid\mathcal{E}\big]
+\sum_{i\neq i',\,j\neq j'} \E\big[f(x_i,y_j)f(x_{i'},y_{j'})\mid\mathcal{E}\big]\notag\\
&\quad-\Big(\sum_{i\in\overline{J_X},\,j\in\overline{J_Y}} \E\big[f(x_i,y_j)\mid\mathcal{E}\big]\Big)^2.
\label{eq:var-expand}
\end{align}
Since $|f| =  1$, the first term contributes exactly $|\overline{J_X}|\cdot| \overline{J_Y}|$.  We will argue that the remaining terms are negligible.

\paragraph{Step 3: Bounding the correlation terms.}
We now bound \emph{each} expectation in the remaining three lines of
\eqref{eq:var-expand} by an application of Lemma~\ref{lem:pair-robust}, Lemma~\ref{lem:disc-Dinfty} and Lemma~\ref{lem:pairpair-from-disc}.

\smallskip
\emph{(i) Same $i$, different $j\neq j'$.}
Fix $i\in\overline{J_X}$ and $j\neq j'\in\overline{J_Y}$. Under $\mathcal{E}$, the random variables
$X:=x_i$ and $(Y_1,Y_2):=(y_j,y_{j'})$ are independent. By using \eqref{eq:X-Dinf}--\eqref{eq:Y-Dinf} along with Lemma~\ref{lem:pair-robust} (applied conditionally on $\mathcal{E}$) yields
\begin{equation}\label{eq:same-i}
\big|\E[f(x_i,y_j)f(x_i,y_{j'})\mid\mathcal{E}]\big|
\ \le\ \Disc_\Theta(f)\cdot 2^{\,3k+8C + 1}\;+\;3\cdot 2^{-C}.
\end{equation}

\smallskip
\emph{(ii) Same $j$, different $i\neq i'$.}
By symmetry, the same bound as \eqref{eq:same-i} holds:
\begin{equation}\label{eq:same-j}
\big|\E[f(x_i,y_j)f(x_{i'},y_j)\mid\mathcal{E}]\big|
\ \le\ \Disc_\Theta(f)\cdot 2^{\,3k+8C+1}\;+\;3\cdot 2^{-C}.
\end{equation}

\smallskip
\emph{(iii) Different $i\neq i'$, different $j\neq j'$.}
Here $(x_i,x_{i'})$ is independent of $(y_j,y_{j'})$ under $\mathcal{E}$; by
\eqref{eq:X-Dinf}--\eqref{eq:Y-Dinf},
\[
D_\infty\big((x_i,x_{i'})\Vert\Theta^{\otimes 2}\big)\le 2k+2C,\qquad
D_\infty\big((y_j,y_{j'})\Vert\Theta^{\otimes 2}\big)\le 2k+2C.
\]
Apply Lemma~\ref{lem:pairpair-from-disc} we get
\begin{equation}\label{eq:diff-diff}
\big|\E[f(x_i,y_j)f(x_{i'},y_{j'})\mid\mathcal{E}]\big|
\ \le\ \Disc_\Theta(f)\cdot 2^{\,4k+6C}+ 2 \cdot 2^{-C}.
\end{equation}

\smallskip
\emph{(iv) Mean term.}
By Lemma~\ref{lem:disc-Dinfty} with $D_\infty(x_i\Vert\Theta)\le k + 2C$ and
$D_\infty(y_j\Vert\Theta)\le k+2C$ we have
\begin{equation}\label{eq:mean-term}
\big|\E[f(x_i,y_j)\mid\mathcal{E}]\big|
\ \le\ \Disc_\Theta(f)\cdot 2^{\,2k +4C}.
\end{equation}

\paragraph{Step 3: Assemble \eqref{eq:var-expand}.}
Using $f^2\le 1$ for the diagonal and \eqref{eq:same-i}–\eqref{eq:mean-term} for the other sums,
\begin{align*}
\Var(\widehat{F}\mid\mathcal{E})
&\ge |\overline{J_X}| | \overline{J_Y}|
 - t_X\,t_Y(t_Y-1)\,\Big(\Disc_\Theta(f) 2^{6k+8C}+3\cdot 2^{-C}\Big)\\
&\quad - t_Y\,t_X(t_X-1)\,\Big(\Disc_\Theta(f) 2^{6k+8C}+3\cdot 2^{-C}\Big)\\
&\quad - t_X(t_X-1)\,t_Y(t_Y-1)\,\Big(\Disc_\Theta(f) 2^{4k+6C} + 2\cdot 2^{-C}\Big)\\
&\quad - \big(t_X t_Y\cdot \Disc_\Theta(f) 2^{2k+4C}\big)^2.
\end{align*}
We obtain
\[
\Var(\widehat{F}\mid\mathcal{E})
\ \ge\ |\overline{J_X}| \cdot |\overline{J_Y}|\ -\ t_X^2 t_Y^2\,
\Big(\ \Disc_\Theta(f)\cdot 2^{\,6k+8C}\ +\ 3\cdot 2^{-C}\ \Big).
\]

By construction of the typical $\beta_X,\beta_Y$, $\Pr[\mathcal{E}]\ge 1-2^{-C}-2^{-C}\ge \frac{1}{2}$
(for $C\ge 1$). Therefore,
\[
\Var(F)\ \ge\ \Pr[\mathcal{E}]\,\Var(\widehat{F}\mid\mathcal{E})
\ \ge\ \tfrac{1}{2}\left(|\overline{J_X}| \cdot |\overline{J_Y}| \ -\ t_X^2 t_Y^2\,
\Big(\ \Disc_\Theta(f)\cdot 2^{\,6k+8C}\ +\ 3\cdot 2^{-C}\ \Big)\right).
\]
Recall that we set $C = 4 \log (t_X \cdot t_Y)$, and $|\overline{J_X}| > 0.9 t_X$, $|\overline{J_Y}| \geq 0.9 t_Y$. The above bound can be simplified to,

\[
\Var(F)\ \ge\  \tfrac{t_X t_Y}{3} \ -\ 2^{6k} t_X^{34} t_Y^{34}  \,
 \Disc_\Theta(f)\ .
\]

\end{proof}


\paragraph{General Distributions}
We restate Lemma \ref{lem:variance} here and finish the proof.
\begin{lemma} (restatement of Lemma \ref{lem:variance})
\label{lem:variance1}
Fix integers $k, \nbx,\nby$ and a probability distribution $\Theta$ over $[q]$.
Let \(\cD'_A\) be a distribution over \([q]^{\nbx}\) with \(D_\infty(\cD'_A \| \Theta^{\nbx})\le (\nbx/10) \cdot k \),
and let \(\cD'_B\) be a distribution over \([q]^{\nby}\) with \(D_\infty(\cD'_B \| \Theta^{\nby})\le (\nby/10) \cdot k\).

For every function $f: [q] \times [q] \to \{-1,1\}$, we have
\[
  \Var_{x\sim\cD_A,\,y\sim\cD_B}\left[ \sum_{i \in [\nbx], j \in [\nby]} f(x_i,y_j)\right]
  \;\ge\;
  \tfrac{\nbx \nby}{4} \ -\ 2^{6k} \nbx^{34} \nby^{34}  \,
 \Disc_\Theta(f)\ .
\]
\end{lemma}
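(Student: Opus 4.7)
The plan is to reduce the general-distribution case to the conjunctive block-wise dense (cbd) case already handled in Lemma~\ref{lem:cbd-variance}, via the decomposition of Lemma~\ref{lem:cbd-decomp} and the law of total variance.

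First I would set the decomposition parameter $k_0 := k/10$, so that the hypothesis becomes $\dinf(\cD'_A \Vert \Theta^{\nbx}) \le k_0 \cdot \nbx$ and $\dinf(\cD'_B \Vert \Theta^{\nby}) \le k_0 \cdot \nby$. Applying Lemma~\ref{lem:cbd-decomp} to each side yields partitions $[q]^{\nbx} = (\bigcup_i \cS^A_i) \cup \cS^A_{\mathrm{err}}$ and $[q]^{\nby} = (\bigcup_j \cS^B_j) \cup \cS^B_{\mathrm{err}}$ such that each conditional $\cD'_A \mid \cS^A_i$ (respectively $\cD'_B \mid \cS^B_j$) is $10 k_0 = k$-cbd and each error set carries probability at most $1/100$. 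Writing $\mathcal G$ for the event $\{x \notin \cS^A_{\mathrm{err}}\} \cap \{y \notin \cS^B_{\mathrm{err}}\}$, a union bound gives $\Pr[\mathcal G] \ge 49/50$.

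Let $\Omega$ denote the joint component index of $(x,y)$ under $\cD'_A \times \cD'_B$. Since $x$ and $y$ are independent and the two decompositions partition their respective sample spaces, conditioning on any component pair $(i,j)$ with both components non-error preserves independence: $x \mid \Omega = (i,j)$ is $k$-cbd on $[q]^{\nbx}$ and independent of $y \mid \Omega = (i,j)$, which is $k$-cbd on $[q]^{\nby}$. Writing $F(x,y) = \sum_{i,j} f(x_i,y_j)$, by the law of total variance together with nonnegativity of conditional variance,
\[
\Var(F) \;\ge\; \E[\Var(F \mid \Omega)] \;\ge\; \Pr[\mathcal G] \cdot \inf_{\text{good }(i,j)} \Var\!\bigl(F \mid \Omega = (i,j)\bigr).
\]

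Applying Lemma~\ref{lem:cbd-variance} to each good pair gives the cbd lower bound $\tfrac{\nbx \nby}{3} - 2^{6k}\nbx^{34}\nby^{34}\Disc_\Theta(f)$. If this quantity is nonnegative, then using $\Pr[\mathcal G] \ge 49/50$ together with $(49/50)(\nbx\nby/3) \ge \nbx\nby/4$ (as $49/150 > 1/4$) gives the desired bound, since multiplying the (nonpositive) error term by $49/50 \le 1$ only weakens the subtraction. If the cbd bound is negative, then $2^{6k}\nbx^{34}\nby^{34}\Disc_\Theta(f) > \nbx\nby/3 > \nbx\nby/4$, so the target bound is itself negative and holds trivially because variance is nonnegative. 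The only real care required is lining up the decomposition parameters, which is precisely why the hypothesis was stated with the $1/10$ slack matching the $10$-fold blow-up in Lemma~\ref{lem:cbd-decomp}; beyond this bookkeeping no new ideas are needed.
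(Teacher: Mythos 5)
Your proof is correct and follows the paper's approach exactly: apply the cbd decomposition (Lemma~\ref{lem:cbd-decomp}) to each marginal with the rescaled parameter $k_0 = k/10$, then invoke the law of total variance together with Lemma~\ref{lem:cbd-variance} on each product piece, discarding the $\le 2/100$-mass error set. If anything, your parameter bookkeeping (obtaining $k$-cbd pieces and hence the factor $2^{6k}$) is cleaner than the paper's own, which writes ``$20k$-cbd'' and ``$2^{120k}$'' along the way despite arriving at the same final statement.
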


\begin{proof}
Apply Lemma~\ref{lem:cbd-decomp} to each of \(\cD'_A\) and \(\cD'_B\), producing partitions
\(\{\cS^{A}_i\}\) and \(\{\cS^{B}_j\}\). Then \([q]^{\nbx}\times [q]^{\nby}\) decomposes as
\[
  \bigcup_{i,j} \big(\cS^{A}_i\times \cS^{B}_j\big)\;\cup\; T_{\mathrm{err}},
  \qquad
  \Pr[T_{\mathrm{err}}]\;\le\; 2/100.
\]
On each piece \(\cS^{A}_i\times \cS^{B}_j\), both conditionals are  \(20k\)-cbd. so
Lemma~\ref{lem:cbd-variance} gives
\[ \Var\big[F\mid x\in \cS^{A}_i,\,y\in \cS^{B}_j\big]  \ge\  \tfrac{t_X t_Y}{3} \ -\ 2^{120k} t_X^{34} t_Y^{34}  \,
 \Disc_\Theta(f)\ .\]
Variance decreases under conditioning, hence
\begin{align*}
  \Var_{x\sim\cD'_A,y\sim\cD'_B}[F(x,y)]
  &\ge \sum_{i,j} \Pr[x\in\cS^{A}_i \wedge y\in\cS^{B}_j]\cdot
        \Var\big[F\mid x\in \cS^{A}_i,\,y\in \cS^{B}_j\big]\\
  &\ge \Big(1-\Pr[T_{\mathrm{err}}]\Big)\cdot \left(\tfrac{\nbx \nby}{3} \ -\ 2^{120k} t_X^{38} t_Y^{38}  \,
 \Disc_\Theta(f)\ \right) \\ 
 & \;\ge\; \tfrac{\nbx \nby}{4} \ -\ 2^{120k} t_X^{34} t_Y^{34}  \,
 \Disc_\Theta(f)\  ,
\end{align*}
 This completes the proof of
Lemma~\ref{lem:variance}.
\end{proof}

\subsection{Expectations of pairwise products}\label{sec:disctensor}
In this section we show that discrepancy has some \emph{tensoring properties}. Specifically, we already saw that if $f:[q] \times [q] \rightarrow \{1,-1\}$ has small discrepancy, then $\E[f(X,Y)]$ is small when $X, Y$ have controlled max-divergences. We show that the same extends to coupled products of the form $f(X_1,Y_1) f(X_2, Y_2)$ or $f(X,Y_1) f(X,Y_2)$ also have low bias when $(X_1, X_2), (Y_1, Y_2)$ are independent and have controlled max-divergences. 
\begin{lemma}[Decoupled pairwise product is small from single-pair discrepancy]
\label{lem:pairpair-from-disc}
Let $\Theta$ be a baseline distribution on a finite $[q]$, and let
$f:[q]\times[q]\to\{-1,1\}$. Suppose $(X_1,X_2)$ and $(Y_1,Y_2)$ are
\emph{independent} $[q]^2$-valued random variables satisfying
\begin{equation}\label{eq:pair-entr}
  D_\infty\!\big(\mathrm{Law}(X_1,X_2)\ \Vert\ \Theta\times\Theta\big)\ \le\ C_X,
  \qquad
  D_\infty\!\big(\mathrm{Law}(Y_1,Y_2)\ \Vert\ \Theta\times\Theta\big)\ \le\ C_Y.
\end{equation}
Then for every $C>0$,
\begin{equation}\label{eq:pairpair-target-correct}
  \Big|\, \E\big[ f(X_1,Y_1)\, f(X_2,Y_2) \big] \,\Big|
  \;\le\;
  2^{\,C_X + C_Y + 2C}\,\disct(f)\;+\;2\cdot 2^{-C}.
\end{equation}
\end{lemma}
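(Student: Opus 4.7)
The plan is to reduce the bound to the single-pair statement of \cref{lem:disc-Dinfty} by conditioning on $(X_2,Y_2)$. Since $(X_1,X_2)$ is independent of $(Y_1,Y_2)$, conditioning on $X_2=x_2$ and $Y_2=y_2$ leaves $X_1$ and $Y_1$ independent, with marginal laws $\mathrm{Law}(X_1\mid X_2=x_2)$ and $\mathrm{Law}(Y_1\mid Y_2=y_2)$. So I would write
\[
\E[f(X_1,Y_1)f(X_2,Y_2)]
\;=\;\E\Big[f(X_2,Y_2)\cdot \E\big[f(X_1,Y_1)\mid X_2,Y_2\big]\Big],
\]
and aim to bound the inner expectation using discrepancy.

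The next step is to identify the ``typical'' values of $(X_2,Y_2)$ on which the conditional marginals have controlled max-divergence against $\Theta$. Starting from
$P(X_1=x_1,X_2=x_2)\le 2^{C_X}\Theta(x_1)\Theta(x_2)$
and dividing by $P(X_2=x_2)$, one gets
\[
D_\infty\!\big(\mathrm{Law}(X_1\mid X_2=x_2)\,\Vert\,\Theta\big)
\;\le\;C_X-\log\!\frac{P(X_2=x_2)}{\Theta(x_2)}.
\]
Let $A_C:=\{x_2:P(X_2=x_2)<2^{-C}\Theta(x_2)\}$; then $P(X_2\in A_C)\le 2^{-C}$, and for $x_2\notin A_C$ we have $D_\infty(\mathrm{Law}(X_1\mid X_2=x_2)\Vert\Theta)\le C_X+C$. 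The symmetric statement holds for $Y_2$ with bound $C_Y+C$ and exceptional mass $\le 2^{-C}$.

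On the typical event $\{X_2\notin A_C\}\cap\{Y_2\notin B_C\}$, the conditional distributions of $X_1$ and $Y_1$ are independent and within max-divergence $C_X+C$ and $C_Y+C$ of $\Theta$, so \cref{lem:disc-Dinfty} gives
\[
\big|\E[f(X_1,Y_1)\mid X_2,Y_2]\big|\;\le\;\mathrm{Disc}_\Theta(f)\cdot 2^{\,C_X+C_Y+2C}.
\]
Using $|f(X_2,Y_2)|\le 1$, the contribution of the typical event to $\E[f(X_1,Y_1)f(X_2,Y_2)]$ is at most the same quantity. The atypical event has probability at most $2\cdot 2^{-C}$ by a union bound, and the integrand there is bounded by $1$ in absolute value, contributing at most $2\cdot 2^{-C}$. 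Adding the two contributions yields exactly the claimed bound
\eqref{eq:pairpair-target-correct}.

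The computation is essentially just bookkeeping; the only substantive point is the typicality argument that converts a joint max-divergence bound on $(X_1,X_2)$ into a marginal max-divergence bound on $X_1\mid X_2=x_2$ for most $x_2$. This is the one step that has to be carried out carefully to obtain the stated exponent $C_X+C_Y+2C$ and the additive slack $2\cdot 2^{-C}$.
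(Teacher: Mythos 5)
Your proof is correct and follows the paper's argument exactly: decouple via conditioning on $(X_2,Y_2)$, establish that all but a $2\cdot2^{-C}$-mass of conditionings keeps the conditional laws of $X_1$ and $Y_1$ within max-divergence $C_X+C$ and $C_Y+C$ of $\Theta$, apply \cref{lem:disc-Dinfty} on the typical event, and absorb the atypical mass. The only cosmetic difference is that you inline the typicality argument, whereas the paper isolates it as a standalone helper (\cref{lem:typical-conditional-Dinf}), which it reuses in \cref{lem:pair-robust}.
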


\begin{proof}
By independence of $(X_1,X_2)$ and $(Y_1,Y_2)$,
\begin{equation}\label{eq:outer-decouple}
  \E\big[f(X_1,Y_1)\,f(X_2,Y_2)\big]
  \;=\;
  \E_{X_2,Y_2}\Big[\, f(X_2,Y_2)\cdot \E\big[f(X_1,Y_1)\mid X_2,Y_2\big] \,\Big].
\end{equation}

\paragraph{Step 1: Typical conditionings for $X_2$ and $Y_2$.}
Apply Lemma~\ref{lem:typical-conditional-Dinf} (the ``typical conditional $D_\infty$'' bound) to the
pairs $(X_1,X_2)$ and $(Y_1,Y_2)$ with the parameter $C>0$. We obtain sets
\[
\mathcal{G}_X\subseteq [q],\qquad \mathcal{G}_Y\subseteq [q],
\]
such that
\begin{align}
&\Pr[X_2\in\mathcal{G}_X]\ \ge\ 1-2^{-C},\qquad
\forall\,x_2\in\mathcal{G}_X:\ 
D_\infty\!\big(\mathrm{Law}(X_1\mid X_2=x_2)\ \Vert\ \Theta\big)\ \le\ C_X + C,
\label{eq:typX}\\
&\Pr[Y_2\in\mathcal{G}_Y]\ \ge\ 1-2^{-C},\qquad
\forall\,y_2\in\mathcal{G}_Y:\ 
D_\infty\!\big(\mathrm{Law}(Y_1\mid Y_2=y_2)\ \Vert\ \Theta\big)\ \le\ C_Y + C.
\label{eq:typY}
\end{align}
Let $\mathcal{G} := \mathcal{G}_X\times \mathcal{G}_Y$ and $\mathcal{B}:=[q]^2\setminus\mathcal{G}$.
By a union bound,
\begin{equation}\label{eq:bad-mass}
\Pr\!\left[(X_2,Y_2)\in\mathcal{B}\right]\ \le\ 2\cdot 2^{-C}.
\end{equation}

\paragraph{Step 2: Bound the inner conditional mean on the good event.}
Fix any $(x_2,y_2)\in\mathcal{G}$. Under this conditioning, $X_1$ and $Y_1$ are independent, with
marginals
\[
P_X := \mathrm{Law}(X_1\mid X_2=x_2),\qquad Q_Y := \mathrm{Law}(Y_1\mid Y_2=y_2),
\]
and by \eqref{eq:typX}–\eqref{eq:typY} they satisfy
\[
D_\infty(P_X\Vert\Theta)\ \le\ C_X + C,\qquad
D_\infty(Q_Y\Vert\Theta)\ \le\ C_Y + C.
\]
Hence Lemma~\ref{lem:disc-Dinfty} (discrepancy bound under $D_\infty$-closeness) yields
\begin{equation}\label{eq:inner-good}
\Big|\,\E\big[f(X_1,Y_1)\ \big|\ X_2=x_2,\ Y_2=y_2\big]\,\Big|
\ \le\ \disct(f)\cdot 2^{\,C_X + C_Y + 2C}.
\end{equation}

\paragraph{Step 3: Finish by averaging and accounting for the bad event.}
From \eqref{eq:outer-decouple}, using $|f|\le 1$, we get
\begin{align*}
\Big|\,\E\big[f(X_1,Y_1)\,f(X_2,Y_2)\big]\,\Big|
&\le \E\Big[\,\big|\E[f(X_1,Y_1)\mid X_2,Y_2]\big|\ \mathbf{1}_{\{(X_2,Y_2)\in\mathcal{G}\}}\Big]
    \;+\; \Pr\big[(X_2,Y_2)\in\mathcal{B}\big]\\
&\le \disct(f)\cdot 2^{\,C_X + C_Y + 2C}\ \Pr\!\left[(X_2,Y_2)\in\mathcal{G}\right]
    \;+\; 2\cdot 2^{-C}\\
&\le \disct(f)\cdot 2^{\,C_X + C_Y + 2C}\ +\ 2\cdot 2^{-C},
\end{align*}
using \eqref{eq:inner-good} on $\mathcal{G}$ and \eqref{eq:bad-mass}. This is exactly
\eqref{eq:pairpair-target-correct}.
\end{proof}

\begin{lemma}[Robust mixed-pair bound via typical conditionings and rectangle discrepancy]
\label{lem:pair-robust}
Let $\Theta$ be a baseline distribution on a finite domain $[q]$, and let
$f:[q]\times[q]\to\{-1,1\}$. Let $X$ and $(Y_1,Y_2)$ be independent random variables with
$X\in[q]$, $(Y_1,Y_2)\in[q]^2$ such that
\[
  D_\infty(X \,\Vert\, \Theta)\ \le\ k
  \qquad\text{and}\qquad
  D_\infty\!\big((Y_1,Y_2) \,\Vert\, \Theta\times\Theta \big)\ \le\ 2k.
\]
Then for every $C>0$,
\begin{equation}\label{eq:target-robust}
  \Big|\,\E\big[f(X,Y_1)\,f(X,Y_2)\big]\,\Big|
  \ \le\ 2^{\,3k+2C}\,\Disc_{\Theta}(f)\ +\ 3\cdot 2^{-C}.
\end{equation}
\end{lemma}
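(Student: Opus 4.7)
The plan is to follow the template of \cref{lem:pairpair-from-disc} while handling the twist that the \emph{same} variable $X$ appears in both factors of $f$. I would first condition on $Y_2$ so that $f(X,Y_2)$ becomes a deterministic sign function of $X$, and then split on that sign in order to reduce the problem to two applications of the single-pair discrepancy bound \cref{lem:disc-Dinfty}.

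Concretely, I would apply the typical-conditional $D_\infty$ bound (used analogously in the proof of \cref{lem:pairpair-from-disc}) to $(Y_1,Y_2)$ with parameter $C$, producing a set $\mathcal{G}_Y \subseteq [q]$ with $\Pr[Y_2\in\mathcal{G}_Y]\ge 1-2^{-C}$ and, for every $y_2\in\mathcal{G}_Y$,
\[
D_\infty\!\bigl(\mathrm{Law}(Y_1\mid Y_2=y_2)\,\Vert\,\Theta\bigr) \le 2k+C.
\]
For such a $y_2$, set $s_{y_2}(x):=f(x,y_2)\in\{-1,1\}$. By independence of $X$ and $(Y_1,Y_2)$, the law of $X$ is unchanged after conditioning on $Y_2=y_2$, and $X$ remains independent of $Y_1\mid Y_2=y_2$. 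Thus
\[
\E\bigl[f(X,Y_1)\,f(X,Y_2)\mid Y_2=y_2\bigr] = \E\bigl[s_{y_2}(X)\cdot f(X,Y_1)\mid Y_2=y_2\bigr].
\]

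The next step is to split on the value of $s_{y_2}(X)$. Let $p:=\Pr[s_{y_2}(X)=+1]$ and $q:=1-p$. On whichever of $\{s_{y_2}(X)=+1\},\{s_{y_2}(X)=-1\}$ has probability at least $2^{-C}$, the conditioning inflates $D_\infty$ against $\Theta$ by at most $C$, so $D_\infty(\mathrm{Law}(X\mid s_{y_2}(X)=\pm 1)\Vert\Theta)\le k+C$, and \cref{lem:disc-Dinfty} (applied to the independent pair $X$ versus $Y_1\mid Y_2=y_2$) gives
\[
\bigl|\E[f(X,Y_1)\mid Y_2=y_2,\,s_{y_2}(X)=\pm 1]\bigr| \le \Disc_\Theta(f)\cdot 2^{(k+C)+(2k+C)} = \Disc_\Theta(f)\cdot 2^{3k+2C}.
\]
If instead one of $p,q$ is below $2^{-C}$, the light branch contributes at most $2^{-C}$ to the combination $pE_+-qE_-$ via the trivial $|f|\le 1$ estimate, while the heavy branch has mass at least $1-2^{-C}\ge 1/2$, inflating $D_\infty$ by at most one additional bit and hence still obeying the same form of bound. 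Combining the two cases yields
\[
\bigl|\E[s_{y_2}(X)\,f(X,Y_1)\mid Y_2=y_2]\bigr| \le \Disc_\Theta(f)\cdot 2^{3k+2C}+2^{-C}
\]
for every $y_2\in\mathcal{G}_Y$. Averaging over $y_2$ and absorbing the $\le 2^{-C}$ mass of $\{Y_2\notin\mathcal{G}_Y\}$ via $|f|\le 1$ then yields the stated bound $\Disc_\Theta(f)\cdot 2^{3k+2C}+3\cdot 2^{-C}$.

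The main obstacle---and the reason this is not a direct corollary of \cref{lem:pairpair-from-disc}---is precisely the repeated appearance of $X$: the pair $(X,X)$ is concentrated on the diagonal of $[q]^2$ and therefore has unbounded $D_\infty$ against $\Theta\times\Theta$, which blocks any black-box reduction. The sign split on $s_{y_2}(X)$ is what replaces this diagonal concentration with an $O(C)$-bit inflation of $D_\infty$ on the heavy sign class plus a negligible $2^{-C}$ loss from the light one; verifying that the resulting case analysis adds up cleanly (rather than accumulating extra factors of $\Disc_\Theta(f)^{1/2}$ or the like) is the only delicate bookkeeping step.
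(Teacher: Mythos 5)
Your proposal is correct and follows essentially the same route as the paper's proof: condition on a typical value of $Y_2$ (via the typical-conditional $D_\infty$ bound), then split on the value of $f(X,y_2)$ — your ``sign split'' on $s_{y_2}(X)$ is exactly the paper's heavy/light bin device, with heavy sign classes inflating $D_\infty(X\Vert\Theta)$ by at most $C$ and light ones absorbed trivially — and finish with \cref{lem:disc-Dinfty} and averaging. The bookkeeping (including the harmless corner case $C<1$, where the stated bound is trivial) matches the paper's, so there is nothing further to fix.
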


\begin{proof}
Write the target as
\begin{equation}\label{eq:outer}
  \E\big[f(X,Y_1)\,f(X,Y_2)\big]
  \;=\; \E_{Y_2}\Big[\, \E\big[\, f(X,Y_1)\,f(X,Y_2)\ \big|\ Y_2\,\big] \,\Big].
\end{equation}

\paragraph{Step 1: Typical values of $Y_2$.}
Apply Lemma~\ref{lem:typical-conditional-Dinf} with the pair $(Y_1,Y_2)$, baseline $\Theta$, and
parameter $C>0$. We obtain a \emph{good} set $\mathcal G_Y\subseteq[q]$ with
\begin{equation}\label{eq:good-y2}
\Pr\big[Y_2\in\mathcal G_Y\big]\ \ge\ 1-2^{-C},\qquad
\forall\,y_2\in\mathcal G_Y:\quad
D_\infty\!\big(\mathrm{Law}(Y_1\mid Y_2=y_2)\ \Vert\ \Theta\big)\ \le\ 2k + C.
\end{equation}

\paragraph{Step 2: Binning $f(X,Y_2)$ and heavy-bin event.}
Call a pair $y_2 \in [q]$ and $z \in \{-1,1\}$ to be \emph{heavy} if, 
$\Pr[f(X,Y_2)=z\mid Y_2=y_2]\ge 2^{-C}$, and \emph{light} otherwise. Define the
\emph{light-bin event}
\[
  \mathsf{Light}\ :=\ \big\{ (X,Y_2):\ (Y_2,f(X,Y_2))\ \text{is light} \big\}.
\]
By definition,
\begin{equation}\label{eq:light-mass}
  \Pr[\mathsf{Light}\mid Y_2=y_2]\ \le\ 2^{-C}
  \quad\text{for every }y_2,
  \qquad\Rightarrow\qquad \Pr[\mathsf{Light}]\ \le\ 2^{-C}.
\end{equation}

\paragraph{Step 3: Conditioning cost for $X$ on heavy bins.}
Since $X$ is independent of $Y_1,Y_2$, conditioning on a
\emph{heavy} $(Y_2=y_2,f(X,y_2) = z)$ reveals at most $C$ bits of information about $X$. Thus
\begin{equation}\label{eq:X-Dinf-heavy}
  \forall\ (y_2,z)\ \text{heavy}:\quad
  D_\infty\!\big(\mathrm{Law}(X\mid Y_2=y_2,\ Z=z)\ \Vert\ \Theta\big)\ \le\ k + C.
\end{equation}
Moreover, $X$ remains independent of $Y_1$ under the conditioning $(Y_2=y_2,f(X,y_2)=z)$ 


\paragraph{Step 4: Bounding the main term on the good/heavy event via discrepancy.}
Split according to the heavy-bin event and the typicality of $y_2$:

\begin{align*}
 \E \big[f(X,Y_1)f(X,y_2)\ \big|\ Y_2=y_2\big] &= \E\big[f(X,Y_1)\, f(X,y_2)\ \mathbf{1}_{\neg\mathsf{Light}}\ \big|\ Y_2=y_2\big] \\
\ & \ \ \ +\ \E\big[f(X,Y_1)\, f(X,y_2)\ \mathbf{1}_{\mathsf{Light}}\ \big|\ Y_2=y_2\big].
\end{align*}

If $y_2\in\mathcal G_Y$ and we are on the heavy event, then by \eqref{eq:good-y2} and
\eqref{eq:X-Dinf-heavy} we have two independent marginals with
\[
D_\infty\big(\mathrm{Law}(X\mid y_2,f(X,y_2))\Vert\Theta\big)\ \le\ k+C,\qquad
D_\infty\big(\mathrm{Law}(Y_1\mid y_2)\Vert\Theta\big)\ \le\ 2k+C,
\]
so Lemma~\ref{lem:disc-Dinfty} (discrepancy bound under $D_\infty$) gives
\[
\Big|\,\E\big[f(X,Y_1)\ \big|\ Y_2=y_2,\ f(X,y_2)\big]\,\Big|
\ \le\ \Disc_\Theta(f)\cdot 2^{\, (k+C) + (2k+C)}\ \le\ \Disc_\Theta(f)\cdot 2^{\,3k+2C}.
\]
Multiplying by $|f(X,y_2)|= 1$ preserves the bound, and averaging over heavy bins (still with
$y_2\in\mathcal G_Y$) yields
\begin{equation}\label{eq:good-heavy-bound}
\Big|\,\E\big[f(X,Y_1)\, f(X,y_2)\big]\ \mathbf{1}_{\neg\mathsf{Light}}\ \big|\ Y_2=y_2\big]\,\Big|
\ \le\ \Disc_\Theta(f)\cdot 2^{\,3k+2C}.
\end{equation}
On the light event, use $|f|\le 1$ and \eqref{eq:light-mass} to get, for any $y_2$,
\begin{equation}\label{eq:light-term}
\Big|\,\E\big[f(X,Y_1)\, f(X,y_2)\big]\ \mathbf{1}_{\mathsf{Light}}\ \big|\ Y_2=y_2\big]\,\Big|
\ \le\ \Pr[\mathsf{Light}\mid Y_2=y_2]\ \le\ 2^{-C}.
\end{equation}

\paragraph{Step 5: Combine and average over $Y_2$.}
Putting  \eqref{eq:good-heavy-bound}, and \eqref{eq:light-term} together, for
$y_2\in\mathcal G_Y$ we have
\[
\Big|\,\E\big[f(X,Y_1)\,f(X,Y_2)\ \big|\ Y_2=y_2\big]\,\Big|
\ \le\ \Disc_\Theta(f)\cdot 2^{\,3k+2C}\ +\ 2^{-C}\ +\ 2^{-C}.
\]
If $y_2\notin\mathcal G_Y$, the inner absolute value is trivially at most $1$. Hence, using
\eqref{eq:outer} and \eqref{eq:good-y2},
\begin{align*}
\Big|\,\E\big[f(X,Y_1)\,f(X,Y_2)\big]\,\Big|
&\le \Pr[Y_2\in\mathcal G_Y]\cdot \Big(\Disc_\Theta(f)\cdot 2^{\,3k+3C+1}+2\cdot 2^{-C}\Big)
\ +\ \Pr[Y_2\notin\mathcal G_Y]\cdot 1\\
&\le \Disc_\Theta(f)\cdot 2^{\,3k+2C}\ +\ 2\cdot 2^{-C}\ +\ 2^{-C} = \Disc_\Theta(f)\cdot 2^{\,3k+2C}\ +\ 3\cdot 2^{-C},
\end{align*}
This is exactly
\eqref{eq:target-robust}.
\end{proof}

\begin{lemma}[Typical conditional $D_\infty$ from joint $D_\infty$]
\label{lem:typical-conditional-Dinf}

Let $\Theta$ be a baseline distribution on a finite domain $[q]$. Let $(X_1,X_2)$ be a pair of
$[q]$-valued random variables with joint law $P_{X_1,X_2}$ such that
\[
D_\infty\!\big(P_{X_1,X_2}\ \Vert\ \Theta\times\Theta\big)\ \le\ 2k,
\quad\text{i.e.}\quad
P_{X_1,X_2}(x_1,x_2)\ \le\ 2^{2k}\,\Theta(x_1)\Theta(x_2)\ \ \text{for all }x_1,x_2\in[q].
\]
Then for every $C>0$, the set of \emph{typical}$^\ast$ values
\[
\mathcal T_C\ :=\ \Bigl\{\,x_2\in[q]\ :\
D_\infty\!\big(P_{X_1\mid X_2=x_2}\ \Vert\ \Theta\big)\ \le\ 2k + C\ \Bigr\}
\]
has $P_{X_2}$-mass at least $1-2^{-C}$. Equivalently,
\[
\Pr_{X_2}\big[\,D_\infty\!\big(P_{X_1\mid X_2}\ \Vert\ \Theta\big)\ >\ 2k + C\,\big]\ \le\ 2^{-C}.
\]
\end{lemma}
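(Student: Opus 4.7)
The plan is to follow essentially the same Markov-style argument used in the proof of Lemma~\ref{lem:typical-Dinf}, but with the baseline $\Theta$ replacing the marginal $P_{X_1}$. Writing out the conditional max-divergence explicitly, we have
\[
D_\infty\!\big(P_{X_1\mid X_2=x_2}\,\Vert\,\Theta\big)
\;=\;\log\max_{x_1}\frac{P_{X_1,X_2}(x_1,x_2)}{P_{X_2}(x_2)\,\Theta(x_1)},
\]
so it is natural to introduce the auxiliary function
\(
g(x_2)\;:=\;\max_{x_1}\frac{P_{X_1,X_2}(x_1,x_2)}{\Theta(x_1)},
\)
and observe that the conditional max-divergence equals $\log\!\big(g(x_2)/P_{X_2}(x_2)\big)$.

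The first key step is to use the joint $D_\infty$ hypothesis entry-wise: for every $x_1,x_2$,
\[
\frac{P_{X_1,X_2}(x_1,x_2)}{\Theta(x_1)}\;\le\;2^{2k}\,\Theta(x_2),
\]
which implies $g(x_2)\le 2^{2k}\Theta(x_2)$ for each $x_2$. Summing over $x_2$ gives the crucial global estimate
\[
\mathbb{E}_{X_2}\!\left[\frac{g(X_2)}{P_{X_2}(X_2)}\right]
\;=\;\sum_{x_2}P_{X_2}(x_2)\cdot\frac{g(x_2)}{P_{X_2}(x_2)}
\;=\;\sum_{x_2}g(x_2)\;\le\;2^{2k}\sum_{x_2}\Theta(x_2)\;=\;2^{2k}.
\]

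The second step is a one-line application of Markov's inequality to the nonnegative random variable $g(X_2)/P_{X_2}(X_2)$: for any $C>0$,
\[
\Pr_{X_2}\!\Big[\,g(X_2)/P_{X_2}(X_2)>2^{2k+C}\,\Big]\;\le\;\frac{2^{2k}}{2^{2k+C}}\;=\;2^{-C}.
\]
Taking logarithms on the event inside the probability translates this into the claimed tail bound on $D_\infty(P_{X_1\mid X_2}\Vert\Theta)$, completing the proof.

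There is no genuine obstacle here: the proof is a clean ``expectation equals sum of maxes, then Markov'' argument, completely parallel to Lemma~\ref{lem:typical-Dinf}. The only minor point to be careful about is that the quantity being controlled inside the max-divergence factorizes as a product of $g(x_2)$ (a function of $x_2$ alone) and $1/P_{X_2}(x_2)$; this is what makes the expectation collapse to $\sum_{x_2}g(x_2)$ and lets the joint $D_\infty$ bound be invoked in a purely pointwise manner.
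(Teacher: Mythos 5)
Your proof is correct and takes essentially the same route as the paper. The key inequality in both is the pointwise bound $\max_{x_1} P_{X_1,X_2}(x_1,x_2)/\Theta(x_1)\le 2^{2k}\,\Theta(x_2)$, and your explicit Markov step on $g(X_2)/P_{X_2}(X_2)$ is equivalent to the paper's direct mass bound on the ``light'' set $\{x_2: P_{X_2}(x_2)<2^{-C}\Theta(x_2)\}$.
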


\smallskip
\noindent{\bf Proof.}
Fix $x_2\in[q]$. For any $x_1\in[q]$,
\[
\frac{P_{X_1\mid X_2=x_2}(x_1)}{\Theta(x_1)}
\;=\;\frac{P_{X_1,X_2}(x_1,x_2)}{\Theta(x_1)\,P_{X_2}(x_2)}
\;\le\; \frac{2^{2k}\,\Theta(x_1)\Theta(x_2)}{\Theta(x_1)\,P_{X_2}(x_2)}
\;=\; 2^{2k}\cdot \frac{\Theta(x_2)}{P_{X_2}(x_2)}.
\]
Taking the maximum over $x_1$ gives
\begin{equation}\label{eq:cond-ratio}
\max_{x_1}\frac{P_{X_1\mid X_2=x_2}(x_1)}{\Theta(x_1)}
\;\le\; 2^{2k}\cdot \frac{\Theta(x_2)}{P_{X_2}(x_2)}.
\end{equation}
Therefore, whenever $x_2$ satisfies
\begin{equation}\label{eq:typical-x2}
P_{X_2}(x_2)\ \ge\ 2^{-C}\,\Theta(x_2),
\end{equation}
we get from \eqref{eq:cond-ratio} that
\[
\max_{x_1}\frac{P_{X_1\mid X_2=x_2}(x_1)}{\Theta(x_1)}\ \le\ 2^{2k+C},
\quad\text{i.e.}\quad
D_\infty\!\big(P_{X_1\mid X_2=x_2}\ \Vert\ \Theta\big)\ \le\ 2k + C.
\]
Thus $\{x_2:\text{$x_2$ is typical}\}\supseteq \{x_2:\ P_{X_2}(x_2)\ge 2^{-C}\Theta(x_2)\}$.

It remains to show that \eqref{eq:typical-x2} fails with probability at most $2^{-C}$ under
$P_{X_2}$. Define the ``light'' set
$L:=\{x_2:\ P_{X_2}(x_2)< 2^{-C}\Theta(x_2)\}$. Then
\[
\Pr[X_2\in L]\ =\ \sum_{x_2\in L} P_{X_2}(x_2)
\ <\ 2^{-C}\sum_{x_2\in L} \Theta(x_2)
\ \le\ 2^{-C}\sum_{x_2\in[q]} \Theta(x_2)
\ =\ 2^{-C}.
\]
Hence $\Pr[X_2\notin L]\ge 1-2^{-C}$, and for all $x_2\notin L$ we have
$D_\infty(P_{X_1\mid X_2=x_2}\Vert \Theta)\le 2k+C$ as shown above. This proves the claim.
\hfill$\Box$

\section{One-way communication lower bound for parities}

\begin{theorem}\label{th:iponeway}
There is an absolute constant $c_0>0$ such that the following holds. Let $2^{-c_0 n} <\varepsilon<\tfrac13$ and $2^{- n}<\delta<\tfrac12$. Consider $f(x,y)=(-1)^{\langle x,y\rangle}$ on $\{0,1\}^n\times\{0,1\}^n$. Any one-way public-coin protocol that, on every pair of inputs $(p,q)$ (Alice has $p$, Bob has $q$), outputs an estimate of $\E_{x\sim p,\,y\sim q}[f(x,y)]$ within additive error~$\varepsilon$ with success probability at least $1-\delta$, must send a message of length
\[
\Delta \;=\; \Omega\!\Big(\frac{n\,\log_2(1/4\delta(1-\delta))}{\varepsilon^2}\Big).
\]
\end{theorem}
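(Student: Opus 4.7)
I would obtain this one-way lower bound by embedding an $n$-fold direct sum of Gap-Hamming instances into $\EE_{\ip}$, exploiting the Hadamard structure of $\ip$, and then invoking a strong direct-sum lower bound whose per-coordinate cost is controlled by a Bhattacharyya-sharpened Chakrabarti--Regev bound. First I would recast the estimation task in Fourier terms: if Bob takes $q=\tfrac12(\delta_0+\delta_y)$ for any $y\in\{0,1\}^n$, then $\E_{p,q}[\ip]=\tfrac12\bigl(1+\hat p(y)\bigr)$ with $\hat p(y)=\sum_x p(x)(-1)^{\langle x,y\rangle}$. Hence an $\varepsilon$-accurate protocol of length $\Delta$ at confidence $1-\delta$ lets Bob estimate any chosen Fourier coefficient of Alice's distribution to additive error $2\varepsilon$ from Alice's single message alone.

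Next I would plant $n$ independent Gap-Hamming-style instances at scale $\varepsilon$. For each coordinate $\ell\in[n]$, Alice is given $u^{(\ell)}\in\{\pm1\}^K$ with $K=\Theta(\log(1/4\delta(1-\delta))/\varepsilon^2)$, and her distribution $p$ is taken as the empirical measure of $K$ samples from a product Bernoulli on $\{0,1\}^n$ whose $\ell$-th coordinate bias encodes $u^{(\ell)}$; then $\hat p(e_\ell)$ lies within $O(\varepsilon)$ of the empirical mean of $u^{(\ell)}$ with high probability. Using Bob's vectors $v^{(\ell)}\in\{\pm1\}^K$ to choose appropriate singleton-supported distributions $q$ on $\{0,e_1,\dots,e_n\}$, the protocol-supplied estimate of $\hat p(e_\ell)$ within $2\varepsilon$ decides the sign of the $\ell$-th Gap-Hamming instance with advantage $1-2\delta$.

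Finally, a strong direct-sum lower bound closes the argument. Each single-coordinate Gap-Hamming instance on $K$ bits requires $\Omega(\log(1/4\delta(1-\delta))/\varepsilon^2)$ bits of one-way communication at confidence $1-\delta$, via a Bhattacharyya strengthening of the Chakrabarti--Regev bound (the factor $-\log(4\delta(1-\delta))$ reflects amplification to confidence $1-\delta$ against a uniform null). Because the $n$ instances are statistically independent and Bob's coordinate $\ell$ is chosen after reading Alice's message, an information-cost direct-sum theorem aggregates them to the total one-way bound $\Delta=\Omega(n\log(1/4\delta(1-\delta))/\varepsilon^2)$, matching the claim.

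The main obstacle is making the Bhattacharyya-strengthened direct sum rigorous: a na\"ive Fano-style analysis delivers only the Shannon-capacity factor $1-H_2(\delta)$ instead of the sharp $-\log(4\delta(1-\delta))$. To obtain the sharp factor one must pass through the squared-Hellinger distance between the message distributions under the planted and null ensembles and tensorize across the $n$ coordinates of the direct sum, in the spirit of the Bar-Yossef--Jayram--Kumar--Sivakumar information-cost paradigm adapted to Hellinger divergences. The hypothesis $\varepsilon>2^{-c_0 n}$ ensures $K\ll 2^n$ so that the planted empirical ensemble remains well-defined and the Hoeffding concentration of $\hat p(e_\ell)$ around the target is uniform in $\ell$.
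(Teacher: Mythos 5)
The reduction you propose does not go through, and the gap is structural. In $\EE_{\ip}$ with $q=\tfrac12(\delta_0+\delta_{e_\ell})$, the quantity the protocol estimates is $\tfrac12(1+\hat p(e_\ell))$, a function of Alice's input alone. Your planting makes $\hat p(e_\ell)\approx \bar u^{(\ell)}=\tfrac1K\sum_k u^{(\ell)}_k$, the \emph{mean} of $u^{(\ell)}$. But the Gap-Hamming instance at coordinate $\ell$ depends on the coupled correlation $\langle u^{(\ell)},v^{(\ell)}\rangle$, which depends on Bob's vector $v^{(\ell)}$; knowing $\bar u^{(\ell)}$ to within $2\varepsilon$ gives no handle on $\langle u^{(\ell)},v^{(\ell)}\rangle$. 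More fundamentally, \jee estimates an expectation over the \emph{product} $p\times q$, whereas Gap-Hamming measures a correlation in which Alice's and Bob's bits are coupled through the shared index $k$; this coupling has no counterpart in the estimation task, so $v^{(\ell)}$ never enters the transcript. The consequence is that your planted ensemble carries only $n$ scalars $\{\bar u^{(\ell)}\}$ — roughly $n\log(1/\varepsilon)$ bits of entropy — which is far below the $\Omega(n/\varepsilon^2)$ you want to force, and no direct-sum machinery (Hellinger/BJKS or otherwise) can lower-bound a protocol by more than the information actually embedded in its inputs. The invocation of a ``Bhattacharyya strengthening'' of Chakrabarti--Regev to get the $\log(1/\delta)$ factor is also not a result I can point to, but that is secondary: even granting it, the reduction above would still not produce the instance to apply it to.

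The paper's proof is a different kind of argument: an encoding (Fano-style) lower bound. Alice's distribution is $p_{S,y}\propto \prod_{i=1}^N\big(1+\varepsilon y_i(-1)^{\langle a_i,x\rangle}\big)$ with $N=\alpha n/\varepsilon^2$ i.i.d.\ uniform frequencies $a_i$ and random signs $y_i$. A second-moment calculation shows that, with high probability, the planted frequencies $S=\{a_i\}$ are exactly those at which $|\widetilde p_{S,y}|\gtrsim\varepsilon$, and there are few spurious ones. Since the one-way message enables Bob (with $q=\delta_\beta$) to estimate $\widetilde p_{S,y}(\beta)$ to additive error $\varepsilon/20$ at every $\beta$, it lets one recover $S$ as an $N$-subset of a short list $\widetilde B$ of candidate frequencies. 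Comparing $\log_2\binom{2^n}{N}$ to $\log_2\binom{|\widetilde B|}{N}$ via a decoding lemma yields $\Delta=\Omega(N\log(1/\delta))=\Omega(n\log(1/\delta)/\varepsilon^2)$. The key structural difference from your proposal is that one plants $\Theta(n/\varepsilon^2)$ frequencies drawn from the full $2^n$-sized Fourier domain — not $n$ singleton coordinates — and that the hardness lives entirely in Alice's input; Bob's input plays no role beyond probing. If you want to pursue a reduction route, you should look for a reduction to an indexing/set-recovery problem (Bob's distribution is a point mass, and the quantity to be estimated is a scalar function of Alice's input alone), not to Gap-Hamming.
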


Note that there is a trivial upper-bound of $O(n \log(1/\delta)/\eps^2)$: Alice can sample $\log(1/\delta)/\eps^2$ points according to $p$, and send them across to Bob. Bob can then compute the average value in batches of size $O(1/\eps^2)$ and take the median to get an estimate within error $\eps$ and $1-\delta$ probability of success. The above bound says that this protocol is essentially optimal. Further, we can assume $\delta > 2^{-c_0 n}$ (by allowing a worse constant hidden in the $\Omega$).
\begin{proof}
We work with the \emph{parity spectrum} $\widetilde p(\beta):=\E_{x\sim p}[(-1)^{\langle \beta,x\rangle}]$. Note that for any $q$,
\[
\E_{p,q}[f] \;=\; \E_{y\sim q}\,\widetilde p(y),
\]
so in particular if $q=\delta_\beta$ then the protocol returns an estimate of $\widetilde p(\beta)$.

\paragraph{Hard distribution of Alice's input.}
Let $N:=\alpha n/\varepsilon^2$ with a constant $\alpha\in(0,\tfrac12)$ to be set shortly. Draw $S=(a_1,\ldots,a_N)$ with the $a_i\in\F_2^n$ i.i.d.\ uniform (collisions occur with prob.\ $o(1)$ and can be ignored). We will show how to use the protocol for estimating $\tilde{p}$ to encode $S$ in a compressed manner. This will lead to a lower bound on the cost of the protocol. 

Let $y\in\{\pm1\}^N$ be i.i.d.\ Rademachers, independent of~$S$. Define
\[
\phi_{S,y}(x)\ :=\ \prod_{i=1}^N\big(1+\varepsilon\,y_i\,(-1)^{\langle a_i,x\rangle}\big)\,,
\qquad
p_{S,y}(x)\ :=\ \frac{2^{-n}\,\phi_{S,y}(x)}{\E_{x\sim\mu}[\phi_{S,y}(x)]}\,,
\]
where $\mu$ is the uniform measure on $\{0,1\}^n$. Since $\varepsilon<1$, $\phi_{S,y}(x)\ge0$ and $p_{S,y}$ is a valid distribution.

Write the uniform–Fourier transform $\widehat\phi(\beta):=\E_{x\sim\mu}[\phi_{S,y}(x)(-1)^{\langle\beta,x\rangle}]$. Then
\[
\widetilde p_{S,y}(\beta)\ =\ \E_{x\sim p_{S,y}}[(-1)^{\langle\beta,x\rangle}]
\ =\ \frac{\widehat\phi_{S,y}(\beta)}{\E_{x\sim\mu}[\phi_{S,y}(x)]}.
\]

\paragraph{Second-moment bounds.}

Expanding $\phi$,
\[
\phi_{S,y}(x)=\sum_{I\subseteq[N]}\varepsilon^{|I|}\prod_{i\in I} y_i\cdot (-1)^{\langle \sum_{i\in I} a_i,\,x\rangle}.
\]
Let
\[
\mathrm{err}_0(S,y):=\E_{x\sim\mu}[\phi_{S,y}(x)]-1
=\sum_{\emptyset\neq I\subseteq[N]} \varepsilon^{|I|}\prod_{i\in I} y_i\cdot \mathbf 1\!\left[\sum_{i\in I} a_i=0\right].
\]
A standard computation using $\E_y[\prod_{i\in I} y_i\prod_{i\in J} y_i]=\mathbf 1[I=J]$ and
$\Pr_S[\sum_{i\in I} a_i=0]=2^{-n}$ for $I\neq\emptyset$ gives
\[
\E_{S,y}\big[\mathrm{err}_0(S,y)^2\big]
= \big((1+\varepsilon^2)^N-1\big)\,2^{-n}\ \le\ e^{\varepsilon^2 N}\,2^{-n}.
\]
Similarly, for any fixed $j\in[N]$ let
\[
\mathrm{err}_j(S,y):=\widehat\phi_{S,y}(a_j)-\varepsilon y_j.
\]
By the same second-moment expansion (omitted for brevity), there is a universal $C>0$ such that
\[
\E_{S,y}\big[\mathrm{err}_j(S,y)^2\big]\ \le\ C\,e^{\varepsilon^2 N}\,2^{-n}.
\]
Therefore, choosing any $\gamma\in(0,\tfrac12-\alpha/\ln 2)$ and applying Markov/Chebyshev plus a union bound over $j\in[N]$, with probability at least $1-2^{-\gamma n}$ over $(S,y)$ we have simultaneously
\[
\big|\mathrm{err}_0(S,y)\big|\le 2^{-\gamma n},\qquad 
\big|\mathrm{err}_j(S,y)\big|\le \varepsilon/20\quad(\forall j\in[N]).
\]
On this event,
\[
\widetilde p_{S,y}(a_j)
=\frac{\varepsilon y_j+\mathrm{err}_j}{1+\mathrm{err}_0}
=\varepsilon y_j \pm \varepsilon/10\qquad(\forall j\in[N]).
\]

Next, for any $\beta$, 
$$ \widehat{\phi_{S,y}}(\beta) = \sum_{I \subseteq [N]} \epsilon^{|I|} \cdot \prod_{i \in I} y_i \cdot 1(\sum_{i \in I} a_i = \beta).$$
Thus, 
\begin{align*}
    \E_{S,y}[\widehat{\phi_{S,y}}(\beta)^2] &= \sum_{\ell=1}^N \epsilon^{2\ell} \binom{N}{\ell} 2^{-n} \leq e^{\eps^2 N} 2^{-n}.
\end{align*}

Hence, by Markov,
\[
\E_{S,y}\big[\,|\{\beta:\ |\widehat\phi_{S,y}(\beta)|>\varepsilon/20\}|\,\big]
\ \le\ M\ :=\ C_2\,\frac{e^{\varepsilon^2 N}}{\varepsilon^2}.
\]
With probability at least $9/10$ over $(S,y)$,
\begin{equation}\label{eq:bad}
|\mathrm{Big}(S,y)|\ :=\ |\{\beta:\ |\widehat\phi_{S,y}(\beta)|>\varepsilon/20\}|
\ \le\ 10M.
\end{equation}
Since $|\mathrm{err}_0|\le 2^{-\gamma n}$, the same bound (with constants adjusted) holds if we define $\mathrm{Big}$ via $|\widetilde p_{S,y}(\beta)|>\varepsilon/10$.

\paragraph{Decoding $S$ from one message.}
Fix $(S,y)$ in the good event above. Consider the one-way public-coin protocol run with Alice’s input $p_{S,y}$. Let $m$ be Alice’s message (a function of $p_{S,y}$ and the public coins). For each $\beta\in\{0,1\}^n$, Bob can, using fresh private coins and the \emph{same} message $m$, run the protocol with $q=\delta_\beta$ to obtain an estimate $\widetilde p_{S,y}(\beta)\pm \varepsilon/20$; define
\[
\widetilde B(S,y)\ :=\ \big\{\beta:\ |\text{Bob's estimate of }\widetilde p_{S,y}(\beta)|>\varepsilon/10\big\}.
\]
By the per-input success guarantee and linearity of expectation,
\[
\E[\,|S\cap \widetilde B(S,y)|\,]\ \ge\ N(1-\delta),\qquad
\E[\,|\widetilde B(S,y)\setminus \mathrm{Big}(S,y)|\,]\ \le\ \delta\,2^n.
\]
Markov’s inequality implies that, with probability at least $4/5$ over the protocol coins,
\[
|S\cap \widetilde B(S,y)|\ \ge\ N- O(N\delta),\qquad
|\widetilde B(S,y)|\ \le\ |\mathrm{Big}(S,y)|+O(\delta\,2^n).
\]

First, assume that $\delta < \eps^2/n$. For any $1/2 > \delta > \eps^2/n$ (example $\delta = \Omega(1)$), we can reduce to the other case by amplifying the success probability to at least $1 - n/\eps^2$ by repeating the protocol $O(1) (\log(n/\eps^2)/\log(1/4\delta(1-\delta)))$ times.

Choose the constant in the upper bound on $N$ so that $N\delta\le 1/10$, hence $S\subseteq \widetilde B(S,y)$ w.h.p., i.e., no “misses.” Combining with \eqref{eq:bad} we get, with probability at least $4/5$ over the protocol coins,
\[
|\widetilde B(S,y)|\ \le\ 10M + 10\delta\,2^n.
\]
Given $m$ and the public coins, Alice can finish identifying $S$ to Bob by sending the index of $S$ as an $N$-subset of $\widetilde B(S,y)$, using an additional
\[
\log_2\binom{|\widetilde B(S,y)|}{N}\ \le\ \log_2\binom{10M+10\delta\,2^n}{N}
\]
bits. Thus, with probability at least $4/5$, $S$ is recovered from a total of
\[
\Delta\ +\ \log_2\binom{10M+10\delta\,2^n}{N}
\]
bits.

 We can now apply \cref{clm:encdec} with $A = 2^n, K = N$ and $\tau = 4/5$. Thus, we must have 

\[
\Delta\ \ge\ \log_2\binom{2^n}{N}\ -\ \log_2\binom{10M+10\delta\,2^n}{N}\ -\ O(1).
\]

\paragraph{Parameter choice and conclusion.}
Take $N=\alpha n/\varepsilon^2$ with
\[
0<\alpha<\min\!\big\{\tfrac12,\ (1-c_0)\ln 2\big\},
\]
so that
\[
M \;=\; \Theta\!\Big(\frac{e^{\varepsilon^2 N}}{\varepsilon^2}\Big)
\;=\; \Theta\!\Big(\frac{e^{\alpha n}}{\varepsilon^2}\Big)
\;\ll\; \delta\,2^n,
\]
as from our hypothesis we have $\delta > 2 ^{-c_0 n}$.

Hence $10M+10\delta 2^n= (10+o(1))\,\delta\,2^n$. Using Stirling’s approximation in the “small-$N$” regime,
\[
\log_2\binom{M_1}{N}-\log_2\binom{M_2}{N}
= N\log_2\!\frac{M_1}{M_2}\ \pm o(N)\qquad(M_1,M_2\gg N),
\]
we get
\[
\Delta \ \ge\ (1-o(1))\,N\log_2\frac{2^n}{10 \delta\,2^n}\ =\ (1-o(1))\,N\log_2\frac{1}{10 \delta}
\ =\ \Omega\!\Big(\frac{n\,\log_2(1/\delta)}{\varepsilon^2}\Big).
\]

Finally, if $\delta\in(\varepsilon^2/n,1/2)$, amplify to $\delta'=\Theta(\varepsilon^2/n)$ via the median of $t= \Theta\!\big(\log(n/\varepsilon^2)\big/\log(1/(4\delta(1-\delta)))\big)$ independent runs, apply the bound at $\delta'$, and divide by $t$ to obtain the stated dependence on $\log(1/\delta)$.
\end{proof}

The following standard fact follows for example from Fano's inequality:
\begin{claim}[Decoding Lemma]\label{clm:encdec}
Suppose we have encoder, decoder functions $E:2^{[A]} \rightarrow \{0,1\}^M$, and $D:\{0,1\}^M \rightarrow 2^{[A]}$ such that for a uniformly random set $S$ of $[A]$ of size $K$, we have $\Pr[D(E(S)) = S] \geq \tau$, then $M \geq \log_2 \binom{A}{K} + \log_2 \tau$. 
\end{claim}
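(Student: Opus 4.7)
The plan is to prove this by a direct counting argument rather than invoking Fano's inequality, since for a deterministic encoder/decoder pair the bound has a one-line combinatorial proof. First I would observe that the decoder $D$ is a function, so the image $D(\{0,1\}^M) \subseteq 2^{[A]}$ has size at most $2^M$. Let $\mathcal{G} = \{S \subseteq [A] : |S| = K,\ D(E(S)) = S\}$ denote the set of $K$-subsets that are correctly decoded. Every $S \in \mathcal{G}$ lies in the image of $D$, so $|\mathcal{G}| \le 2^M$.

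Next I would translate the success probability into a counting statement. Since $S$ is uniform over the $\binom{A}{K}$ subsets of size $K$, the hypothesis gives
\[
\tau \;\le\; \Pr[D(E(S)) = S] \;=\; \frac{|\mathcal{G}|}{\binom{A}{K}} \;\le\; \frac{2^M}{\binom{A}{K}}.
\]
Taking $\log_2$ of both sides and rearranging yields $M \ge \log_2 \binom{A}{K} + \log_2 \tau$, which is the claim.

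The only subtlety worth flagging is that the statement as written does not explicitly rule out randomized encoders/decoders, but since the probability is only over $S$, one can condition on the (fixed) realization of any internal randomness and apply the deterministic argument, then average; the bound is preserved because it is linear in $\tau$. There is no real obstacle here — this is a folklore fact, and the plan above should fit in a couple of lines.
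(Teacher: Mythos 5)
Your proof is correct, and it takes a different (and in fact more concrete) route than the paper, which doesn't prove the claim at all but just remarks that it "follows for example from Fano's inequality." Your counting argument is the cleaner choice here: the key observation that every correctly decoded $S$ must lie in the image of $D$, which has cardinality at most $2^M$, immediately gives $\tau\binom{A}{K} \le |\mathcal{G}| \le 2^M$, and the claim follows by taking logarithms. This is self-contained, avoids introducing entropy machinery, and yields exactly the stated inequality with no slack. Your handling of the randomized case is also the right remark to make: in the paper's application the encoder and decoder do use shared and private coins, and the success probability is taken over those coins, so one should note (as you do) that conditioning on a coin realization reduces to the deterministic case and the bound survives averaging because it is linear in the success probability. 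Fano's inequality would give a comparable bound, but it is an overkill import and would require unwinding binary-entropy terms to recover this clean form; the direct counting argument is preferable and I would not hesitate to use it.
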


\section{Further related work}
\label{sec:related}

\paragraph{Relation to direct sum}

The canonical sampling strategy for \jee involves Alice and Bob computing $\frac{1}{t} \sum_{i \in [t]} f(x_i,y_i)$ for independent samples $x_1,\ldots,x_t$ and $y_1,\ldots,y_t$ drawn from their distributions.
This problem is related to the {\it direct sum} and {\it direct product} theorems in communication complexity.
The goal of direct sum theorems is to show that if computing a function $f$ takes $C$ bits of communication, then computing $t$ independent instances simultaneously should require $t \cdot C$ bits of communication.
Direct product theorems on the other hand ask what is the maximum success probability of a protocol using less than $t \cdot C$ bits of communication, computes all of the values $f(x_i,y_i)$ accurately.
We refer the reader to \cite{BRWY13,BBCR13,BR11} and references therein.

\paragraph{Query to communication lifting}
Our main lower bound result relating \jee of a function $f$ to its discrepancy (Theorem \ref{thm:main}) borrows ideas from query-to-communication lifting results.  Specifically, the notion of conjunctive block-wise dense (cbd) decomposition of distributions introduced in \cite{GLMWZ16} play a major role in our proof.

Hadar, Liu, Polyanskiy and Shayevitz \cite{HLPS19} study the communication complexity of estimating correlations.  Here Alice and Bob receive copies of $\rho$-correlated bits (or Gaussians), and their goal is to produce an estimate $\hat{\rho}$ of the correlation $\rho$.  This work shows that computing the correlation within an error $\varepsilon$ requires $\Omega(1/\varepsilon^2)$ bits of communication, and also recover a tight lower bound for Gap-Hamming problem as a consequence.  In our setting of \jee, Alice and Bob receive independent random variables and thus seems markedly different in flavor.

\paragraph{Join estimation.}

Given two relations $\mathcal{A} \in [m] \times [n]$ and $\mathcal{B} \in [n] \times [p]$, 
        the {\it composition} and {natural join} of the two relations is defined as,
        \begin{align*}
        \text{(Composition)\ \ \ }  \mathcal{A} \circ \mathcal{B} & = \{ (i,j) | \exists \ell : (i,\ell) \in \mathcal{A} \text{ and } (\ell,j) \in \mathcal{B}\} \\
        \text{(Natural join)\ \ \ } \mathcal{A} \bowtie \mathcal{B} & = \{ (i,\ell,j) | \exists \ell : (i,\ell) \in \mathcal{A} \text{ and } (\ell,j) \in \mathcal{B}\} 
        \end{align*}
        
        The communication complexity of computing these joins and their the sizes was studied in \cite{van2015communication}.  The problem of estimating $| \mathcal{A} \circ \mathcal{B}|$ to within additive error $\eps |\mathcal{A}||\mathcal{B}|$ is an instance of \jee for $\EQ$. Indeed this is one of the applications of CountMinSketch \cite[Corollary 1]{CormodeMuthukrishnan}. Similarly, estimating the size of the composition join is an instance of \jee for the Set-disjointness function. We refer the reader to related work for derivations of these relations. 

        Given $\mathcal{A} \subseteq [m] \times [n]$, define a probability distribution $\mu_A$ over $[n]$ as, the marginal distribution of $\ell$ for a random tuple $(i,\ell) \in \mathcal{A}$.  Define $\mu_B$ over $[n]$ as the marginal distribution of $\ell'$ for a random tuple $(\ell',k) \in \mathcal{B}$. The size of the {\it natural join} is given by,
        \[ \frac{|\mathcal{A} \bowtie \mathcal{B}|}{\mathcal{A} \cdot \mathcal{B}} = \E_{\ell \sim \mu_A, \ell' \sim \mu_B} \left[ \mathsf{EQ}(\ell,\ell') \right]\]

        Towards formulating composition, define the sets $A_i = \{ \ell | (i,\ell) \in \mathcal{A} \}$ and $B_j = \{ \ell | (\ell,j) \in \mathcal{B} \}$ for each $i \in [m], j \in [p]$.  The size of the {\it composition} can be written as,
        
        \[ \frac{|\mathcal{A} \circ \mathcal{B}|}{m \cdot p} = \E_{A_i, B_j} \left[ \overline{\mathsf{DISJ}}(A_i,B_j) \right]\]
        where $\overline{\mathsf{DISJ}}(A,B) = 1[ A \cap B \neq \emptyset]$ is the negation of the familiar disjointness function, and the expectation is over $A_i$ (resp. $B_j$) drawn uniformly from $\{A_1,\ldots,A_m\}$ (from $\{B_1,\ldots,B_p\}$).

        \paragraph{Acknowledgments} This work benefited from conversations with numerous colleagues, whom we would like to thank: Abhishek Shetty, Charlotte Peale, Vinod Raman, Kunal Talwar and Pravesh Kothari. Mihir Singhal was partially supported by grant NSF CCF-2311648 and by Venkatesan Guruswami's Simons Investigator award. 
\bibliographystyle{alpha}
\bibliography{refs}

\appendix

\section{Omitted proofs}

\subsection{Proof of \texorpdfstring{\cref{lem:convex-lipschitz-1d}}{Fact \ref{lem:convex-lipschitz-1d}}} \label{sec:appendix-convex-lipschitz-1d}

Since $f$ is convex and $1$-Lipschitz on $[0,1]$, its right derivative $f'_+(x)$ exists for every $x\in[0,1)$, is nondecreasing, right-continuous, and satisfies $f'_+(x)\in[-1,1]$. Define $F:\mathbb{R}\to[0,1]$ by
\[
F(x)=
\begin{cases}
0,& x<0,\\[2pt]
\dfrac{1+f'_+(x)}{2},& 0\le x<1,\\[6pt]
1,& x\ge 1.
\end{cases}
\]
Then $F$ is nondecreasing, right-continuous, with $\lim_{x\to-\infty}F(x)=0$ and $\lim_{x\to+\infty}F(x)=1$, hence is the CDF of a probability measure $D$ supported on $[0,1]$. Then, let
\[
g(x) = \E_{z \sim D}[|x-z|].
\]

\begin{claim} \label{claim:right-deriv}
For every $x\in[0, 1)$, the right derivative of $g$ exists and
\[
g'_+(x)=2F(x)-1.
\]
\end{claim}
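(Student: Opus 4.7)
The plan is to compute the right difference quotient of $g$ directly by splitting the integrand $|x+h-z|-|x-z|$ into three pieces depending on the position of $z$ relative to $x$ and $x+h$, and then take $h\to 0^+$.

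Concretely, fix $x\in[0,1)$ and $h>0$ small enough that $x+h\le 1$. For $z\in[0,x]$ we have $|x+h-z|-|x-z|=h$; for $z\in(x+h,1]$ we have $|x+h-z|-|x-z|=-h$; for $z\in(x,x+h]$ we have $|x+h-z|-|x-z|=2x+h-2z$, which is bounded in absolute value by $h$. Writing $g(x+h)-g(x)$ as an integral against $D$ and dividing by $h$ gives
\[
\frac{g(x+h)-g(x)}{h} \;=\; D([0,x]) \;-\; D((x+h,1]) \;+\; \frac{1}{h}\int_{(x,x+h]} (2x+h-2z)\,dD(z).
\]
The error term is bounded in absolute value by $D((x,x+h]) = F(x+h)-F(x)$, which tends to $0$ as $h\to 0^+$ by right-continuity of $F$. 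Similarly, $D((x+h,1]) = 1-F(x+h) \to 1-F(x)$, and $D([0,x]) = F(x)$ because $D$ is supported in $[0,1]$ and $F$ is right-continuous at $x$. Combining these gives the right derivative $2F(x)-1$.

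The main obstacle is really just bookkeeping: making sure the endpoint conventions ($[0,x]$ vs.\ $[0,x)$, etc.) are consistent with the right-continuous CDF $F$, so that the limit $D([0,x])=F(x)$ holds exactly rather than with an off-by-one atom at $x$. Once that is handled, the rest of the computation is routine and the bound on the transition-region contribution is immediate from $|2x+h-2z|\le h$.
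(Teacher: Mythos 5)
Your proof is correct and takes a slightly different route from the paper's. The paper writes the difference quotient as $\E_{z\sim D}[\phi_h(z)]$ with $\phi_h(z)=\frac{|x+h-z|-|x-z|}{h}$, observes $|\phi_h|\le 1$ and that $\phi_h(z)\to 2\ind{z\le x}-1$ pointwise, and concludes by dominated convergence. You instead split the integral explicitly into the three regions $[0,x]$, $(x,x+h]$, $(x+h,1]$, compute the first and third contributions in closed form, and bound the transition piece directly by $D((x,x+h])=F(x+h)-F(x)\to 0$. This is more elementary in that it avoids invoking the dominated convergence theorem, at the price of slightly more bookkeeping; the paper's version is a one-line consequence of DCT. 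Both correctly rely on right-continuity of $F$ to kill the contribution from $(x,x+h]$ (and from $(x+h,1]$), and both are standard. One small nit: the equality $D([0,x])=F(x)$ follows simply from the CDF convention $F(x)=D((-\infty,x])$ together with $D$ being supported in $[0,1]$; right-continuity of $F$ is not the reason for that particular identity (it is what you need, and correctly use, for $F(x+h)\to F(x)$).
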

\begin{proof} For $h>0$,
\[
\frac{g(x+h)-g(x)}{h}
=\E_{z \sim D}\left[\frac{|x+h-z|-|x-z|}{h}\right].
\]
Let $\phi_h(z)$ be the term inside the expectation. Then,
we have that $-1\le \phi_h(z)\le 1$ for all $z$ and, as $h \downarrow 0$,
\[
\phi_h(z)\to \ind{z\le x}-\ind{z>x}=2\ind{z\le x}-1
\]
By dominated convergence, 
\[
g'_+(x)=\lim_{h\downarrow 0}\E_{z \sim D}[\phi_h(z)]
=\E_{z \sim D}\bigl[2\ind{z\le x}-1\bigr]
=2F(x)-1,
\]
proving the claim.
\end{proof}

With our choice of $F$, \cref{claim:right-deriv} gives for every $x\in[0,1)$
\[
g'_+(x)=2F(x)-1 = f'_+(x).
\]
Both $f$ and $g$ are $1$-Lipschitz on $[0,1]$, so $f'$ and $g'$ exist almost everywhere and equal $f'_+$ and $g'_+$ almost everywhere. It follows that $(f-g)'(x)=0$ almost everywhere,
so $f-g$ is constant on $[0,1]$. Writing $c:=f(0)-g(0)=f(0)-\E[z]$ we obtain
\[
f(x)=c+g(x)=c+\E[|x-z|]\qquad\text{for all }x\in[0,1].
\]
Note that both $f$ and $g$ are bounded in $[-1,1]$, so $c\in[-2,2]$.

\subsection{Proof of \texorpdfstring{\cref{lem:bounded-lower-deriv}}{Fact \ref{lem:bounded-lower-deriv}}} \label{sec:appendix-bounded-lower-deriv}

We argue by induction on $\ell$ going backwards from $k$; assume that the $(\ell-1)$-th mixed partials are bounded by some constant $C_{\ell-1}$.
Fix $0\le i\le \ell$ and set
\[
g(x,y)\coloneqq \del_x^{\,i}\del_y^{\,\ell-i} f(x,y).
\]
Suppose, for contradiction, that $g$ is very large somewhere; without loss of generality there exists $(x_0,y_0)$ with $g(x_0,y_0)=M>0$. Since
\[
\del_x g=\del_x^{\,i+1}\del_y^{\,\ell-i} f,\qquad
\del_y g=\del_x^{\,i}\del_y^{\,\ell-i+1} f
\]
are $(\ell+1)$-st partials, the induction hypothesis gives $|\del_x g|,|\del_y g|\le C_{\ell+1}$. Hence $g$ is $C_{\ell+1}$–Lipschitz in each coordinate, so for every $(x,y)\in[0,1]^2$,
\[
g(x,y)\ge M - C_{\ell+1}|x-x_0| - C_{\ell+1}|y-y_0|\ge M-2C_{\ell+1}.
\]
Take $M = 2C_{\ell+1}+K_\ell$ (where $K_\ell$ is defined below); then $g \ge K_\ell$ on the whole square.

Now define the nonnegative weight function $W$ by
\[
\phi_i(t)=t^i(1-t)^i,\qquad \psi_{\ell-i}(t)=t^{\ell-i}(1-t)^{\ell-i},\qquad
W(x,y)=\phi_i(x)\psi_{\ell-i}(y).
\]
All derivatives of $\phi_i$ up to order $i-1$ (respectively of $\psi_{\ell-i}$ up to order $\ell-i-1$) vanish at $0$ and $1$. Integrating by parts $i$ times in $x$ and $\ell-i$ times in $y$ (there are no boundary terms since the derivatives vanish on the edge of the square) gives
\[
\iint_{[0,1]^2} g\,W
= (-1)^\ell \iint_{[0,1]^2} f \,\del_x^{\,i}\del_y^{\,\ell-i} W.
\]
Using $|f|\le 1$ we obtain
\[
K_\ell \underbrace{\iint W}_{\displaystyle A_{i,\ell}}
\;\le\; \iint gW
\;=\; \Big|\iint f\,\del_x^{\,i}\del_y^{\,\ell-i}W\Big|
\;\le\; \underbrace{\iint \big|\del_x^{\,i}\del_y^{\,\ell-i}W\big|}_{\displaystyle B_{i,\ell}}.
\]
Here $A_{i,\ell}>0$ and 
$B_{i,\ell}<\infty$, both depending only on $\ell$ (and $i$). Thus, picking $K_\ell$ to be larger than all ratios $B_{i,\ell}/A_{i,\ell}$ for $0\le i\le \ell$ yields a contradiction.
Let $K_\ell\coloneqq \max_{0\le i\le \ell} \frac{B_{i,\ell}}{A_{i,\ell}}$. The previous display contradicts $m>K_\ell$. Therefore, in summary, we have
\[
g(x, y) \le 2C_{\ell+1}+K_\ell,
\]
for all $i$, completing the induction.

\newpage

\end{document}